\newtheorem{thm}{Theorem}[section]
\newtheorem{prop}[thm]{Proposition}
\newtheorem{lem}[thm]{Lemma}
\newtheorem{cor}[thm]{Corollary}
\newtheorem{fact}[thm]{Fact}
\theoremstyle{definition}
\newtheorem{definition}[thm]{Definition}
\newcommand{\iteralg}{\textsc{IterativeRound}\xspace}
\newcommand{\itersub}{\textsc{ReRoute}\xspace}
\newcommand{\altreroute}{\textsc{ConfigReRoute}\xspace}
\newcommand{\mainalg}{\textsc{PseudoApproximation}\xspace}
\newcommand{\outlieralg}{\textsc{OutliersPostProcess}}
\newcommand{\outlierpost}{\textsc{ComputePartial}}
\newcommand{\lpbasic}{LP_1}
\newcommand{\lpfacil}{LP_2}
\newcommand{\lpreroute}{LP_{iter}}
\newcommand{\lpextra}{LP_1'}
\newcommand{\I}{\mathcal{I}}
\newcommand{\eps}{\epsilon}
\newcounter{note}
\title{Structural Iterative Rounding for Generalized $k$-Median Problems}
\author{Anupam Gupta}
\thanks{Computer Science Department, Carnegie Mellon
                University, Pittsburgh, PA 15213. Research supported in part by NSF awards
  CCF-1907820, CCF1955785, and CCF-2006953.}
\author{Benjamin Moseley}\thanks{Tepper School of Business, Carnegie Mellon
                University, Pittsburgh, PA 15213. Moseley and Zhou were supported in part by a Google Research Award, an Infor Research Award, a Carnegie Bosch Junior Faculty Chair and NSF grants CCF-1824303,  CCF-1845146, CCF-1733873 and CMMI-1938909}
\author{Rudy Zhou}
\begin{document}

	\begin{abstract}
	
    This paper considers approximation algorithms for generalized
        $k$-median problems. This class of problems can be informally
        described as $k$-median with a constant number of extra
        constraints, and
        includes $k$-median with outliers, and knapsack median. Our
        first contribution is a pseudo-approximation algorithm for
        generalized $k$-median that outputs a $6.387$-approximate
        solution, with a constant number of fractional variables. The
        algorithm builds on the iterative rounding framework
        introduced by Krishnaswamy, Li, and Sandeep for $k$-median
        with outliers. The main technical innovation is allowing richer constraint sets in the iterative rounding and taking advantage of the structure of the resulting extreme points.
	
    Using our pseudo-approximation algorithm, we give improved approximation algorithms for $k$-median with outliers and knapsack median. This involves combining our pseudo-approximation with pre- and post-processing steps to round a constant number of fractional variables at a small increase in cost. Our algorithms achieve approximation ratios $6.994 + \eps$ and $6.387 + \eps$ for $k$-median with outliers and knapsack median, respectively. These improve on the best-known approximation ratio $7.081 + \eps$ for both problems \cite{DBLP:conf/stoc/KrishnaswamyLS18}.

	\end{abstract}

\maketitle

\section{Introduction}\label{sec_intro}

Clustering is a fundamental problem in combinatorial optimization,
where we wish to partition a set of data points into \emph{clusters}
such that points within the same cluster are more similar than points
across different clusters. In this paper, we focus on generalizations
of the \emph{$k$-median} problem. Recall that in this problem, we are
given a set $F$ of facilities, a set $C$ of clients, a metric $d$ on
$F \cup C$, and a parameter $k \in \mathbb{N}$.
The goal is to choose a set $S \subset F$ of
$k$ facilities to open to minimize the sum of \emph{connection costs}
of each client to its closest open facility. That is, to minimize the objective $\sum_{j \in C} d(j,S)$, where we define $d(j,S) = \min_{i \in S} d(i,j)$.

The $k$-median problem is well-studied from the perspective of
approximation algorithms, and many new algorithmic techniques have been
discovered while studying it.
Examples include linear program rounding~\cite{DBLP:journals/talg/ByrkaPRST17, DBLP:journals/siamcomp/LiS16}, primal-dual
algorithms~\cite{DBLP:journals/jacm/JainV01}, local search~\cite{DBLP:journals/siamcomp/AryaGKMMP04}, and large data techniques \cite{li2018distributed,MalkomesKCWM15,guha2017distributed,GuhaMMMO03,ImQMSZ20}.  Currently, the best approximation ratio for
$k$-median is $2.675 + \eps$ \cite{DBLP:journals/talg/ByrkaPRST17}, and there is a lower bound of
$1 + 2/e $ assuming $P \neq NP$ \cite{Jain:2002:NGA:509907.510012}. 

Recently, there has been significant interest in generalizations of
the $k$-median problem \cite{CharikarKMN01, DBLP:journals/mor/Krishnaswamy0NS15}.  One such generalization is the
\emph{knapsack median} problem. In knapsack median, each facility has
a non-negative weight, and we are given budget $B \geq 0$. The goal is
to choose a set of open facilities of total weight at most $B$ (instead
of having cardinality at most $k$) to minimize the same objective function.
That is, the open facilities must satisfy a knapsack constraint.
Another commonly-studied
generalization
is \emph{$k$-median with outliers}, also known as \emph{robust $k$-median}.
Here we
open
$k$ facilities $S$, as in
basic $k$-median,
but we no longer have to serve all the clients;
now, we are only required to serve at least $m$ clients $C' \subset C$ of our
choice. Formally, the objective function is now $\sum_{j \in C'} d(j,S)$.

Both knapsack median and $k$-median with outliers have proven to be
much more difficult than the standard $k$-median problem. Algorithmic
techniques that have been successful in approximating $k$-median
often lead to only a pseudo-approximation for these
generalizations---that is, they
violate the knapsack constraint or serve fewer than $m$ clients \cite{DBLP:journals/algorithmica/ByrkaPRSST18, CharikarKMN01,  DBLP:journals/talg/FriggstadKRS19,ImQMSZ20}.
Obtaining ``true'' approximation algorithms
requires new ideas beyond those of $k$-median.   Currently the best
approximation ratio for both problems is $7.081 +\eps$ due to the beautiful iterative rounding framework of Krishnaswamy, Li, and Sandeep \cite{DBLP:conf/stoc/KrishnaswamyLS18}.  The first and only other true approximation for $k$-median with outliers is a local search algorithm due to Ke Chen \cite{Chen08}.

\subsection{Generalized $k$-median}

Observe that both knapsack median and $k$-median with outliers
maintain the salient features of $k$-median; that is, the goal is to
open facilities to minimize the connection costs of served
clients. These variants differ in the way we put constraints on the
open facilities and served clients. In particular, in standard
$k$-median, we have a cardinality constraint on the open facilities,
whereas for knapsack median the open facilities are subject to a
knapsack constraint; in both cases we must serve all clients. For $k$-median with outliers, we are constrained to open at most $k$ facilities, and serve at least $m$ clients.

In this paper, we consider a further generalization of $k$-median that we call \emph{generalized $k$-median (GKM)}. As in $k$-median, our goal is to open facilities to minimize the connection costs of served clients. In GKM, the open facilities must satisfy $r_1$ given knapsack constraints, and the served clients must satisfy $r_2$ given coverage constraints. We define $r = r_1 + r_2$.

\subsection{Our Results}

The main contribution of this paper is a refined iterative
rounding algorithm for GKM. Specifically, we show how to round the
natural linear program (LP) relaxation of GKM to ensure all
except $O(r)$ of the variables are integral, and the objective
function is increased by at most a $6.387$-factor.  It is not
difficult to show that the iterative rounding framework in \cite{DBLP:conf/stoc/KrishnaswamyLS18}
can be extended to show a similar result. Indeed, a
$7.081$-approximation for GKM with at most $O(r)$ fractional
facilities is implicit in their work. % of Krishnaswamy, Li and
% Sandeep.
The improvement in this work is the smaller loss in the objective
value.

%\begin{thm}[Pseudo-Approximation for GKM, Informal]
%    There exists a poly-time pseudo-approximation for \GKM that outputs a solution of cost at most $6.387 \cdot Opt$ with at most $O(r)$ fractional facilities.
%\end{thm}

Our improvement relies on analyzing the extreme points of certain set-cover-like LPs. These extreme points arise at the intermediate steps of our iterative rounding, and by leveraging their structural properties, we obtain our improved pseudo-approximation for GKM. This work reveals some of the structure of such extreme points, and it shows how this structure can lead to improvements. %approximation algorithms.
 
Our second contribution is improved ``true'' approximation algorithms for two special cases of GKM: knapsack median and $k$-median with outliers. For both problems, applying the pseudo-approximation algorithm for GKM gives a solution with $O(1)$ fractional facilities. Thus, the remaining work is to round a constant number of fractional facilities to obtain an integral solution. To achieve this goal, we apply known sparsification techniques \cite{DBLP:conf/stoc/KrishnaswamyLS18} to pre-process the instance, and then develop new post-processing algorithms to round the final $O(1)$ fractional facilities.

We show how to round these remaining variables for knapsack median at
arbitrarily small loss, giving a $6.387 + \eps$-approximation,
improving on the best $7.081 + \eps$-approximation.  For $k$-median
with outliers, a more sophisticated post-processing is needed to round
the $O(1)$ fractional facilities. This procedure loses more in the
approximation ratio. In the end, we obtain a $6.994 +
\eps$-approximation, modestly improving on the best known $7.081 +
\eps$-approximation. %\agnote{Seems to repeat what the following theorems say.}

%\begin{thm}[Approximation for Knapsack Median, Informal]
%    There exists a polynomial time $(6.387 + \eps)$-approximation for knapsack median for any constant $\eps > 0$.
%\end{thm}

%\begin{thm}[Approximation for $k$-median with Outliers, Informal]
%    There exists a polynomial time $(6.994 + \eps)$-approximation for $k$-median with outliers for any constant $\eps > 0$.
%\end{thm}

\subsection{Overview of Techniques}\label{sec_tech}

To illustrate our techniques, we first introduce a natural LP relaxations for GKM. The problem admits an integer program
formulation, with variables $\{x_{ij}\}_{i \in F, j \in C}$ and
$\{y_i\}_{i \in F}$, where $x_{ij}$ indicates that \emph{client
  $j$ connects to facility $i$} and $y_i$ indicates that
\emph{facility $i$ is open}. Relaxing the integrality constraints
gives the linear program relaxation $\lpbasic$.  

%\iffalse

$$\begin{array}{lll|lll}
(\lpbasic) \min_{x,y} & ~~ &\sum_{i \in F} \sum_{j \in C} d(i,j) \,x_{ij}
                         \qquad\qquad &\qquad  (\lpfacil): &  \min_y
    & \sum_{i \in F} \sum_{j \in C: i \in F_j} d(i,j) \, y_i \\
&&\sum_{i \in F} x_{ij} \leq 1 \qquad \forall j \in C & && y(F_j) \leq 1 \qquad \forall j \in C\\
                      &&x_{ij} \leq y_i \qquad\forall i \in F, j \in C & & &  \\
    &&Wy \leq b & && Wy \leq b \\
&&\sum_{j \in C} a_j (\sum_{i \in F} x_{ij}) \geq c & && \sum_{j \in C} a_j
                                                 y(F_j) \geq
                                                 c \\
                  &&x_{ij}, y_i \in [0,1] \qquad \forall i \in F,j \in C & &&  y_i \in [0,1] \qquad \forall i \in F
\end{array}
$$
%\fi

\iffalse
\begin{wrapfigure}{r}{.35\textwidth}
\begin{minipage}{.35\textwidth}
$$\begin{array}{lll}
&&(\lpbasic) \\
&\min_{x,y}& \sum_{i \in F} \sum_{j \in C} d(i,j) \,x_{ij}  \\
&\mbox{s.t.}~&Wy \leq b  \\
&&\sum_{j \in C} a_j (\sum_i x_{ij}) \geq c  \\
&&\sum_i x_{ij} \leq 1 \qquad \forall j\\
&&x_{ij} \leq y_i \qquad\forall i,j  \\
&&x_{ij}, y_i \in [0,1] \qquad \forall i,j 
\end{array}
$$
\end{minipage}
\end{wrapfigure}
\fi

We focus on $\lpbasic$ for now. The linear program $\lpbasic$ is the standard $k$-median LP with the
extra side constraints. Note that $\sum_{i \in F} x_{ij} \leq 1$ may seem opposite
to the intuition that we want clients to get ``enough'' coverage from
the facilities, but that will be guaranteed by the coverage constraints
below.

The constraint $Wy \leq b$ corresponds to the $r_1$ knapsack
constraints on the facilities $y$, where
$W \in \mathbb{R}_+^{r_1 \times F}$ and $b \in
\mathbb{R}_+^{r_1}$. These $r_1$ packing constraints can be
thought of as a multidimensional knapsack constraint over the
facilities, and ensure that ``few'' facilities are opened.
Next, $\sum_{j \in C} a_j (\sum_i x_{ij}) \geq c$ corresponds to the $r_2$
coverage constraints on the clients, where
$a_j \in \mathbb{R}_+^{r_2}$ for all $j \in C$ and
$c \in \mathbb{R}_+^{r_2}$. These coverage constraints ensure that
``enough'' clients are served. E.g., having one packing constraint
$\sum_{i \in F} y_i \leq k$ and one covering constraint
$\sum_{j \in C} \sum_{i \in F} x_{ij} \geq m$ ensures that at least $m$
clients are covered by at most $k$ facilities; this is the $k$-median with outliers problem.

\medskip \noindent \textbf{Reducing the variables in the LP: } We get
$\lpfacil$ by eliminating the $x$ variables from $\lpbasic$, thereby reducing the
number of constraints. The idea
from~\cite{DBLP:conf/stoc/KrishnaswamyLS18} is to prescribe a set
$F_j \subseteq F$ of permissible facilities for each client $j$ such that $x_{ij}$
is implicitly set to $y_i \mathbf{1}(i \in F_j)$. % This reduces the number of constraints in the relaxation. 
The details of this
reduction and the procedure for creating $F_j$ are given in 
 \Cref{prop_LPfacil}. %see \S\ref{sec_appendix_LP}.
 Using this procedure, $\lpfacil$ is also a relaxation for GKM. Note that in $\lpfacil$, we use the notation $y(F') = \sum_{i \in F'} y_i$ for $F' \subset F$. 
 % Notice the constraints.
 
% We still have the knapsack constraints as before. For the coverage constraints, these are now over facilities inside $F_j$ for each client $j$ and each such constraint corresponds to the original coverage constraint (e.g. covering at least $m$ clients).  The constraint $y(F_j) \leq 1 $ for all $j$, states we pick \emph{at most one} facility inside each $F_j$.  This is perhaps opposite of intuition because we want to ensure clients are connected to facilities, not that we do not pick facilities near a client.  However, the coverage constraints ensure enough clients are covered by facilities.

\iffalse
\begin{wrapfigure}{l}{.35\textwidth}
\begin{minipage}{.35\textwidth}
$$\begin{array}{lll}
&&  (\lpfacil): \\
&\min_y &\sum_{i \in F} \sum_{j: i \in F_j} d(i,j) \, y_i   \\
&\mbox{s.t.}~&Wy \leq b  \\
&& \sum_{j \in C} a_j y(F_j) \geq c  \\
&&y(F_j) \leq 1 \qquad \forall j \\
&& y_i \in [0,1] 
\end{array}
$$
\end{minipage}
\end{wrapfigure}
\fi

Now consider solving $\lpfacil$ to obtain an optimal extreme point
$\bar{y}$. %We will leverage the well-known fact that for  an extreme
% point $x$ in a polyhedron in $\mathbb{R}^n$
There must be $|F|$ linearly independent tight constraints at $\bar{y}$, and
we call these constraints the \emph{basis} for $\bar{y}$.  % for $x$. 
%This implies there must be $\lvert F \rvert$ constraints of
%$\lpfacil$ that are tight at $\bar{y}$.
The tight constraints of interest are the $y(F_j) \leq 1$ constraints;
in general, there are at most $\lvert C \rvert$ such tight
constraints,
and we have little structural understanding of the $F_j$-sets. 

 \noindent \textbf{Prior Iterative Rounding Framework: } Consider the
 family of $F_j$ sets corresponding to tight constraints, so
 $\mathcal{F} = \{F_j \mid j \in C,\, \bar{y}(F_j) = 1 \}$. If
 $\mathcal{F}$ is a family of disjoint sets , then
 %(each $y_i$ is in a unique $F_j$), %then it is not difficult to see that
 the tight constraints of $\lpfacil$ form a face of a partition matroid polytope intersected with at most $r$ side constraints (the knapsack and coverage constraints). Using ideas from, e.g.,~\cite{DBLP:conf/stoc/KrishnaswamyLS18, DBLP:journals/mp/GrandoniRSZ14}, we can show that $\bar{y}$ has at most $O(r)$ fractional variables. 
 
 Indeed, the goal of the iterative rounding framework in
 \cite{DBLP:conf/stoc/KrishnaswamyLS18} is to control the set family
 $\mathcal{F}$ to obtain an optimal extreme point where $\mathcal{F}$
 is a disjoint family. To achieve this goal, they iteratively round an
 auxiliary LP based on $\lpfacil$,  where they have the constraint
 $y(F_j) = 1$ for all clients $j$ in a special set $C^* \subset
 C$. Roughly, they regulate what clients are added to $C^*$ and delete
 constraints $y(F_j) \leq 1$ for some clients. The idea is that a
 client $j$ whose constraint is deleted must be close to some client
 $j'$ in $C^*$.  Since $y(F_{j'}) = 1$ we can serve $j$ with the
 facility for $j'$, and the cost is small if $j'$'s facility is close to $j$.  

To get intuition, assume each client $j$ can pay the farthest distance to
a facility in $F_j$, and call this the \emph{radius} of
$F_j$. (Precisely, clients may not be able to afford this distance,
  but we use this assumption to highlight the ideas behind our
  algorithmic decisions.) For simplicity, assume all radii are powers of two. Over time, this radius shrinks if some $y$
  variables in $F_j$ are set to zero. Consider applying the
  following iterative steps until none are applicable, in which case $C^*$ corresponds to the tight
  constraints: (1) delete a constraint for $j \notin  C^*$ if the
  radius of $F_j$ is at least that of some $F_{j'}$ for $j' \in C^*$ and $F_j \cap F_{j'} \neq \emptyset$. (2) add $j \notin C^*$  to $C^*$ if  $y(F_j) = 1$ and for every $j'\in C^*$ such that $F_j \cap F_{j'} \neq \emptyset$ it is the case that $F_{j'}$ has a radius strictly  larger than $F_{j}$. If added then remove all $j'$ from $C^*$ where $j$'s radius is \emph{half} or less of the radius of $j'$ and $F_j \cap F_{j'} \neq \emptyset$.
  
  %or (2)~add $j$ to $C^*$ if $F_j$ does not intersect another $F_{j'}$ for $j' \in C^*$, or
  
  %\rudy{I think you are implicity assuming all radii are rounded to powers of 2? Not sure if emphasizing this is helpful or confusing}

The approximation ratio is bounded by how much a client $j$ with a
deleted constraint pays to get to a facility serving a client in
$C^*$. After removing $j$'s constraint, the case to worry about is if $j$'s
closest client $j' \in C^*$ is later removed from $C^*$. This happens only if $j''$  is added to $C^*$, with $F_{j''}$ having half the radius of $F_{j'}$. Thus every time we remove $j$'s closest client in $C^*$, we guarantee that $j$'s cost only increases geometrically. The approximation ratio is proportional to the total distance that $j$ must travel and can be directly related to the distance of ``ball-chasing'' though these $F_j$ sets. See Figure~\ref{fig:chase}.

%Thus, they regulate what clients are added to $C^*$ to maintain that $\mathcal{F}^* = \{F_j \mid j \in C^*\}$ is disjoint and no client outside $C^*$ is tight. The loss in the approximation ratio results from clients that are outside $C^*$, but we still need to fully connect them to one unit of open facility to satisfy the coverage constraints. Thus, their iterative rounding algorithm must balance two competing goals: First, it must maintain that $\mathcal{F}^*$ is disjoint. Second, it must maintain that $\mathcal{F}^*$ provides a good set of open facilities for the clients that are not in $C^*$.

\begin{wrapfigure}{r}{0.5\textwidth}
\begin{minipage}{0.5\textwidth}
 % \begin{center}
    \includegraphics[width=0.95\textwidth]{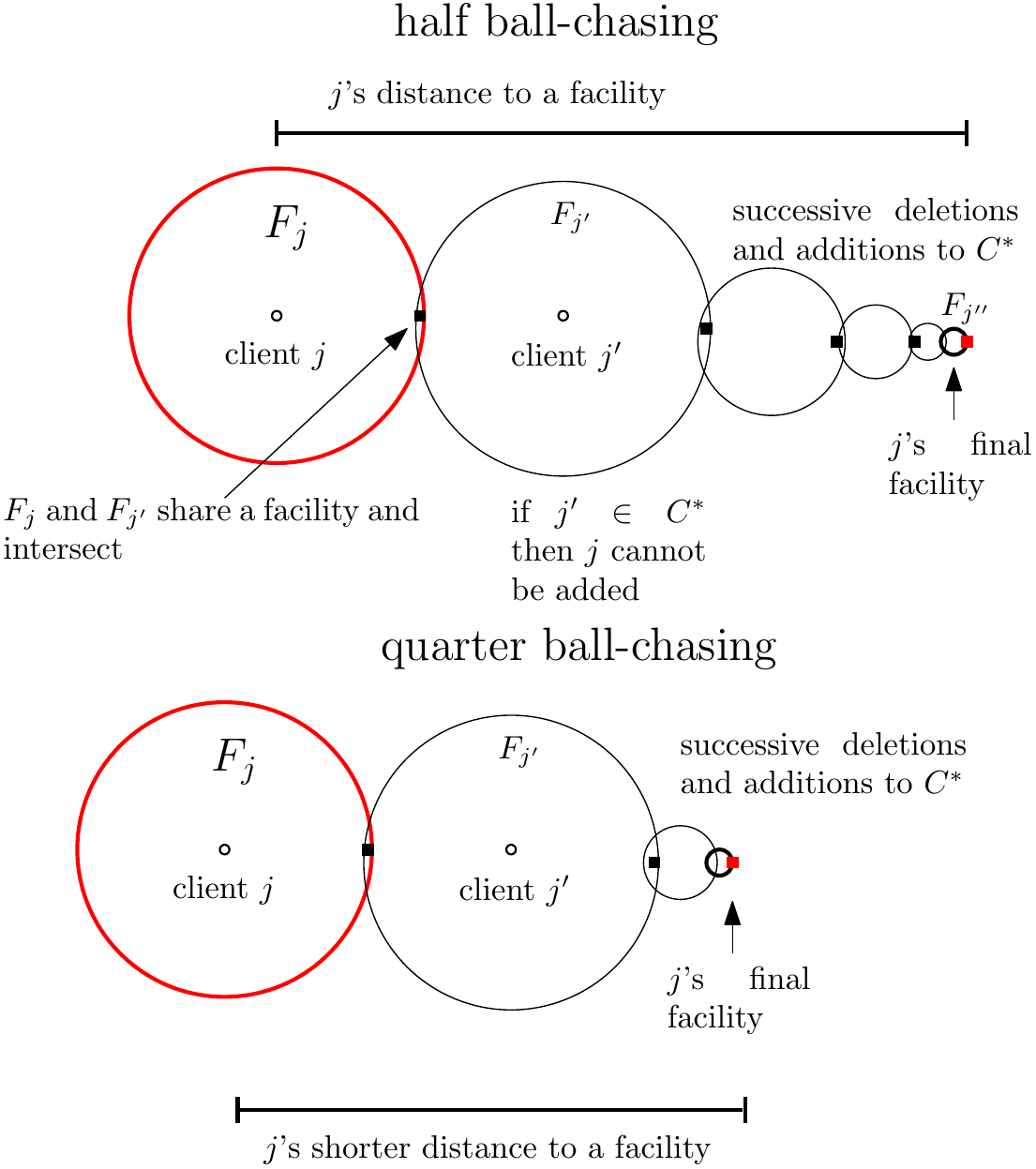}
      \caption{{\small Half and quarter ball chasing} \label{fig:chase}}
% \end{center}
\end{minipage}\vspace{-.8cm}
\end{wrapfigure}

 \noindent \textbf{New Framework via Structured Extreme Points: } The
 target of our framework is to ensure that the radii decreases in the
 ball-chasing using a smaller factor, in particular
 \emph{one-quarter}.  This will give closer facilities for clients
 whose constraints are deleted and a better approximation ratio.  See
 Figure~\ref{fig:chase}. To achieve this ''quarter ball-chasing," we can simply change half to
 one-quarter in step (2) above.% The challenge: this algorithm may now not arrive at a disjoint set of constraints when there is no operation to perform.
 
% \alert{The challenge: this algorithm may now not have an operation to perform and gets stuck, so we need new iterative steps.} \rudy{I somewhat disagree with this; I'd say the challenge is not that we get stuck; the only thing stopping us from 1/4-chasing, or even 1/8-chasing and beyond is that the structure of the extreme points is too hard to control. Indeed, the procedure you describe for 1/4 chasing will never get stuck; it will just produce some more complicated set family.}
 
 Making this change immediately decreases the approximation ratio; however, the challenge is that $\mathcal{F}$ is no longer disjoint. Indeed, it can be the case that $j,j' \in C^*$ such that $F_j \cap F_{j'} \neq \emptyset$ if their radii differ by only a one half factor. Instead, our quarter ball-chasing algorithm maintains that $\mathcal{F}$ is not disjoint, but has a \emph{bipartite intersection graph}.

 %To achieve our target we allow for richer structures in the set
 %family $\mathcal{F}^*$. Specifically, we allow the $F_j$-sets in
 %$\mathcal{F}^*$ to \emph{intersect} as long as the intersection graph
 %of $\mathcal{F}^*$ is \emph{bipartite}. %The intersection graph of
                                %$\mathcal{F}^*$ is the undirected
                                %graph with vertex set $\mathcal{F}^*$
                                %such that two vertices are connected
                                %by an edge if and only if they
                                %intersect.
 %By relaxing the condition that $\mathcal{F}$ is disjoint to
 %having a bipartite intersection graph, we make $\mathcal{F}$ provide a more convenient set of facilities for clients outside $C^*$.

The main technical challenge now is obtaining an extreme point with
$O(r)$ fractional variables, which is no longer guaranteed as when
$\mathcal{F}$ was disjoint. Indeed, if $\mathcal{F}$ has bipartite
intersection graph, then the tight constraints form a face of the
intersection of two partition matroid polytopes intersected with at
most $r$ side constraints. In general, we \emph{cannot upper bound the
  number of fractional variables} arising in the extreme points of
such polytopes. However, such extreme points have a nice combinatorial
structure:  % and admit what we call an \emph{chain decomposition}. Informally, a chain decomposition is a partition of
the intersection graph can be decomposed into $O(r)$ disjoint paths. We exploit this ``chain decomposition'' of extreme points arising in our iterative rounding to discover clients $j$ that can be removed from $C^*$ even if there is not a $j' \in C^*$ where $F_{j'}$ has one quarter of the radius of $F_j$. We continue this procedure until we are left with only $O(r)$ fractional variables.

The main technical contribution of this work is showing how the problem can be reduced to structural characterization of extreme points corresponding to bipartite matching. This illustrates some of the structural properties of polytopes defined by $k$-median-type problems. We hope that this helps lead to other structural characterizations of these polytopes and ultimately improved algorithms.

\subsection{Organization}

In \S \ref{sec_iteroverview}, we introduce the auxiliary LP for GKM
that our iterative rounding algorithm operates on. We note that this
is the same LP used in the algorithm of \cite
{DBLP:conf/stoc/KrishnaswamyLS18}. Then \S
\ref{sec_iter}--\ref{sec_pseudoalg} give the
pseudo-approximation for GKM. In particular, \S \ref{sec_iter}
describes the basic iterative rounding phase, where we iteratively
update the auxiliary LP such that $\mathcal{F}^* = \{F_j \mid j \in
C^*\}$ has a bipartite intersection graph. In \S
\ref{sec_rerouteleaf}, we characterize the structure of the resulting extreme points % arising at the end of our basic iterative rounding phase,
and use it %this structure
to define a new iterative operation, %. The significance of our new
% operation is that it
which allows us to reduce the number of fractional variables to
$O(r)$. Finally, in \S \ref{sec_pseudoalg}, we combine the
algorithms %described
from \S \ref{sec_iter} and \S\ref{sec_rerouteleaf}
to obtain our pseduo-approximation algorithm for GKM.

We then obtain true approximations for knapsack median and $k$-median with outliers: % in \S \ref{sec_trueapprox} and \S \ref{sec_postoutlier}.
in \S \ref{sec_trueapprox}, we describe our framework to turn pseudo-approximation algorithms into true approximations for both problems, and apply it to knapsack median. Then in \S \ref{sec_postoutlier}, we give a more involved application of the same framework to $k$-median with outliers.
%%% Local Variables:
%%% mode: latex
%%% TeX-master: "main"
%%% End:

\section{Auxiliary LP for Iterative Rounding}\label{sec_iteroverview}

In this section, we construct the auxiliary LP, $\lpreroute$, that our algorithm will use. We note that we use the same relaxation used in \cite{DBLP:conf/stoc/KrishnaswamyLS18}. Recall the two goals of iterative rounding, outlined in \S \ref{sec_tech}; we want to maintain a set of clients $C^* \subset C$ such that $\{F_j \mid j \in C^*\}$ has bipartite intersection graph, and $C^*$ should provide a good set of open facilities for the clients that are not in $C^*$. Thus, we want to define $\lpreroute$ to accommodate moving clients in and out of $C^*$, while having the LP faithfully capture how much we think the clients outside of $C^*$ should pay in connection costs. For all missing proofs in this section, see \S \ref{sec_appendix_LP}.

\subsection{Defining $F$-balls}

Our starting point is $\lpfacil$, so we assume that we have sets $F_j \subset F$ for all $j \in C$. The next proposition states that such sets can be found efficiently so that $\lpfacil$ is a relaxation of GKM.

\begin{prop}\label{prop_LPfacil}
	There exists a polynomial time algorithm that given GKM instance $\I$, duplicates facilities and outputs sets $F_j \subseteq F$ for $j \in C$ such that $Opt(LP_2) \leq Opt(\I)$.
\end{prop}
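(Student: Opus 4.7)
The plan is to derive the sets $F_j$ directly from an optimal solution to $\lpbasic$ via a greedy closest-facility assignment. Concretely, let $(\bar{x}, \bar{y})$ be an optimal solution to $\lpbasic$, and for each client $j$ set $\alpha_j := \sum_{i \in F} \bar{x}_{ij} \in [0,1]$. Order the facilities in nondecreasing order of $d(\cdot, j)$ and walk along them, accumulating $\bar{y}$-mass, until the running sum reaches exactly $\alpha_j$. If a ``boundary'' facility $i$ would overshoot, duplicate $i$ into two co-located copies whose $\bar{y}$-values sum to $\bar{y}_i$, chosen so that the running sum hits $\alpha_j$ exactly, and place only the ``inside'' copy in $F_j$. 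Performing this for every client (simultaneously refining the facility universe so that each resulting piece is either wholly in or wholly out of every $F_j$) produces at most $O(|C||F|)$ facilities in total, and I take $F_j$ to be the collected set.

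The duplications are done so that for each original facility the $\bar{y}$-values of its copies sum to $\bar{y}_i$ and the $W$-columns of its copies sum to $W_i$, with every integer solution on $\I$ corresponding to an integer solution of equal cost on the augmented instance (so the integer optimum does not decrease). Then $\bar{y}$, extended in the obvious way, is feasible for $\lpfacil$: the box and knapsack constraints follow from the splitting; $\bar{y}(F_j) = \alpha_j \leq 1$ by construction; and coverage holds since
\[
\sum_{j \in C} a_j\, \bar{y}(F_j) \;=\; \sum_{j \in C} a_j \alpha_j \;=\; \sum_{j \in C} a_j \sum_{i \in F} \bar{x}_{ij} \;\geq\; c.
\]

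To bound the objective, the key observation is that $F_j$ consists of the closest facilities to $j$ carrying total $\bar{y}$-mass $\alpha_j$, so by a simple exchange argument
\[
\sum_{i \in F_j} d(i,j)\, \bar{y}_i \;\leq\; \sum_{i \in F} d(i,j)\, \bar{x}_{ij},
\]
because any feasible $\bar{x}_{\cdot j}$ with caps $\bar{x}_{ij} \leq \bar{y}_i$ and total mass $\alpha_j$ must place at least as much mass on farther facilities as the greedy choice. Summing over $j$ and using optimality of $\bar{y}$ for $\lpbasic$ yields $Opt(\lpfacil) \leq Opt(\lpbasic) \leq Opt(\I)$.

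The main obstacle is the bookkeeping for duplication: a single facility may need to be ``cut'' at different thresholds for different clients, so one must perform a single simultaneous refinement of the facility universe and verify that (a) splitting preserves the knapsack constraint and any global bounds, (b) every resulting copy is wholly inside or wholly outside every $F_j$ and inherits the correct distance values, and (c) no integer solution is lost so that $Opt(\I)$ is unchanged on the augmented instance. Once this is set up carefully, the feasibility check and the objective comparison above go through cleanly.
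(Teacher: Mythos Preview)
Your overall strategy is sound and is a genuine alternative to the paper's construction, but there is one error and one point worth flagging.

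\textbf{The error.} You write that for each duplicated facility ``the $W$-columns of its copies sum to $W_i$.'' This is wrong: each copy must carry the \emph{same} column $W_i$. With your choice, the contribution of facility $i$'s copies to the knapsack becomes $\sum_k W_{i_k}\bar y_{i_k}$, which in general is not equal to $W_i\bar y_i$ even though $\sum_k \bar y_{i_k}=\bar y_i$; so the extended $\bar y$ can violate $Wy\le b$. Moreover, if some copy has a strictly smaller column than $W_i$, then an integral solution on the augmented instance could open that light copy and be cheaper in the knapsack sense than anything available in the original instance, so the integer optimum could strictly decrease, contradicting your own item~(c). Once you replace ``sum to $W_i$'' by ``each equal to $W_i$,'' the feasibility of the extended $\bar y$ in $\lpfacil$ and the preservation of $Opt(\I)$ are immediate.

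\textbf{Comparison with the paper.} The paper does not use a closest-facility greedy at all. It instead splits each facility~$i$ according to the \emph{sorted $x$-values} $0\le x^*_{i1}\le\cdots\le x^*_{in}\le y^*_i$: copy $i_k$ gets $y'_{i_k}=x^*_{ik}-x^*_{i(k-1)}$, and client $j$ is assigned to copy $i_k$ iff $j\ge k$. This achieves $x'_{ij}\in\{0,y'_i\}$ exactly, so setting $F_j=\{i:x'_{ij}>0\}$ makes $\bar y$ feasible for $\lpfacil$ with \emph{equal} objective to $\lpbasic$, with no exchange argument needed. Your route instead produces $F_j$ as a (refined) nearest-facility prefix around $j$ of $\bar y$-mass $\alpha_j$, and uses the standard greedy inequality $\sum_{i\in F_j}d(i,j)\bar y_i\le\sum_i d(i,j)\bar x_{ij}$ to bound cost. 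Both approaches are polynomial and yield $Opt(\lpfacil)\le Opt(\lpbasic)\le Opt(\I)$; the paper's is a touch more direct (no optimization over the $x$-variables needed), while yours additionally gives $F_j$ the pleasant geometric form of a nearest-neighbor prefix.
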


In \S \ref{sec_tech}, we assumed the radii of the $F_j$ sets were powers of two. To formalize this idea, we discretize the distances to powers of $\tau > 1$ (up to some random offset.) The choice of $\tau$ is to optimize the final approximation ratio. The main ideas of the algorithm remain the same if we discretize to powers of, say $2$, with no random offset. Our discretization procedure is the following: %\footnote{This procedure is sometimes called $\alpha$-point rounding.}

Fix some $\tau > 1$ and sample the random offset $\alpha \in [1, \tau)$ such that $\log_e \alpha$ is uniformly distributed in $[0, \log_e \tau)$. Without loss of generality, we may assume that the smallest non-zero inter-point distance is $1$. Then we define the possible discretized distances, $L(-2) = -1, L(-1) = 0, \dots, L(\ell) = \alpha \tau^\ell$ for all $\ell \in \mathbb{N}$.

For each $p,q \in F \cup C$, we round $d(p,q)$ \emph{up} to the next largest discretized distance. Let $d'(p,q)$ denote the rounded distances. Observe that $d(p,q) \leq d'(p,q)$ for all $p,q \in F \cup C$. See \S \ref{sec_appendix_LP} for proof of the following proposition, which we use to bound the cost of discretization.

\begin{prop}\label{lem_disc}
	For all $p,q \in F \cup C$, we have $\mathbb{E}[d'(p,q)] = \frac{\tau - 1}{\log_e \tau} d(p,q)$
\end{prop}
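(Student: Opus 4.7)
The plan is to compute the expectation via a direct integral. Let $d = d(p,q)$. If $d = 0$, then $d'(p,q) = L(-1) = 0$ deterministically and the identity holds, so assume $d \geq 1$ (using the normalization that the smallest nonzero inter-point distance is $1$). Introduce the change of variable $\beta = \log_e \alpha$, so that $\beta$ is uniform on $[0,\log_e \tau)$ and each candidate discretized distance $L(\ell) = \alpha \tau^\ell$ for $\ell \geq 0$ equals $e^{\beta + \ell \log_e \tau}$. Write $\log_e d = k\log_e \tau + \gamma$ with $k \in \mathbb{Z}_{\geq 0}$ and $\gamma \in [0, \log_e \tau)$. Then the inequality $L(\ell) \geq d$ is equivalent to $\beta + \ell\log_e \tau \geq \gamma + k\log_e \tau$, so the minimizing index is $\ell = k$ when $\beta \geq \gamma$ and $\ell = k+1$ when $\beta < \gamma$; hence $d'(p,q) = e^{\beta + k\log_e \tau}$ on $[\gamma, \log_e \tau)$ and $d'(p,q) = e^{\beta + (k+1)\log_e \tau}$ on $[0, \gamma)$.

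Integrating against the uniform density on $[0,\log_e \tau)$ yields
\[ \mathbb{E}[d'(p,q)] = \frac{1}{\log_e \tau}\left(\int_0^\gamma e^{\beta + (k+1)\log_e \tau}\,d\beta + \int_\gamma^{\log_e \tau} e^{\beta + k\log_e \tau}\,d\beta\right). \]
Evaluating the two elementary integrals and collecting terms, the bracketed expression telescopes to $(\tau - 1)\, e^\gamma \tau^k$, and since $e^\gamma \tau^k = e^{\log_e d} = d$, we obtain $\mathbb{E}[d'(p,q)] = \frac{\tau - 1}{\log_e \tau}\, d$, as desired.

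The only real obstacle is a bookkeeping check that the two negative-index levels $L(-2)=-1$ and $L(-1)=0$ do not disturb the case analysis: since all nonzero distances are at least $1$, any $d \geq 1$ rounds up to some $L(\ell)$ with $\ell \geq 0$, and the $d=0$ case was handled at the start. Past this, the proof is essentially the one-line integral computation above, with the key design choice being that sampling $\log_e \alpha$ (not $\alpha$) uniformly makes the distribution of the rounding ratio $d'/d$ scale-invariant, which is what decouples the expectation from the specific value of $d$.
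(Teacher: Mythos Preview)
Your proof is correct and follows essentially the same approach as the paper's: parametrize by the logarithm of the random offset, split into the two cases according to where the fractional part of $\log d$ falls relative to $\beta$, and integrate. The only cosmetic difference is that you work in base $e$ (with $\beta = \log_e \alpha$ uniform on $[0,\log_e\tau)$), whereas the paper re-scales to base $\tau$ (with $\log_\tau \alpha$ uniform on $[0,1)$); after that the two computations are line-for-line equivalent.
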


Now using the discretized distances, we can define the \emph{radius level} of $F_j$ for all $j \in C$ by:
\[\ell_j = \min\limits_{\ell \geq -1} \{\ell \mid d'(j,i) \leq L(\ell)
  \quad \forall i \in F_j\}.\]
One should imagine that $F_j$ is a ball of radius $L(\ell_j)$ in terms of the $d'$-distances. Thus, we will often refer to $F_j$ as the \emph{$F$-ball of client $j$}. Further, to accommodate ``shrinking" the $F_j$ sets, we define the \emph{inner ball of $F_j$} by: 
\[B_j = \{i \in F_j \mid d'(j,i) \leq L(\ell_j - 1)\}.\]
Note that we defined $L(-2) = -1$ so that  if $\ell_j = -1$, then $B_j = \emptyset$.

\subsection{Constructing $\lpreroute$}

Our auxiliary LP will maintain three sets of clients: $C_{part}, C_{full}$, and $C^*$. $C_{part}$ consists of all clients, whom we have not yet decided whether we should serve them or not. Then for all clients in $C_{full}$ and $C^*$, we decide to serve them fully. The difference between the clients in $C_{full}$ and $C^*$ is that for the former, we remove the constraint $y(F_j) = 1$ from the LP, while for the latter we still require $y(F_j) = 1$. Thus although we commit to serving $C_{full}$, such clients rely on $C^*$ to find an open facility to connect to. Using the discretized distances, radius levels, inner balls, and these three sets of clients, we are ready to define $\lpreroute$:

\begin{align*}\tag{$\lpreroute$}
	\min\limits_y~~ &\sum\limits_{j \in C_{part}} \sum\limits_{i \in F_j} d'(i,j) y_i + \sum\limits_{j \in C_{full \cup C^*}} (\sum\limits_{i \in B_j} d'(i,j) y_i + (1 - y(B_j)) L(\ell_j)) \\
	\text{s.t.}~~ &y(F_j) \leq 1 \quad \forall j \in C_{part} \\
	&y(B_j) \leq 1 \quad \forall j \in C_{full} \\
	&y(F_j) = 1 \quad \forall j \in C^* \\
	&Wy \leq b\\
	&\sum\limits_{j \in C_{part}} a_j y(F_j) \geq c - \sum\limits_{j \in C_{full} \cup C^*} a_j\\
	&0 \leq y \leq 1
\end{align*}

Note that we use the \emph{rounded} distances in the definition of $\lpreroute$ rather than the original distances. Keeping this in mind, if $C_{part} = C$ and $C_{full}, C^* = \emptyset$, then $\lpreroute$ is the same as $\lpfacil$ up to the discretized distances, so the following proposition is immediate.

\begin{prop}\label{prop_LPreroute}
	Suppose $C_{part} = C$ and $C_{full}, C^* = \emptyset$. Then $\mathbb{E}[Opt(\lpreroute)] \leq \frac{\tau - 1}{\log_e \tau} Opt(\lpfacil)$.
\end{prop}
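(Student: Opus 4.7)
The plan is to observe that under the hypothesis $C_{part} = C$ and $C_{full} = C^* = \emptyset$, the auxiliary LP $\lpreroute$ reduces to exactly $\lpfacil$ except that the objective coefficients $d(i,j)$ are replaced by the rounded distances $d'(i,j)$. Indeed, the second sum in the objective is empty, the constraints $y(B_j) \leq 1$ and $y(F_j) = 1$ are vacuous, and the coverage constraint $\sum_{j \in C_{part}} a_j y(F_j) \geq c - \sum_{j \in C_{full} \cup C^*} a_j$ is just $\sum_{j \in C} a_j y(F_j) \geq c$. Thus every $y$ feasible for $\lpfacil$ is feasible for $\lpreroute$, and vice versa.

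Next, I would take $y^*$ to be an optimal solution to $\lpfacil$. Since $y^*$ is feasible for $\lpreroute$, we get the deterministic upper bound
\begin{equation*}
Opt(\lpreroute) \;\leq\; \sum_{j \in C} \sum_{i \in F_j} d'(i,j)\, y^*_i.
\end{equation*}
Taking expectations over the random offset $\alpha$ used to define $d'$, and using linearity of expectation together with \Cref{lem_disc} which gives $\mathbb{E}[d'(i,j)] = \frac{\tau - 1}{\log_e \tau}\, d(i,j)$, we obtain
\begin{equation*}
\mathbb{E}[Opt(\lpreroute)] \;\leq\; \sum_{j \in C} \sum_{i \in F_j} \mathbb{E}[d'(i,j)]\, y^*_i \;=\; \frac{\tau - 1}{\log_e \tau} \sum_{j \in C} \sum_{i \in F_j} d(i,j)\, y^*_i \;=\; \frac{\tau - 1}{\log_e \tau}\, Opt(\lpfacil).
\end{equation*}

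There is no real obstacle here; the proof is essentially a bookkeeping check that $\lpreroute$ collapses to $\lpfacil$ under the stated hypothesis, followed by a one-line application of \Cref{lem_disc}. The only subtlety worth being explicit about is that $y^*$ does not depend on the random offset $\alpha$ (it is an optimum of the deterministic LP $\lpfacil$), which is what lets us pull the expectation inside the sum without having to worry about correlations between $y^*$ and $d'$.
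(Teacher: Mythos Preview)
Your proposal is correct and is essentially the same approach as the paper, which simply states that the proposition is ``immediate'' from the observation that $\lpreroute$ collapses to $\lpfacil$ with $d$ replaced by $d'$, together with \Cref{lem_disc}. You have merely spelled out this immediate argument in full detail, including the useful remark that the deterministic choice of $y^*$ justifies pushing the expectation inside the sum.
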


We now take some time to parse the definition of $\lpreroute$. Initially, all clients are in $C_{part}$. For clients in $C_{part}$, we are not sure yet whether we should serve them or not. Thus for these clients, we simply require $y(F_j) \leq 1$, so they can be served any amount, and in the objective, the contribution of a client from $C_{part}$ is exactly its connection cost (up to discretization) to $F_j$.

The clients in $C_{full}$ correspond to the ``deleted" constraints in \S \ref{sec_tech}. Importantly, for $j \in C_{full}$, we do not require that $y(F_j) = 1$; rather, we relax this condition to $y(B_j) \leq 1$. Recall that we made the assumption that every client can pay the radius of its $F_j$ set in \S \ref{sec_tech}. To realize this idea, we require that each $j \in C_{full}$ pays its connection costs to $B_j$ in the objective. Then, to serve $j$ fully, $j$ must find $(1 - y(B_j))$ units of open facility to connect to beyond $B_j$. Now $j$ truly pays its radius, $L(\ell_j)$, for this $(1 - y(B_j))$ units of connections in $\lpreroute$, so we can do ``ball-chasing" to $C^*$ to find these facilities. In this case, we say that we \emph{re-route} the client $j$ to some \emph{destination}.

%Looking ahead, our algorithm will maintain that each client in $C_{full}$ can find the remaining $(1 - y(B_j))$ units of open facility within distance $O(1) L(\ell_j)$ by using the facilities in the $F$-balls for clients in $C^*$. Indeed, our final approximation ratio results from the cost of re-routing the clients in $C_{full}$ to $C^*$.

For clients in $C^*$, we require $y(F_j) = 1$. Note that the contribution of a $j \in C^*$ to the objective of $\lpreroute$ is exactly its connection cost to $F_j$. The purpose of $C^*$ is to provide destinations for $C_{full}$.

Finally, because we have decided to fully serve all clients in $C_{full}$ and $C^*$, regardless of how much they are actually served in their $F$-balls, we imagine that they every $j \in C_{full} \cup C^*$ contributes $a_j$ to the coverage constraints, which is reflected in $\lpreroute$.

\subsection{Properties of $\lpreroute$}

Throughout our algorithm, we will modify the data of $\lpreroute$ - we will move clients between $C_{part}, C_{full}$, and $C^*$ and modify the $F$-balls and radius levels. However, we still want the data of $\lpreroute$ to satisfy some consistent properties, which we call our \emph{Basic Invariants}.

\begin{definition}[Basic Invariants] \label{invar_basic}
	We call the following properties our \emph{Basic Invariants}:
	
	\begin{enumerate}
		\item $C_{part} \cup C_{full} \cup C^*$ partitions $C$.	\label{invar_partition}
		\item For all $j \in C$, we have $d'(j,i) \leq L_{\ell_j}$ for all $i \in F_j$.	\label{invar_F}
		\item For all $j \in C$, we have $B_j = \{i \in F_j \mid d'(j,i) \leq L_{\ell_j - 1} \}$.	\label{invar_B}
		\item For all $j \in C$, we have $\ell_j \geq -1$.	\label{invar_ell}
		\item (Distinct Neighbors) For all $j_1, j_2 \in C^*$, if $F_{j_1} \cap F_{j_2} \neq \emptyset$, then $\lvert \ell_{j_1} - \ell_{j_2} \rvert = 1$. In words, if the $F$-balls of two clients in $C^*$ intersect, then they differ by exactly one radius level.	\label{invar_neighbor}
	\end{enumerate}
\end{definition}

We want to emphasize Basic Invariant \ref{invar_basic}(\ref{invar_neighbor}), which we
call the \emph{Distinct Neighbors Property}. It is not difficult to see that the Distinct Neighbors Property implies that $\{F_j \mid j \in C^* \}$ has bipartite intersection graph.

\begin{definition}[Intersection Graph]\label{def_itersectiongraph}
    Let $\mathcal{F} = \{F_j \mid j \in C^*\}$ be a set family indexed by $C^*$. The intersection graph of $\mathcal{F}$ is the undirected graph with vertex set $C^*$ such that two vertices $j$ and $j'$ are connected by an edge if any only if $F_j \cap F_{j'} \neq \emptyset$.
\end{definition}

\begin{prop}\label{prop_bipartite}
	Suppose $\lpreroute$ satisfies the Distinct Neighbors Property. Then the intersection graph of $\mathcal{F} = \{F_j \mid j \in C^*\}$ is bipartite.
\end{prop}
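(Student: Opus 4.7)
The plan is to exhibit an explicit $2$-coloring of the vertex set $C^*$ that witnesses bipartiteness, using the radius levels as the color. Concretely, I would partition $C^*$ into $C^*_{\text{even}} = \{j \in C^* \mid \ell_j \text{ is even}\}$ and $C^*_{\text{odd}} = \{j \in C^* \mid \ell_j \text{ is odd}\}$. Since each $\ell_j$ is an integer (with $\ell_j \geq -1$ by Basic Invariant \ref{invar_basic}(\ref{invar_ell})), every vertex is placed in exactly one of the two classes.

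Next, I would verify that this partition has no monochromatic edges. Consider an arbitrary edge of the intersection graph, i.e.\ a pair $j_1, j_2 \in C^*$ with $F_{j_1} \cap F_{j_2} \neq \emptyset$. The Distinct Neighbors Property (Basic Invariant \ref{invar_basic}(\ref{invar_neighbor})) then guarantees $|\ell_{j_1} - \ell_{j_2}| = 1$, so $\ell_{j_1}$ and $\ell_{j_2}$ have opposite parities. Therefore one endpoint lies in $C^*_{\text{even}}$ and the other in $C^*_{\text{odd}}$, which is exactly the definition of a bipartite graph.

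There is no real obstacle here; the statement is essentially a direct unpacking of Basic Invariant \ref{invar_basic}(\ref{invar_neighbor}). The only thing to be a little careful about is that the coloring is well-defined on all of $C^*$, which is handled by Basic Invariant \ref{invar_basic}(\ref{invar_ell}) ensuring $\ell_j$ is an integer (and in fact at least $-1$), so parity is meaningful for every vertex.
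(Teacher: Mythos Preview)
Your proof is correct. Both your argument and the paper's rest on the same observation---adjacent vertices in the intersection graph have radius levels differing by exactly one---but you and the paper package it differently. You give an explicit bipartition by the parity of $\ell_j$, which immediately makes every edge bichromatic. The paper instead argues by contradiction through the ``no odd cycle'' characterization: along any cycle each edge changes the radius level by $\pm 1$, so returning to the start forces equally many increases and decreases, hence an even number of edges. Your route is a bit more direct since it avoids invoking the odd-cycle criterion and simply exhibits the two color classes; the paper's route makes the structural obstruction (an odd cycle) explicit. One minor remark: the well-definedness of your coloring really comes from the fact that radius levels are integers by definition, not from Basic Invariant~\ref{invar_basic}(\ref{invar_ell}), which only asserts the lower bound $\ell_j \geq -1$.
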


The following proposition will also be useful.

\begin{prop}\label{prop_degree}
	Suppose $\lpreroute$ satisfies the Distinct Neighbors Property. Then each facility is in at most two $F$-balls for clients in $C^*$.
\end{prop}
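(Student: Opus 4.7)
The plan is a direct proof by contradiction, exploiting the fact that the Distinct Neighbors Property forces pairwise radius-level differences of exactly one.

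Suppose for contradiction that some facility $i \in F$ lies in three distinct $F$-balls, say $F_{j_1}, F_{j_2}, F_{j_3}$ with $j_1, j_2, j_3 \in C^*$. Since the common element $i$ witnesses that each pair of these balls intersects, the Distinct Neighbors Property (Basic Invariant~\ref{invar_basic}(\ref{invar_neighbor})) applies to every pair, yielding
\[
|\ell_{j_1} - \ell_{j_2}| \;=\; |\ell_{j_2} - \ell_{j_3}| \;=\; |\ell_{j_1} - \ell_{j_3}| \;=\; 1.
\]

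The obstacle, if any, is purely arithmetic: I need to observe that no three integers can be pairwise at distance exactly one. WLOG relabel so that $\ell_{j_1} \le \ell_{j_2} \le \ell_{j_3}$. Then $\ell_{j_2} = \ell_{j_1} + 1$ and $\ell_{j_3} = \ell_{j_2} + 1 = \ell_{j_1} + 2$, so $|\ell_{j_1} - \ell_{j_3}| = 2 \neq 1$, a contradiction. Hence no facility can appear in three or more $F$-balls of clients in $C^*$, which is exactly the claim.

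This argument uses only the invariant and integrality of the radius levels (guaranteed by the definition of $\ell_j$ and Basic Invariant~\ref{invar_basic}(\ref{invar_ell})), so no further structural fact about $\lpreroute$ is needed. In particular, the statement is a clean combinatorial consequence of bipartiteness-with-level-gap-one, and can be viewed as the ``degree $\le 2$'' strengthening of Proposition~\ref{prop_bipartite}.
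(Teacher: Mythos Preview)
Your proof is correct and is essentially the same contradiction argument as the paper's: both start from a facility lying in three $F$-balls $F_{j_1}, F_{j_2}, F_{j_3}$ with $j_1,j_2,j_3\in C^*$. The only cosmetic difference is that the paper observes this yields a triangle in the intersection graph and invokes Proposition~\ref{prop_bipartite} (bipartiteness), whereas you bypass that proposition and argue directly from the level condition $|\ell_{j_a}-\ell_{j_b}|=1$; both rest on the same underlying parity obstruction.
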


We summarize the relevant properties of $\lpreroute$ in the following lemma. The algorithm described by the lemma is exactly the steps we took in this section.

\begin{lem}\label{lem_makeLPbasic}
	There exists a polynomial time algorithm that takes as input a GKM instance $\I$ and outputs $\lpreroute$ such that $\mathbb{E}[Opt(\lpreroute)] \leq \frac{\tau - 1}{\log_e \tau} Opt(I)$ and $\lpreroute$ satisfies all Basic Invariants.
\end{lem}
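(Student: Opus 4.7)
The plan is to observe that this lemma is a packaging statement: the algorithm is exactly the construction spelled out across Section~2, and the conclusion follows by chaining the three numbered propositions. So the proof will be mostly bookkeeping, organized as (i) describe the algorithm, (ii) combine the cost bounds, (iii) check each Basic Invariant.

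For the algorithm, I would first call the procedure of Proposition~\ref{prop_LPfacil} on the input instance $\I$ to duplicate facilities and produce sets $\{F_j\}_{j \in C}$ with $Opt(\lpfacil) \leq Opt(\I)$. Then I would sample $\alpha \in [1, \tau)$ with $\log_e \alpha$ uniform on $[0, \log_e \tau)$, define the discretized distance scale $L(-2) = -1,\ L(-1) = 0,\ L(\ell) = \alpha \tau^\ell$, and round each $d(p,q)$ up to the next value of $L$ to get $d'(p,q)$. Using $d'$, compute the radius level $\ell_j = \min_{\ell \geq -1}\{\ell : d'(j,i) \leq L(\ell)\ \forall i \in F_j\}$ and the inner ball $B_j = \{i \in F_j : d'(j,i) \leq L(\ell_j - 1)\}$ for each $j \in C$. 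Finally, initialize $C_{part} \defeq C$ and $C_{full} \defeq C^* \defeq \emptyset$, and write down $\lpreroute$ as defined in the section. All of these steps are clearly polynomial in the input size.

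For the cost bound, with the initialization above the objective and constraints of $\lpreroute$ reduce (up to the use of $d'$ in place of $d$) to those of $\lpfacil$, so Proposition~\ref{prop_LPreroute} applies and gives $\mathbb{E}[Opt(\lpreroute)] \leq \tfrac{\tau-1}{\log_e \tau} Opt(\lpfacil)$. Combining this with Proposition~\ref{prop_LPfacil} yields
\[
\mathbb{E}[Opt(\lpreroute)] \ \leq\ \tfrac{\tau-1}{\log_e \tau}\, Opt(\lpfacil) \ \leq\ \tfrac{\tau-1}{\log_e \tau}\, Opt(\I),
\]
which is the desired inequality. (The only subtlety here is that Proposition~\ref{prop_LPreroute} must be stated with respect to the same random $\alpha$ used in Proposition~\ref{lem_disc}; this is the case because the $d'$-costs in $\lpreroute$ are exactly the quantities whose expectation Proposition~\ref{lem_disc} controls.)

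For the Basic Invariants: (1) partition holds trivially since $C_{part} = C$ and the other two sets are empty; (2) and (3) hold by the very definitions of $\ell_j$ and $B_j$; (4) holds because the minimum in the definition of $\ell_j$ is taken over $\ell \geq -1$; and (5) the Distinct Neighbors property is vacuous because $C^* = \emptyset$. There is no substantive obstacle in the proof — the real work has already been done in Propositions~\ref{prop_LPfacil}, \ref{lem_disc}, and \ref{prop_LPreroute}; the lemma simply assembles them into the form that the rest of the paper will consume.
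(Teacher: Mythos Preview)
Your proposal is correct and follows essentially the same approach as the paper's own proof: invoke Proposition~\ref{prop_LPfacil} to build $\lpfacil$, carry out the discretization and initialization of \S\ref{sec_iteroverview}, apply Proposition~\ref{prop_LPreroute} chained with Proposition~\ref{prop_LPfacil} for the cost bound, and then check the Basic Invariants. If anything, your write-up is more thorough, since the paper merely asserts ``it is easy to check that $\lpreroute$ satisfies all Basic Invariants'' while you verify each one explicitly.
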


\subsection{Notations}

Throughout this paper, we will always index facilities using $i$ and clients using $j$.

For any client $j \in C$, we say that $j$ is \emph{supported on} facility $i \in F$ if $i \in F_j$. Then for any $C' \subset C$, we let $F(C') \subset F$ be the set of all facilities supported on at least one client in $C'$.

Given a setting of the $y$-variables of $\lpreroute$, we say a facility $i$ is \emph{fractional} (with respect to the given $y$-variables) if $y_i < 1$. Otherwise, facility $i$ is \emph{integral}. Similarly, we say a client $j$ is fractional if $F_j$ contains only fractional facilities, and $j$ is integral otherwise. Using these definitions, for any $F' \subset F$, we can partition $F'$ into $F'_{<1} \cup F'_{=1}$, where $F'_{<1}$ is the subset of fractional facilities and $F'_{=1}$ is the subset of integral facilities. An analogous partition holds for a subset of clients $C' \subset C$, so we have $C' = C'_{<1} \cup C'_{=1}$.

\section{Basic Iterative Rounding Phase}\label{sec_iter}

In this section, we describe the iterative rounding phase of our
algorithm. This phase has two main goals: (a)~to simplify the
constraint set of $\lpreroute$, and (b)~to decide which clients to
serve and how to serve them. To make these two decisions, we
repeatedly solve $\lpreroute$ to obtain an optimal extreme point, and
then use the structure of tight constraints to update $\lpreroute$,
and reroute clients accordingly.

\subsection{The Algorithm}

Our algorithm repeatedly solves $\lpreroute$ to obtain an optimal
extreme point $\bar{y}$, and then performs one of the following three
possible updates, based on the tight constraints:

\begin{enumerate}
    \item If some facility $i$ is set to zero in $\bar{y}$, we delete it from the instance.
    \item If constraint $\bar{y}(F_j) \leq 1$ is tight for some $j \in
      C_{part}$, then we decide to fully serve client $j$ by moving
      $j$ to either $C_{full}$ or $C^*$. Initially, we add $j$ to $C_{full}$ then run  Algorithm \ref{alg_reroute} to decide if $j$ should be in $C^*$ instead. 
    \item If constraint $\bar{y}(B_j) \leq 1$ is tight for some $j \in C_{full}$,  we shrink $F_j$ by one radius level (so $j$'s new $F$-ball is exactly $B_j$.) Then we possibly move $j$ to $C^*$ by running Algorithm \ref{alg_reroute} for $j$.
\end{enumerate}

These steps are made formal in Algorithms~\ref{alg_iterativeround}
(\iteralg) and \ref{alg_reroute} (\itersub). \iteralg relies on the
subroutine \itersub, which gives our criterion for moving a client to
$C^*$. This criterion for adding clients to $C^*$ is the key way in
which our algorithm differs from that of~\cite{DBLP:conf/stoc/KrishnaswamyLS18}. In~\cite{DBLP:conf/stoc/KrishnaswamyLS18},
the criterion used ensures that $\{F_j \mid j \in C^*\}$ is a family
of disjoint sets. In contrast, we allow $F$-balls for clients in $C^*$
to intersect, as long as they satisfy the Distinct Neighbors Property
from \Cref{invar_basic}(\ref{invar_neighbor}). Thus, our algorithm
allows for rich structures in the set system
$\{F_j \mid j \in C^*\}$.

\begin{algorithm}
	\DontPrintSemicolon
	\LinesNumbered
	\SetAlgoLined
	
	\KwIn{$\lpreroute$ satisfying all Basic Invariants}
	\KwResult{Modifies $\lpreroute$ and outputs an optimal extreme point of $\lpreroute$}
	\BlankLine
	
	\Repeat{termination}{
		Solve $\lpreroute$ to obtain optimal extreme point $\bar{y}$.\;
		\If{there exists a facility $i \in F$ such that $\bar{y}_i \geq 0$ is tight}
			{
			Delete $i$ from $F$.\;
			}
		\ElseIf{there exists a client $j \in C_{part}$ such that $y(F_j) \leq 1$ is tight}
			{
			Move $j$ from $C_{part}$ to $C_{full}$.\;
			$\textsc{ReRoute}(j)$\;
			}
		\ElseIf{there exists a client $j \in C_{full}$ such that $\bar{y}(B_j) \leq 1$ is tight}
			{
			Update $F_j \gets B_j$ and decrement $\ell_j$ by $1$.\;
			Update $B_j \gets \{i \in F_j \mid d'(j,i) \leq L(\ell_j - 1) \}$.\;
			$\textsc{ReRoute}(j)$\;
			}
		\Else {Output $\bar{y}$ and Terminate.}
	}
	\caption{\iteralg \label{alg_iterativeround}}
\end{algorithm}

\begin{algorithm}
	\DontPrintSemicolon
	\LinesNumbered
	\SetAlgoLined
	
	\KwIn{Client $j \in C_{full}$}
	\KwResult{Decide whether to move $j$ to $C^*$ or not}

	\If {$\ell_j \leq \ell_{j'} - 1$ for all $j' \in C^*$ such that $F_j \cap F_{j'} \neq \emptyset$}
	{
		Move $j$ from $C_{full}$ to $C^*$.\;
		For all $j' \in C^*$ such that $F_j \cap F_{j'} \neq \emptyset$ and $\ell_{j'} \geq \ell_j + 2$, move $j'$ from $C^*$ to $C_{full}$.\;
	}
	
	\caption{\itersub \label{alg_reroute}}	
\end{algorithm}

The modifications made by \iteralg do not increase $Opt(\lpreroute)$,
so upon termination of our algorithm, we have an optimal extreme point
$\bar{y}$ to $\lpreroute$ such that $\lpreroute$ is still a relaxation
of GKM and no non-negativity constraint, $C_{part}$-constraint, or
$C_{full}$-constraint is tight for $\bar{y}$. This is formalized in
the following theorem, whose proof is similar to~\cite{DBLP:conf/stoc/KrishnaswamyLS18}, and is
deferred to Appendix~\ref{sec_iteranalysis}.

\begin{thm}\label{thm_mainiter}
  \iteralg~ is a polynomial time algorithm that maintains all Basic
  Invariants, weakly decreases $Opt(\lpreroute)$, and outputs an
  optimal extreme point to $\lpreroute$ such that no $C_{part}$-,
  $C_{full}$-, or non-negativity constraint is tight.
\end{thm}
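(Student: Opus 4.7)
The plan is to proceed by induction on the iterations of \iteralg, maintaining that each update preserves the five Basic Invariants, weakly decreases $Opt(\lpreroute)$, and strictly decreases a polynomial-sized lex potential. The Else branch of the Repeat loop fires precisely when none of the three update conditions (tight non-negativity, tight $C_{part}$-, or tight $C_{full}$-constraint) is satisfied at $\bar y$, so upon termination the non-tightness conclusion follows for free.

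For the Basic Invariants, the partition property, the $F_j/B_j$ definitions, and $\ell_j \geq -1$ are straightforward by inspection of each update. In particular the shrink step can fire only when $\bar y(B_j) \leq 1$ is tight, which forces $B_j \neq \emptyset$ and hence $\ell_j \geq 0$ before the decrement. The only delicate invariant is Distinct Neighbors. When \itersub admits $j$ to $C^*$, the admission test enforces $\ell_j \leq \ell_{j'} - 1$ for every existing $j' \in C^*$ with $F_{j'} \cap F_j \neq \emptyset$, and the subsequent eviction removes every such $j'$ with $\ell_{j'} \geq \ell_j + 2$. The surviving neighbors of $j$ therefore all satisfy $\ell_{j'} = \ell_j + 1$, while pairs in $C^* \setminus \{j\}$ remain untouched and were valid by the inductive hypothesis.

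For the $Opt$-decreasing claim, deletion of a zero-valued facility is trivial, and the \itersub call itself never changes the value of $\bar y$: promotion to $C^*$ adds the equality $y(F_j) = 1$ which $\bar y$ already satisfies, and demotion to $C_{full}$ merely relaxes such a constraint. The substantive cases are the move of $j$ from $C_{part}$ to $C_{full}$ at a tight $\bar y(F_j) = 1$, and the symmetric $F_j$-shrink at a tight $\bar y(B_j) = 1$. In both, the coverage constraint rebalances consistently: the LHS contribution $a_j \bar y(F_j) = a_j$ vanishes exactly as the RHS drops by $a_j$, so the effective coverage requirement is unchanged. To exhibit a feasible $y^\dagger$ for the new $\lpreroute$ of objective at most $Opt$, I will start from $\bar y$ and redistribute its mass from $F_j \setminus B_j$ into $B_j$ (permitted because $y(F_j) \leq 1$ has been weakened to $y(B_j) \leq 1$). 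Since $d'(i,j) \leq L(\ell_j - 1) < L(\ell_j)$ on $B_j$, each unit of shifted mass reduces the penalty $(1 - y(B_j)) L(\ell_j)$ by strictly more than it raises $\sum_{i \in B_j} d'(i,j) y_i$, so $y^\dagger$ achieves objective at most $Opt$.

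For termination, let $\Phi = (|F|,\,|C_{part}|,\,\sum_{j \in C_{full} \cup C^*}(\ell_j + 1))$, valued in $\mathbb{Z}_{\geq 0}^3$. Each iteration strictly decreases some coordinate in lex order: deletion drops the first; promotion from $C_{part}$ drops the second (clients leave $C_{part}$ forever); shrinking drops the third by one, since \itersub merely shuffles $j$ and its intersecting neighbors within $C_{full} \cup C^*$ without changing their contribution to the sum. Each coordinate is polynomially bounded, since $|F|, |C|$ are polynomial in the input and $\ell_j = O(\log_\tau(D_{\max}/D_{\min}))$ is polynomial in the bit-length, giving polynomial termination. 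The main obstacle will be making the mass-shifting construction in the third paragraph globally feasible: the shift must simultaneously respect the $r_1$ knapsack constraints $Wy \leq b$, the other $y(F_{j'}) \leq 1$ caps, and the coverage requirement involving $C_{part}$ clients whose $F$-balls overlap $F_j \setminus B_j$. I expect that a more global rebalancing, or an LP-duality argument exploiting complementary slackness at the extreme point $\bar y$, will be needed in place of a purely local transport.
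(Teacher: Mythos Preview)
Your treatment of the Basic Invariants and of termination is essentially the same as the paper's (Lemmas~\ref{lem_iterlevels}, \ref{lem_iterneighbors}, \ref{lem_iterterminate}), with your lexicographic potential being a slightly more packaged version of the paper's direct count.

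The genuine gap is in your argument that $Opt(\lpreroute)$ weakly decreases. You propose to construct a new feasible $y^\dagger$ by redistributing mass from $F_j \setminus B_j$ into $B_j$, and then correctly worry that such a transport may violate the knapsack constraints $Wy \leq b$, the caps $y(F_{j'}) \leq 1$ for other clients, or the coverage constraint. This obstacle is self-inflicted: no mass needs to be moved at all. The paper (Lemma~\ref{lem_iterdecrease}) simply shows that the \emph{same} $\bar y$ remains feasible for the updated $\lpreroute$ and has \emph{exactly the same} objective value. Feasibility is immediate since $B_j \subset F_j$ implies $\bar y(B_j) \leq \bar y(F_j) = 1$. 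For the objective, the key observation you are missing is that the rounded metric $d'$ is discretized to the values $L(\ell)$, so every $i \in F_j \setminus B_j$ satisfies $d'(i,j) = L(\ell_j)$ exactly. Hence when $\bar y(F_j) = 1$,
\[
\sum_{i \in F_j} d'(i,j)\bar y_i \;=\; \sum_{i \in B_j} d'(i,j)\bar y_i + L(\ell_j)\,\bar y(F_j \setminus B_j)
\;=\; \sum_{i \in B_j} d'(i,j)\bar y_i + (1 - \bar y(B_j))\,L(\ell_j),
\]
which is precisely $j$'s new contribution as a $C_{full}$ (or $C^*$) client. The shrink case is identical with $F_j', B_j'$ in place of $F_j, B_j$. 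Since $\bar y$ itself is unchanged, all side constraints are trivially preserved and your ``main obstacle'' disappears. Replace your third paragraph with this direct equality argument and the proof goes through.
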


%\subsection{Re-Routing Cost}

Recall the goals from the beginning of the section: procedure \iteralg
achieves goal (a) of making $\{F_j \mid j \in C^*\}$ simpler while
maintaining the Distinct Neighbors Property. Since we moved facilities
between $C^*$ and $C_{full}$, achieving goal~(b) means deciding which
facilities to open, and guaranteeing that each client has a
``close-by'' open facility. (Recall from \S\ref{sec_iteroverview} that
$C^*$ is the set of clients such that their $F_j$-balls are guaranteed
to contain an open facility, and $C_{full}$ are the clients which are
guaranteed to be served but using facilities opened in $C^*$.)

Here's the high-level idea of how we achieve goal (b). Suppose we move $j$
from $C_{full}$ to $C^*$ and some $j'$ from $C^*$ to $C_{full}$; we
want to find a good destination for $j'$. We claim $j$'s facility is a
good destination for $j'$. Indeed, since $j$ is now in $C^*$, we can
use the constraint $y(F_j) = 1$ to bound the distance of $j'$ to this
unit of facility by
$L(\ell_{j'}) + 2L(\ell_j) \leq (1 + \frac{2}{\tau^2}) L(\ell_{j'})$,
using the facts that $\ell_{j'} \geq \ell_j + 2$ and
$F_j \cap F_{j'} \neq \emptyset$, which are guaranteed by \itersub.
Of course, if $j$ is removed from $C^*$ later, we re-route it to some
client that is at least two radius levels smaller, and can send $j'$
to that client. This corresponds to the ``quarter ball-chasing" of \S \ref{sec_tech}. Indeed every further re-routing step for $j'$ has
geometrically decreasing cost, which give a cost of
$O(1) L(\ell_{j'})$.  We defer the formal analysis to
\Cref{thm_mainreroute}, after we combine \iteralg~ with another (new)
iterative operation, which we present in the next section.

%%% Local Variables:
%%% mode: latex
%%% TeX-master: "main"
%%% End:

\section{Iterative Operation for Structured Extreme Points}\label{sec_rerouteleaf}

% The setting we consider in this section is the following: Upon termination of \iteralg, we have an optimal extreme point $\bar{y}$ of $\lpreroute$ such that no $C_{part}$-, $C_{full}-$, or non-negativity constraint is tight.

In this section, we achieve two goals: (a)~we show that the structure
of the extreme points of $\lpreroute$ obtained from
\Cref{thm_mainiter} are highly structured, and admit a \emph{chain
  decomposition}. Then, (b) we exploit this chain decomposition to
define a \emph{new} iterative operation that is applicable whenever $\bar{y}$
has ``many'' (i.e., more than $O(r)$) fractional variables.  We emphasize that this characterization of the extreme points is what enables the new iterative rounding algorithm.
%\agnote{Should emphasize this part some more, at the cost of
%  shortening the previous sections a bit (or a lot). Maybe relate back to the
%  techniques section.}
  
\subsection{Chain Decomposition}
\label{sec:chain-decomposition}

%Here, we define our chain decomposition of extreme points. We begin
%with the definition of a single \emph{chain}. Intuitively, one can
%think of
A chain is a sequence of clients in $C^*$ % represented as a path, 
where the $F$-ball of each client $j$ % along the path 
contains exactly two facilities---one shared with the previous ball % on the path
and other with the next. % ball.
\begin{definition}[Chain]\label{def_chain}
	A \emph{chain} is a sequence of clients $(j_1, \dots, j_p) \subseteq C^*$ satisfying:
	\begin{itemize}
		\item $\lvert F_{j_q} \rvert = 2$ for all $q \in [p]$, and
		\item $F_{j_q} \cap F_{j_{q+1}} \neq \emptyset$ for all $q \in [p - 1]$.
	\end{itemize}
\end{definition}

Our chain decomposition is a partition of the \emph{fractional}
$C^*$-clients given in the next theorem, which is our main structural
characterization of the extreme points of $\lpreroute$. (Recall that a
client $j$ is fractional if all facilities in $F_j$ are fractional; we
denote the fractional clients in $C^*$ by $C^*_{< 1}$.)

\begin{thm}[Chain Decomposition]\label{thm_chaindecomp}
  Suppose $\lpreroute$ satisfies all Basic Invariants. Let $\bar{y}$
  be an extreme point of $\lpreroute$ such that no $C_{part}$-,
  $C_{full}$-, or non-negativity constraint is tight. Then there
  exists a partition of $C^*_{<1}$ into at most $3r$ chains, along with
  a set of at most $2r$ violating clients (clients that are not in any chain.)
\end{thm}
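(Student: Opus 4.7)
The plan is to encode the intersection structure of $\{F_j : j \in C^*_{<1}\}$ as a bipartite multigraph $H$ with vertex set $C^*_{<1}$, where each facility $i \in V \defeq F(C^*_{<1})$ becomes a proper edge if it lies in two such $F_j$'s and a pendant/leaf if in only one. This is well-defined because by \Cref{prop_degree} each facility lies in at most two $C^*$-balls, and if some $i \in V$ were in $F_{j'}$ for $j' \in C^*_{=1}$, then $\bar{y}(F_{j'})=1$ together with some integral $i' \in F_{j'}$ and the no-tight-nonnegativity hypothesis would force $F_{j'} = \{i'\}$, contradicting $i$ being fractional. Thus $V$-facilities only touch $F_j$'s with $j \in C^*_{<1}$, and \Cref{prop_bipartite} makes $H$ bipartite.

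The key inequality I would establish next is $(n - m) + c_0 \le r$, where $n \defeq |V|$, $m \defeq |C^*_{<1}|$, and $c_0$ is the number of connected components of $H$ having no pendants. Since $\bar{y}$ is an extreme point, the tight constraints restricted to the fractional variables have rank $|F_{<1}|$. On those variables, the $C^*$-equalities $\bar{y}(F_j) = 1$ reduce to $m$ nontrivial constraints (the $C^*_{=1}$ ones collapse to $0=0$ by the argument above), plus at most $r$ side constraints. A standard bipartite-incidence calculation shows that in any connected component of $H$ with $v$ vertices and $p$ pendants, the rank contributed by the corresponding equalities is $v - \mathbf{1}[p=0]$, so the total equality rank is $m - c_0$. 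Together with the side constraints this forces $n \le |F_{<1}| \le m - c_0 + r$.

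The rest is degree-sum bookkeeping on $H$. Every $j \in C^*_{<1}$ has $|F_j| \ge 2$ (else $\bar y(F_j) = 1$ forces integrality). Letting $m_k$ count clients with $|F_j| = k$ and $p$ the number of pendants, the identities $\sum_k k m_k = 2n - p$ and $\sum_k m_k = m$ combine to give
\[\sum_{k \ge 3} (k-2)\, m_k \;=\; 2(n-m) - p \;\le\; 2(r - c_0) - p.\]
Since $k - 2 \ge 1$ for $k \ge 3$, the number of violating clients $\sum_{k \ge 3} m_k$ is at most $2r$, yielding one half of the theorem.

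For the chain count, restrict $H$ to its degree-$2$ vertices; the induced subgraph has max degree $2$, so its components are paths and cycles. Each path component is a chain per \Cref{def_chain}, and each cycle component $(j_1,\dots,j_\ell)$ becomes a chain by linear traversal (the definition requires only consecutive intersections). Any cycle component is a pendant-free $2$-regular component of $H$, so the number of cycles is at most $c_0$. Each path component contributes two ``external'' edge-ends at its two endpoints, each being either a pendant at a degree-$2$ vertex or an edge into a degree-$\ge 3$ vertex, so
\[2 c_P \;\le\; p + \sum_{k \ge 3} k\, m_k \;\le\; p + 3\sum_{k \ge 3}(k-2)\,m_k \;\le\; 6(r - c_0) - 2p,\]
using $\sum_{k \ge 3} m_k \le \sum_{k \ge 3}(k-2)\, m_k$. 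Thus the total number of chains is at most $3(r - c_0) - p + c_0 = 3r - 2c_0 - p \le 3r$. The main obstacle will be verifying the rank identity $m - c_0$ for the bipartite equality system (in particular, handling multiple pendants at a single vertex without double counting); once this is in place, the rest is algebra.
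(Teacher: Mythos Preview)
Your proof is correct and shares the same skeleton as the paper's: both establish the rank bound $|F_{<1}| \le \dim(C^*_{<1}) + r$, take the violating set to be $\{j \in C^*_{<1} : |F_j| > 2\}$ and bound it by $2r$ via a degree-sum, and then read off chains as the path/cycle components remaining after deleting the violating clients.

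Where you diverge is in the bookkeeping for the chain count. The paper works in the simple intersection graph and proves a per-component rank deficiency $|F(V_k)| > \dim(V_k)$ (separately for paths and for even cycles), then chains the inequalities $\sum_k |F(V_k)| \le |F_{<1}| \le \dim(C^*_{<1}) + r \le \sum_k \dim(V_k) + |V| + r$ to get $\ell \le 3r$. You instead encode facilities as edges/pendants of a bipartite multigraph $H$, compute the \emph{exact} rank $\dim(C^*_{<1}) = m - c_0$ once, and push everything through the single identity $\sum_{k\ge 3}(k-2)m_k = 2(n-m) - p$ together with an external-edge-end count $2c_P \le p + \sum_{k\ge 3} k m_k$. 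Your packaging is arguably more uniform (one degree-sum drives both the $2r$ and the $3r$ bounds), while the paper's per-component statement is more modular but conceals the same kernel computation.

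On your flagged obstacle: the rank identity is straightforward. For a connected component of $H$ with vertex set $V_k$, the left kernel of its client-by-facility incidence matrix is determined by $\lambda_j = 0$ for each pendant at $j$ and $\lambda_j + \lambda_{j'} = 0$ for each proper edge. If any pendant is present, connectivity forces $\lambda \equiv 0$, so rank $= |V_k|$; multiple pendants at a single vertex are redundant but cause no double counting. If no pendant is present, bipartiteness gives a one-dimensional kernel (constant on each side with opposite signs), so rank $= |V_k| - 1$. Summing over components yields $\dim(C^*_{<1}) = m - c_0$ exactly, and your inequality $n \le |F_{<1}| \le (m - c_0) + r$ follows.
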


The proof relies on analyzing the extreme points of a set-cover-like
polytope with $r$ side constraints; we defer it to
\S\ref{sec_pathdecomp} and proceed instead to define the new iterative
operation.

\subsection{Iterative Operation for Chain Decompositions}
\label{sec:iter-oper-chain}

Composing~\Cref{thm_mainiter} and \Cref{thm_chaindecomp}, consider 
an optimal extreme point $\bar{y}$ of $\lpreroute$, and a chain
decomposition.
We show that if the number of fractional variables in $\bar{y}$ is
sufficiently large, there exists a useful structure in the chain
decomposition, which we call a \emph{candidate configuration}.

\begin{definition}[Candidate Configuration]\label{def_config}
	Let $\bar{y}$ be an optimal extreme point of $\lpreroute$. A candidate configuration is a pair of two clients $(j,j') \subset C^*_{<1}$ such that:
	\begin{enumerate}
		\item $F_j \cap F_{j'} \neq \emptyset$
		\item $\ell_{j'} \leq \ell_j - 1$
        \item Every facility in $F_j$ and $F_{j'}$ is in at exactly two $F$-balls for clients in $C^*$
        \item $\lvert F_j \rvert = 2$ and $\lvert F_{j'} \rvert = 2$ 
	\end{enumerate}
\end{definition}

\begin{lem}\label{lem_configfacil}
  Suppose $\lpreroute$ satisfies all Basic Invariants, and let
  $\bar{y}$ be an optimal extreme point of $\lpreroute$ such that no
  $C_{part}$-, $C_{full}$-, or non-negativity constraint is tight. If
  $\lvert F_{<1} \rvert \geq 15r$, then there exist a candidate
  configuration in $C^*_{<1}$.
\end{lem}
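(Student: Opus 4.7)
The plan is to combine a rank/extreme-point count with the Chain Decomposition theorem (\Cref{thm_chaindecomp}), and then exhibit a candidate configuration inside a sufficiently long chain.

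First, I would force a large lower bound on $|C^*_{<1}|$. At the extreme point $\bar{y}$, the tight constraints must have rank $|F|$. Since by hypothesis no $C_{part}$-, $C_{full}$-, or non-negativity constraint is tight, the only tight constraints come from (i) $y_i = 1$ for $i \in F_{=1}$, (ii) the equalities $y(F_j) = 1$ for $j \in C^*$, and (iii) at most $r$ of the side (knapsack/coverage) constraints. If $j \in C^*$ is integral, then some $i^* \in F_j$ has $\bar{y}_{i^*} = 1$; combined with $y(F_j) = 1$ and the fact that $\bar{y}_i > 0$ for every other $i \in F_j$ (no non-negativity is tight), this forces $|F_j| = 1$, so the constraint $y(F_j) = 1$ is just $y_{i^*} = 1$ and is redundant with (i). Hence the rank contribution from (ii) is at most $|C^*_{<1}|$, yielding $|F_{=1}| + |C^*_{<1}| + r \geq |F|$, i.e., $|C^*_{<1}| \geq |F_{<1}| - r \geq 14r$.

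Next, I would apply \Cref{thm_chaindecomp} to partition $C^*_{<1}$ into at most $3r$ chains and at most $2r$ violating clients; the total number of clients lying in chains is therefore at least $12r$. Since $3r$ chains of length at most $3$ contain at most $9r$ clients, some chain $(j_1, \dots, j_p)$ must have $p \geq 4$. I claim that the interior pair $(j_2, j_3)$, ordered so that the larger-radius client is $j$, forms a candidate configuration. Conditions 1 and 4 of \Cref{def_config} are immediate from \Cref{def_chain}. Condition 2 follows from the Distinct Neighbors property, which gives $|\ell_{j_2} - \ell_{j_3}| = 1$. For condition 3, \Cref{prop_degree} already caps each facility's $C^*$-degree at $2$, so it suffices to show degree at least $2$ for each facility in $F_{j_2} \cup F_{j_3}$. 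The facility in $F_{j_2} \cap F_{j_3}$ is in two $C^*$-balls by construction. The other facility of $F_{j_2}$ lies in $F_{j_1}$ by the chain definition and cannot coincide with the $F_{j_2} \cap F_{j_3}$ facility -- otherwise that facility would lie in $F_{j_1}, F_{j_2}, F_{j_3}$, contradicting \Cref{prop_degree}. Symmetrically, the remaining facility of $F_{j_3}$ lies in $F_{j_4}$ (which exists because $p \geq 4$) and is distinct from the shared facility. So every facility in $F_{j_2} \cup F_{j_3}$ has $C^*$-degree exactly $2$.

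The main obstacle I anticipate is the redundancy accounting in the first step: one must be careful that integral $C^*$-clients do not pad the rank count, so that the bound falls on $|C^*_{<1}|$ rather than on the weaker $|C^*|$. Without this care, the threshold of $15r$ would not propagate through the $3r$/$2r$ split of \Cref{thm_chaindecomp} to yield a chain of length $\geq 4$. Once such a chain is in hand, the degree accounting collapses to \Cref{prop_degree} together with the rigidity it enforces on consecutive chain members.
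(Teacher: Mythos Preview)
Your proposal is correct and follows essentially the same approach as the paper. The paper factors the argument through an intermediate lemma (\Cref{lem_configclient}, proving that $|C^*_{<1}| \geq 14r$ suffices) and invokes the already-proved \Cref{lem_fractionalextreme} for the rank bound $|F_{<1}| \leq \dim(C^*_{<1}) + r$, whereas you inline both of these steps; the identification of $(j_2,j_3)$ in a length-$\geq 4$ chain as a candidate configuration, together with the \Cref{prop_degree}-based degree accounting, matches the paper's reasoning exactly.
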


Our new iterative operation is easy to state: Find  a candidate configuration $(j,j')$ and move $j$ from $C^*$ to $C_{full}$.

\begin{algorithm}
	\DontPrintSemicolon
	\LinesNumbered
	\SetAlgoLined
	
	\KwIn{An optimal extreme point $\bar{y}$ to $\lpreroute$ s.t.\
          there exists an candidate configuration}
	\KwResult{Modify $\lpreroute$}
	
	Let $(j,j') \subset C^*_{<1}$ be any candidate configuration.\;
	Move $j$ from $C^*$ to $C_{full}$.\;
	\caption{\altreroute \label{alg_altreroute}}	
\end{algorithm}

The first two properties of candidate configurations are used to
re-route $j$ to $j'$. Observe
a key difference between \itersub~ and \altreroute: In the former,
if a client $j'$ is moved from $C^*$ to $C_{full}$, there exists a client $j \in C^*$ such that $F_{j'} \cap F_j \neq \emptyset$ and $\ell_j \leq \ell_{j'} - 2$. Thus we re-route $j'$ to a client at least two radius levels smaller. This corresponds to ``quarter ball-chasing." On the other hand, in \altreroute, we only guarantee
a client of at least one radius level smaller, which corresponds to ``half ball-chasing." This raises the worry
that if all 
% One might worry that if the only
re-routings % we do throughout the life of the algorithm is
are due to \altreroute, any potential gains by \itersub are not
realized in the worst case. However we show that, roughly speaking, the last two properties of candidate configurations guarantee that % a client is subject to 
the more expensive re-routings of \altreroute~ happen at most half the time.
% Just as we did for \iteralg, we summarize
The main properties of \altreroute~ appear in the next theorem (whose
proof is in Appendix~\ref{sec_altanalysis}).

\begin{thm}\label{thm_mainaltreroute}
   \altreroute~ is a polynomial-time algorithm that maintains all Basic Invariants and weakly decreases $Opt(\lpreroute)$.
\end{thm}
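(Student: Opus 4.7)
My plan is to dispatch the three claims of the theorem separately by observing that \altreroute~is a minimally disruptive operation: it reassigns a single client $j$ from $C^*$ to $C_{full}$ without modifying any $F$-ball, inner ball, or radius level. Polynomial runtime is immediate, since the algorithm performs one set-membership update and a candidate configuration can be identified by enumerating pairs in $C^*_{<1}$ and checking the four conditions of \Cref{def_config}.

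For the Basic Invariants, since $\ell_{j''}, F_{j''}, B_{j''}$ are unchanged for every client $j''$, Invariants (2), (3), and (4) are preserved for free. Invariant (1) still holds because $C^* \setminus \{j\}$ together with $C_{full} \cup \{j\}$ is a repartition of $C^* \cup C_{full}$. The Distinct Neighbors Property (Invariant (5)) is monotone under shrinking $C^*$: any pair $(j_1,j_2)$ in the new $C^*$ was already in the old $C^*$, and so still satisfies $\lvert \ell_{j_1} - \ell_{j_2} \rvert = 1$ whenever $F_{j_1} \cap F_{j_2} \neq \emptyset$.

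The substantive content is the monotonicity of $Opt(\lpreroute)$, which I will prove by exhibiting the new LP as a relaxation of the old one with an identical objective function. Since $j$ lies in $C_{full} \cup C^*$ both before and after the update, its contribution to the objective — namely $\sum_{i \in B_j} d'(i,j) y_i + (1 - y(B_j)) L(\ell_j)$ — is formally unchanged, as is the additive term $a_j$ that it contributes to the right-hand side of the coverage constraint. The only structural change is that the equality $y(F_j) = 1$ (required for $j \in C^*$) is replaced by the inequality $y(B_j) \leq 1$ (required for $j \in C_{full}$). Because $B_j \subseteq F_j$ by Invariant (3) and $y \geq 0$, any $y$ satisfying $y(F_j) = 1$ automatically satisfies $y(B_j) \leq y(F_j) = 1$. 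Hence the old feasible region embeds into the new one with no change in cost, so applying this to the prior optimal $\bar{y}$ gives $Opt(\lpreroute_{\text{new}}) \leq Opt(\lpreroute_{\text{old}})$.

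There is essentially no technical obstacle in verifying \altreroute~in isolation; the real work of this section — guaranteeing that such a configuration \emph{exists} whenever $\lvert F_{<1} \rvert$ is large, and bounding the re-routing cost for the displaced client $j$ — is carried by \Cref{lem_configfacil} and by the subsequent combined analysis with \iteralg, rather than by this theorem itself.
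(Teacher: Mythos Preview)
Your proposal is correct and takes essentially the same approach as the paper: both argue that polynomial time and the Basic Invariants are immediate because \altreroute~only moves a single client from $C^*$ to $C_{full}$, and both show $Opt(\lpreroute)$ weakly decreases by verifying that the previous optimal solution $\bar{y}$ remains feasible (via $\bar{y}(B_j) \leq \bar{y}(F_j) = 1$) with unchanged objective contribution. Your treatment is slightly more explicit in itemizing the invariants, but the substance is identical.
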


Again, we defer the analysis of the re-routing cost of \altreroute~to
\S\ref{thm_mainreroute}, where we analyze the interactions between
\altreroute~ and \itersub, and present our final pseudo-approximation
algorithm next.
%

%%% Local Variables:
%%% mode: latex
%%% TeX-master: "main"
%%% End:

\section{Pseudo-Approximation Algorithm for GKM}\label{sec_pseudoalg}

%In this section we give our complete
The pseudo-approximation algorithm for GKM %. In particular, we
combine the iterative rounding algorithm \iteralg~ from
\S\ref{sec_iter} with the re-routing operation \altreroute~ from
\S\ref{sec_rerouteleaf} to construct a solution to $\lpreroute$.
% with bounded re-routing cost and at most $O(r)$ fractional variables. 

\begin{thm}[Pseudo-Approximation Algorithm for GKM]\label{thm_pseudooverview}
	There exists an polynomial time randomized algorithm \mainalg that takes as input an instance $\I$ of GKM and outputs a feasible solution to $\lpbasic$ with at most $O(r)$ fractional facilities and expected cost at most $6.387 \cdot Opt(\I)$.
\end{thm}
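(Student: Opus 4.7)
The plan is to assemble \mainalg by iteratively composing \iteralg from \S\ref{sec_iter} with \altreroute from \S\ref{sec_rerouteleaf}. I first call the algorithm of \Cref{lem_makeLPbasic} on $\I$ to produce the initial $\lpreroute$ together with the data $(\{F_j\}, \{B_j\}, \{\ell_j\}, C_{part}, C_{full}, C^*)$ satisfying all Basic Invariants, with expected objective value at most $\frac{\tau-1}{\ln \tau}\, Opt(\I)$, where $\tau > 1$ is a parameter to be optimized at the end. The main loop then repeats: (i) run \iteralg to obtain an optimal extreme point $\bar y$ of $\lpreroute$ with no tight $C_{part}$-, $C_{full}$-, or non-negativity constraint (\Cref{thm_mainiter}); (ii) if $|F_{<1}| \ge 15r$, invoke \altreroute on the candidate configuration supplied by \Cref{lem_configfacil} (\Cref{thm_mainaltreroute}) and return to (i); otherwise exit with the current $\bar y$. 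Polynomial-time termination follows from a potential argument: every internal step of \iteralg either deletes a facility, moves a client out of $C_{part}$, or decrements some $\ell_j$; every call to \altreroute strictly decreases $|C^*_{<1}|$; all of these quantities are polynomially bounded, so the whole procedure halts in polynomial time.

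To produce a feasible solution of $\lpbasic$ from the terminal state I set $y_i = \bar y_i$ and extend to $x$ as follows. For $j \in C_{part}$, put $x_{ij} = \bar y_i\,\mathbf{1}[i \in F_j]$. For $j \in C_{full} \cup C^*$, assign mass $\bar y_i$ on each $i \in B_j$, and route the remaining $1 - \bar y(B_j)$ unit along a \emph{re-routing chain}---the sequence of destination clients recorded when $j$'s $F_j$-constraint was originally removed and when intermediate $C^*$-clients on the chain were subsequently demoted by \itersub or \altreroute---terminating at a facility in some $C^*$-ball whose tight constraint $\bar y(F)=1$ supplies the needed mass. Feasibility in $\lpbasic$ follows from the feasibility of $\bar y$ in $\lpreroute$, the Basic Invariants, and the unit coverage of every $C_{full} \cup C^*$-client via its chain. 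Since the loop exits with $|F_{<1}| < 15r$, the output has $O(r)$ fractional facilities.

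The main obstacle is bounding the expected cost of the constructed $(x,y)$ by $6.387\,Opt(\I)$. Clients in $C_{part}$ contribute exactly their $\lpreroute$ cost, so it suffices to bound each $j \in C_{full} \cup C^*$'s chain cost by a factor $g(\tau)$ of its $\lpreroute$ contribution. Using the triangle inequality along the chain together with $\bar y(F_{j'}) = 1$ at each visited $C^*$-client $j'$, I would write the chain cost as a sum over hops whose magnitudes are controlled by the radii $L(\ell_j), L(\ell_{j'}), L(\ell_{j''}), \ldots$ of the successive destinations. Each hop introduced by \itersub shrinks the current radius by at least a factor $\tau^2$ (``quarter ball-chasing,'' since the Distinct Neighbors Property forces $\ell$ to drop by at least $2$), while each hop introduced by \altreroute shrinks it only by $\tau$ (``half ball-chasing''). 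The key structural claim driving the improvement over $7.081$ is that properties (3)--(4) of a candidate configuration in \Cref{def_config}---that the two clients have $F$-balls of size exactly two and that every facility in those balls lies in exactly two $C^*$-balls---force consecutive \altreroute-hops along any chain to be separated by at least one \itersub-hop, yielding an amortized per-hop radius decay strictly better than $\tau$ and hence a convergent geometric bound on the chain cost. Composing this chain bound with the $\frac{\tau-1}{\ln \tau}$ discretization factor and numerically optimizing $\tau$ produces the claimed $6.387$-approximation in expectation; this re-routing analysis is the technical crux of the proof and is what the authors defer to \Cref{thm_mainreroute}.
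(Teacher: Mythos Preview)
Your overall architecture—initialize $\lpreroute$ via \Cref{lem_makeLPbasic}, alternate \iteralg with \altreroute until $|F_{<1}| < 15r$, then read off $(x,y)$—matches the paper, and your feasibility and fractionality arguments are essentially the same. The termination argument has a small gap: $|C^*_{<1}|$ is not monotone under \mainalg, since \iteralg can move clients back into $C^*$. The correct bookkeeping (which the paper uses in \Cref{thm_pseudo}) is that any client re-entering $C^*$ must first have its $F$-ball shrunk by one radius level, and levels are bounded below by $-1$.

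The substantive gap is in the re-routing analysis. Your key structural claim—that properties (3)--(4) of a candidate configuration force consecutive \altreroute-hops along the natural re-routing chain to be separated by an \itersub-hop—is not what the paper proves, and in fact the naive chain \emph{can} have two successive \altreroute-removals: $j$ removed with destination $j'$, then later $j'$ removed with destination $j_2$. What property (3) actually buys is a \emph{shortcut} for $j$ in this bad case. The facility $i \in F_j \cap F_{j'}$ lay in exactly two $C^*$-balls ($F_j$ and $F_{j'}$) when $j$ was removed, hence in only $F_{j'}$ immediately after; for $(j', j_2)$ to later be a candidate configuration, $i$ must again lie in two $C^*$-balls, so some client $j''$ with $i \in F_{j''}$ was added to $C^*$ (necessarily via \itersub, and with $j' \in C^*$ at that moment, so $\ell_{j''} \le \ell_{j'} - 1 \le \ell_j - 2$) between the two removals. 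Since $i \in F_j \cap F_{j''}$, client $j$ can be routed \emph{directly} to $j''$, bypassing $j'$ entirely, and induction at level $\ell_j - 2$ gives $d(j,S) \le (1 + (1+\alpha)/\tau^2) L(\ell_j)$—the same recursion as a pure \itersub-hop. This facility-tracking argument (\Cref{lem_chase}) is how the paper obtains $\alpha = \frac{\tau^3 + 2\tau^2 + 1}{\tau^3 - 1}$ and hence the $6.387$ constant; an ``amortized decay strictly better than~$\tau$'' along the naive chain does not by itself produce this bound.
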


\begin{algorithm}
	\DontPrintSemicolon
	\LinesNumbered
	\SetAlgoLined
	
	\KwIn{$\lpreroute$ satisfying all Basic Invariants}
	\KwResult{Modifies $\lpreroute$ and outputs an optimal extreme point of $\lpreroute$}
	\BlankLine
	\Repeat{Termination} {
		Run \iteralg~ to obtain an optimal extreme point $\bar{y}$ of $\lpreroute$\;
		\If{there exists a candidate configuration }{
			Run \altreroute
		}
		\Else{
			Output $\bar{y}$ and Terminate	
		}
	}
	\caption{\mainalg\label{alg_mainalg}}
\end{algorithm}

There are two main components to analyzing \mainalg. First, we %need to
show that the output extreme point has $O(r)$ fractional
variables. Second, we %we need to
 bound the re-routing cost. %; that is, we need to show that $C^*$ provides a good set of facilities for all clients in $C^*$. We handle each of these components with the next two theorems.
The first part follows directly by combining the analogous theorems for \iteralg~ and \altreroute. We defer its proof to Appendix \ref{appendix_pseudoanalysis}.

\begin{thm}\label{thm_pseudo}
    \mainalg~ is a polynomial time algorithm that maintains all Basic Invariants, weakly decreases $Opt(\lpreroute)$, and outputs an optimal extreme point of $\lpreroute$ with at most $15r$ fractional variables.
\end{thm}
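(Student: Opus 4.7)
The plan is to assemble the four claimed properties of \mainalg~—polynomial runtime, preservation of the Basic Invariants, weak monotonicity of $Opt(\lpreroute)$, and a bound of $15r$ on the fractional variables of the output—directly from the already-established guarantees of the two sub-procedures \iteralg~ and \altreroute, together with the contrapositive of \Cref{lem_configfacil}.

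\textbf{Invariants and LP value.} Algorithm~\ref{alg_mainalg} is simply a loop that runs \iteralg~ and, if a candidate configuration still exists, runs \altreroute. The Basic Invariants hold on entry by \Cref{lem_makeLPbasic}. By \Cref{thm_mainiter} every call to \iteralg~ preserves all Basic Invariants and weakly decreases $Opt(\lpreroute)$, and by \Cref{thm_mainaltreroute} the same holds for \altreroute. A direct induction on the number of outer iterations therefore transfers both properties to \mainalg.

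\textbf{Bound on fractional variables.} When \mainalg~ terminates through the \textsc{else} branch, the last operation executed was a call to \iteralg~ followed by the absence of a candidate configuration. By \Cref{thm_mainiter}, the $\bar{y}$ returned by \iteralg~ is an optimal extreme point of $\lpreroute$ with no tight $C_{part}$-, $C_{full}$-, or non-negativity constraint; these are precisely the hypotheses of \Cref{lem_configfacil}. Applying the contrapositive of that lemma to the non-existence of a candidate configuration yields $|F_{<1}| < 15r$, so $\bar{y}$ has at most $15r$ fractional facility variables.

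\textbf{Polynomial running time.} This is the only genuine technical hurdle. Each individual invocation of \iteralg~ is polynomial by \Cref{thm_mainiter} and each invocation of \altreroute~ is polynomial (it amounts to searching for a candidate configuration and performing a single set move), so it suffices to bound the number of outer iterations. The worry is that \altreroute~ demotes a client $j$ from $C^*$ to $C_{full}$ without shrinking any ball or deleting any facility, and \iteralg~ may later re-promote $j$ via \itersub. I would rule out any cycle by exhibiting a lexicographic potential, for instance
\[
\Phi \;=\; \bigl(|F|,\; |C_{part}|,\; \textstyle\sum_{j \in C_{full}\cup C^*} \ell_j,\; |C^*|\bigr),
\]
whose components are polynomially bounded in $|F|+|C|$. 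Each atomic step of \iteralg~—deleting a facility, moving a $C_{part}$-client, or shrinking a ball—strictly decreases some coordinate of $\Phi$, and each call to \altreroute~ strictly decreases $|C^*|$. The delicate step is to show that a subsequent \itersub~ re-promotion of $j$ cannot undo the demotion for free: since \itersub~ requires $\ell_j \le \ell_{j'}-1$ for every incident $j'\in C^*$, any re-entry of $j$ into $C^*$ must be preceded by a ball-shrink for $j$ (hence a decrease in the third coordinate of $\Phi$) or by removal of neighbors from $C^*$ (decreasing $|C^*|$). Verifying this interaction cleanly is the main obstacle; once $\Phi$ is in place, its polynomial range bounds the total number of outer iterations, completing the runtime analysis.
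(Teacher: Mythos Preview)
Your argument is correct and matches the paper on the invariants, the LP value, and the fractional-variable bound via the contrapositive of \Cref{lem_configfacil}. For the runtime, your lexicographic potential $\Phi$ does in fact strictly decrease at every atomic step---in particular, every invocation of \itersub$(j)$ inside \iteralg\ is already coupled with either a $C_{part}\to C_{full}$ move or a shrink of $F_j$, each of which drops a higher coordinate of $\Phi$---so the ``delicate step'' you flag is not a real obstacle: the disjunction ``ball-shrink for $j$ \emph{or} removal of neighbors from $C^*$'' is unnecessary, because once $j\in C_{full}$ the only trigger for \itersub$(j)$ is a ball-shrink of $j$, regardless of what happens to its neighbors. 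The paper uses a simpler direct count that exploits exactly this observation: each outer iteration of \mainalg\ (except the last) calls \altreroute, which removes one client from $C^*$; any such client lands in $C_{full}$ and can re-enter $C^*$ only via \itersub, which in \iteralg\ is called solely after a shrink of its $F$-ball; since $\ell_j\ge -1$ throughout (Basic Invariant~\ref{invar_basic}(\ref{invar_ell})), each client is removed from $C^*$ only polynomially many times, bounding the number of outer iterations. Both arguments are valid; the paper's avoids the lexicographic machinery.
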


\subsection{Analysis of Re-Routing Cost}

We now bound the re-routing cost by analyzing how $C^*$
evolves throughout \mainalg. This is one of the main technical
contributions of our paper, and it is where our richer $C^*$-set and
relaxed re-routing rules are used. %come to fruition.
\cite{DBLP:conf/stoc/KrishnaswamyLS18} prove an analogous result about the re-routing cost of
their algorithm. In the language of the following theorem statement, they
show that $\alpha = \frac{\tau + 1}{\tau - 1}$ for the case $\beta =
1$. We improve on this factor by analyzing the interactions between
\itersub~ and \altreroute . Interestingly, analyzing each of \itersub~
and \altreroute~ separately would not yield any improvement over
\cite{DBLP:conf/stoc/KrishnaswamyLS18} in the worst case, even with our richer set $C^*$. It is
only by using the properties of candidate configurations and analyzing
sequences of calls to \itersub~ and \altreroute that we get an improvement.

\begin{thm}[Re-Routing Cost]\label{thm_mainreroute}
    Upon termination of \mainalg, let $S \subset F$ be a set of open facilities and $\beta \geq 1$ such that $d(j,S) \leq \beta L(\ell_j)$ for all $j \in C^*$. Then for all $j \in C_{full} \cup C^*$, $d(j, S) \leq (2 + \alpha) L(\ell_j)$, where $\alpha = \max(\beta, 1 + \frac{1 + \beta}{\tau}, \frac{\tau^3 + 2\tau^2 + 1}{\tau^3 - 1})$.
\end{thm}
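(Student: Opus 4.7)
The plan is to bound $d(j^*, S)$ for each $j^* \in C_{full} \cup C^*$ at termination by tracing a \emph{chase sequence} through the execution history of \mainalg. For $j^* \in C^*$ at termination the bound is immediate since $d(j^*, S) \leq \beta L(\ell_{j^*}) \leq (2+\alpha) L(\ell_{j^*})$, so fix $j^* \in C_{full}$. I set $q_0 := j^*$ and define $q_1$ as the re-routing destination assigned to $j^*$ the last time $j^*$ was placed into $C_{full}$: this is the client in $C^*$ that blocked \itersub's if-condition when \iteralg most recently moved $j^*$ into $C_{full}$ (either from $C_{part}$ or from $C^*$ via a displacement operation). For $t \geq 1$, I inductively set $q_{t+1}$ to be the destination assigned to $q_t$ at the moment $q_t$ was eventually removed from $C^*$---either the new $C^*$-entrant that displaced $q_t$ via \itersub (giving $\ell_{q_{t+1}} \leq \ell_{q_t} - 2$), or the other endpoint of the candidate configuration processed by \altreroute (giving $\ell_{q_{t+1}} \leq \ell_{q_t} - 1$). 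The sequence terminates at some $q_k$ still in $C^*$ at termination. At step $0$ we only get $\ell_{q_1} \leq \ell_{q_0}$ (the blocker may have the same radius level), whereas every later step strictly decreases the radius level.

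Writing $L_t := L(\ell_{q_t})$, each edge of the chase contributes $d(q_t, q_{t+1}) \leq L_t + L_{t+1}$ by triangle inequality through a shared facility, and by hypothesis $d(q_k, S) \leq \beta L_k$. Thus
\begin{equation*}
d(j^*, S) \leq (L_0 + L_1) + \sum_{t=1}^{k-1}(L_t + L_{t+1}) + \beta L_k.
\end{equation*}
Since $L_1 \leq L_0$, the initial edge contributes at most $2 L_0$, and it suffices to prove by backward induction the recursive bound $d(q_t, S) \leq \alpha L_t$ for every $t \geq 1$; combined with the initial edge this gives $d(j^*, S) \leq 2 L_0 + \alpha L_1 \leq (2 + \alpha) L_0$.

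The technical heart of the proof is establishing this recursive bound with the improved constant. If one naively ran the induction single-step, a pure-\altreroute step (where $L_{t+1} \leq L_t / \tau$) would already force $\alpha \geq (\tau+1)/(\tau-1)$, recovering only the prior bound of \cite{DBLP:conf/stoc/KrishnaswamyLS18}. The improvement must come from amortizing \altreroute steps against adjacent \itersub steps, using the rigid structure imposed by properties (3)-(4) of \Cref{def_config}: $|F_{q_t}| = |F_{q_{t+1}}| = 2$, and every facility in these two $F$-balls lies in exactly two $C^*$ $F$-balls. I would argue that when \altreroute removes $q_t$ through the candidate configuration $(q_t, q_{t+1})$, the shared facility in $F_{q_t} \cap F_{q_{t+1}}$ temporarily drops to lying in only one $C^*$ $F$-ball; combining this bookkeeping with the chain-decomposition analysis of \S\ref{sec_rerouteleaf} lets us amortize so that each interior \altreroute step can be charged against a following step that effectively contracts the radius by $\tau^2$.

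Given this amortization, the induction splits into three regimes, each yielding one of the lower bounds defining $\alpha$. The base case at $q_k$ contributes $d(q_k, S) \leq \beta L_k$, requiring $\alpha \geq \beta$. An \altreroute step at $q_t$ whose destination $q_{t+1}$ is terminal gives $d(q_t, S) \leq L_t + (1+\beta) L_{t+1} \leq L_t(1 + (1+\beta)/\tau)$, requiring $\alpha \geq 1 + (1+\beta)/\tau$. A non-terminal \altreroute step at $q_t$, combined with the amortized $\tau^2$-contraction at step $t+1$, yields the two-step bound $d(q_t, S) \leq L_t + 2L_{t+1} + (1+\alpha) L_{t+2} \leq L_t(1 + 2/\tau + (1+\alpha)/\tau^3)$, requiring $\alpha \geq (\tau^3 + 2\tau^2 + 1)/(\tau^3 - 1)$. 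A pure-\itersub step only requires $\alpha \geq (\tau^2 + 1)/(\tau^2 - 1)$, which is dominated by the others. The main obstacle is carrying out the amortization cleanly: tracking how the ``exactly two $C^*$ $F$-balls'' invariant evolves across the algorithm's execution---through facility deletions, $C^*$-additions and removals via \itersub, and ball shrinkings---is delicate, and it is precisely here that the chain-decomposition structure of extreme points from \S\ref{sec_rerouteleaf} is essential.
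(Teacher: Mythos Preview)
Your overall architecture---backward induction along a chase sequence, with a case analysis producing exactly the three terms in the $\max$---matches the paper's approach (\Cref{lem_chase} followed by \Cref{thm_mainreroute}), and you have correctly isolated the pivotal observation: after \altreroute removes $q_t$, the shared facility $i \in F_{q_t}\cap F_{q_{t+1}}$ drops to lying in only one $C^*$ $F$-ball. However, there is a genuine gap in how you exploit this.

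Your two-step bound $L_t + 2L_{t+1} + (1+\alpha)L_{t+2}$ routes \emph{through} $q_{t+1}$, and you assert an ``amortized $\tau^2$-contraction at step $t+1$'' to get $L_{t+2}\le L_t/\tau^3$. This is fine when $q_{t+1}$ is later removed by \itersub, but when $q_{t+1}$ is itself later removed by \altreroute, any client adjacent to $q_{t+1}$ satisfies only $\ell \le \ell_{q_{t+1}}-1$, and your two-step bound collapses to $(1 + 2/\tau + (1+\alpha)/\tau^2)L_t$, which forces $\alpha \ge (\tau+1)/(\tau-1)$---precisely the constant you are trying to beat. The paper does \emph{not} route through $q_{t+1}$ in this sub-case. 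Since $q_{t+1}$ is eventually removed by \altreroute, at that moment the candidate-configuration property again forces $i$ to lie in exactly two $C^*$ balls, so some $j''$ with $i\in F_{j''}$ was added (necessarily by \itersub) while $q_{t+1}$ was still in $C^*$; hence $\ell_{j''}\le \ell_{q_{t+1}}-1\le \ell_{q_t}-2$. Crucially, because $i$ was in $F_{q_t}$ when $q_t$ was removed, we have $i\in F_{q_t}\cap F_{j''}$, giving the \emph{direct one-step} bound $d(q_t,S)\le L_t + (1+\alpha)L(\ell_{j''}) \le (1+(1+\alpha)/\tau^2)L_t$, which is dominated by the other constraints. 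This shortcut---bypassing $q_{t+1}$ entirely---is the missing idea, and the argument is purely local to the facility $i$: the chain-decomposition machinery of \S\ref{sec_rerouteleaf} is used only to guarantee that candidate configurations \emph{exist}, not in the re-routing cost analysis.
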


% The goal of this section is to prove Theorem \ref{thm_mainreroute}. 
%Throughout our analysis of the re-routing cost, we
We will need the following discretized version of the triangle inequality. % multiple times.

\begin{prop}\label{prop_distintersect}
    Let $j, j' \in C$ such that $F_j$ and $F_{j'}$ intersect. Then $d(j,j') \leq L(\ell_j) + L(\ell_{j'})$.
\end{prop}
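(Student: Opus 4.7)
The proof is essentially a one-line application of the triangle inequality, combined with the two defining properties of our rounded distances and radius levels. The plan is as follows.

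First I would pick any facility $i \in F_j \cap F_{j'}$, which exists by hypothesis. By Basic Invariant~\ref{invar_basic}(\ref{invar_F}), the fact that $i \in F_j$ gives $d'(j,i) \leq L(\ell_j)$, and similarly $d'(i,j') \leq L(\ell_{j'})$. Then I would invoke the triangle inequality for the original metric $d$ on $F \cup C$ to write $d(j,j') \leq d(j,i) + d(i,j')$. Since the rounded distance $d'$ dominates $d$ (we round each pairwise distance \emph{up} to the next discretized level, so $d(p,q) \leq d'(p,q)$ for all $p,q \in F \cup C$), I can bound each term on the right by its $d'$ counterpart, and then by the corresponding radius level, yielding $d(j,j') \leq d'(j,i) + d'(i,j') \leq L(\ell_j) + L(\ell_{j'})$.

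There is no real obstacle here; the only subtle point is remembering that the triangle inequality is stated for the true metric $d$, not the rounded $d'$ (in general, $d'$ need not satisfy the triangle inequality), so one must pass through $d$ before invoking the radius-level bounds. Once that is handled correctly, the proof is immediate and does not require any additional structure beyond the Basic Invariants and the discretization procedure from \S\ref{sec_iteroverview}.
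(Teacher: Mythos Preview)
Your proposal is correct and matches the paper's proof essentially line for line: pick $i \in F_j \cap F_{j'}$, apply the triangle inequality for $d$, use $d \leq d'$, and then bound by $L(\ell_j) + L(\ell_{j'})$ via the radius-level definition. Your added remark that the triangle inequality must be applied to $d$ rather than $d'$ is a nice clarification the paper leaves implicit.
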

\begin{proof}
    Let $i \in F_j \cap F_{j'}$. Then using the triangle inequality we can bound:
    \begin{gather*}
      d(j,j') \leq d(j,i) + d(i,j') \leq d'(j,i) + d'(i,j') \leq
      L(\ell_j) + L(\ell_{j'}). \qedhere
    \end{gather*}
\end{proof}

\begin{figure}[t]
\begin{center}
  \includegraphics[width=.5\textwidth]{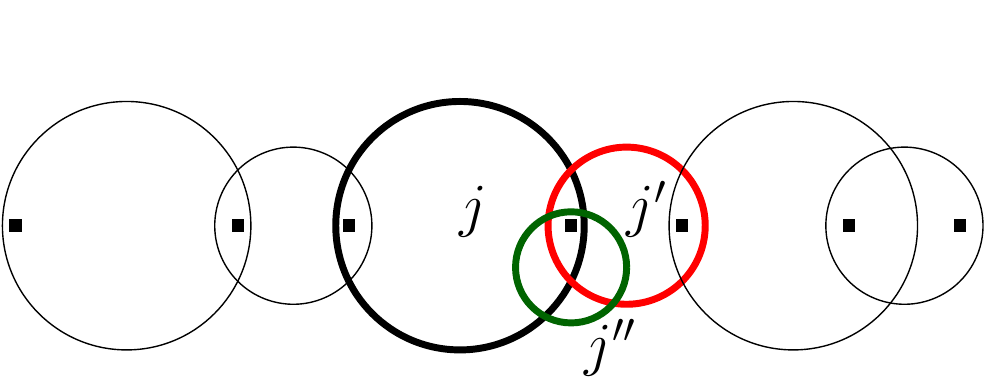}
  %\caption{A chain of balls in $C^*$, where squares indicate facilities.  The ball $j$ can
  %  be removed from $C^*$.  If the ball for $j'$ remains in $C^*$ then
  %  $j$ can later find a facility in the ball for $j'$.  Alternatively, if
  %  $j'$ leaves $C^*$ then there must be a ball at most $1/4$ the size
  %  of $j$'s ball connected to facility shared by $j$ and $j'$. This
  %  is the ball for client $j''$.}\label{fig:chain}
  \caption{A chain of balls in $C^*$, where squares indicate facilities. First $j$ is removed from $C^*$ as part of candidate configuration $(j,j')$, so $j'$ has strictly smaller radius than $j$. Then $j''$ is added to $C^*$, which has strictly smaller radius than $j'$. This gives $j$ a destination that is at least two radius levels smaller.}\label{fig:chain}
\end{center}
\end{figure}

The next lemma analyzes the life-cycle of a client that enters $C^*$
at some point in \mainalg. Our improvement over \cite{DBLP:conf/stoc/KrishnaswamyLS18} comes from
this lemma.

\begin{lem}\label{lem_chase}
     Upon termination of \mainalg, let $S \subset F$ be a set of open facilities and $\beta \geq 1$ such that $d(j,S) \leq \beta L(\ell_j)$ for all $j \in C^*$. Suppose client $j$ is added to $C^*$ at radius level $\ell$ during \mainalg~ (it may be removed later.) Then upon termination of \mainalg, we have $d(j,S) \leq \alpha L(\ell)$, where $\alpha = \max(\beta, 1 + \frac{1 + \beta}{\tau}, \frac{\tau^3 + 2\tau^2 + 1}{\tau^3 - 1})$.
\end{lem}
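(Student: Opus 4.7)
The plan is to prove this by strong induction on the chronological order of additions to $C^*$: I would assume the bound holds for every client $j''$ added to $C^*$ strictly after $j$, and establish it for $j$. The base case is the client added last, which cannot be removed by any subsequent operation and so remains in $C^*$ at its addition level $\ell$, giving $d(j,S) \leq \beta L(\ell) \leq \alpha L(\ell)$. In the inductive step, I would case-analyze on the first event (if any) that removes $j$ from $C^*$. Note that while $j \in C^*$, its radius level cannot change (the \iteralg~step that decrements $\ell_j$ only applies to $C_{full}$-clients), so the value $\ell$ is stable throughout this analysis.

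If $j$ is never removed, the base-case bound applies. If $j$ is removed via \itersub, some $j''$ with $F_j \cap F_{j''} \neq \emptyset$ is added to $C^*$ with $\ell_{j''} \leq \ell - 2$; \Cref{prop_distintersect} combined with the inductive hypothesis yields $d(j,S) \leq L(\ell) + (1+\alpha)L(\ell - 2)$, forcing $\alpha \geq (\tau^2+1)/(\tau^2-1)$. If $j$ is instead removed via \altreroute~as part of a candidate configuration $(j,j')$, then Condition~2 of \Cref{def_config} together with the Distinct Neighbors Property pins $\ell_{j'} = \ell - 1$, and I would trace the subsequent fate of $j'$: if $j'$ stays in $C^*$, routing $j$ through $j'$ gives $\alpha \geq 1 + (1+\beta)/\tau$; if $j'$ is later removed via \itersub~by some $j''$ with $\ell_{j''} \leq \ell - 3$, chaining $j \to j' \to j'' \to S$ yields $d(j,S) \leq L(\ell) + 2L(\ell-1) + (1+\alpha)L(\ell-3)$ and the bound $\alpha \geq (\tau^3 + 2\tau^2 + 1)/(\tau^3 - 1)$.

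The main obstacle is the remaining sub-case: $j$ is removed via \altreroute, and then $j'$ is also later removed via \altreroute~as part of another candidate configuration $(j', j''')$. A naive hop-by-hop chain $j \to j' \to j'''$ gives only a one-level radius drop per step and (as a short calculation shows) cannot satisfy the target bound. The key observation here is that Condition~3 of \Cref{def_config} forces every facility in $F_{j'}$ to lie in exactly two $F$-balls for $C^*$-clients at the time of $(j', j''')$; in particular, the facility $b \in F_j \cap F_{j'}$ lost its other $C^*$-ball when $j$ moved to $C_{full}$, so some $\tilde j \neq j$ must have been added to $C^*$ at some intermediate time with $b \in F_{\tilde j}$. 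I would then pin $\ell_{\tilde j}$ to exactly $\ell - 2$ by examining \itersub~at the moment of $\tilde j$'s addition: the add-condition applied to $j'$ forces $\ell_{\tilde j} \leq \ell_{j'} - 1 = \ell - 2$, while the fact that $j'$ is not removed at this step (since by assumption $j'$ is only later removed via \altreroute, not \itersub) forces $\ell_{j'} < \ell_{\tilde j} + 2$, i.e., $\ell_{\tilde j} \geq \ell - 2$. Applying the inductive hypothesis to $\tilde j$ and routing $j$ directly to $\tilde j$ via the shared facility $b$ then gives $d(j,S) \leq L(\ell) + (1+\alpha)L(\ell - 2)$, the same bound as the pure \itersub~case. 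Finally, taking the maximum of the constraints on $\alpha$ arising from the four sub-cases, together with a short algebraic check that $(\tau^3 + 2\tau^2 + 1)/(\tau^3 - 1) \geq (\tau^2 + 1)/(\tau^2 - 1)$ for $\tau > 1$, yields the claimed value of $\alpha$.
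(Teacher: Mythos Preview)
Your approach is essentially the paper's: the case analysis on how $j$ leaves $C^*$, the chaining bounds in each sub-case, and crucially the key observation that when both $j$ and $j'$ are removed by \altreroute, Condition~3 of the candidate configuration forces the shared facility $b$ to have acquired a new $C^*$-owner $\tilde j$ in the interim, which (via the \itersub\ add-condition against $j'$) has level $\leq \ell-2$. The only real difference is cosmetic: the paper inducts on the radius level $\ell$, whereas you induct backward on the chronological order of addition events; both schemes support the same inductive applications, since every client you apply the hypothesis to is both added later and at a strictly smaller level.

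One slip to fix: your base case is stated incorrectly. The last-added client \emph{can} be removed by \altreroute, since \altreroute\ removes a client from $C^*$ without adding anyone. What is true is that for the last addition event the inductive hypothesis is vacuous, and the case analysis still closes: removal by \itersub\ is impossible (it would require a later addition), and if $j$ is removed by \altreroute\ then the sub-cases ``$j'$ later removed by \itersub'' and ``$j'$ later removed by \altreroute'' are both vacuous (the first needs a later addition directly; the second would, by your own argument, force some $\tilde j$ to be added later), leaving only ``$j'$ stays in $C^*$,'' which gives $d(j,S) \leq (1+(1+\beta)/\tau)L(\ell)$ without any appeal to the hypothesis. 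With that correction the argument goes through.
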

\begin{proof}
    Consider a client $j$ added to $C^*$ with radius level $\ell$. If $j$ remains in $C^*$ until termination, the lemma holds for $j$ because $\alpha \geq \beta$. Thus, consider the case where $j$ is later removed from $C^*$ in \mainalg. Note that the only two operations that can possibly cause this removal are \itersub~ and \altreroute.
        We  prove the lemma by induction on $\ell = -1, 0, \dots$. If $\ell = -1$, then $j$ remains in $C^*$ until termination because it has the smallest possible radius level and both \itersub~ and \altreroute~ remove a client from $C^*$ only if there exists another client with strictly smaller radius level.
    
    Similarly, if $\ell = 0$, we note that \itersub~ removes a client from $C^*$ only if there exists another client with radius level at least two smaller, which is not possible for $j$. Thus, if $j$ does not remain in $C^*$ until termination, there must exist some $j'$ that is later added to $C^*$ with radius level at most $\ell - 1 = -1$ such that $F_j \cap F_{j'} \neq \emptyset$. We know that $j'$ remains in $C^*$ until termination since it is of the lowest radius level. Thus:
    \[d(j,S) \leq d(j,j') + d(j', S) \leq L(0) + L(-1) + \beta L(-1) = L(0).\]
    
    Now consider $\ell > 0$ where $j$ can possibly be removed from $C^*$ by either \itersub~ or \altreroute. In the first case, $j$ is  removed by \itersub, so there exists $j'$ that is added to $C^*$ such that $\ell_{j'} \leq \ell - 2$ and $F_j \cap F_{j'} \neq \emptyset$. Applying the inductive hypothesis to $j'$, we can bound:
    \[d(j,S) \leq d(j,j') + d(j', S) \leq L(\ell) + L(\ell - 2) + \alpha L(\ell - 2) \leq (1 + \frac{1 + \alpha}{\tau^2})L(\ell).\]
    It is easy to verify by routine calculations that $1 + \frac{1 + \alpha}{\tau^2} \leq \alpha$ given that $\alpha \geq \frac{\tau^3 + 2\tau^2 + 1}{\tau^3 - 1}$.
    
    For our final  case,  suppose $j$ is  removed by \altreroute. Then  there exists $j' \in C^*$ such that $F_j \cap F_{j'} \neq \emptyset$ and $\ell_{j'} \leq \ell - 1$. Further, $\lvert F_{j'} \rvert = 2$. If $j'$ remains in $C^*$ until termination, then:
    \[d(j,S) \leq d(j,j') \leq L(\ell) + L(\ell - 1) + \beta L(\ell - 1) \leq (1 + \frac{1 + \beta}{\tau})L(\ell).\]
    Otherwise,  $j'$ is  removed by \itersub~ at an even later time because some $j''$ is added to $C^*$ such that $\ell_{j''} \leq \ell_{j'} - 2$ and $F_{j'} \cap F_{j''} \neq \emptyset$. Applying the inductive hypothesis to $j''$, we can bound:
    \[d(j,S) \leq d(j, j') + d(j', j'') + d(j'', S) \leq (1 + \frac{2}{\tau} + \frac{1 + \alpha}{\tau^3})L(\ell).\]
     where $\alpha \geq \frac{\tau^3 + 2\tau^2 + 1}{\tau^3 - 1}$ implies $1 + \frac{2}{\tau} + \frac{1 + \alpha}{\tau^3} \leq \alpha$.
    
    Now, we consider the case where $j'$ is later removed by \altreroute. To analyze this case, consider when $j$ was removed by \altreroute. At this time, we have $\lvert F_{j'} \rvert = 2$ by definition of Candidate Configuration. Because $F_j \cap F_{j'} \neq \emptyset$,  consider any facility $i \in F_j \cap F_{j'}$. When $j$ is removed from $C^*$ by \altreroute, we have that $i$ is in exactly two $F$-balls for clients in $C^*$,  exactly $F_j$ and $F_{j'}$. However, after removing $j$ from $C^*$, $i$ is only in one $F$-ball for clients in $C^*$ - namely $F_{j'}$.
    
    Later, at the time $j'$ is removed by \altreroute, it must be the case that $\lvert F_{j'} \rvert = 2$ still, so $F_{j'}$ is unchanged between the time that $j$ is removed and the time that $j'$ is removed. Thus the facility $i$ that was previously in $F_j \cap F_{j'}$ must still be present in $F_{j'}$. Then this facility must be in exactly two $F$-balls for clients in $C^*$, one of which is $j'$. It must be the case that the other $F$-ball containing $i$, say $F_{j''}$, was added to $C^*$ between the removal of $j$ and $j'$.
    
    Note that the only operation that adds clients to $C^*$ is \itersub, so we consider the time between the removal of $j$ and $j'$ when $j''$ is added to $C^*$. Refer to Figure~\ref{fig:chain}. At this time, we  have $j' \in C^*$, and $F_{j'} \cap F_{j''} \neq \emptyset$ because of the facility $i$. Then it must be the case that $j''$ has strictly smaller radius level than $j'$, so $\ell_{j''} \leq \ell_{j'} - 1 \leq \ell - 2$. To conclude the proof, we note that $F_j \cap F_{j''} \neq \emptyset$ due to the facility $i$, and apply the inductive hypothesis to $j''$:
    \[d(j,S) \leq d(j,j'') + d(j'', S) \leq (1 + \frac{1 + \alpha}{\tau^2})L(\ell,)\]
    which is at most $\alpha L(\ell)$.
\end{proof}

Now using the above lemma, we can prove Theorem \ref{thm_mainreroute}.

\begin{proof}[Proof of Theorem \ref{thm_mainreroute}]
    Consider any client $j$ that is in $C_{full} \cup C^*$ upon termination of \mainalg. It must be the case that $\itersub(j)$ was called at least once during \mainalg. Consider the time of the last such call to $\itersub(j)$. If $j$ is added to $C^*$ at this time, note that its radius level from now until termination remains unchanged, so applying Lemma \ref{lem_chase} gives that $d(j,S) \leq \alpha L(\ell_j)$, as required.
        Otherwise, if $j$ is not added to $C^*$ at this time, then there must exist some $j' \in C^*$ such that $F_j \cap F_{j'} \neq \emptyset$ and $\ell_{j'} \leq \ell_j$. Then applying Lemma \ref{lem_chase} to $j'$, we have:
        \begin{gather*}
          d(j,S) \leq d(j,j') + d(j', S) \leq L(\ell_j) + L(\ell_{j'})
          + \alpha L(\ell_{j'}) \leq (2 + \alpha)L(\ell_j). \qedhere
        \end{gather*}
\end{proof}

\subsection{Putting it all Together: Pseudo-Approximation for GKM}

In this section, we prove Theorem \ref{thm_pseudooverview}. In particular, we use the output of \mainalg~ to construct a setting of the $x$-variables with the desired properties.

\begin{proof}[Proof of Theorem \ref{thm_pseudooverview}]
	Given as input an instance $\I$ of GKM, our algorithm is first to run the algorithm guaranteed by Lemma \ref{lem_makeLPbasic} to construct $\lpreroute$ from $\lpbasic$ such that $\mathbb{E}[Opt(\lpreroute)] \leq \frac{\tau - 1}{\log_e \tau} Opt(\I)$ and $\lpreroute$ satisfies all Basic Invariants. Note that we will choose $\tau > 1$ later to optimize our final approximation ratio. Then we run \mainalg~ on $\lpreroute$, which satisfies all Basic Invariants, so by Theorem \ref{thm_pseudo}, \mainalg~ outputs in polynomial time $\lpreroute$ along with an optimal solution $\bar{y}$ with $O(r)$ fractional variables.
	
	Given $\bar{y}$, we define a setting $\bar{x}$ for the
        $x$-variables: for all $j \in C_{part}$, connect $j$ to all
        facilities in $F_j$ by setting $\bar{x}_{ij} = \bar{y}_i$ for
        all $i \in F_j$. For all $j \in C^*$, we have $\bar{y}(F_j) =
        1$, so connect $j$ to all facilities in $F_j$. Finally, 
	to connect every $j \in C_{full}$ to one unit of open facilities, we use the following modification of Theorem \ref{thm_mainreroute}:
	
	\begin{prop}\label{prop_fracreroute}
		When \mainalg\ terminates, for all $j \in C_{full}
                \cup C^*$, there exists one unit of open facilities
                with respect to $\bar{y}$ within distance $(2 +
                \alpha) L(\ell_j)$ of $j$, where $\alpha = \max(1, 1 + \frac{2}{\tau}, \frac{\tau^3 + 2\tau^2 + 1}{\tau^3 - 1})$.
	\end{prop}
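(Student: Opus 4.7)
The plan is to reuse the machinery of Theorem \ref{thm_mainreroute} and Lemma \ref{lem_chase} almost verbatim, replacing the integral set $S$ of open facilities with a ``fractional'' unit of $\bar{y}$-mass. The only new observation needed is the right choice of anchor for $\beta$: since $\bar{y}$ is feasible for $\lpreroute$ upon termination of \mainalg, for every $j \in C^*$ we have $\bar{y}(F_j) = 1$, and by Basic Invariant \ref{invar_basic}(\ref{invar_F}) all facilities in $F_j$ lie within distance $L(\ell_j)$ of $j$. Thus $F_j$ itself supplies one unit of $\bar{y}$-mass within distance $L(\ell_j)$ of $j$, giving the analogue of the hypothesis of Theorem \ref{thm_mainreroute} with $\beta = 1$.

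I would then inspect the inductive proof of Lemma \ref{lem_chase} and note that it only ever uses $S$ through the single bound $d(j',S) \leq \beta L(\ell_{j'})$ or, in the inductive step, $d(j'',S) \leq \alpha L(\ell_{j''})$; it never uses that $S$ is a single point or even a finite set. Reinterpreting $d(j',S)$ as ``the smallest radius of a ball around $j'$ containing one unit of $\bar{y}$-mass'' makes every line of the argument go through, since the remaining ingredients (the triangle inequality, Proposition \ref{prop_distintersect}, the Basic Invariants, and the candidate-configuration bookkeeping from Figure~\ref{fig:chain}) are insensitive to whether the mass is integral or fractional. With $\beta = 1$, Lemma \ref{lem_chase} yields the exponent $\alpha = \max(1, 1 + \tfrac{1 + 1}{\tau}, \tfrac{\tau^3 + 2\tau^2 + 1}{\tau^3 - 1}) = \max(1, 1 + \tfrac{2}{\tau}, \tfrac{\tau^3 + 2\tau^2 + 1}{\tau^3 - 1})$ stated in the proposition.

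Finally, I would lift this to arbitrary $j \in C_{full} \cup C^*$ exactly as in the proof of Theorem \ref{thm_mainreroute}. If $j \in C^*$, the unit of $\bar{y}$-mass on $F_j$ is within $L(\ell_j) \leq \alpha L(\ell_j) \leq (2+\alpha)L(\ell_j)$ and we are done. If $j \in C_{full}$, then $\textsc{ReRoute}(j)$ was called at least once during \mainalg; look at the last such call. Either $j$ was moved into $C^*$ at that moment, in which case $\ell_j$ has not changed since and applying Lemma \ref{lem_chase} to $j$ gives a unit of $\bar{y}$-mass within $\alpha L(\ell_j)$; or some $j' \in C^*$ with $F_j \cap F_{j'} \neq \emptyset$ and $\ell_{j'} \leq \ell_j$ blocked $j$ from entering $C^*$, and then Proposition \ref{prop_distintersect} combined with Lemma \ref{lem_chase} applied to $j'$ bounds the distance from $j$ to one unit of $\bar{y}$-mass by $L(\ell_j) + L(\ell_{j'}) + \alpha L(\ell_{j'}) \leq (2 + \alpha) L(\ell_j)$, as claimed.

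The main obstacle is really just verifying carefully that the chain of inequalities inside Lemma \ref{lem_chase}, which was set up for integral $S$, treats the ``unit of mass'' symbolically; once that is checked, the proposition is an immediate corollary of Theorem \ref{thm_mainreroute} specialised to $\beta = 1$ with the base case supplied by the tight $C^*$-constraints in $\lpreroute$.
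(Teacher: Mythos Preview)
Your approach is correct and matches the paper's own treatment essentially line-for-line: the paper omits the proof, noting only that it is ``analogous to that of Theorem~\ref{thm_mainreroute} in the case $\beta = 1$,'' with the base case supplied by $\bar{y}(F_j) = 1$ for $j \in C^*$ giving one unit of fractional facility within $L(\ell_j)$, followed by the same inductive argument as Lemma~\ref{lem_chase}. Your observation that the triangle-inequality chains in Lemma~\ref{lem_chase} use $S$ only through bounds of the form $d(j',S) \leq \text{(something)}$, and hence carry over verbatim when $d(\cdot,S)$ is reinterpreted as the radius needed to capture one unit of $\bar{y}$-mass, is exactly the point the paper is gesturing at.
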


	The proof of the above proposition is analogous to that of Theorem \ref{thm_mainreroute} in the case $\beta = 1$, so we omit it. To see this, note that for all $j \in C^*$, we have $\bar{y}(F_j) = 1$. This implies that each $j \in C^*$ has one unit of fractional facility within distance $L(\ell_j)$. Following an analogous inductive argument as in Lemma \ref{lem_chase} gives the desired result.
        
        By routine calculations, it is easy to see that $\alpha = \frac{\tau^3 + 2\tau^2 + 1}{\tau^3 - 1}$ for all $\tau > 1$. Now, for all $j \in C_{full}$, we connect $j$ to all facilities in $B_j$. We want to connect $j$ to one unit of open facilities, so to find the remaining $1 - \bar{y}(B_j)$ units, we connect $j$ to an arbitrary $1 - \bar{y}(B_j)$ units of open facilities within distance $(2 + \alpha) L(\ell_j)$ of $j$, whose existence is guaranteed by Proposition \ref{prop_fracreroute}. This completes the description of $\bar{x}$.
	
	It is easy to verify that $(\bar{x}, \bar{y})$ is feasible for $\lpbasic$, because $\bar{y}$ satisfies all knapsack constraints, and every client's contribution to the coverage constraints in $\lpbasic$ is exactly its contribution in $\lpreroute$. Thus it remains to bound the cost of this solution. We claim that $\lpbasic(\bar{x}, \bar{y}) \leq (2 + \alpha) Opt(\lpreroute)$, because each client in $C_{part}$ and $C^*$ contributes the same amount to $\lpbasic$ and $\lpreroute$ (up to discretization), and each client  $j \in C_{full}$ has connection cost at most $2 + \alpha$ times its contribution to $\lpreroute$.
	
	In conclusion, the expect cost of the solution $(\bar{x}, \bar{y})$ to $\lpbasic$ is at most:
	\[(2 + \alpha)\, \mathbb{E}[Opt(\lpreroute)] \leq \frac{\tau
            -1}{\log_e \tau} \left(2 + \frac{\tau^3 + 2\tau^2 +
              1}{\tau^3 - 1}\right)\; Opt(\I).\]
	Choosing $\tau > 1$ to minimize $\frac{\tau -1}{\log_e \tau} (2 + \frac{\tau^3 + 2\tau^2 + 1}{\tau^3 - 1})$ gives $\tau = 2.046$ and $\frac{\tau -1}{\log_e \tau} (2 + \frac{\tau^3 + 2\tau^2 + 1}{\tau^3 - 1}) = 6.387$.
\end{proof}
%%% Local Variables:
%%% mode: latex
%%% TeX-master: "main"
%%% End:

\section{From Pseudo-Approximation to Approximation}\label{sec_trueapprox}

In this section, we leverage the pseudo-approximation algorithm for GKM defined in Section \ref{sec_pseudoalg} to construct improved approximation algorithms for two special cases of GKM: knapsack median and $k$-median with outliers.

Recall that knapsack median is an instance of GKM with a single arbitrary knapsack constraint and a single coverage constraint that states we must serve every client in $C$. Similarly, $k$-median with outiers is an instance of GKM with a single knapsack constraint, stating that we can open at most $k$ facilities, and a single coverage constraint, stating that we must serve at least $m$ clients. Note that both special cases have $r = 2$.

Our main results for these two problems are given by the following theorems:

\begin{thm}[Approximation Algorithm for Knapsack Median]\label{thm_mainknapsack}
    There exists a polynomial time randomized algorithm that takes as input an instance $\I$ of knapsack median and parameter $\epsilon \in (0, 1/2)$ and in time $n^{O(1/ \epsilon)}$, outputs a feasible solution to $\I$ of expected cost at most $(6.387 + \epsilon) Opt(\I)$.
\end{thm}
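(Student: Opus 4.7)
The plan is to compose three pieces: sparsification pre-processing, the pseudo-approximation algorithm \mainalg~ from Theorem \ref{thm_pseudooverview}, and a combinatorial post-processing step that integrally rounds the remaining $O(1)$ fractional facilities at only an $\epsilon$-loss in cost. Since knapsack median has $r=2$, Theorem \ref{thm_pseudooverview} produces a feasible solution to $\lpbasic$ with only a constant number of fractional facility variables and expected cost $6.387 \cdot Opt(\I)$, which is what makes enumeration-based post-processing tractable.

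For the first step I would invoke the standard sparsification of \cite{DBLP:conf/stoc/KrishnaswamyLS18}: in $n^{O(1/\epsilon)}$ time enumerate the identities of the $O(1/\epsilon)$ heaviest facilities used by some fixed optimum \textsc{Opt}, together with a $(1+\epsilon)$-accurate guess of $Opt(\I)$. Conditioned on correct guesses, the instance can be preprocessed so that (a)~every remaining facility has knapsack weight at most an $O(\epsilon^2)$-fraction of $B$, and (b)~for every client $j$ we have a tight bound on its connection cost in \textsc{Opt}, which will control the cost of local re-routings. I would also pre-shrink the effective knapsack budget by an $O(\epsilon)$-fraction so that opening any $O(1)$ of the remaining light facilities still respects the true budget $B$. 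The sparsified instance $\I'$ satisfies $Opt(\I') \leq Opt(\I)$, and we run \mainalg~ on it to obtain $(\bar{x}, \bar{y})$ of expected cost at most $6.387 \cdot Opt(\I)$, whose set $F^* = \{i : 0 < \bar{y}_i < 1\}$ of fractional facilities has size $O(1)$.

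The main obstacle is the post-processing and this is where most of the work lies. My plan is to enumerate all $2^{|F^*|} = O(1)$ subsets $F' \subseteq F^*$ to open integrally; for each $F'$ such that $\{i : \bar{y}_i = 1\} \cup F'$ is feasible for the shrunk knapsack, connect every client to its closest facility in that open set and return the cheapest outcome. The key lemma to prove is that some enumerated $F'$ yields total cost at most $(6.387 + \epsilon) \cdot Opt(\I)$. I would exhibit such an $F'$ by threshold-rounding $\bar{y}$ restricted to $F^*$: any client $j$ that was served fractionally through $F^* \setminus F'$ in $(\bar{x}, \bar{y})$ is rerouted either to an integral facility of $F'$, or, failing that, to a facility already opened by $\bar{y}$. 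Property (b) of the sparsification guarantees this re-routing costs $j$ only an additive $O(\epsilon) \cdot Opt(\I) / |C|$ over its LP contribution, and summing over $|C|$ clients absorbs into an $\epsilon \cdot Opt(\I)$ slack. Feasibility with respect to the original knapsack constraint follows from property (a) combined with the pre-shrunk budget, since $|F'| \cdot \max_{i \in F^*} w_i = O(1) \cdot O(\epsilon^2 B) = O(\epsilon) \cdot B$. Combining the $n^{O(1/\epsilon)}$ outer enumeration with polynomial-time invocations of \mainalg~ and the enumeration over $F'$ gives both the claimed runtime and the $(6.387 + \epsilon)$-approximation ratio.
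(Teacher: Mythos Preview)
Your high-level architecture---sparsify, run \mainalg, then enumerate over the $O(1)$ fractional facilities---matches the paper.  But two of your three pieces have real gaps.

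\textbf{Feasibility.}  Your plan to ``pre-shrink the effective knapsack budget by an $O(\epsilon)$-fraction'' and then assert $Opt(\I') \leq Opt(\I)$ is backwards: tightening the budget can only \emph{increase} the optimum, potentially unboundedly (this is precisely why the basic LP for knapsack median has an unbounded integrality gap).  The paper never shrinks the budget.  Instead, after \mainalg\ terminates it considers the polytope
\[
\{y : y(F_j)=1\ \forall j\in C^*,\ y_i=1\ \forall i\in F_{=1},\ 0\le y\le 1\}
\]
and observes (Lemma~\ref{lem_kexists}) that, because the Distinct Neighbors Property makes the intersection graph of $\{F_j : j\in C^*\}$ bipartite, this is a face of the intersection of two partition matroid polytopes.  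Hence minimizing $\sum_i w_i y_i$ over it has an \emph{integral} optimum, and since $\bar{y}$ is feasible there with $\sum_i w_i \bar{y}_i \le B$, that integral optimum also has weight at most $B$.  Your enumeration over subsets $F'\subseteq F^*$ would find this point; what is missing is a correct argument that such a feasible $F'$ exists, and that argument requires the bipartite/matroid structure, not a budget-shrinking trick.

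\textbf{Sparsification and re-routing cost.}  The KLS18 pre-processing you invoke is not weight-based and does not yield your property~(a).  What it actually provides are the Extra Invariants of Definition~\ref{invar_kextra}: for every $i\notin S_0$ a per-facility cost bound $\sum_{j : i\in F_j} d(i,j)\le 2\rho U$, and for every client a density bound $|B_C(j,\delta r)|\cdot r \le \rho U$.  Your claimed per-client additive loss of $O(\epsilon)\,Opt/|C|$ does not follow from ``a tight bound on its connection cost in \textsc{Opt}.''  The paper's bound (Lemma~\ref{lem_kclosefacil}) is per \emph{closed facility}: for each of the $O(1)$ fractional facilities $i$ that gets rounded to $0$, the total cost of re-routing all clients with $i\in F_j$ is $O(\rho/\delta)\,U$, proved by splitting those clients into ``far'' ones (handled by the facility-cost invariant) and ``close'' ones (handled by the density invariant).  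Taking $\rho=\epsilon'^2$, $\delta=\epsilon'$ makes the total additive loss $O(\epsilon')\,U$.
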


\begin{thm}[Approximation Algorithm for $k$-Median with Outliers]\label{thm_mainoutliers}
    There exists a polynomial time randomized algorithm that takes as input an instance $\I$ of $k$-median with outliers and parameter $\epsilon \in (0,1/2)$ and in time $n^{O(1/ \epsilon)}$, outputs a feasible solution to $\I$ of expected cost at most $(6.994 + \epsilon) Opt(\I)$. 
\end{thm}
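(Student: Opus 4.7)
The plan is to follow the three-phase template used for knapsack median in Section~\ref{sec_trueapprox}---namely preprocessing, pseudo-approximation, and post-processing---but with a more delicate post-processing step that exploits the flexibility of outlier selection.

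First, I would apply the sparsification preprocessing from \cite{DBLP:conf/stoc/KrishnaswamyLS18} to the input instance $\I$. The idea is to enumerate $n^{O(1/\epsilon)}$ guesses that approximately fix the connection cost contributions of the most ``expensive'' served clients in an optimal solution, so that after conditioning on the correct guess we may assume that every client has an a-priori upper bound on its connection cost in some near-optimal solution. A standard analysis shows that this discretization costs only a multiplicative factor of $(1+\epsilon)$ in the approximation ratio, and adds at most $O(1)$ additional constraints to the LP (still absorbed in $r = O(1)$).

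Second, run \mainalg on the preprocessed instance. By Theorem~\ref{thm_pseudooverview}, this yields a feasible solution $(\bar{x}, \bar{y})$ to $\lpbasic$ of expected cost at most $6.387 \cdot Opt(\I)$ and with at most $15r = 30$ fractional facilities. Let $F_{=1}$ be the integrally opened facilities and $F_{<1}$ the $O(1)$-sized set of fractional ones.

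Third, for the post-processing step, I would enumerate over all $2^{O(1)}$ subsets $T \subseteq F_{<1}$ satisfying $|F_{=1}| + |T| \leq k$; each represents a possible way to round $\bar{y}\restriction F_{<1}$ to an integral solution. For each such $T$, set $S_T = F_{=1} \cup T$ and solve the residual selection problem: compute $d(j, S_T)$ for every client $j$, then serve the $m$ clients of smallest distance (dropping the rest as outliers). Output the best solution found over all $T$.

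The main obstacle, and the reason the ratio degrades from $6.387$ to $6.994$, is to show that at least one enumerated choice $T^*$ yields cost at most $(6.994 + \epsilon) Opt(\I)$. Unlike knapsack median---where every client must be served and thus a client formerly connected to a fractional facility can be re-routed using Proposition~\ref{prop_fracreroute} at cost only $(2+\alpha) L(\ell_j)$ per client---for outliers we additionally have to match up which $m$ clients to serve. The argument I would give is a swap/charging scheme: exhibit a ``good'' subset $T^*$ (essentially by aligning with the integrality pattern of an unknown optimal solution, whose structure is captured by the sparsification guesses) and then pair each client served in the pseudo-approximation but not in the rounded solution with a distinct client served in the rounded solution but not in the pseudo-approximation; the sparsification bound allows charging the ``new'' client's cost against the ``dropped'' client's LP contribution plus an $O(1)$ factor of its OPT cost. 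Combining the pseudo-approximation cost, the re-routing cost for $C_{full}$-clients, and the swap cost then yields the factor $6.994 + \epsilon$ after reoptimizing $\tau$ in the discretization. The detailed formulation of the residual selection as a min-cost matching (or equivalently a sort-and-truncate), together with this charging, will fill Section~\ref{sec_postoutlier}.
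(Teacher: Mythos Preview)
Your high-level template (sparsify, run the pseudo-approximation, post-process, then re-optimize $\tau$) matches the paper, but the post-processing step you describe is a genuine gap and is \emph{not} what the paper does.

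Enumeration over $T \subseteq F_{<1}$ together with a vague ``swap/charging scheme'' does not address the difficulty that separates outliers from knapsack median. In knapsack median one has $C_{part}=\emptyset$ (Proposition~\ref{prop_allfull}), so post-processing need only open one facility per $F_j$, $j\in C^*$, and Lemma~\ref{lem_kexists} certifies some enumerated $T$ works via a bipartite-matroid argument. For outliers, $C_{part}$ can be nonempty and the coverage constraint $\sum_{j\in C_{part}} y(F_j)\ge m-|C_{full}\cup C^*|$ must be met \emph{integrally}. A subset $T^\star$ that opens one facility in every $C^*$-ball need not cover enough $C_{part}$-clients, and there is no integral witness inside the LP feasible region to align with as in Lemma~\ref{lem_kexists}; the sparsification guesses tell you nothing about which of the $O(1)$ fractional facilities to prefer. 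Your ``align with the integrality pattern of an unknown optimal solution'' is not a proof step.

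The paper's post-processing (\S\ref{sec_postoutlier}) is a recursive algorithm, not an enumeration. \outlierpost\ first \emph{thins} $\bar C\subseteq C^*_{<1}$ so that $\{F_j:j\in\bar C\}$ is disjoint and every remaining $C_{part}\setminus C_{covered}$-client touches at most one such ball (Proposition~\ref{prop_partnice}); it then greedily opens, in each $F_j$, the facility covering the most $C_{part}$-clients. Proposition~\ref{prop_greedyoutlier} is the crux: greedy covers at least as many clients as the fractional mass it displaces, which is exactly the inequality your swap argument would have to manufacture. One then recurses on the residual instance. The price of this construction is that the per-client guarantee for $j\in C^*$ degrades from $d(j,S)\le L(\ell_j)$ to $d(j,S\cup S_0)\le (3+2\tau^{-c})L(\ell_j)$ (Lemma~\ref{lem_fullextrastep}), because a client removed from $\bar C$ may have to route through an intermediate $C_{part}$-client. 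Feeding $\beta=3+2\tau^{-c}$ into Theorem~\ref{thm_mainreroute} and re-optimizing gives $\tau\approx 1.2074$ (not $2.046$), and that is where the constant in the theorem comes from---it is not an additive ``swap cost'' layered on top of $6.387$.
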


\subsection{Overview}

The centerpiece for both of our approximation algorithms is the pseudo-approximation algorithm \mainalg~ for GKM. For both of these special cases, we can obtain via \mainalg~ a solution to $\lpreroute$ with only $O(1)$ fractional facilities and bounded re-routing cost. Now our remaining task is to turn this solution with $O(1)$ fractional facilities into an integral one.

Unfortunately, the basic LP relaxations for knapsack median and $k$-median with outliers have an unbounded integrality gap.  To overcome this bad integrality gap, we use known sparsification tools to pre-process the given instance. Our main technical contribution in this section is a post-processing algorithm that rounds the final $O(1)$ fractional variables at a small cost increase for the special cases of knapsack median and $k$-median with outliers.

Thus our approximation algorithms for knapsack median and $k$-median with outliers consist of a known pre-processing algorithm \cite{DBLP:conf/stoc/KrishnaswamyLS18}, our new pseudo-approximation algorithm, and our new post-processing algorithm.

\subsection{Approximation Algorithm for Knapsack Median}

To illustrate our approach, we give the pre- and post-processing algorithms for knapsack median, which is the simpler of the two variants. Our pre-processing instance modifies the data of the input instance, so the next definition is useful to specify the input instance and pre-processed instance.

\begin{definition}[Instance of Knapsack Median]\label{def_instanceknapsack}
    An instance of knapsack median is of the form $\I = (F,C,d,w,B)$, where $F$ is a set of facilities, $C$ is a set of clients, $d$ is a metric on $F \cup C$, $w \in \mathbb{R}_+^F$ is the weights of the facilities, and $B \geq 0$ is the budget.
\end{definition}

Note that for knapsack median, the two side constraints in $\lpreroute$ are the knapsack constraint, $\sum\limits_{i \in F} w_i y_i \leq B$, and the coverage constraint, $\sum\limits_{j \in C_{part}} y(F_j) \geq \lvert C \rvert - \lvert C_{full} \cup C^* \rvert$.

We utilize the same pre-processing as in \cite{DBLP:conf/stoc/KrishnaswamyLS18}. Roughly speaking, given a knapsack median instance $\I$, we first handle the expensive parts of the optimal solution using enumeration. Once we pre-open the facilities and decide what clients should be assigned there for this expensive part of the instance, we are left with a sub-instance, say $\I'$. In $\I'$, our goal is to open some more facilities to serve the remaining clients.

Roughly speaking, $\I'$ is the ``cheap" part of the input instance. Thus, when we construct $\lpreroute$ for this sub-instance, we initialize additional invariants which we call our \emph{Extra Invariants}.

To state our Extra Invariants, we need to define the \emph{$P$-ball centered at $p$ with radius $r$} for any $P \subset F \cup C$, $p \in F \cup C$, and $r \geq 0$, which is the set:
\[B_P(p,r) = \{q \in P \mid d(p,q) \leq r\}.\]

\begin{definition}[Extra Invariants for Knapsack Median] \label{invar_kextra}
	Let $\rho, \delta \in (0,1/2)$, $U \geq 0$, $S_0 \subset F$, and $R \in \mathbb{R}_+^C$ be given. Then we call the following properties our \emph{Extra Invariants}:	
	\begin{enumerate}
		\item For all $i \in S_0$, there exists a dummy client $j(i) \in C^*$ such that $F_{j(i)} = \{i' \in F \mid i' \text{ collocated with $i$}\}$ with radius level $\ell_{j(i)} = - 1$. We let $C_0 \subset C$ be the collection of these dummy clients. \label{invar_kguessfacil}
		\item For all $i \in F$ that is not collocated with some $i' \in S_0$, we have $\sum\limits_{j \mid i \in F_j} d(i,j) \leq 2\rho U$	\label{invar_ksparsefacil}
		\item For all $j \in C$, we have $L(\ell_j) \leq \tau R_j$	\label{invar_kupperbound}
		\item For all $j \in C$ and $r \leq R_j$, we have:
        $\lvert B_C(j, \delta r) \rvert r \leq \rho U.$
         \label{invar_ksparserad}
	\end{enumerate}
\end{definition}

Extra Invariant \ref{invar_kextra}(\ref{invar_kguessfacil}) guarantees that we open the set of guessed facilities $S_0$ in our final solution. Then for all non-guessed facilities, so the set $F \setminus S_0$, Extra Invariant \ref{invar_kextra}(\ref{invar_ksparsefacil}) captures the idea that these facilities are ``cheap." Taken together, Extra Invariants \ref{invar_kextra}(\ref{invar_kupperbound}) and \ref{invar_kextra}(\ref{invar_ksparserad}) capture the idea that all remaining clients are ``cheap."

The next theorem describes our pre-processing algorithm for knapsack median, which is a convenient re-packaging of the pre-processing used in \cite{DBLP:conf/stoc/KrishnaswamyLS18}. The theorem essentially states that given $\rho, \delta$, and $U$, we can efficiently guess a set $C \setminus C'$ of clients and $S_0$ of facilities that capture the expensive part of the input instance $\I$. Then when we construct $\lpreroute$ for the cheap sub-instance, we can obtain the Extra Invariants, and the cost of extending a solution of the sub-instance to the whole instance is bounded with respect to $U$, which one should imagine is $Opt(\I)$.

\begin{thm}[Pre-Processing for Knapsack Median]\label{thm_kpre}
	Let $\I = (F,C,d,w,B)$ be an instance of knapsack median. Then, given as input instance $\I$, parameters $\rho, \delta \in (0, 1/2)$, and an upper bound $U$ on $Opt(\I)$, there exists an algorithm that runs in time $n^{O(1/ \rho)}$ and outputs $n^{O(1/ \rho)}$-many sub-instances of the form $\I' = (F,C' \subset C,d,w, B)$ along with the data for $\lpreroute$ on $\I'$, a set of facilities $S_0 \subset F$, and a vector $R \in \mathbb{R}_+^{C'}$ such that:
	\begin{enumerate}
		\item $\lpreroute$ satisfies all Basic and Extra Invariants
		\item $\frac{\log_e \tau}{\tau - 1}\mathbb{E}[Opt(\lpreroute)] + \frac{1 - \delta}{1+ \delta} \sum\limits_{j \in C \setminus C'} d(j, S_0) \leq U$
	\end{enumerate}
\end{thm}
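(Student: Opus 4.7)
The plan is to re-package the sparsification framework of Krishnaswamy--Li--Sandeep \cite{DBLP:conf/stoc/KrishnaswamyLS18} into the vocabulary of $\lpreroute$ and the Extra Invariants of Definition~\ref{invar_kextra}. The strategy has three conceptual phases: (i) enumerate over a ``heavy'' portion of an unknown optimum $OPT$ to commit to $S_0$ and the set $C \setminus C'$ of removed clients; (ii) construct the data of $\lpreroute$ on the residual sub-instance $\I' = (F, C', d, w, B)$; and (iii) bound the expected LP value by charging against $OPT$.

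For phase (i), I would run a sparsification loop: initialize $C' = C$ and $S_0 = \emptyset$, and while there exist $j \in C'$ and a radius $r > 0$ violating $|B_C(j, \delta r) \cap C'| \cdot r \leq \rho U$, guess the facility $i_j := \sigma_{OPT}(j)$ serving $j$ in $OPT$, add it to $S_0$, and remove from $C'$ the clients in $B_C(j, \delta r) \cap C'$ whose $OPT$-facility lies within distance roughly $r$ of $j$. A density-based charging shows each iteration accounts for at least $\Omega(\rho U)$ of $OPT$-cost, bounding the number of iterations by $O(1/\rho)$; enumerating over all possible witnesses and guessed facilities then produces $n^{O(1/\rho)}$ sub-instances.

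For phase (ii), on each sub-instance I would invoke the construction of Proposition~\ref{prop_LPfacil} to obtain the $F$-balls and then apply the random discretization of Lemma~\ref{lem_makeLPbasic}. For each $i \in S_0$ I attach a dummy client $j(i)$ with $F_{j(i)}$ equal to the facilities collocated with $i$ and $\ell_{j(i)} = -1$, and place it in $C^*$; this forces $S_0$ open and yields Extra Invariant~\ref{invar_kextra}(\ref{invar_kguessfacil}). For each surviving $j \in C'$ I set $R_j$ to be the largest radius at which the sparsity $|B_C(j, \delta r')| r' \leq \rho U$ holds for every $r' \leq R_j$; termination of the sparsification loop guarantees $R_j > 0$. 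I then truncate $F_j$ so that its radius level obeys $L(\ell_j) \leq \tau R_j$, which only tightens the LP and delivers Extra Invariant~\ref{invar_kextra}(\ref{invar_kupperbound}); Extra Invariant~\ref{invar_kextra}(\ref{invar_ksparserad}) is immediate from the definition of $R_j$, and Extra Invariant~\ref{invar_kextra}(\ref{invar_ksparsefacil}) follows by applying sparsity at radius $R_j$ together with the $L(\ell_j) \leq \tau R_j$ bound to sum the contributions of nearby clients to any non-guessed facility.

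For phase (iii), I would exhibit $OPT$ restricted to $C'$ as a candidate solution for the undiscretized $\lpreroute$, paying $\sum_{j \in C'} d(j, \sigma_{OPT}(j))$; feasibility follows because $\sigma_{OPT}(j) \in F_j$ by construction. Separately, the sparsification charging guarantees, for each removed client $j \in C \setminus C'$, that $\tfrac{1-\delta}{1+\delta}\, d(j, S_0) \leq d(j, \sigma_{OPT}(j))$, where the factor $\tfrac{1+\delta}{1-\delta}$ arises from the triangle inequality around the dense witness $(j_0, r)$ combined with the lower bound $c_{j}^{OPT} \geq (1-\delta)r$ forced by the removal criterion. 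Summing both contributions bounds the undiscretized $Opt(\lpreroute)$ plus $\tfrac{1-\delta}{1+\delta}\sum_{j \in C \setminus C'} d(j, S_0)$ by $Opt(\I) \leq U$, and applying Proposition~\ref{lem_disc} absorbs the discretization blow-up into the factor $\tfrac{\tau-1}{\log_e \tau}$. The main obstacle is calibrating the removal rule in the sparsification loop so that $O(1/\rho)$ iterations suffice \emph{and} the charging yields exactly the constant $\tfrac{1-\delta}{1+\delta}$ while simultaneously producing all four Extra Invariants together with the Basic Invariants from Lemma~\ref{lem_makeLPbasic}; this balancing between which clients get removed versus which remain in $C'$ is the crux of the KLS construction that we are re-indexing.
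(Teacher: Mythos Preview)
Your three-phase outline is broadly the same strategy the paper uses (the paper defers the knapsack proof to the analogous $k$-median-with-outliers argument in \S\ref{appendix_preoutlier}), but there is a real gap in how you obtain Extra Invariant~\ref{invar_kextra}(\ref{invar_ksparsefacil}).

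The assertion that this invariant ``follows by applying sparsity at radius $R_j$ together with the $L(\ell_j)\le\tau R_j$ bound'' does not go through. The client-side density bound $|B_C(j,\delta r)|\,r\le\rho U$ controls how many clients cluster near a given \emph{client}; it does not bound $\sum_{j:\, i\in F_j} d(i,j)$ for a fixed facility $i$, since many clients at geometrically spread distances can each satisfy their own density condition while their aggregate contribution at $i$ is unbounded. The paper (following KLS) therefore treats facility sparsity through two mechanisms that your plan omits. First, the enumeration in phase~(i) must \emph{also} guess the facilities $i\in S^*$ whose star cost $\sum_{j:\sigma_{OPT}(j)=i} d(j,i)$ exceeds $\rho U$ and place these in $S_0$; your loop only reacts to dense client balls and adds facilities to $S_0$ as a byproduct, which does not guarantee that all heavy facilities of $S^*$ are captured. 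Second, even once every non-$S_0$ facility in $S^*$ has a light star in $OPT$, this is a property of the optimal \emph{integral} assignment, not of the $F$-balls produced by Proposition~\ref{prop_LPfacil}. The paper instead writes a strengthened relaxation $\lpextra$ carrying the extra per-facility constraints $\sum_j d(i,j)\,x_{ij}\le\rho(1+\delta)U\,y_i$ for each $i\notin S_0$, and then applies a dedicated facility-duplication step (the analogue of Lemma~\ref{lem_extrasplit}) so that after splitting, every copy of $i$ satisfies $\sum_{j:\,i\in F_j} d(i,j)\le 2\rho U$. Skipping this and invoking Proposition~\ref{prop_LPfacil} directly gives you no control over how many clients' $F$-balls a single physical facility can sit in, so Extra Invariant~\ref{invar_kextra}(\ref{invar_ksparsefacil}) cannot be derived as you suggest.
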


The proof is implicit in \cite{DBLP:conf/stoc/KrishnaswamyLS18}. For completeness, we prove the analogous theorem for $k$-median with outliers, \Cref{thm_outpre}, in \S \ref{appendix_preoutlier}.

We will show that if $\lpreroute$ satisfies the Extra Invariants for knapsack median, then we can give a post-processing algorithm with bounded cost. It is not difficult to see that \mainalg~ maintains the Extra Invariants as well, so we use the Extra Invariants in our post-processing.

\begin{prop}\label{prop_kmainextra}
	\mainalg~ maintains all Extra Invariants for knapsack median.
\end{prop}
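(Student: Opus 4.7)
The plan is to argue that each elementary step performed inside \mainalg~ preserves all four Extra Invariants. Since \mainalg~ is a loop consisting of invocations of \iteralg~ and \altreroute, and these in turn decompose into four primitive operations (delete a tight zero facility; move $j$ from $C_{part}$ to $C_{full}$ and call \itersub; shrink $F_j$ by one radius level for $j \in C_{full}$ and call \itersub; move $j$ from $C^*$ to $C_{full}$ inside \altreroute), it suffices to verify preservation operation-by-operation. The key structural observation driving everything is that each dummy client $j(i) \in C_0$ has $\ell_{j(i)} = -1$, which by Basic Invariant~\ref{invar_basic}(\ref{invar_ell}) is the minimum allowed radius level, and $B_{j(i)} = \emptyset$ because $L(-2) = -1 < 0$.

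For Extra Invariant~\ref{invar_kextra}(\ref{invar_kguessfacil}), I would check that no operation can pull a dummy client $j(i)$ out of $C^*$ or corrupt $F_{j(i)}$. The shrinking step of \iteralg~ applies only to clients in $C_{full}$ and requires $\bar{y}(B_j) \leq 1$ to be tight; since dummy clients are in $C^*$ and have $B_{j(i)} = \emptyset$, neither applies. \itersub~ removes a client $j'$ from $C^*$ only when some newly added $j$ satisfies $\ell_{j'} \geq \ell_j + 2$; but $\ell_{j(i)} = -1$ and $\ell_j \geq -1$ force this to be impossible for dummy clients. Similarly, \altreroute~ removes $j$ from $C^*$ as the larger element of a candidate configuration $(j,j')$ with $\ell_{j'} \leq \ell_j - 1$, so a dummy client (with $\ell = -1$) cannot play the role of $j$. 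Finally, deleting a facility $i''$ collocated with some $i \in S_0$ simply prunes $i''$ from $F_{j(i)}$, which is exactly what the invariant demands (since $F_{j(i)}$ is defined as the collocated facilities \emph{currently in $F$}); the LP constraint $y(F_{j(i)}) = 1$ rules out emptying $F_{j(i)}$ entirely.

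For Extra Invariant~\ref{invar_kextra}(\ref{invar_ksparsefacil}), the sum $\sum_{j \mid i \in F_j} d(i,j)$ indexes over clients whose $F$-ball currently contains $i$. The only primitive operations that touch the $F_j$-sets are facility deletion and the shrinking step, both of which can only \emph{remove} elements from the $F_j$'s, hence only remove terms from the sum. For Extra Invariant~\ref{invar_kextra}(\ref{invar_kupperbound}), the radius levels $\ell_j$ are monotone non-increasing throughout \mainalg~ (they only change in the shrink step, which decrements $\ell_j$ by one), while $R_j$ is fixed by the pre-processing; so $L(\ell_j) \leq \tau R_j$ is preserved. Extra Invariant~\ref{invar_kextra}(\ref{invar_ksparserad}) depends only on the metric $d$, the client set $C$, and $R$, none of which are modified by \mainalg, and is therefore preserved trivially.

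I expect the only subtle obstacle to be the bookkeeping for Extra Invariant~\ref{invar_kextra}(\ref{invar_kguessfacil}): namely, cleanly arguing that every removal rule in \itersub~ and \altreroute~ is blocked for dummy clients. The uniform justification is the $\ell = -1$ barrier combined with Basic Invariant~\ref{invar_basic}(\ref{invar_ell}), so I would state this once as a lemma at the top of the proof and then tick off the other invariants essentially by monotonicity, yielding Proposition~\ref{prop_kmainextra}.
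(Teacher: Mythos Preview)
The paper does not supply a proof of Proposition~\ref{prop_kmainextra}; it merely asserts that ``it is not difficult to see that \mainalg~ maintains the Extra Invariants as well'' and moves on. Your operation-by-operation verification is correct and is precisely the routine check the paper leaves implicit: the $\ell=-1$ barrier for dummy clients (combined with Basic Invariant~\ref{invar_basic}(\ref{invar_ell})) blocks every removal rule in \itersub~ and \altreroute, and the remaining invariants follow from monotonicity of $\ell_j$ and of the $F_j$-sets. Nothing is missing.
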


Now we move on to describing our post-processing algorithm. Suppose we run the pre-processing algorithm guaranteed by Theorem \ref{thm_kpre} to obtain $\lpreroute$ satisfying all Basic- and Extra Invariants. Then we can run \mainalg~ to obtain an optimal extreme point of $\lpreroute$ with $O(1)$ fractional facilities, and $\lpreroute$ still satisfies all Basic- and Extra Invariants.

It turns out, to round these $O(1)$ fractional facilities, it suffices to open one facility in each  $F$-ball for clients in $C^*$. Then we can apply Theorem \ref{thm_mainreroute} to bound the re-routing cost. The main difficulty in this approach is that we must also round some fractional facilities down to zero to maintain the knapsack constraint.

Note that closing a facility can incur an unbounded multiplicative cost in the objective. To see this, consider a fractional facility $i$ that is almost open, so $\bar{y}_i \sim 1$. Then suppose there exists $j \in C_{full}$ such that $i \in B_j$ and $d(j,i) \ll L(\ell_j)$. Then $j$'s contribution to the objective of $\lpreroute$ is $\sim d(j,i)$. However, if we close $i$, then $j$'s contribution increases to $L(\ell_j) \gg d(j,i)$.

To bound the cost of closing facilities, we use the Extra Invariants. In particular, we use the next technical lemma, which states that if we want to close down a facility $i$, and every client $j$ that connects to $i$ has a back-up facility to go to within distance $O(1) L(\ell_j)$, then closing $i$ incurs only a small increase in cost. For proof, see \S \ref{sec_trueproofs}.

\begin{lem}\label{lem_kclosefacil}
	Suppose $\lpreroute$ satisfies all Basic and Extra Invariants for knapsack median, and let $S \subset F$ and $\alpha \geq 1$. Further, consider a facility $i \notin S \cup S_0$ and set of clients $C' \subset C$ such that for all $j \in C'$, we have $i \in F_j$ and there exists some facility in $S$ within distance $\alpha L(\ell_j)$ of $j$. Then $\sum\limits_{j \in C'} d(j,S) = O(\frac{\rho}{\delta}) U$.
\end{lem}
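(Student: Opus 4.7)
The plan is to reduce the target bound on $\sum_{j \in C'} d(j,S)$ to a bound on $\sum_{j \in C'} L(\ell_j)$, and then combine Extra Invariants $2$ and $4$ via a two-case split on $C'$. Since $d(j,S) \leq \alpha L(\ell_j)$ for every $j \in C'$ by hypothesis and $\alpha = O(1)$, it suffices to show $\sum_{j \in C'} L(\ell_j) = O(\rho/\delta)\,U$.

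I would introduce the threshold $\theta := \delta/(2\tau)$ and split $C'=C'_A \cup C'_B$ disjointly, where $C'_A := \{j \in C' : d(j,i) \geq \theta L(\ell_j)\}$. For Case~A, each term obeys $L(\ell_j) \leq d(j,i)/\theta$, so summing and invoking Extra Invariant~$2$ (which applies because $i \notin S_0$) gives
\[
\sum_{j \in C'_A} L(\ell_j) \;\leq\; \frac{1}{\theta} \sum_{j:\, i \in F_j} d(i,j) \;\leq\; \frac{2\rho U}{\theta} \;=\; \frac{4\tau \rho U}{\delta}.
\]

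For Case~B, I would pick $j^* := \arg\max_{j \in C'_B} L(\ell_j)$ (the claim is vacuous if $C'_B$ is empty). For any $j \in C'_B$, two triangle-inequality hops through $i$, combined with $L(\ell_j) \leq L(\ell_{j^*})$, yield
\[
d(j,j^*) \;\leq\; d(j,i) + d(i,j^*) \;\leq\; \theta\bigl(L(\ell_j) + L(\ell_{j^*})\bigr) \;\leq\; 2\theta L(\ell_{j^*}) \;=\; \frac{\delta\, L(\ell_{j^*})}{\tau}.
\]
Setting $r := L(\ell_{j^*})/\tau$, Extra Invariant~$3$ ($L(\ell_{j^*}) \leq \tau R_{j^*}$) guarantees $r \leq R_{j^*}$, so Extra Invariant~$4$ applied at $j^*$ produces $|B_C(j^*, \delta r)|\cdot r \leq \rho U$. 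The triangle-inequality bound above places $C'_B \subseteq B_C(j^*, \delta r)$, so $|C'_B|\cdot L(\ell_{j^*}) \leq \rho U\, \tau$, and hence $\sum_{j \in C'_B} L(\ell_j) \leq |C'_B|\cdot L(\ell_{j^*}) = O(\rho U)$.

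Combining the two cases gives $\sum_{j \in C'} L(\ell_j) = O(\rho/\delta)\, U$ and thus the lemma. The main delicate step is calibrating the threshold $\theta$: Case~A must convert a $d(j,i)$-sum into an $L$-sum with a $1/\theta$ blow-up that is $O(1/\delta)$, while Case~B requires the ball of radius $2\theta L(\ell_{j^*})$ around $j^*$ to fit inside $B_C(j^*, \delta r)$ for the largest $r$ permitted by Invariant~$3$, namely $r = L(\ell_{j^*})/\tau$. The choice $\theta = \delta/(2\tau)$ is exactly what reconciles both requirements.
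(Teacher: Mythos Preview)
Your proof is correct and takes a genuinely different route from the paper's. The paper introduces $i^*$, the closest facility in $S$ to $i$, and splits $C'$ according to whether $d(j,i)$ is large or small relative to the fixed scale $\gamma\, d(i,i^*)$. In the ``close'' case it then has to lower-bound $R_{j^*}$ in terms of $d(i,i^*)$, which requires invoking the back-up hypothesis $d(j^*,S)\le \alpha L(\ell_{j^*})$ together with the minimality of $i^*$ to relate $d(i,i^*)$ to $R_{j^*}$ through a short chain of inequalities. You instead spend the hypothesis $d(j,S)\le \alpha L(\ell_j)$ once at the very beginning to reduce to bounding $\sum_{j\in C'} L(\ell_j)$, and then split on the client-dependent threshold $\theta L(\ell_j)$. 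Your Case~B needs only Extra Invariant~3 at the maximizing client $j^*$ to guarantee $r=L(\ell_{j^*})/\tau \le R_{j^*}$, with no reference to $i^*$ at all. This is cleaner and makes the role of each invariant more transparent; the paper's route, on the other hand, bounds $d(j,S)$ directly via connection to the single facility $i^*$, which could be slightly more informative if one wanted an explicit re-routing destination rather than just a cost bound. Both arguments hide an $\alpha$-dependence in the $O(\cdot)$ constant, consistent with how the lemma is applied (with $\alpha$ a fixed constant).
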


By the next proposition, rounding a facility up to one does not incur any cost increase, because every client must be fully connected.

\begin{prop}\label{prop_allfull}
    Upon termination of \mainalg~ on a knapsack median instance, we have $C_{part} = \emptyset$.
\end{prop}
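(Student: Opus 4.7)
The key observation is that for knapsack median, the coverage constraint in $\lpreroute$ is tight in a very strong sense: since $a_j = 1$ for every client and $c = |C|$, the constraint reads
\[
\sum_{j \in C_{part}} y(F_j) \;\geq\; |C| - |C_{full} \cup C^*| \;=\; |C_{part}|,
\]
using Basic Invariant~\ref{invar_basic}(\ref{invar_partition}) that $C_{part}, C_{full}, C^*$ partition $C$. Combined with the per-client constraint $y(F_j) \leq 1$ for $j \in C_{part}$, this forces $y(F_j) = 1$ for every $j \in C_{part}$ in any feasible solution.

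The plan is then a short proof by contradiction. Suppose toward contradiction that $C_{part} \neq \emptyset$ at termination of \mainalg. The last step taken by \mainalg before termination was a call to \iteralg, followed by a check that no candidate configuration exists. By \Cref{thm_mainiter}, the output extreme point $\bar{y}$ satisfies that no $C_{part}$-constraint is tight, i.e., $\bar{y}(F_j) < 1$ strictly for every $j \in C_{part}$. Summing over $C_{part}$ then gives $\sum_{j \in C_{part}} \bar{y}(F_j) < |C_{part}|$, directly contradicting the coverage constraint derived above.

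I expect no real obstacle here; the proof is essentially a one-liner chaining the knapsack-median form of the coverage constraint against the termination condition of \iteralg (no tight $C_{part}$-constraint). The only thing to be careful about is ensuring the form of the coverage constraint is correctly instantiated from the general $\lpreroute$ template for the knapsack-median setting (with $a_j = 1$ and $c = |C|$), and to invoke \Cref{thm_mainiter} to get strict inequality rather than just $\leq 1$.
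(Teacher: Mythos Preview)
Your proposal is correct and follows essentially the same approach as the paper: instantiate the coverage constraint for knapsack median as $\sum_{j \in C_{part}} y(F_j) \geq |C_{part}|$, combine with $y(F_j) \leq 1$ to force $y(F_j) = 1$ for all $j \in C_{part}$, and then use that no $C_{part}$-constraint is tight at termination (via \Cref{thm_mainiter}) to conclude $C_{part} = \emptyset$. The paper phrases it directly rather than by contradiction, but the argument is the same.
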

\begin{proof}
    We observe that the single coverage constraint in $\lpreroute$ for a knapsack median instance is of the form:
    \[\sum\limits_{j \in C_{part}} y(F_j) \geq \lvert C \rvert - \lvert C_{full} \cup C^* \rvert = \lvert C_{part} \rvert\]
    , where we use the fact that $C_{part}$, $C_{full}$, and $C^*$ partition $C$ due to Basic Invariant \ref{invar_basic}(\ref{invar_partition}). Combining this with the constraint $y(F_j) \leq 1$ for all $j \in C_{part}$ gives that $y(F_j) = 1$ for all $j \in C_{part}$ for any feasible solution to $\lpreroute$. By assumption, no $C_{part}$-constraint is tight upon termination of \mainalg, so the proposition follows.
\end{proof}

To summarize, the goal of our post-processing algorithm is to find an integral setting of the $O(1)$ fractional facilities in the output of \mainalg~ such that the knapsack constraint is satisfied and there is an open facility in each $F$-ball for clients in $C^*$.

\begin{lem}\label{lem_kexists}
    Upon termination of \mainalg~ on a knapsack median instance, let $\bar{y}$ be the outputted extreme point of $\lpreroute$, and suppose $\lpreroute$ satisfies all Basic- and Extra Invariants. Then there exists an integral setting of the fractional facilities such that the knapsack constraint is satisfied, there is an open facility in each $F$-ball for clients in $C^*$, and every facility in $S_0$ is open.
\end{lem}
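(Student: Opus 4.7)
The plan is to reduce the rounding of $\bar{y}$ on the fractional facilities to finding an integer point in a bipartite $b$-matching polytope that additionally satisfies one knapsack constraint. Partition the surviving facilities into $F_{=1}=\{i:\bar{y}_i=1\}$ and $F_{<1}=\{i:0<\bar{y}_i<1\}$; facilities with $\bar{y}_i=0$ have been removed by \iteralg. For each $j\in C^*$ set $a_j:=\bar{y}(F_j\cap F_{=1})$; since $\bar{y}(F_j)=1$ and each coordinate of $\bar{y}$ on $F_{=1}$ equals $1$, we have $a_j\in\{0,1\}$ and $\bar{y}(F_j\cap F_{<1})=1-a_j$.

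Let $Q\subseteq[0,1]^{F_{<1}}$ be the polytope defined by the equalities $z(F_j\cap F_{<1})=1-a_j$ for all $j\in C^*$. By Proposition~\ref{prop_bipartite} the intersection graph of $\{F_j:j\in C^*\}$ is bipartite, and by Proposition~\ref{prop_degree} every facility lies in at most two of these $F$-balls. Viewing each $i\in F_{<1}$ as an edge between the (one or two) $C^*$-clients whose balls contain it, and introducing a dummy unconstrained endpoint whenever $i$ lies in a single ball, realizes $Q$ as a bipartite $b$-matching polytope with $b\in\{0,1\}^{C^*}$. Hence its constraint matrix is totally unimodular with integer right-hand side, so $Q$ is integral, and $\bar{y}|_{F_{<1}}$ can be written as a convex combination $\sum_k\lambda_k z^k$ of $0/1$ vertices of $Q$.

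Averaging the knapsack weight over this decomposition gives $\sum_{i\in F_{<1}}w_i\bar{y}_i=\sum_k\lambda_k\sum_{i\in F_{<1}}w_iz^k_i$, so some vertex $z^{k^*}$ satisfies $\sum_{i\in F_{<1}}w_iz^{k^*}_i\leq\sum_{i\in F_{<1}}w_i\bar{y}_i$. Defining the integer vector $z$ to equal $z^{k^*}$ on $F_{<1}$ and $1$ on $F_{=1}$ yields $z(F_j)=a_j+(1-a_j)=1$ for every $j\in C^*$, so each $F$-ball for $C^*$ contains an open facility, and $\sum_i w_iz_i\leq\sum_i w_i\bar{y}_i\leq B$, so the knapsack is met. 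To ensure every $i\in S_0$ is open, apply Extra Invariant \ref{invar_kextra}(\ref{invar_kguessfacil}): each such $i$ has a dummy client $j(i)\in C^*$ with $F_{j(i)}$ equal to the collocated copies of $i$, so $z(F_{j(i)})=1$ already opens one collocated copy, and because collocated copies share distances and weights, we may take that copy to be $i$ itself.

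The main obstacle to formalize is the bipartite $b$-matching description of $Q$ when some facilities lie in only one $C^*$-ball (handled via dummy vertex endpoints that carry no active constraint) and checking that total unimodularity is preserved in this slightly extended setting; once that is in place, the averaging argument immediately delivers the knapsack-feasible integer rounding, and the $S_0$-guarantee is a cosmetic consequence of the dummy clients built into $C^*$ by the pre-processing.
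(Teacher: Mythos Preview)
Your proof is correct and follows essentially the same approach as the paper. The paper defines an auxiliary LP minimizing $\sum_i w_i y_i$ subject to $y(F_j)=1$ for all $j\in C^*$, $y_i=1$ for $i\in F_{=1}$, and $y\in[0,1]^F$; it argues the feasible region is a face of the intersection of two partition matroid polytopes (one for each side of the bipartite intersection graph), hence integral, and then observes that $\bar{y}$ is feasible with weight at most $B$. Your argument uses the equivalent total-unimodularity / bipartite $b$-matching description of the same polytope and replaces the LP-minimization step by writing $\bar{y}$ as a convex combination of vertices and averaging; these are interchangeable formulations of the same idea, and the handling of $S_0$ via the dummy clients in $C_0$ is identical.
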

\begin{proof}
    Consider the following LP:
    \[\mathcal{LP} = \min\limits_y \{\sum\limits_{i \in F} w_i y_i \mid y(F_j) = 1\quad \forall j \in C^* ,\, y_i = 1 \quad \forall i \in F_{=1},\, y \in [0,1]^F\}\]
    The first constraint states that we want one open facility in each $F$-ball for clients in $C^*$, and the second states that our solution should agree on the integral facilities in $\bar{y}$.
    
    Because $\lpreroute$ satisfies all Basic Invariants, the intersection graph of $\{F_j \mid j \in C^*\}$ is bipartite by Proposition \ref{prop_bipartite}. Then the feasible region of $\mathcal{LP}$ is a face of the intersection of two partition matroids (each side of the biparitition of $\{F_j \mid j \in C^*\}$ defines one parititon matroid), and thus $\mathcal{LP}$ is integral.
    
    To conclude the proof, we observe that $\bar{y}$ is feasible for $\mathcal{LP}$, so $Opt(\mathcal{LP}) \leq \sum\limits_{i \in F} w_i \bar{y}_i \leq B$. Thus there exists an integral setting of facilities that opens one facility in each  $F$-bal for all clients in $C^*$, agrees with all of $\bar{y}$'s integral facilities, and has total weight at most $B$. Finally, by Extra Invariant \ref{invar_kextra}(\ref{invar_kguessfacil}), $C_0 \subset C^*$, so we open every facility in $S_0$.
\end{proof}

Thus, in light of Lemma \ref{lem_kexists}, our post-processing algorithm is to enumerate over all integral settings of the fractional variables to find one that satisfies the knapsack constraint, opens one facility in each $F$-ball for clients in $C^*$, and opens $S_0$. Combining our post-processing algorithm with \mainalg~ gives the following theorem.

\begin{thm}\label{thm_knapsackapprox}
    There exists a polynomial time algorithm that takes as input $\lpreroute$ for knapsack median instance $\I$ satisfying all Basic- and Extra Invariants and outputs a feasible solution to $\I$ such that the solution opens all facilities in $S_0$ and has cost at most $(2 + \alpha) Opt(\lpreroute) + O(\rho / \delta) U$, where $\alpha = \frac{\tau^3 + 2\tau^2 + 1}{\tau^3 - 1}$.
\end{thm}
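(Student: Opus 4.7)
The plan is to run \mainalg~ and then enumerate over the $O(1)$ fractional facilities. By \Cref{thm_pseudo} and \Cref{prop_kmainextra}, \mainalg~ produces in polynomial time an optimal extreme point $\bar{y}$ of $\lpreroute$ with $O(r) = O(1)$ fractional facilities while preserving all Basic and Extra Invariants. I will enumerate the $2^{O(1)}$ integral settings of these variables and use \Cref{lem_kexists} to find a setting $\tilde{y}$ that satisfies the knapsack constraint, keeps every integral $\bar{y}$-facility open, and opens exactly one facility in each $F_j$ for $j \in C^*$. Extra Invariant~\ref{invar_kextra}(\ref{invar_kguessfacil}) then guarantees every $i \in S_0$ is open. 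The output is the solution that connects each client $j$ to its closest open facility in $\tilde{y}$; by \Cref{prop_allfull}, $C_{part} = \emptyset$, so only $j \in C_{full} \cup C^*$ contribute to the cost.

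Let $F_0 := \{i : \bar{y}_i \in (0,1),\ \tilde{y}_i = 0\}$ be the closed fractional facilities, with $|F_0| = O(1)$. Partition $C_{full} \cup C^*$ into $C_{reroute} := \{j : F_j \cap F_0 \neq \emptyset\}$ and its complement. For the former, each $i \in F_0$ lies outside $S_0$ (which is fully open), and by \Cref{thm_mainreroute} applied with $\beta = 1$, every $j \in C_{full} \cup C^*$ has an open facility within $(2+\alpha) L(\ell_j)$. So \Cref{lem_kclosefacil} applied to each such $i$ yields $\sum_{j : i \in F_j} d(j, \tilde{y}) = O(\rho/\delta) U$; summing over the $O(1)$ facilities in $F_0$ gives $\sum_{j \in C_{reroute}} d(j, \tilde{y}) = O(\rho/\delta) U$.

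For $j \notin C_{reroute}$, every facility in $F_j$ is open in $\tilde{y}$. Let $N_j := \sum_{i \in B_j} d'(i,j) \bar{y}_i + (1-\bar{y}(B_j)) L(\ell_j)$ denote $j$'s contribution to $Opt(\lpreroute)$. For $j \in C^*$, $\bar{y}(F_j) = 1$ yields $d(j, \tilde{y}) \leq \sum_{i \in F_j} d(i,j) \bar{y}_i \leq N_j$ via $\min \leq$ weighted average (using $d(i,j) \leq d'(i,j)$ on $B_j$ and $d(i,j) \leq L(\ell_j)$ elsewhere). For $j \in C_{full}$ with $F_j \neq \emptyset$, I split on $\bar{y}(B_j)$: if $\bar{y}(B_j) \geq 1/(2+\alpha)$, then $B_j$ has an open facility and averaging over $B_j$ gives $d(j,\tilde{y}) \leq \sum_{i \in B_j} d'(i,j) \bar{y}_i / \bar{y}(B_j) \leq N_j / \bar{y}(B_j) \leq (2+\alpha) N_j$; otherwise $N_j \geq (1-\bar{y}(B_j)) L(\ell_j) \geq \frac{1+\alpha}{2+\alpha} L(\ell_j)$, so $d(j,\tilde{y}) \leq L(\ell_j) \leq (2+\alpha) N_j$. (If $F_j = \emptyset$ then $\bar{y}(B_j) = 0$, $N_j = L(\ell_j)$, and \Cref{thm_mainreroute} directly gives $d(j, \tilde{y}) \leq (2+\alpha) N_j$.) Summing over $j \notin C_{reroute}$ yields $(2+\alpha) Opt(\lpreroute)$, and combining with the $C_{reroute}$ bound completes the proof.

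The main obstacle is handling $C_{full}$-clients with $\bar{y}(B_j)$ close to $1$: their $N_j$ can be tiny while $d(j, \tilde{y})$ is naively bounded only by $L(\ell_j)$. The partition into $C_{reroute}$ and its complement resolves this cleanly: if any close support facility of $j$ is closed by the rounding, then $j$ joins $C_{reroute}$ where the $O(\rho/\delta) U$ term absorbs the cost via \Cref{lem_kclosefacil} (crucially, because $|F_0| = O(1)$); otherwise $B_j$'s $\bar{y}$-support is preserved in $\tilde{y}$ and the averaging bound $N_j / \bar{y}(B_j)$ kicks in.
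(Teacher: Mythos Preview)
Your proof is correct and follows essentially the same route as the paper: run \mainalg, enumerate the $2^{O(1)}$ integral settings of the fractional facilities, invoke \Cref{lem_kexists} to find a feasible setting opening $S_0$ and one facility per $F_j$ for $j\in C^*$, then split clients into those supported on a closed fractional facility (handled via \Cref{lem_kclosefacil}) and the rest (compared directly to their $\lpreroute$ contribution).

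The one place you work harder than necessary is the case $j\in C_{full}$, $j\notin C_{reroute}$, $F_j\neq\emptyset$. Your threshold split on $\bar{y}(B_j)$ correctly yields $d(j,\tilde{y})\le (2+\alpha)N_j$, but the paper observes more directly that $d(j,\tilde{y})\le N_j$: since every facility in $F_j$ is open, $d(j,\tilde{y})\le\min_{i\in F_j}d(j,i)$; if $B_j\neq\emptyset$ then $\min_{i\in B_j}d'(j,i)\le L(\ell_j)$ and $N_j\ge \min_{i\in B_j}d'(j,i)\cdot\bar{y}(B_j)+(1-\bar{y}(B_j))L(\ell_j)\ge\min_{i\in B_j}d'(j,i)$, while if $B_j=\emptyset$ then $N_j=L(\ell_j)$ dominates directly. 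Either bound suffices for the theorem, so this is purely a matter of tightness, not correctness.
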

\begin{proof}
    Our algorithm is to run \mainalg~ on $\lpreroute$ and then run our post-processing algorithm, which is to enumerate over all integral settings of the fractional variables, and then output the feasible solution that opens $S_0$ of lowest cost (if such a solution exists.)
    
    Let $\bar{y}$ be the optimal extreme point of $\lpreroute$ output by \mainalg, which has $O(1)$ fractional variables by \Cref{thm_pseudo}. Because $\bar{y}$ has $O(1)$ fractional variables, our post-processing algorithm is clearly efficient, which establishes the runtime of our overall algorithm.
    
    Note that upon termination, $\lpreroute$ still satisfies all Basic- and Extra Invariants. Then by Lemma \ref{lem_kexists}, there exists an integral setting of the fractional variables that is feasible, opens $S_0$, and opens a facility in each $F$-ball for clients in $C^*$. It suffices to bound the cost of this solution. Let $S \subset F$ denote the facilities opened by this integral solution, so $d(j, S) \leq L(\ell_j)$ for all $j \in C^*$. Applying Lemma \ref{thm_mainreroute} with $\beta = 1$, we obtain that $d(j,S) \leq (2 + \alpha) L(\ell_j)$ for all $j \in C_{full} \cup C^*$, where $\alpha = \max(1, 1 + \frac{2}{\tau}, \frac{\tau^3 + 2\tau^2 + 1}{\tau^3 - 1})$. It is easy to check that $\alpha = \frac{\tau^3 + 2\tau^2 + 1}{\tau^3 - 1}$ for all $\tau > 1$.
    
    To bound the cost of the solution $S$ relative to $Opt(\lpreroute)$, we must bound the cost of closing the $O(1)$-many facilities in $F_{<1} \setminus S$. We recall that by Proposition \ref{prop_allfull}, we have $C = C_{full} \cup C^*$, so all clients must be fully connected in $\lpreroute$.
    
    First we consider any client $j \in C$ that is not supported on any facility in $F_{<1} \setminus S$. Such a client is not affected by closing $F_{<1} \setminus S$, so if $F_j$ is empty, then $d(j, S) \leq (2 + \alpha) L(\ell_j)$, which is at most $(2 + \alpha)$ times $j$'s contribution to $\lpreroute$. Otherwise, $F_j$ contains an integral facility in $S$ to connect to, so $d(j,S)$ is at most $j$'s contribution to $\lpreroute$.
    
    It remains to consider the clients whose $F$-balls contain a facility in $F_{<1} \setminus S$. Because there are only $O(1)$-many facilities in $F_{<1} \setminus S$, it suffices to show that for each $i \in F_{<1} \setminus S$, the additive cost of connecting all clients supported on $i$ is at most $O(\rho / \delta) U$. Here we apply Lemma \ref{lem_kclosefacil} to the set of clients $C' = \{j \in C \mid i \in F_j\}$ to obtain $\sum\limits_{j \in C'} d(j,S) = O(\rho / \delta) U$.
    
    To summarize, the cost of connecting the clients not supported on $F_{<1} \setminus S$ is at most $(2 + \alpha) Opt(\lpreroute)$, and the cost of the remaining clients is $O(\rho / \delta)U$, as required.
\end{proof}

Now our complete approximation for knapsack median follows from combining the pre-processing with the above theorem and tuning parameters.

\begin{proof}[Proof of Theorem \ref{thm_mainknapsack}]
    Let $\epsilon' > 0$. We will later choose $\epsilon'$ with respect to the given $\epsilon$ to obtain the desired approximation ratio and runtime. First, we choose parameters $\rho, \delta \in (0,1/2)$ and $U \geq 0$ for our pre-processing algorithm guaranteed by Theorem \ref{thm_kpre}. We take $\rho = {\epsilon'}^2$ and $\delta = \epsilon'$. We require that $U$ is an upper bound on $Opt(\I)$. Using a standard binary search idea, we can guess $Opt(\I)$ up to a multiplicative $(1 + \epsilon')$-factor in time $n^{O(1 / \epsilon')}$, so we guess $U$ such that $Opt(\I) \leq U \leq (1 + \epsilon') Opt(\I)$.
    
    With these choices of parameters, we run the algorithm guaranteed by Theorem \ref{thm_kpre} to obtain $n^{O(1/ \epsilon')}$ many sub-instances such that one such sub-instance is of the form $\I' = (F, C' \subset C, d, w, B)$, where $\lpreroute$ for $\I'$ satisfies all Basic- and Extra Invariants, and we have:
    
    \begin{equation}\label{eq_kcost}
       \frac{\log_e \tau}{\tau - 1}\mathbb{E}[Opt(\lpreroute)] + \frac{1 - \epsilon'}{1+ \epsilon'} \sum\limits_{j \in C \setminus C'} d(j, S_0) \leq U 
    \end{equation}
    
    Then for each sub-instance output by the pre-processing, we run the algorithm guaranteed by Theorem \ref{thm_knapsackapprox} to obtain a solution to each sub-instance. Finally, out of these solutions, we output the one that is feasible for the whole instance with smallest cost. This completes the description of our approximation algorithm for knapsack median. The runtime is $n^{O(1/ \epsilon')}$, so it remains to bound the cost of the output solution and to choose the parameters $\epsilon'$ and $\tau$.
    
    To bound the cost, it suffices to consider the solution output on the instance $\I'$ where $\lpreroute$ satisfies all Basic- and Extra Invariants and Equation \ref{eq_kcost}. By running the algorithm guaranteed by Theorem \ref{thm_knapsackapprox} on this $\lpreroute$, we obtain a feasible solution $S \subset F$ to $\I'$ such that $S_0 \subset S$, and the cost of connecting $C'$ to $S$ is at most $(2 + \alpha)Opt(\lpreroute) + O(\epsilon') U$, where $\alpha =  \frac{\tau^3 + 2\tau^2 + 1}{\tau^3 - 1}$. To extend this solution on the sub-instance to a solution on the whole instance $\I$, we must connect $C \setminus C'$ to $S$. Because $S_0 \subset S$, applying Equation \ref{eq_kcost} allows us to upper bound the expected cost of connecting $C$ to $S$ by:
    \[(2 + \alpha)\mathbb{E}[Opt(\lpreroute)] + O(\epsilon') U + \sum\limits_{j \in C \setminus C'} d(j, S_0) \leq (2 + \alpha) \frac{\tau - 1}{\log_e \tau} \frac{1 + \epsilon'}{1 - \epsilon'}U + O(\epsilon')U.\]
    Now choosing $\tau > 1$ to minimize $(2 + \frac{\tau^3 + 2\tau^2 + 1}{\tau^3 - 1}) \frac{\tau - 1}{\log_e \tau}$ gives $\tau = 2.046$ and $\frac{\tau -1}{\log_e \tau} (2 + \frac{\tau^3 + 2\tau^2 + 1}{\tau^3 - 1}) = 6.387$. Thus the expected cost of this solution is at most $6.387\frac{1 + \epsilon'}{1 - \epsilon'}U + O(\epsilon')U$, where $U \leq (1 + \epsilon') Opt(\I)$. Finally, by routine calculations, we can choose $\epsilon' = \theta(\epsilon)$ so that expected cost is at most $(6.387 + \epsilon) Opt(\I)$, as required. Note that the runtime of our algorithm is $n^{O(1/ \epsilon')} = n^{O(1/ \epsilon)}$.
\end{proof}

\subsection{Approximation Algorithm for $k$-Median with Outliers}

Our approximation algorithm for $k$-median with outliers follows the same general steps as our algorithm for knapsack median. We state the analogous Extra Invariants for $k$-median with outliers and pre-processing algorithm here. The only differences between the Extra Invariants for knapsack median and $k$-median with outliers is in the final Extra Invariant.

\begin{definition}[Instance of $k$-Median with Outliers]\label{def_instanceoutlier}
    An instance of $k$-median with outliers is of the form $\I = (F,C,d,k,m)$, where $F$ is a set of facilities, $C$ is a set of clients, $d$ is a metric on $F \cup C$, $k$ is the number of facilities to open, and $m$ is the number of clients to serve.
\end{definition}

Note that for $k$-median with outliers, the two side constraints in $\lpreroute$ are the knapsack constraint, $y(F) \leq k$, and the coverage constraint, $\sum\limits_{j \in C_{part}} y(F_j) \geq m - \lvert C_{full} \cup C^* \rvert$.

\begin{definition}[Extra Invariants for $k$-Median with Outliers] \label{invar_outextra}
	Let $\rho, \delta \in (0,1/2)$, $U \geq 0$, $S_0 \subset F$, and $R \in \mathbb{R}_+^C$ be given. Then we call the following properties our \emph{Extra Invariants}:	
	\begin{enumerate}
		\item For all $i \in S_0$, there exists a dummy client $j(i) \in C^*$ such that $F_{j(i)} = \{i' \in F \mid i' \text{ colocated with $i$}\}$ with radius level $\ell_{j(i)} = - 1$. We let $C_0 \subset C$ be the collection of these dummy clients. \label{invar_outguessfacil}
		\item For all $i \in F$ that is not collocated with some $i' \in S_0$, we have $\sum\limits_{j \mid i \in F_j} d(i,j) \leq \rho (1 + \delta) U$	\label{invar_outsparsefacil}
		\item For all $j \in C$, we have $L(\ell_j) \leq \tau R_j$	\label{invar_outupperbound}
		\item For every $t > 0$ and $p \in F \cup C$, we have:
		\[\lvert \{j \in B_C(p, \frac{\delta t}{4 + 3 \delta}) \mid R_j \geq t \} \rvert \leq \frac{\rho (1 + 3\delta / 4)}{1 - \delta/4} \frac{U}{t}.\] \label{invar_outsparserad}
	\end{enumerate}
\end{definition}

Again, the pre-processing of \cite{DBLP:conf/stoc/KrishnaswamyLS18} gives the next theorem. For proof, see \S \ref{appendix_preoutlier}.

\begin{thm}[Pre-Processing for $k$-Median with Outliers]\label{thm_outpre}
	Let $\I = (F,C,d,k,m)$ be an instance of $k$-median with outliers with optimal solution $(S^*,C^*)$. Then, given as input instance $\I$, parameters $\rho, \delta \in (0, 1/2)$, and an upper bound $U$ on $Opt(\I)$, there exists an algorithm that runs in time $n^{O(1/ \rho)}$ and outputs $n^{O(1/ \rho)}$-many sub-instances of the form $\I' = (F,C' \subset C,d,k,m' = m - \lvert C^* \setminus C' \rvert)$ along with the data for $\lpreroute$ on $\I'$, a set of facilities $S_0 \subset F$, and a vector $R \in \mathbb{R}_+^{C'}$ such that:
	\begin{enumerate}
		\item $\lpreroute'$ satisfies all Basic and Extra Invariants
		\item $\frac{\log_e \tau}{(\tau - 1)(1 + \delta/2)}\mathbb{E}[Opt(\lpreroute)] + \frac{1 - \delta}{1+ \delta} \sum\limits_{j \in C^* \setminus C'} d(j, S_0) \leq U$
	\end{enumerate}
\end{thm}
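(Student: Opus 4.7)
The plan is to adapt the sparsification framework of \cite{DBLP:conf/stoc/KrishnaswamyLS18} to produce data for $\lpreroute$ satisfying our Basic and Extra Invariants, following the same outline as Theorem \ref{thm_kpre} but with the stronger sparsification condition appropriate for outliers. The algorithm enumerates over $n^{O(1/\rho)}$ guesses of which clients and facilities account for the ``expensive'' part of $(S^*,C^*)$, and for each guess builds $\lpreroute$ on the leftover cheap portion.

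First I would iteratively guess the expensive facilities of $S^*$. At each step, if there is a facility $i^* \in S^* \setminus S_0$ that serves a subset of $C^*$ contributing total cost more than $\rho U$ in OPT, we guess $i^*$ by enumerating over all of $F$, add it to $S_0$, and move all of its OPT-served clients into $C \setminus C'$. Since each guessed facility accounts for $\geq \rho U$ of OPT cost and the total cost is $\leq U$, this loop terminates after $O(1/\rho)$ iterations. For each $i \in S_0$ we install a dummy client $j(i)$ with $F_{j(i)}$ equal to the set of facilities collocated with $i$ and $\ell_{j(i)} = -1$, placed in $C^*$; this enforces Extra Invariant~\ref{invar_outextra}(\ref{invar_outguessfacil}) and forces $S_0$ to be opened integrally.

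Next I would enforce the sparsity condition, Extra Invariant~\ref{invar_outextra}(\ref{invar_outsparserad}). For the surviving clients I set a guessed radius $R_j$ discretized to powers of $(1+\delta/4)$, initialized to approximate the OPT connection cost $d(j,S^*)$. So long as there exist $t > 0$ and $p \in F \cup C$ witnessing a violation of Extra Invariant~\ref{invar_outextra}(\ref{invar_outsparserad}), we pick one violating client $j^*$ in the dense ball (guessing it and its OPT-facility from $n|F|$ possibilities), add the facility to $S_0$, and move $j^*$ into $C \setminus C'$. The tuned constants $\delta/(4+3\delta)$ and $(1+3\delta/4)/(1-\delta/4)$ ensure each such extraction charges $\geq \rho U$ to OPT, so again only $O(1/\rho)$ iterations occur. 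Combining with the first loop, the total number of guesses is $n^{O(1/\rho)}$. For the remaining clients $C'$, I set $F_j = \{i \in F : d(j,i) \leq R_j\}$, run the discretization of Lemma~\ref{lem_makeLPbasic} to obtain $\ell_j$ and $B_j$, initialize $C_{part} = C'$, $C^* = C_0$, $C_{full} = \emptyset$, and set the coverage constraint to $m' = m - |C^* \setminus C'|$. The Basic Invariants are immediate by construction (in particular the Distinct Neighbors property holds vacuously since all dummy clients have $\ell = -1$ and their $F$-balls are the collocation classes of distinct facilities of $S_0$), and Extra Invariants \ref{invar_outextra}(\ref{invar_outsparsefacil})--\ref{invar_outextra}(\ref{invar_outupperbound}) follow from the stopping criterion of the first loop and the choice of $R_j$ together with $L(\ell_j) \leq \tau R_j$ after discretization.

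Finally, for the cost bound I would exhibit the fractional solution $\bar{y}$ corresponding to OPT restricted to $(S_0 \cup S^*)$ on the clients $C^* \cap C'$: set $\bar{y}_i = 1$ for $i \in S_0 \cup (S^* \cap S_0^c)$. This is feasible for $\lpreroute$ (it satisfies $Wy \leq b$ since $|S^*| \leq k$, and meets the residual coverage $m' - |C_{full}| = m - |C^* \setminus C'|$ by construction), and its $\lpreroute$ objective is at most $(1+\delta/2)\sum_{j \in C^* \cap C'} d'(j,S^*)$. Taking expectations over the random discretization offset via Proposition~\ref{lem_disc} and rearranging gives the claimed inequality
$$\tfrac{\log_e \tau}{(\tau-1)(1+\delta/2)}\,\mathbb{E}[Opt(\lpreroute)] + \tfrac{1-\delta}{1+\delta}\sum_{j \in C^* \setminus C'} d(j,S_0) \leq U.$$
The main obstacle is the sparsity-extraction loop: one must verify that the specific constants $\delta/(4+3\delta)$ and $(1+3\delta/4)/(1-\delta/4)$ in Extra Invariant~\ref{invar_outextra}(\ref{invar_outsparserad})---tuned to interact correctly with the outlier post-processing in \S\ref{sec_postoutlier}---are simultaneously achievable while keeping the per-extraction charge $\geq \rho U$, and that the loop's bookkeeping of $R_j$ adjustments preserves $R_j \geq d(j,S^*)$ so the OPT-based feasibility argument goes through. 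Everything else is essentially a re-packaging of the sparsification analysis of \cite{DBLP:conf/stoc/KrishnaswamyLS18}.
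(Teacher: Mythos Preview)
Your high-level plan (enumerate expensive facilities and dense balls, then build $\lpreroute$ on the residual instance) matches the paper's, but there is a real gap in how you construct the $F$-balls. You set $F_j = \{i \in F : d(j,i) \leq R_j\}$ and then assert that Extra Invariant~\ref{invar_outextra}(\ref{invar_outsparsefacil}) follows from the stopping criterion of the facility-guessing loop. It does not. That loop only guarantees that every $i \in S^* \setminus S_0$ has bounded OPT-service cost; it says nothing about an arbitrary facility $i \notin S_0$ that is not in $S^*$ at all. With your definition of $F_j$, such an $i$ lies in $F_j$ whenever $d(i,j) \leq R_j$, and $\sum_{j: i \in F_j} d(i,j)$ can be arbitrarily large. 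Relatedly, your feasibility witness $\bar{y} = \mathbf{1}_{S_0 \cup S^*}$ need not satisfy $\bar{y}(F_j) \leq 1$, since several facilities of $S_0 \cup S^*$ can lie in the metric ball $B_F(j,R_j)$.

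The paper closes this gap by not taking $F_j$ to be a metric ball. Instead it writes down a strengthened LP, $\lpextra$, that explicitly carries the per-facility sparsity constraint $\sum_{j} d(i,j)x_{ij} \leq \rho(1+\delta/2)U\, y_i$ for $i \notin S_0$ (this constraint is valid for the sparse solution produced by \cite{DBLP:conf/stoc/KrishnaswamyLS18}), solves it, and then invokes a facility-splitting lemma (Lemma~\ref{lem_extrasplit}) that duplicates facilities and reads off $F_j \subset B_F(j,R_j)$ from the LP's $x$-variables. The duplication is what makes Extra Invariant~\ref{invar_outextra}(\ref{invar_outsparsefacil}) hold for \emph{every} non-$S_0$ copy, and the LP constraints are what give $\bar{y}(F_j) \leq 1$. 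Without this LP-and-split step (or an equivalent mechanism), your construction cannot deliver the facility-sparsity invariant, and the downstream applications of Lemma~\ref{lem_outclosefacil} in the post-processing would fail.
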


It is easy to check that \mainalg~ maintains all Extra Invariants for $k$-median with outliers as well, and we have an analogous technical lemma to bound the cost of closing facilities. For proof of the lemma, see \S \ref{sec_trueproofs}.

\begin{prop}\label{prop_outmainextra}
	\mainalg~ maintains all Extra Invariants for $k$-median with outliers.
\end{prop}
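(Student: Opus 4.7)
The plan is to show that each of the four Extra Invariants is preserved by every operation that \mainalg~ performs, which (by \Cref{alg_mainalg,alg_iterativeround,alg_reroute,alg_altreroute}) amounts to: (i) deleting a facility $i$ with $\bar{y}_i=0$, (ii) moving $j$ from $C_{part}$ to $C_{full}$, (iii) shrinking $F_j\to B_j$ and decrementing $\ell_j$ for some $j\in C_{full}$, (iv) the \itersub~ swap that moves $j$ from $C_{full}$ to $C^*$ while demoting to $C_{full}$ any $j'\in C^*$ with $\ell_{j'}\ge \ell_j+2$ and intersecting $F$-ball, and (v) the \altreroute~ move that demotes $j$ from $C^*$ to $C_{full}$ as part of a candidate configuration $(j,j')$ with $\ell_{j'}\le \ell_j-1$. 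The two global monotonicity facts that will drive most of the argument are: the $F_j$-sets only shrink, and the radius levels $\ell_j$ only decrease over the course of the algorithm.

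For Invariant \ref{invar_outextra}(\ref{invar_outguessfacil}), I would show that for each $i\in S_0$ the dummy client $j(i)\in C_0$ remains in $C^*$ throughout, and that $F_{j(i)}$ stays equal to the current set of facilities colocated with $i$. Since $\ell_{j(i)}=-1$ and Basic Invariant \ref{invar_basic}(\ref{invar_ell}) gives $\ell_{j'}\ge -1$ for every $j'$, the \itersub~ demotion rule (which requires some newly added $C^*$-client at level $\le \ell_{j(i)}-2=-3$) and the \altreroute~ rule (which requires $\ell_{j'}\le \ell_{j(i)}-1=-2$) are both vacuous for $j(i)$, so $j(i)$ is never removed from $C^*$. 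The shrinking step (iii) only fires on $C_{full}$-clients, so $\ell_{j(i)}$ and $F_{j(i)}$ are never modified directly; the only way $F_{j(i)}$ can change is through facility deletion (i), which removes colocated facilities only when their $\bar y$-value is zero, and after such deletion $F_{j(i)}$ still equals the set of colocated facilities in the updated $F$.

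For Invariant \ref{invar_outextra}(\ref{invar_outsparsefacil}), the key observation is that the set $\{j : i\in F_j\}$ is monotonically non-increasing, because operation (iii) only shrinks $F_j$ (potentially removing $i$ from it) and no operation ever enlarges any $F_j$. Hence the sum $\sum_{j\,:\,i\in F_j} d(i,j)$ weakly decreases under every step of \mainalg, and the bound $\rho(1+\delta)U$ that held at initialization continues to hold. Invariant \ref{invar_outextra}(\ref{invar_outupperbound}) follows identically from monotonicity of $\ell_j$: since $R_j$ and the discretization levels $L(\cdot)$ are fixed once and for all, and $\ell_j$ is only ever decremented (in step (iii)), the inequality $L(\ell_j)\le \tau R_j$ is preserved. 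Invariant \ref{invar_outextra}(\ref{invar_outsparserad}) involves only the client set $C$, the metric $d$, and the fixed vector $R$; none of \mainalg's operations touches these, so the inequality is trivially preserved.

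The main obstacle — and essentially the only one requiring any care — is Invariant \ref{invar_outextra}(\ref{invar_outguessfacil}). One must rule out every operation that could threaten a dummy client, which is where the interplay between Basic Invariant \ref{invar_basic}(\ref{invar_ell}) ($\ell_j\ge -1$) and the choice $\ell_{j(i)}=-1$ at initialization does all the work. Everything else reduces to the monotonicity of $F_j$ and $\ell_j$, and to the fact that Invariant \ref{invar_outextra}(\ref{invar_outsparserad}) is a statement about fixed data. Putting these observations together yields the proposition.
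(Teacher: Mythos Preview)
Your argument is correct. The paper itself does not give a proof of this proposition; it simply states it as ``easy to check'' (the identical claim for knapsack median, \Cref{prop_kmainextra}, is also stated without proof). Your write-up supplies exactly the verification the paper omits, and the reasoning is sound: monotonicity of the $F_j$-sets and of $\ell_j$ handles Invariants \ref{invar_outextra}(\ref{invar_outsparsefacil}) and \ref{invar_outextra}(\ref{invar_outupperbound}), Invariant \ref{invar_outextra}(\ref{invar_outsparserad}) concerns only fixed data, and the fact that dummy clients sit at the bottom level $\ell_{j(i)}=-1$ combined with Basic Invariant \ref{invar_basic}(\ref{invar_ell}) makes the removal conditions in both \itersub~ and \altreroute~ vacuous for them, preserving Invariant \ref{invar_outextra}(\ref{invar_outguessfacil}).
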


\begin{lem}\label{lem_outclosefacil}
	Suppose $\lpreroute$ satisfies all Basic and Extra Invariants for $k$-median with outliers, and let $S \subset F$ and $\alpha \geq 1$. Further, consider a facility $i \notin S \cup S_0$ and set of clients $C' \subset C$ such that for all $j \in C'$, we have $i \in F_j$ and there exists some facility in $S$ within distance $\alpha L(\ell_j)$ of $j$. Then $\sum\limits_{j \in C'} d(j,S) = O(\frac{\rho}{\delta}) U$.
\end{lem}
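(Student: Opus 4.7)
The plan is to mirror the proof of Lemma \ref{lem_kclosefacil} for knapsack median, adapting the argument to use the outliers-specific sparsity Extra Invariant \ref{invar_outextra}(\ref{invar_outsparserad}). The overall strategy is a near/far partition of $C'$ according to the distance of each client $j$ from the facility $i$ being closed (relative to $R_j$), controlling the far part with the per-facility cost invariant and the near part with the sparsity invariant.

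Set $\gamma = \frac{\delta}{4+3\delta}$, the geometric factor from Extra Invariant \ref{invar_outextra}(\ref{invar_outsparserad}). For every $j \in C'$, since $i \in F_j$ we have $d(i,j) \leq L(\ell_j) \leq \tau R_j$ by Extra Invariant \ref{invar_outextra}(\ref{invar_outupperbound}), and by hypothesis $d(j,S) \leq \alpha L(\ell_j) \leq \alpha \tau R_j$. Partition $C' = C'_{\textrm{far}} \cup C'_{\textrm{near}}$, where $C'_{\textrm{far}} = \{j \in C' : d(i,j) > \gamma R_j\}$ and $C'_{\textrm{near}}$ is its complement.

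For $j \in C'_{\textrm{far}}$ we have $R_j < d(i,j)/\gamma$, hence $\sum_{j \in C'_{\textrm{far}}} d(j,S) \leq \alpha\tau \sum_{j \in C'_{\textrm{far}}} R_j < (\alpha\tau/\gamma)\sum_{j \in C'_{\textrm{far}}} d(i,j)$. Since $i \notin S_0$ and (as is carried as a precondition on the lemma) is not collocated with any facility of $S_0$, Extra Invariant \ref{invar_outextra}(\ref{invar_outsparsefacil}) yields $\sum_{j : i \in F_j} d(i,j) \leq \rho(1+\delta) U$, so the far contribution is $O(\rho/\delta) U$. For $j \in C'_{\textrm{near}}$ we have $d(i,j) \leq \gamma R_j$, and I would again bound $\sum_{j \in C'_{\textrm{near}}} d(j,S) \leq \alpha\tau \sum_{j \in C'_{\textrm{near}}} R_j$. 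Grouping $C'_{\textrm{near}}$ by $R_j$ into geometric levels $L_k = \{j : R_j \in [\tau^k R_0,\tau^{k+1} R_0)\}$ (for $R_0$ the smallest $R_j$ in $C'_{\textrm{near}}$), one applies Extra Invariant \ref{invar_outextra}(\ref{invar_outsparserad}) at $p = i$ with a threshold $t$ chosen at the top of each level, and sums a resulting geometric series in $\tau$ to obtain $\sum_{j \in C'_{\textrm{near}}} R_j = O(\rho/\delta) U/(\alpha\tau)$, giving a near contribution of $O(\rho/\delta) U$ as well. Adding the two contributions proves the lemma.

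The main obstacle is the geometric-series step. The outliers sparsity invariant ties the ball radius $\gamma t$ to the same threshold $t$ on $R_j$, in contrast to the strictly stronger pointwise knapsack-median version $\lvert B_C(j,\delta r)\rvert\, r \leq \rho U$ that applies to every client in the ball. A naive dyadic application either leaves uncounted the clients in $L_k$ whose $R_j$ is slightly below the threshold used (because the invariant only counts clients with $R_{j'} \geq t$) or pushes them outside the invariant's ball (because $d(i,j) \leq \gamma R_j$ can exceed $\gamma t$). The fix exploits the slack built into the constant $\gamma = \delta/(4+3\delta)$: by choosing the threshold $t$ at the top of each level, the $\tau$-gap between consecutive levels is absorbed into this buffer so that every client in the level does lie inside the ball of the invariant, and the geometric series in $\tau$ then converges to a constant multiple of $U$, giving the clean $O(\rho/\delta) U$ bound rather than a $\log$-factor loss.
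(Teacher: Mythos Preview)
Your near/far split based on $d(i,j)$ versus $\gamma R_j$ and the subsequent claim $\sum_{j\in C'_{\mathrm{near}}}R_j=O(\rho/\delta)\,U$ do not go through. The dyadic sum over levels $L_k$ is not a convergent geometric series: applying the invariant at threshold $t_k=\tau^k R_0$ gives $\lvert L_k\cap C'_{\mathrm{near}}\rvert\le c\,\rho U/t_k$, whence $\sum_{j\in L_k}R_j\le \tau\,c\,\rho U$, a bound that is the \emph{same} at every level. Summing over $\Theta(\log_\tau(R_{\max}/R_0))$ levels yields a logarithmic loss, not $O(\rho/\delta)U$. Concretely, place all clients of $C'$ at $i$ (so $d(i,j)=0$ and every client is ``near'' for any $\gamma$), and at each level $k=0,\dots,K$ put roughly $c\,\rho U(\tau-1)/\tau^{k+1}$ clients with $R_j=\tau^k$; the invariant $\lvert\{j:R_j\ge t\}\rvert\le c\,\rho U/t$ is satisfied at every $t$, yet $\sum_j R_j=\Theta(K\,\rho U)$. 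The ``slack'' you invoke in $\gamma=\delta/(4+3\delta)$ can absorb a single factor of $\tau$ between the ball radius and the threshold, but it cannot make the per-level contributions decay.

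The missing idea is to anchor the partition at the \emph{fixed} scale $d(i,i^*)$, where $i^*$ is the nearest facility of $S$ to $i$, rather than at the client-dependent scale $R_j$. With $C'_{\mathrm{close}}=\{j:d(i,j)<\gamma\,d(i,i^*)\}$, every close client satisfies $d(j,S)\le d(j,i^*)\le(1+\gamma)\,d(i,i^*)$, so it suffices to bound $\lvert C'_{\mathrm{close}}\rvert$ rather than $\sum R_j$. The hypothesis $d(j,S)\le\alpha L(\ell_j)\le\alpha\tau R_j$ together with $d(j,S)\ge d(i,S)-d(i,j)\ge(1-\gamma)d(i,i^*)$ gives a \emph{uniform} lower bound $R_j\ge t:=(1-\gamma)d(i,i^*)/(\alpha\tau)$ for all close clients; choosing $\gamma=\Theta(\delta)$ so that $\gamma\,d(i,i^*)\le\frac{\delta t}{4+3\delta}$, a \emph{single} application of Extra Invariant~\ref{invar_outextra}(\ref{invar_outsparserad}) at threshold $t$ bounds $\lvert C'_{\mathrm{close}}\rvert\le O(\rho)\,U/t$, and hence $\sum_{j\in C'_{\mathrm{close}}}d(j,S)\le(1+\gamma)\,d(i,i^*)\cdot O(\rho)U/t=O(\rho)U$. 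Your far-part argument (routing through $i$ and using the per-facility bound) is fine once the partition is made with respect to $d(i,i^*)$.
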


Now we focus on the main difference between the two algorithms: the post-processing. In particular, the coverage constraint of $k$-median with outliers introduces two difficulties in rounding the final $O(1)$ fractional facilities: (a)~we are no longer guaranteed that $C_{part} = \emptyset$, and (b)~we must satisfy the coverage constraint.

The difficulty with (a)~is that now rounding a facility up to one can also incur an unbounded multiplicative cost in the objective. To see this, consider a fractional facility $i$ that is almost closed, so $\bar{y}_i \sim 0$. Consider rounding this facility up to one. Then for a client $j \in C_{part}$ that fractionally connects to $i$ in the solution $\bar{y}$, if we fully connect $j$ to $i$, this costs $d(j,i) \gg d(j,i) \bar{y}_i$. The solution here is to use Extra Invariant \ref{invar_outextra}(\ref{invar_outsparsefacil}) to bound the additive cost of opening facilities.

The more troublesome issue is (b). Note that the same approach that we used to prove that there exists a good integral setting of the $O(1)$ fractional variables in \Cref{lem_kexists} does not work here because putting the coverage constraint in the objective of the LP could result in a solution covering the same client multiple times. Our solution to (b) is a more sophisticated post-processing algorithm that first re-routes clients in $C_{part}$. After re-routing, we carefully pick facilities to open that do not double-cover any remaining $C_{part}$-clients. We defer the details of our post-processing algorithm for \S \ref{sec_postoutlier}. For now, we present the guarantees of our pseudo-approximation combined with post-processing:

\begin{thm}\label{thm_outlierapprox}
    There exists a polynomial time algorithm that takes as input $\lpreroute$ for $k$-median with outliers instance $\I$ satisfying all Basic- and Extra Invariants and outputs a feasible set of facilities $S \supset S_0$ such that the cost of connecting $m$ clients to $S$ is at most $(2 + \alpha) Opt(\lpreroute) + O(\rho / \delta) U$, where $\alpha = \max(3 + 2\tau^{-c}, 1 + \frac{4 + 2\tau^{-c}}{\tau}, \frac{\tau^3 + 2\tau^2 + 1}{\tau^3 - 1})$ for any constant $c \in \mathbb{N}$.
\end{thm}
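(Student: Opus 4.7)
The plan is to mirror the structure of the proof of Theorem \ref{thm_knapsackapprox}: first run \mainalg~ on the input $\lpreroute$ to obtain an optimal extreme point $\bar{y}$ with at most $O(1)$ fractional facilities (by Theorem \ref{thm_pseudo}, since $r=2$ here), then invoke a post-processing routine (developed in \S \ref{sec_postoutlier}) that rounds these $O(1)$ fractional facilities into an integral feasible solution $S \supset S_0$ with $|S| \leq k$. Because only $O(1)$ fractional facilities remain, any brute-force enumeration over subsets of fractional facilities can be done in polynomial time. Throughout \mainalg, Proposition \ref{prop_outmainextra} guarantees the Extra Invariants continue to hold, which we will use in the cost accounting.

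For the re-routing cost, the target is to have the post-processing produce $S$ such that every $j \in C^*$ has an open facility within distance $\beta L(\ell_j)$ with $\beta = 3 + 2\tau^{-c}$. Plugging this into Theorem \ref{thm_mainreroute} then yields $d(j,S) \leq (2 + \alpha) L(\ell_j)$ for every $j \in C_{full} \cup C^*$, with $\alpha = \max(\beta,\, 1 + (1+\beta)/\tau,\, (\tau^3+2\tau^2+1)/(\tau^3-1))$, which is exactly the $\alpha$ stated in the theorem. The looser value of $\beta$ (compared to $\beta = 1$ in knapsack median) reflects the cardinality constraint $|S| \leq k$: we may not be able to open a facility inside every $F_j$ for $j \in C^*$, so some $C^*$-clients will themselves need to be re-routed $c$ radius levels deeper, and the $\tau^{-c}$ terms quantify how far that is.

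The main obstacle, relative to the knapsack setting, is the coverage constraint combined with a potentially nonempty $C_{part}$, since Proposition \ref{prop_allfull} fails here. Two difficulties arise: (i) a fractional facility $i$ may almost-cover a $C_{part}$-client $j$ with $i \in F_j$ and $\bar{y}_i$ tiny, so both rounding $i$ up and rounding $i$ down risk unbounded multiplicative blow-up; and (ii) the partition-matroid auxiliary LP used in Lemma \ref{lem_kexists} cannot be reused directly, because folding the coverage constraint into its objective allows it to double-count $C_{part}$-clients. Difficulty (i) is controlled using Extra Invariant \ref{invar_outextra}(\ref{invar_outsparsefacil}) together with Lemma \ref{lem_outclosefacil}: each of the $O(1)$ fractional facilities opened or closed contributes only $O(\rho/\delta) U$ to the additive cost. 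Difficulty (ii) drives the post-processing in \S \ref{sec_postoutlier}: first, identify those $C_{part}$-clients that can be re-routed to a deeper $C^*$-destination via a radius-level depth-$c$ enumeration (producing the $\tau^{-c}$ factors in $\alpha$), assigning each such client uniquely so no $C^*$-ball is credited twice; then, on the reduced instance, enumerate over integral settings of the $O(1)$ fractional facilities, and use the bipartite structure guaranteed by Proposition \ref{prop_bipartite} to select one that opens $S_0$, uses at most $k$ facilities, hits enough $C^*$-balls to justify the choice of $\beta$, and covers enough of the residual $C_{part}$-clients so that the total number of served clients reaches $m$.

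Assembling the pieces: the cost of connecting $m$ clients to $S$ decomposes into the $\lpreroute$-contributions of $C_{full} \cup C^*$ and of the selected $C_{part}$-clients, each inflated by at most $(2+\alpha)$ via Theorem \ref{thm_mainreroute}, plus the $O(\rho/\delta) U$ additive slack coming from opening/closing the $O(1)$ fractional facilities. The first term is at most $(2+\alpha) Opt(\lpreroute)$, giving the claimed bound. The hardest step is the feasibility argument hidden in (ii): we need the bipartite intersection structure of $C^*$ to coexist with a careful coverage accounting that simultaneously respects $|S| \leq k$, contains $S_0$, and credits exactly the right $C_{part}$-clients without overcounting, all while only $O(1)$ variables are left to play with.
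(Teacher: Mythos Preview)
Your high-level plan tracks the paper, but the step you yourself flag as ``the hardest step\dots hidden in (ii)'' is exactly where your sketch and the paper diverge, and your sketch does not close the gap.

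The paper does \emph{not} handle (ii) by a single brute-force enumeration over the $O(1)$ fractional facilities plus a bipartite-matroid argument \`a la Lemma~\ref{lem_kexists}. Instead it runs a \emph{recursive} procedure (\outlieralg, with subroutine \outlierpost): each round extracts a subset $\bar C\subset C^*_{<1}$ whose $F$-balls are forced to be pairwise \emph{disjoint} (strictly stronger than bipartite), while arranging that every residual $C_{part}$-client's $F$-ball meets at most one such $F_j$. Disjointness is what enables the key counting step (Proposition~\ref{prop_greedyoutlier}): greedily opening, in each $F_j$ for $j\in\bar C$, the facility covering the most residual $C_{part}$-clients yields an integral partial solution that provably covers at least as many clients as the fractional mass it replaces, so both the $k$-constraint and the coverage constraint survive into the residual LP. One then recurses on that residual (re-running \mainalg, then \outlierpost, until a base case).

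The value $\beta=3+2\tau^{-c}$ you correctly anticipate does not come from a ``depth-$c$ enumeration.'' The constant $c$ is a \emph{threshold} used while building $\bar C$: when some $F_{j'}$ with $j'\in C_{part}$ bridges two balls $F_{\bar j},F_j$ still in $\bar C$, one compares $\ell_{j'}$ to $\ell_{\bar j}-c$ and either drops $j$ from $\bar C$ (if $j'$ is small) or moves $j'$ into a set $C_{covered}$ to be re-routed (if $j'$ is large). Lemma~\ref{lem_fullextrastep} then gives the $(3+2\tau^{-c})L(\ell_j)$ bound for the $C^*$-clients expelled from $\bar C$ during this process. Your enumeration-plus-bipartite-matroid plan supplies no argument that \emph{some} integral setting simultaneously achieves $|S|\le k$ and coverage $\ge m$; as you note, the auxiliary LP from Lemma~\ref{lem_kexists} cannot encode coverage without double-counting, and you offer no substitute existence certificate. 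The disjointness-plus-greedy mechanism is precisely that certificate.
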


Combining the pre-processing of \Cref{thm_outpre} with \Cref{thm_outlierapprox} and tuning parameters gives our final approximation algorithm for $k$-median with outliers. The proof of \Cref{thm_mainoutliers} is analogous to \Cref{thm_mainknapsack}, so we defer it to \S \ref{sec_trueproofs}.

\section{Post-Processing for $k$-Median with Outliers}\label{sec_postoutlier}

In this section we develop the post-processing algorithm for $k$-median with outliers that is guaranteed by \Cref{thm_outlierapprox}. The structure of our algorithm is recursive. First, we give a procedure to round at least one fractional facility or serve at least one client. Then we recurse on the remaining instance until we obtain an integral solution.

\subsection{Computing Partial Solutions}

In this section, we show how to round at least one fractional facility or serve at least one client. We interpret this algorithm as computing a \emph{partial solution} to the given $k$-median with outliers instance.

The main idea of this algorithm is to re-route clients in $C_{part}$. In particular, we maintain a subset $\bar{C} \subset C^*$ such that for every client in $\bar{C}$, we guarantee to open an integral facility in their $F$-ball. We also maintain a subset $C_{covered} \subset C_{part}$ of $C_{part}$-clients that we re-route; that is, we guarantee to serve them even if no open facility is in their $F$-balls. Crucially, every client in $C_{part} \setminus C_{covered}$ is supported on at most one $F$-ball for clients in $\bar{C}$. Thus, we do not have to worry about double-covering those clients when we round the facilities in $F(\bar{C})$.

The partial solution we output consists of one open facility for each client in $\bar{C}$ (along with the facilities that happen to be integral already), and we serve the clients in $C_{full}$, $C^*$, $C_{covered}$, and the $C_{part}$-clients supported on our open facilities. See Algorithm \ref{alg_outlierpost} (\outlierpost) for the formal algorithm to compute partial solutions. Note that $c \in \mathbb{N}$ is a parameter of \outlierpost.

\begin{algorithm}
	\DontPrintSemicolon
	\LinesNumbered
	\SetAlgoLined
	
	\KwIn{$\lpreroute$ and $\bar{y}$ output by \mainalg~ on a $k$-median with outliers instance such that $C^*_{<1} \not\subset C_0$}
	\KwOut{Output a partial solution $S \subset F$ and modify $\lpreroute$}
	\BlankLine
	Initialize $C_{covered} = \emptyset$ and $\bar{C} = \{j \in C^*_{<1} \mid F_j \cap S_0 = \emptyset\}$\;
	\For{all clients $\bar{j} \in \bar{C}$ in increasing order of $\ell_{\bar{j}}$} {
		For all $j \in \bar{C}$ such that $j \neq \bar{j}$ and $F_j \cap F_{\bar{j}} \neq \emptyset$, remove $j$ from $\bar{C}$\;
		\While {there exists a client $j' \in C_{part} \setminus C_{covered}$ such that $F_{j'}$ intersects $F_{\bar{j}}$ and $F_j$ for some other $j \in \bar{C}$} {
			\If{$\ell_{j'} \leq \ell_{\bar{j}} - c$} {
				Remove $j$ from $\bar{C}$\;
			}	
			\Else{
				Add $j'$ to $C_{covered}$\;
			}	
		}
	}
	For all $i \in F$, we define $w_i = \lvert \{j \in C_{part} \setminus C_{covered} \mid i \in F_j\} \rvert$\;
	Construct the set $\bar{S} \subset F(\bar{C})$ by greedily picking the facility $i \in F_j$ with largest $w_i$ for each $j \in \bar{C}$\;
	Define the set $C_{part}(\bar{S}) = \{j \in C_{part} \mid F_j \cap \bar{S} \neq \emptyset\}$\;
	Define the partial set of facilities by $S = (\bar{S} \cup F_{= 1}) \setminus S_0$ and the partial set of clients by $C' = C_{part}(\bar{S}) \cup C_{covered} \cup C_{full} \cup (C^* \setminus C_0)$\;
	Update $\lpreroute$ by deleting $S$ and $F(\bar{C})$ from $F$, deleting all clients in $C'$ from $C$, decrementing $k$ by $\lvert S \rvert$, and decrementing $m$ by $\lvert C' \rvert$\;
	Output the partial solution $S$\;
	\caption{\outlierpost\label{alg_outlierpost}}
\end{algorithm}

We note that to define a solution for $k$-median with outliers, it suffices to specify the set of open facilities $S$, because we can choose the clients to serve as the $m$ closest clients to $S$. Thus when we output a partial solution, we only output the set of open facilities.

We summarize the performance of \outlierpost~ with the next theorem, which we prove in \S \ref{proof_subpost}. In the next section, we use \Cref{thm_subpost} to define our recursive post-processing algorithm.

\begin{thm}\label{thm_subpost}
  	Let $\lpreroute$ and $\bar{y}$ be  the input to \outlierpost. Then let $S$ be the partial solution output by \outlierpost~ and $\lpreroute^1$ be the modified LP. Then $\lpreroute^1$ satisfies all Basic- and Extra Invariants and we have:
  	\[Opt(\lpreroute^1) + \frac{1}{2 + \alpha} \sum\limits_{j \in C'} d(j, S \cup S_0) \leq Opt(\lpreroute) + O(\frac{\rho}{\delta}) U,\]
  	where $\alpha = \max(3 + 2\tau^{-c}, 1 + \frac{4 + 2\tau^{-c}}{\tau}, \frac{\tau^3 + 2\tau^2 + 1}{\tau^3 - 1})$.
\end{thm}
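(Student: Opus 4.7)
The proof splits into two parts: (1) verifying that $\lpreroute^1$ satisfies all Basic- and Extra-Invariants, and (2) establishing the cost inequality. Part (1) is straightforward: \outlierpost~ only performs deletions (facilities leave $F$, clients leave $C$, and $k,m$ are decremented), and every invariant in \Cref{invar_basic} and \Cref{invar_outextra} is a statement about the surviving clients and facilities, hence closed under restriction.

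For part (2), my plan is to exhibit a feasible solution $\tilde{y}$ for $\lpreroute^1$ by restricting $\bar{y}$ to the surviving facilities $F\setminus(S\cup F(\bar{C}))$, and then use it to relate $Opt(\lpreroute^1)$ to $Opt(\lpreroute)$. Feasibility requires three checks: (a) the tightened knapsack $y(F)\le k-|S|$ holds because the filtering step of \outlierpost~ leaves $\{F_{\bar{j}}:\bar{j}\in\bar{C}\}$ pairwise disjoint, so $\bar{y}(F(\bar{C}))=|\bar{C}|=|\bar{S}|$ and the integral facilities in $S\setminus\bar{S}$ each contribute mass $1$; (b) the tightened coverage is exactly matched by the $|C'|$-decrement of $m$ since every removed $C_{part}$-client satisfies $\bar{y}(F_j)\le 1$; and (c) each surviving $C_{part}$-client supports at most one facility of $\bar{S}$, so no coverage is double-counted---this is precisely the invariant maintained by the while-loop. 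Since $\bar{y}$ has only $O(1)$ fractional variables (by \Cref{thm_pseudo}) and $S_0\subseteq F_{=1}$, only $O(1)$ non-$S_0$ facilities are deleted; applying \Cref{lem_outclosefacil} to each yields
\[Opt(\lpreroute^1)\;\le\;Opt(\lpreroute)\;-\;\sum_{j\in C'}c_j(\bar{y})\;+\;O(\rho/\delta)\,U,\]
where $c_j(\bar{y})$ denotes $j$'s contribution to $\bar{y}$'s objective in $\lpreroute$. It then remains to show $\sum_{j\in C'}d(j,S\cup S_0)\le(2+\alpha)\sum_{j\in C'}c_j(\bar{y})+O(\rho/\delta)\,U$. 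I first establish that for every $j\in C^*$, $d(j,S\cup S_0)\le\beta L(\ell_j)$ with $\beta=3+2\tau^{-c}$: the cases $j\in C_0$, $j\in C^*_{=1}$, $j\in C^*_{<1}$ with $F_j\cap S_0\ne\emptyset$, and $j\in\bar{C}$ at termination are immediate; the nontrivial case is $j$ removed from $\bar{C}$ by the while-loop, where two applications of \Cref{prop_distintersect} through $j'$ and $\bar{j}$ yield $d(j,\bar{S})\le L(\ell_j)+2L(\ell_{j'})+2L(\ell_{\bar{j}})\le(3+2\tau^{-c})L(\ell_j)$, using $\ell_{\bar{j}}\le\ell_j$ (processing order) and the triggering condition $\ell_{j'}\le\ell_{\bar{j}}-c$. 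Plugging this $\beta$ into \Cref{thm_mainreroute} gives $d(j,S\cup S_0)\le(2+\alpha)L(\ell_j)$ for every $j\in C_{full}\cup C^*$, and for $j\in C_{part}(\bar{S})$ the bound $d(j,S)\le L(\ell_j)$ is trivial.

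The main obstacle is $C_{covered}$: these are $C_{part}$-clients whose LP contribution $c_{j'}(\bar{y})=\sum_{i\in F_{j'}}d'(i,j')\bar{y}_i$ can be arbitrarily small, yet whose connection cost $d(j',S)\le L(\ell_{j'})+2L(\ell_{\bar{j}})$ can be as large as $(1+2\tau^{c-1})L(\ell_{j'})$ in the regime $\ell_{\bar{j}}>\ell_{j'}$ that the else-branch permits. The plan is to absorb this excess into the $O(\rho/\delta)U$ additive term by appealing to radius sparsity (\Cref{invar_outextra}(\ref{invar_outsparserad})): for each $\bar{j}\in\bar{C}$, the number of $C_{covered}$-clients near $\bar{j}$ with $R_{j'}$ at any given scale is controlled, and summing the resulting geometric series over all $\bar{j}\in\bar{C}$ (a set of size $O(1)$) yields the $O(\rho/\delta)U$ charge. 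Combining the per-type connection-cost estimates with the LP-cost inequality above gives the desired bound and completes the proof.
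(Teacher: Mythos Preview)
Your overall skeleton matches the paper's, but there are two genuine gaps.

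\textbf{Coverage feasibility is not established.} Your check (b) (``every removed $C_{part}$-client has $\bar y(F_j)\le 1$'') only accounts for the \emph{removed} clients. The real difficulty is that the \emph{surviving} clients $j\in C_{part}^1$ lose coverage too: we delete all of $F(\bar C)$ (not just $\bar S$) from $F$, so $F_j^1=F_j\setminus F(\bar C)$ and hence $\bar y(F_j^1)<\bar y(F_j)$ whenever $F_j$ touches $F(\bar C)\setminus\bar S$. Your item~(c) (``no coverage double-counted'') is the wrong direction; what you need is that the number of clients newly credited, namely $|C_{part}(\bar S)|$, is at least the total fractional coverage destroyed. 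The paper gets this from the \emph{greedy} choice of $\bar S$: for each $\bar j\in\bar C$ the opened facility $i(\bar j)$ maximizes $w_i$ over $F_{\bar j}$, and since $\bar y(F_{\bar j})=1$ and the $F_{\bar j}$ are disjoint,
\[
|C_{part}(\bar S)|=\sum_{\bar j\in\bar C}w_{i(\bar j)}\ \ge\ \sum_{\bar j\in\bar C}\sum_{i\in F_{\bar j}}w_i\bar y_i,
\]
which is exactly the lost coverage (this is \Cref{prop_greedyoutlier} in the paper). Without this step the coverage constraint of $\lpreroute^1$ need not hold for $\tilde y$.

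\textbf{You never relate $L(\ell_j)$ to the LP contribution $c_j(\bar y)$.} You correctly derive $d(j,S\cup S_0)\le(2+\alpha)L(\ell_j)$ for $j\in C_{full}\cup C^*$ and $d(j,S)\le L(\ell_j)$ for $j\in C_{part}(\bar S)$, but then implicitly treat $L(\ell_j)$ as if it were $\le c_j(\bar y)$. This is false: for $j\in C_{full}$ one has $c_j(\bar y)=\sum_{i\in B_j}d'(i,j)\bar y_i+(1-\bar y(B_j))L(\ell_j)\le L(\ell_j)$, with strict inequality whenever $B_j$ contains fractional mass near $j$; and for $j\in C_{part}(\bar S)$ the quantity $c_j(\bar y)=\sum_{i\in F_j}d'(i,j)\bar y_i$ can be arbitrarily small. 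The paper closes this gap by a case split: if $F_j$ contains some fractional facility $i\notin S_0$, then (since there are only $O(1)$ such $i$) \Cref{lem_outclosefacil} charges all these clients to the additive $O(\rho/\delta)U$ term; otherwise $F_j\subseteq F_{=1}\cup S_0$, in which case either $F_j=\emptyset$ (so $c_j(\bar y)=L(\ell_j)$ exactly) or $F_j$ already contains an open facility (so $d(j,S\cup S_0)$ is at most $j$'s true LP contribution). The same mechanism, not ``$d(j,S)\le L(\ell_j)$ is trivial'', is what handles $C_{part}(\bar S)$: each such $j$ has some $i\in\bar S\subset F(\bar C)$ in $F_j$, and $\bar S\cap S_0=\emptyset$, so Extra Invariant~\ref{invar_outextra}(\ref{invar_outsparsefacil}) gives $\sum_{j\in C_{part}(\bar S)}d(j,\bar S)=O(\rho)U$. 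Your sketch for $C_{covered}$ via radius sparsity is heading in the right direction, but the paper's route is cleaner: observe that every $j\in C_{covered}$ has $F_j\cap F_{\bar j}\ne\emptyset$ for some $\bar j\in\bar C$ with $F_{\bar j}\cap S_0=\emptyset$, so $j$ is supported on a cheap fractional facility and \Cref{lem_outclosefacil} applies directly.
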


We want $\lpreroute^1$ to satisfy all Basic- and Extra Invariants so we can  continue recursing on $\lpreroute^1$. The second property of \Cref{thm_subpost} allows us to extend a solution computed on $\lpreroute^1$ with the partial solution $S$.

\subsection{Recursive Post-Processing Algorithm}

To complete our post-processing algorithm, we recursively apply \outlierpost~ until we have an integral solution.

The main idea is that we run \mainalg~ to obtain an optimal extreme point with $O(1)$ fractional variables. Then using this setting of the $y$-variables, we construct a partial solution consisting of some open facilities along with the clients that they serve. However, if there are still fractional facilities remaining, we recurse on $\lpreroute$ (after the modifications by \outlierpost.) Our final solution consists of the union of all recursively computed partial solutions. See Algorithm \ref{alg_outlier} (\outlieralg.)

\begin{algorithm}
	\DontPrintSemicolon
	\LinesNumbered
	\SetAlgoLined
	
	\KwIn{$\lpreroute$ for a $k$-median with outliers instance satisfying all Basic- and Extra Invariants}
	\KwOut{Solution $S \subset F$}
	\BlankLine
	Run \mainalg~ to obtain extreme point $\bar{y}$ to $\lpreroute$\;
	\If{$\bar{y}$ is integral}{
		Output the solution $F_{=1}$\;
	}
	\ElseIf{$C^*_{<1} \subset C_0$}{
		By \Cref{lem_extremedisjoint}, $\bar{y}$ has at most two fractional variables, say $a,b \in F$.\;
		Without loss of generality, we may assume $\lvert \{j \in C_{part} \mid a \in F_j \} \rvert \geq \lvert \{j \in C_{part} \mid b \in F_j \} \rvert$\;
		Output the solution $F_{=1} \cup \{a\}$\;	
	}
	
	\Else {
		Run \outlierpost~ to obtain partial solution $S'$ and update $\lpreroute$\;
		Run \outlierpost~ on updated $\lpreroute$ to obtain partial solution $S''$\;
		Output solution $S' \cup S''$\;
	}			
	\caption{\outlieralg\label{alg_outlier}}
\end{algorithm}

For proof of the next lemma, see \S \ref{appendix_postproofs}.

\begin{lem}\label{lem_extremedisjoint}
	If $C^*_{<1} \subset C_0$, then $\bar{y}$ has at most two fractional variables.
\end{lem}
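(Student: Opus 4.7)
The plan is twofold: first I will show that the hypothesis $C^*_{<1} \subseteq C_0$ actually forces $C^*_{<1} = \emptyset$ via a symmetry argument on collocated duplicates, and then I will bound the number of fractional variables by the standard rank-counting argument for extreme points of $\lpreroute$.

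For the first part, suppose toward contradiction that some $j \in C^*_{<1} \cap C_0$. By Extra Invariant~\ref{invar_outextra}(\ref{invar_outguessfacil}), $j = j(i^*)$ for some $i^* \in S_0$, and $F_j$ is exactly the collocation class of $i^*$, consisting entirely of fractional facilities summing to~$1$. If $|F_j| = 1$ the unique facility would have $\bar{y}$-value $1$, contradicting $j \in C^*_{<1}$, so I can pick distinct $i_1, i_2 \in F_j$ with $\bar{y}_{i_1}, \bar{y}_{i_2} \in (0,1)$. Because $i_1$ and $i_2$ sit at the same location in the metric, $i_1 \in F_{j'} \Leftrightarrow i_2 \in F_{j'}$ (and the same for $B_{j'}$) for every client $j' \in C$, and the cardinality knapsack of $k$-median with outliers assigns the same unit coefficient to both. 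Therefore the direction $d = e_{i_1} - e_{i_2}$ preserves every constraint of $\lpreroute$, so for sufficiently small $\epsilon > 0$ both $\bar{y} \pm \epsilon d$ are feasible; their average is $\bar{y}$, contradicting that $\bar{y}$ is an extreme point.

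Given $C^*_{<1} = \emptyset$, every $j \in C^*$ lies in $C^*_{=1}$ and contains some integral facility $i \in F_j$. Since $\bar{y}(F_j) = 1$, $\bar{y}_i = 1$, and $\bar{y} \ge 0$, every other facility in $F_j$ must be zero and was already deleted by \iteralg; thus $F_j = \{i\}$ and the constraint $\bar{y}(F_j) = 1$ coincides with the integrality constraint $\bar{y}_i = 1$, contributing no new rank. By Theorem~\ref{thm_pseudo}, no $C_{part}$-, $C_{full}$-, or non-negativity constraint is tight at $\bar{y}$, so the only sources of independent tight constraints are $\bar{y}_i = 1$ for $i \in F_{=1}$ (rank $|F_{=1}|$) together with the at most $r = 2$ tight side constraints. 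Extremity forces the total rank to equal $|F| = |F_{=1}| + |F_{<1}|$, which yields $|F_{<1}| \le 2$.

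The main obstacle is really the first step: I must carefully verify that collocated duplicates are interchangeable in every constraint of $\lpreroute$, including the distance-based $F_{j'}$ and $B_{j'}$ memberships and the side constraints. This is where the argument specifically uses the unit-cardinality nature of the knapsack in the outlier variant; once this symmetry is nailed down, the rank count in the second step is essentially immediate.
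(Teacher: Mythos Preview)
Your Step~1 contains a genuine gap. You claim that because $i_1$ and $i_2$ are collocated, $i_1 \in F_{j'} \Leftrightarrow i_2 \in F_{j'}$ for every client $j'$. This would be true if the $F_{j'}$ sets were metric balls, but they are not: they are produced by the facility-splitting procedure of Proposition~\ref{prop_LPfacil} (resp.\ Lemma~\ref{lem_extrasplit}), whose entire purpose is to create collocated copies that belong to \emph{different} $F_{j'}$ sets so as to realize $x^*_{ij'} \in \{0,y^*_i\}$. Concretely, an original facility with $y^* = 0.5$, $x^*_{\cdot j_1} = 0.3$, $x^*_{\cdot j_2} = 0.5$ is split into copies $i_1,i_2$ with $i_1 \in F_{j_1}$ but $i_2 \notin F_{j_1}$. \iteralg\ never merges such copies back together; it only deletes zero facilities and replaces $F_j$ by the distance-filtered subset $B_j$ for $C_{full}$-clients, which still respects the original asymmetric membership.

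This matters because the coverage constraint $\sum_{j' \in C_{part}} y(F_{j'}) \ge m - |C_{full}\cup C^*|$ assigns $i_1$ the coefficient $|\{j'\in C_{part}: i_1\in F_{j'}\}|$ and analogously for $i_2$; these can differ, so $e_{i_1}-e_{i_2}$ need not preserve the coverage value, and the extremality contradiction does not go through. In short, $C^*_{<1}$ may well be nonempty, and your Step~2 rank count then only yields $|F_{<1}| \le \dim(C^*_{<1}) + 2$, not $|F_{<1}|\le 2$.

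The paper's argument does not attempt to force $C^*_{<1}=\emptyset$. Instead it uses that $C^*_{<1}\subset C_0$ makes $\{F_j : j\in C^*_{<1}\}$ a disjoint family, each set containing at least two fractional facilities, so $|F_{<1}| \ge 2|C^*_{<1}|$ (and $\ge 2|C^*_{<1}|+2$ when the $k$-constraint is independent of the $C^*_{<1}$-constraints). Combining this lower bound with the rank upper bound $|F_{<1}| \le |C^*_{<1}| + 2$ yields $|F_{<1}|\le 2$ directly. The missing ingredient in your attempt is precisely this two-sided count on $|F_{<1}|$ in terms of $|C^*_{<1}|$.
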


\subsection{Analysis of \outlieralg}

In this section, we show that \outlieralg~ satisfies the guarantees of \Cref{thm_outlierapprox}. All missing proofs can be found in \S \ref{appendix_postproofs}. We let $\bar{y}$ be the output of \mainalg~ in the first line of \outlieralg~ and $\alpha = \max(3 + 2\tau^{-c}, 1 + \frac{4 + 2\tau^{-c}}{\tau}, \frac{\tau^3 + 2\tau^2 + 1}{\tau^3 - 1})$. First, we handle the base cases of \outlieralg.

The easiest base is when $\bar{y}$ is integral. Here, we do not need the Extra Invariants - all we need is that every client $j \in C_{full} \cup C^*$ has an open facility within distance $(2 + \alpha)L(\ell_j)$.

\begin{lem}\label{lem_base1}
	If $\bar{y}$ is integral, then the output solution $F_{=1}$ is feasible, contains $S_0$, and connecting $m$ clients costs at most $(2 + \alpha) Opt(\lpreroute)$.
\end{lem}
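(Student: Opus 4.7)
The plan is to verify the three claims of the lemma in turn: feasibility, containment of $S_0$, and the cost bound.

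For \textbf{feasibility}, since $\bar{y}$ is integral and feasible for $\lpreroute$, in particular it satisfies the knapsack constraint $y(F) \leq k$, so $|F_{=1}| = \sum_i \bar{y}_i \leq k$. Recall that for $k$-median with outliers, any set of at most $k$ facilities is feasible (we are free to choose which $m$ clients to serve), so it remains to check that at least $m$ clients can be served. Using Basic Invariant~\ref{invar_basic}(\ref{invar_partition}) and the integral coverage constraint $\sum_{j \in C_{part}} \bar{y}(F_j) \geq m - |C_{full} \cup C^*|$, the number of clients $j \in C_{part}$ with $\bar{y}(F_j) = 1$ is at least $m - |C_{full} \cup C^*|$. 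Adding the $|C_{full} \cup C^*|$ fully-committed clients gives at least $m$ clients.

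For \textbf{$S_0 \subset F_{=1}$}, apply Extra Invariant~\ref{invar_outextra}(\ref{invar_outguessfacil}): for each $i \in S_0$ there is a dummy client $j(i) \in C^*$ whose $F$-ball is the set of facilities collocated with $i$, and $\bar{y}(F_{j(i)}) = 1$. By integrality of $\bar{y}$, exactly one facility collocated with $i$ is opened; since collocated facilities are at distance zero and interchangeable, we may assume it is $i$ itself.

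For the \textbf{cost bound}, I would partition the served clients into those from $C^* \cup C_{full}$ and those from $C_{part}$ (with $\bar{y}(F_j)=1$), and bound each connection cost against that client's contribution to $Opt(\lpreroute)$. For $j \in C^*$, the constraint $\bar{y}(F_j) = 1$ combined with integrality gives an opened facility in $F_j$, hence $d(j, F_{=1}) \leq L(\ell_j)$ by Basic Invariant~\ref{invar_basic}(\ref{invar_F}). This lets us invoke Theorem~\ref{thm_mainreroute} with $S = F_{=1}$ and $\beta = 1$, yielding $d(j, F_{=1}) \leq (2+\alpha) L(\ell_j)$ for every $j \in C_{full} \cup C^*$. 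The key step is to check that $L(\ell_j)$ is upper-bounded by the $\lpreroute$-contribution of $j$: for $j \in C^*$ this follows because $\bar{y}(F_j) = 1$ forces the contribution to be at least $\min(d'(i^*, j), L(\ell_j))$ where $i^*$ is the open facility (case split on $i^* \in B_j$ vs $i^* \in F_j \setminus B_j$), and for $j \in C_{full}$ the contribution is exactly $d'(i^*, j)$ if some $i^* \in B_j$ is open (covering the connection cost directly, without needing the factor $2+\alpha$), and $L(\ell_j)$ otherwise (in which case the $(2+\alpha)$ factor from Theorem~\ref{thm_mainreroute} is precisely what we pay). For $j \in C_{part}$ with $\bar{y}(F_j) = 1$, integrality forces a single open $i^* \in F_j$ with contribution $d'(i^*, j) \geq d(j, F_{=1})$, which needs no blowup.

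Finally, selecting the $m$ closest clients to $F_{=1}$, the total connection cost is at most the sum of the bounds above, which is at most $(2+\alpha) \cdot Opt(\lpreroute)$. The main obstacle is the bookkeeping in the $C_{full}$ case to ensure the per-client blowup is uniformly $(2+\alpha)$; this is handled cleanly by the two sub-cases on whether $B_j$ contains an open facility, since if it does the contribution directly pays for the connection, and if it does not, the contribution equals $L(\ell_j)$ and Theorem~\ref{thm_mainreroute} applies.
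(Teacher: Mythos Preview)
Your proof is correct and follows essentially the same approach as the paper's. The one difference worth noting is that the paper exploits the guarantee from \mainalg\ that no $C_{part}$-, $C_{full}$-, or non-negativity constraint is tight: combined with integrality this forces $F_j = \emptyset$ for all $j \in C_{part}$ and $B_j = \emptyset$ for all $j \in C_{full}$, so the paper never needs your case splits (there are no $C_{part}$-clients with $\bar{y}(F_j)=1$, and every $j \in C_{full}$ has contribution exactly $L(\ell_j)$). Your extra cases are vacuous but harmlessly handled; the only minor wording issue is your sentence ``$L(\ell_j)$ is upper-bounded by the $\lpreroute$-contribution of $j$,'' which is not literally what you need or prove---what you actually (correctly) establish in the subsequent case analysis is that $d(j,F_{=1}) \le (2+\alpha)$ times the contribution.
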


Now, in the other base case, $\bar{y}$ is not integral, but we know that $C^*_{<1} \subset C_0$. By \Cref{lem_extremedisjoint}, we may assume without loss of generality $\bar{y}$ has exactly two fractional facilities, say $a, b \in F_{<1}$. Further, we may assume that the $k$-constraint is tight, because opening more facilities can only improve the objective value. It follows that $\bar{y}_a + \bar{y}_b = 1$. For the sake of analysis, we define the sets $C_{part}(a) = \{j \in C_{part} \mid a \in F_j\}$ and $C_{part}(b) = \{j \in C_{part} \mid b \in F_j\}$ where we may assume $\lvert C_{part}(a) \rvert \geq \lvert C_{part}(b) \rvert$.

It remains to bound the cost of the solution $F_{=1} \cup \{a\}$. One should imagine that we obtain this solution from $\bar{y}$ by closing down the facility $b$ and opening up $a$. First we handle the degenerate case where $a \in S_0$. In this case, $a$ and $b$ must both be co-located copies of a facility in $S_0$, so opening one or the other does not change the cost. Thus, we may assume without loss of generality that $a \notin S_0$. Here, we need to Extra Invariants to bound the cost of opening $a$ and closing $b$

%\begin{lem}\label{lem_base2costtrivial}
%	Suppose $a \in S_0$. Then the output solution $F_{=1} \cup \{a\}$ is feasible, contains $S_0$, and connecting $m$ clients costs at most $(2 + \alpha) Opt(\lpreroute)$.
%\end{lem}

%Otherwise, $a \notin S_0$. Here, we bound the cost of opening $a$ and closing $b$ with the Extra Invariants.

\begin{lem}\label{lem_base2cost}
	Suppose $a \notin S_0$. Then the output solution $F_{=1} \cup \{a\}$ is feasible, contains $S_0$, and connecting $m$ clients costs at most $(2 + \alpha) Opt(\lpreroute) + O(\frac{\rho}{\delta}) U.$
\end{lem}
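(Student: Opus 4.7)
The plan is to verify feasibility and $S_0$-containment, then explicitly choose the $m$ clients to serve and bound the total connection cost by partitioning the served clients according to how $F_j$ interacts with the two fractional facilities $a$ and $b$.

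For feasibility, assume WLOG the $k$-constraint is tight, so $\bar{y}(F) = k$; together with $\bar{y}_a + \bar{y}_b = 1$ and every other $\bar{y}_i$ integral, this gives $|F_{=1} \cup \{a\}| = k$. For $S_0$-containment, the key observation is that $a \notin S_0$ forces $C^*_{<1} = \emptyset$: if some dummy $j(i) \in C^*_{<1}$ existed, then $F_{j(i)} \subseteq \{a,b\}$, and $\bar{y}(F_{j(i)}) = 1$ with $\bar{y}_a, \bar{y}_b \in (0,1)$ would force $F_{j(i)} = \{a,b\}$, making $a$ a collocated copy of $i \in S_0$, contradicting $a \notin S_0$. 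Hence every dummy in $C_0$ is integral, so each $i \in S_0$ has an integral collocated copy in $F_{=1}$, giving $S_0 \subseteq F_{=1}$.

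Next I work out the structure of $C_{part}$. Since \Cref{thm_pseudo} guarantees no $C_{part}$-constraint is tight, $\bar{y}(F_j) < 1$ for every $j \in C_{part}$; an integral facility in $F_j$ would force $\bar{y}(F_j) \geq 1$, and $F_j = \{a, b\}$ would make $\bar{y}(F_j) = 1$, so $F_j \in \{\emptyset, \{a\}, \{b\}\}$. Setting $t := m - |C_{full} \cup C^*|$, the coverage constraint reads $|C_{part}(a)|\bar{y}_a + |C_{part}(b)|\bar{y}_b \geq t$; combined with $\bar{y}_a + \bar{y}_b = 1$ and the WLOG $|C_{part}(a)| \geq |C_{part}(b)|$, this yields $|C_{part}(a)| \geq t$. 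I then serve all of $C_{full} \cup C^*$ together with the $t$ members of $C_{part}(a)$ closest to $a$, producing $m$ served clients.

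For the cost bound, \Cref{thm_mainreroute} applied with $\beta = 1$ (valid because $C^*_{<1} = \emptyset$ gives $d(j, S) \leq L(\ell_j)$ for every $j \in C^*$) yields $d(j, S) \leq (2 + \alpha_0)L(\ell_j)$ for every $j \in C_{full} \cup C^*$, where $\alpha_0 = (\tau^3 + 2\tau^2 + 1)/(\tau^3-1) \leq \alpha$. I partition the served clients by support and bound each part separately: for $F_j \cap \{a, b\} = \emptyset$, the $F_j$-ball is unchanged and lies in $F_{=1}$, so the LP-to-integer argument from the proof of \Cref{thm_knapsackapprox} gives $d(j, S) \leq (2+\alpha) LP_j$, contributing at most $(2+\alpha)\,Opt(\lpreroute)$ in total; for $a \in F_j$ and $b \notin F_j$ (which includes all served $C_{part}$ clients) we use $d(j, S) \leq d(j, a)$ and Extra Invariant~\ref{invar_outextra}(\ref{invar_outsparsefacil})---applicable since $a$ is not collocated with $S_0$---to bound $\sum_{j : a \in F_j} d(j, a) \leq \rho(1+\delta)\,U = O(\rho)\,U$; for $b \in F_j$, every served client lies in $C_{full} \cup C^*$ and has a back-up facility in $S$ within $(2 + \alpha_0) L(\ell_j)$, so \Cref{lem_outclosefacil} applied to $i = b$ bounds their total cost by $O(\rho/\delta)\,U$. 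Summing the three pieces gives the claimed $(2 + \alpha)\,Opt(\lpreroute) + O(\rho/\delta)\,U$. The main obstacle is the $a \in F_j$ group: a single $C_{full}$ client with $a \in B_j$ and no integral facility in $B_j$ can have $LP_j \approx (1 - \bar{y}_a)L(\ell_j)$ arbitrarily small compared to $d(j,S)$, so a per-client LP comparison fails; Extra Invariant~\ref{invar_outextra}(\ref{invar_outsparsefacil})'s aggregate bound is precisely what absorbs this into the $O(\rho)\,U$ additive slack.
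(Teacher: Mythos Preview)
Your proof is correct and follows essentially the same approach as the paper's: establish feasibility via $\bar{y}_a+\bar{y}_b=1$, derive $S_0\subset F_{=1}$ from $a\notin S_0$, apply \Cref{thm_mainreroute} with $\beta=1$, use the coverage constraint to show $|C_{part}(a)|\ge m-|C_{full}\cup C^*|$, and then bound the cost via the Extra Invariant on $a$, \Cref{lem_outclosefacil} on $b$, and the $(2+\alpha)$-times-LP bound for the remaining clients. Two minor differences: you argue $C^*_{<1}=\emptyset$ directly (which cleanly gives both $S_0\subset F_{=1}$ and the $\beta=1$ hypothesis), whereas the paper keeps a redundant case split ``either $a,b\in F_j$ for some $j\in C^*$ or $a,b\notin F(C^*)$''; and in the cost partition you route all $a$-supported clients (including those in $C_{full}\cup C^*$) through Extra Invariant~\ref{invar_outextra}(\ref{invar_outsparsefacil}), while the paper bounds the $C_{full}\cup C^*$ ones not supported on $b$ against their LP contribution---both choices yield the same $O(\rho/\delta)U$ slack.
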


This completes the analysis of the base cases. To handle the recursive step, we apply \Cref{thm_subpost}.

\begin{proof}[Proof of Theorem \ref{thm_outlierapprox}]
    First, we show that \outlieralg~ terminates in polynomial time. It suffices to show that the number of recursive calls  to \outlierpost~ is polynomial. To see this, note that for each recursive call, it must be the case that $C^*_{<1} \not\subset C_0$. In particular, there exists some non-dummy client in $C^* \setminus C_0$. Thus, we are guaranteed to remove at least once client from $C$ in each recursive call. 

	Now it remains to show that the output solution is feasible, contains $S_0$, and connecting $m$ clients costs at most $(2 + \alpha) Opt(\lpreroute) + O(\frac{\rho}{\delta}) U$. Let $\bar{y}$ be the extreme point computed by \mainalg~ in the first line of \outlieralg. If $\bar{y}$ is integral or $C^*_{<1} \subset C_0$, then we are done by the above lemmas.
	
	Then it remains to consider the case where $C^*_{<1} \not\subset C_0$. Let $\lpreroute$ denote the input to \outlieralg~ and $\lpreroute^1$ the updated $\lpreroute$ at the end of \outlierpost~ as in the statement of Theorem \ref{thm_subpost}. We note that Theorem \ref{thm_subpost} implies that $\lpreroute^1$ satisfies all Basic and Extra Invariants, so $\lpreroute^1$ is a valid input to \outlieralg. Then we may assume inductively that the recursive call to \outlieralg~ on $\lpreroute^1$ outputs a feasible solution $S''$ to $\lpreroute^1$ such that $S_0 \subset S''$ and the cost of connecting $m^1$ clients from $C^1$ to $S''$ is at most $(2 + \alpha) Opt(\lpreroute^1) + O(\frac{\rho}{\delta}) U$.
	
	Further, let $S'$ be the partial solution output by \outlierpost~ on $\lpreroute$. Now we combine the solutions $S'$ and $S''$ to obtain the solution output by \outlieralg. First, we check that $S' \cup S''$ is is feasible. This follows, because $\lvert S'' \rvert \leq k^1 \leq k - \lvert S' \rvert$ by definition of \outlierpost. Also, $S_0 \subset S''$ by the inductive hypothesis.
	
	It remains to bound the cost of connecting $m$ clients to $S' \cup S''$. Consider serving the $m^1$ closest clients in $C^1$ with $S''$ and $C'$ with $S' \cup S_0$. Because $m_1 = m - \lvert C' \rvert$, this is enough clients. Connecting the $m^1$ closest clients in $C^1$ to $S''$ costs at most $(2 + \alpha) Opt(\lpreroute^1) + O(\frac{\rho}{\delta}) U$ by the inductive hypothesis. Now we use the guarantee of \Cref{thm_subpost}, which we recall is:
    \[Opt(\lpreroute^1) + \frac{1}{2 + \alpha} \sum\limits_{j \in C'} d(j, S' \cup S_0) \leq Opt(\lpreroute) + O(\frac{\rho}{\delta}) U.\]
    Thus, the total connection cost is at most:
    \[(2 + \alpha) Opt(\lpreroute^1) + O(\frac{\rho}{\delta})U + \sum\limits_{j \in C'} d(j, S' \cup S_0) \leq (2 + \alpha) Opt(\lpreroute) + O(\frac{\rho}{\delta})U.\]
    Note that the additive $O(\frac{\rho}{\delta})U$ terms which we accrue in each recursive call are still $O(\frac{\rho}{\delta})U$ overall. This is because we keep recursing on a subset of the remaining \emph{fractional} facilities -- which is always $O(1)$ -- and we open/close each fractional facility at most once over all recursive calls. Thus, we can bound the additive cost of each opening/closing by $O(\frac{\rho}{\delta})U$.
\end{proof}

\subsection{Proof of \Cref{thm_subpost}}\label{proof_subpost}

For all missing proofs in this section, see \S \ref{appendix_postproofs}. We let $\lpreroute$ and $\bar{y}$ denote the input to \outlierpost~ and $\lpreroute^1$ the updated LP that is output at the end \outlierpost. We begin with three properties of \outlierpost~ that will be useful throughout our analysis.

The first is immediate by definition of \outlierpost.

\begin{prop}\label{prop_partnice}
	Upon termination of \outlierpost, the set family $\{F_j \mid j \in \bar{C}\}$ is disjoint, and every client $j \in C_{part} \setminus C_{covered}$, $F_j$ intersects at most one $F$-ball for clients in $\bar{C}$.
\end{prop}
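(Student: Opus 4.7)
The plan is to unpack the two claims and argue each by tracking how the sets $\bar{C}$ and $C_{covered}$ evolve during \outlierpost. The key structural observation I would rely on is monotonicity: throughout the algorithm $\bar{C}$ only shrinks (clients are never added after initialization), and $C_{covered}$ only grows. Also, the $F$-balls themselves are not modified inside \outlierpost. These monotonicity facts make it easy to transfer properties established at some intermediate step to the final state.

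For the first claim, that $\{F_j \mid j \in \bar{C}\}$ is disjoint at termination, I would pick any two distinct $\bar{j}_1, \bar{j}_2 \in \bar{C}$ present at the end, and assume without loss of generality $\ell_{\bar{j}_1} \leq \ell_{\bar{j}_2}$, so that $\bar{j}_1$ is processed first in the outer for-loop. Since $\bar{C}$ only shrinks, both $\bar{j}_1$ and $\bar{j}_2$ were in $\bar{C}$ at that moment. The explicit line of the algorithm that removes from $\bar{C}$ every $j \neq \bar{j}_1$ with $F_j \cap F_{\bar{j}_1} \neq \emptyset$ would therefore have removed $\bar{j}_2$ if $F_{\bar{j}_1} \cap F_{\bar{j}_2} \neq \emptyset$. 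As $\bar{j}_2$ still belongs to $\bar{C}$ at the end, it was not removed, so $F_{\bar{j}_1} \cap F_{\bar{j}_2} = \emptyset$.

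For the second claim, I would proceed by contradiction. Suppose some $j' \in C_{part} \setminus C_{covered}$ at termination has $F_{j'}$ intersecting both $F_{\bar{j}_1}$ and $F_{\bar{j}_2}$ for distinct $\bar{j}_1, \bar{j}_2 \in \bar{C}$, and assume WLOG that $\bar{j}_1$ is processed first. Consider the moment just after the inner while-loop for $\bar{j}_1$ terminates. By monotonicity of $\bar{C}$ and $C_{covered}$, at that moment $\bar{j}_2$ is still in $\bar{C}$ and $j'$ is still in $C_{part} \setminus C_{covered}$. But then the pair $(j', \bar{j}_2)$ satisfies precisely the condition that the while-loop was designed to eliminate: $j' \in C_{part} \setminus C_{covered}$ with $F_{j'}$ intersecting both $F_{\bar{j}_1}$ and $F_{\bar{j}_2}$, with $\bar{j}_2 \neq \bar{j}_1$ in $\bar{C}$. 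This contradicts termination of the while-loop.

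There is no real obstacle here; the only subtle point is making the monotonicity argument precise, i.e., verifying that once the while-loop on $\bar{j}_1$ ends, subsequent iterations cannot re-introduce a bad $j'$, because (i) no client is ever added back to $\bar{C}$, and (ii) no client is ever removed from $C_{covered}$. With those two observations spelled out, both claims follow immediately from the stopping conditions of the respective loops.
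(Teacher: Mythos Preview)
Your argument is correct and is precisely the unpacking of what the paper means when it writes that the proposition ``is immediate by definition of \outlierpost'': you use only the monotonicity of $\bar{C}$ and $C_{covered}$ together with the termination conditions of the inner disjointness step and the \textsc{While} loop. The one minor imprecision---that $\ell_{\bar{j}_1} \le \ell_{\bar{j}_2}$ need not force $\bar{j}_1$ to be processed first when the levels tie---does not matter, since by symmetry you only need that \emph{some} one of the two is processed first while the other is still in $\bar{C}$.
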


\begin{prop}\label{prop_subinvar}
	\outlierpost~ initializes and maintains the invariants that $C_{covered} \subset C_{part}$ and $\bar{C} \subset \{j \in C^*_{<1} \mid F_j \cap S_0 = \emptyset\}$
\end{prop}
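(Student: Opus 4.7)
The plan is a straightforward invariant-maintenance argument: verify the claim at initialization, then check that each line of \outlierpost~ that modifies $C_{covered}$ or $\bar{C}$ preserves the two containments.

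For initialization, line 1 sets $C_{covered} = \emptyset$, which is trivially a subset of $C_{part}$, and sets $\bar{C} = \{j \in C^*_{<1} \mid F_j \cap S_0 = \emptyset\}$, which is trivially a subset of itself. So both invariants hold immediately after line 1.

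For maintenance, I would inspect the three places where $C_{covered}$ or $\bar{C}$ are modified inside the \textbf{for} loop. First, line 3 (``For all $j \in \bar{C}$ with $j \neq \bar{j}$ and $F_j \cap F_{\bar{j}} \neq \emptyset$, remove $j$ from $\bar{C}$'') only removes elements from $\bar{C}$, so the containment $\bar{C} \subseteq \{j \in C^*_{<1} \mid F_j \cap S_0 = \emptyset\}$ is preserved. Second, inside the \textbf{while} loop, the ``if'' branch (``Remove $j$ from $\bar{C}$'') again only removes elements from $\bar{C}$, again preserving the containment. Third, the ``else'' branch (``Add $j'$ to $C_{covered}$'') adds a client $j'$ which, by the guard of the \textbf{while} loop, belongs to $C_{part} \setminus C_{covered} \subseteq C_{part}$; hence $C_{covered}$ grows only by clients already in $C_{part}$, and $C_{covered} \subseteq C_{part}$ is preserved.

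Since $C_{part}$ itself is not modified by \outlierpost~ (the updates to $C$ happen only after the loop terminates, when we construct $\lpreroute^1$), and $C^*_{<1}$ and $S_0$ are likewise not modified during the loop, both containments remain valid throughout the execution. There is no subtle obstacle here: the statement is essentially a sanity check on the pseudocode, and the proof amounts to a line-by-line reading of Algorithm~\ref{alg_outlierpost}.
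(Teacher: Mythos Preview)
Your proof is correct and takes essentially the same approach as the paper. The paper's own proof is a two-sentence version of exactly what you wrote: $C_{covered}$ starts empty and only receives elements of $C_{part}$, while $\bar{C}$ starts as the target set and only shrinks; your line-by-line reading of the pseudocode simply spells this out in more detail.
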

\begin{proof}
	We initialize $C_{covered} = \emptyset$ and only add clients from $C_{part}$ to $C_{covered}$. Similarly, we initialize $\bar{C} = \{j \in C^*_{<1} \mid F_j \cap S_0 = \emptyset\}$ and only remove clients from $\bar{C}$.
\end{proof}

\begin{lem}\label{lem_greedystay}
	Every $\bar{j} \in \bar{C}$ that is reached by the \textsc{For} loop remains in $\bar{C}$ until termination.
\end{lem}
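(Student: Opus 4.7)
I will prove the contrapositive: assume some $\bar{j}$ is removed from $\bar{C}$ after being reached by the \textsc{For} loop, and derive a contradiction.

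First observe that during $\bar{j}$'s own turn the algorithm cannot remove $\bar{j}$ itself, since both the inner \textsc{For} loop on line 3 and the \textsc{While} loop explicitly condition on the removed client being an ``other $j$'' distinct from the currently processed one. Hence any removal of $\bar{j}$ must occur while some later client $\bar{j}''$ is being processed, and only two mechanisms can trigger it: (A) $\bar{j}$ is removed by the inner \textsc{For} loop of $\bar j''$'s turn because $F_{\bar{j}} \cap F_{\bar{j}''} \neq \emptyset$, or (B) $\bar{j}$ is removed inside $\bar{j}''$'s \textsc{While} loop via some witness $j' \in C_{part} \setminus C_{covered}$ with $F_{j'}$ intersecting both $F_{\bar{j}''}$ and $F_{\bar{j}}$ and $\ell_{j'}\le \ell_{\bar{j}''}-c$.

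Case (A) is ruled out directly: because $\bar{j}''$ is \emph{reached} later, it remained in $\bar{C}$ throughout $\bar{j}$'s turn; but then $F_{\bar{j}} \cap F_{\bar{j}''} \neq \emptyset$ (the $F$-balls are frozen during \outlierpost) would have caused $\bar{j}$'s own inner \textsc{For} loop to remove $\bar{j}''$, contradicting reachability. For case (B) the plan is to transport the witness $j'$ backward in time to the moment $\bar{j}$'s \textsc{While} loop exited. Three monotonicity facts make this go through: inside \outlierpost, $C_{covered}$ only grows, while $C_{part}$, the $F$-balls, and the radius levels $\ell_\cdot$ are all static; and $\bar{j}'' \in \bar{C}\setminus\{\bar j\}$ at the end of $\bar{j}$'s turn (otherwise $\bar{j}''$ would not be reached later). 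Combining these gives that already at the exit of $\bar{j}$'s \textsc{While} loop, the triple $(j',\bar{j},\bar{j}'')$ satisfies $j'\in C_{part}\setminus C_{covered}$, $\bar{j}''\in \bar{C}\setminus\{\bar j\}$, $F_{j'}\cap F_{\bar{j}}\neq\emptyset$, and $F_{j'}\cap F_{\bar{j}''}\neq\emptyset$. This is exactly the condition that the \textsc{While} loop requires in order to keep iterating, contradicting the fact that the loop had exited.

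The main obstacle I anticipate is not conceptual but bookkeeping: one must carefully track what varies inside \outlierpost (only $\bar C$ and $C_{covered}$) versus what is frozen (the $F$-balls, $\ell_\cdot$, and $C_{part}$), and then rule out the subtle possibility that the witness $j'$ was added to $C_{covered}$ between $\bar j$'s turn and $\bar j''$'s turn --- but monotonicity of $C_{covered}$ immediately rules this out because $j'\notin C_{covered}$ at the later time forces $j'\notin C_{covered}$ at the earlier time. Once this is pinned down, each removal mechanism can be ``replayed'' at $\bar j$'s own turn where the algorithm's exit condition rules it out.
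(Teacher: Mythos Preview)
Your proof is correct and follows essentially the same approach as the paper: both argue by contradiction, split into the two removal mechanisms (direct $F$-ball intersection, or removal via a $C_{part}$-witness $j'$), and in each case replay the triggering condition back at $\bar{j}$'s own iteration to contradict either the inner \textsc{For} loop or the \textsc{While} loop's exit. Your version is slightly more explicit about the monotonicity bookkeeping (that $C_{covered}$ only grows and the $F$-balls are frozen inside \outlierpost), which is helpful but does not change the argument.
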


Now we are ready to prove both properties of \Cref{thm_subpost}.  It is not difficult to see that $\lpreroute^1$ satisfies all Basic- and Extra Invariants by construction.
 
\begin{lem}\label{lem_afterinvar}
	$\lpreroute^1$ satisfies all Basic- and Extra Invariants.
\end{lem}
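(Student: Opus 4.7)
The proof goes by direct inspection: $\lpreroute^1$ is obtained from $\lpreroute$ only by deleting a set of facilities $S \cup F(\bar{C})$, deleting a set of clients $C'$, and decrementing $k$ and $m$ by the corresponding amounts. No client is moved between $C_{part}$, $C_{full}$, and $C^*$; no radius level $\ell_j$ changes; and no surviving $F_j$ changes except by losing the deleted facilities. Each invariant is therefore either monotone under such deletions or holds vacuously after the deletion, and I would check them one by one.

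For the Basic Invariants, the plan is: (1) the partition of $C$ is preserved since we only remove whole clients; (2)–(4) are per-client predicates on distances and radius levels that are unaffected by removing other clients and facilities (and the relation $B_j = \{i \in F_j : d'(j,i) \le L(\ell_j-1)\}$ is still the definition of $B_j$ once $F_j$ is restricted to the surviving facilities); and (5), the Distinct Neighbors property, is the only one needing a short argument: after the deletion, $C^*$ collapses to $C_0$, since \outlierpost~ lists $C^*\setminus C_0$ as part of $C'$. Any two distinct dummy clients $j(i_1), j(i_2) \in C_0$ correspond to distinct $S_0$-facilities with disjoint collocated-copy sets, so no two surviving $C^*$-clients have intersecting $F$-balls and (5) holds vacuously.

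For the Extra Invariants, the plan is: (1) I must verify that every $i \in S_0$ survives in $F^1$ and its dummy client $j(i) \in C_0$ still lies in $C^{*,1}$ with the same $F_{j(i)}$. The facility side holds because $S = (\bar{S}\cup F_{=1})\setminus S_0$ by construction excludes $S_0$, and because $F(\bar{C}) \cap S_0 = \emptyset$: the initial choice $\bar{C} \subseteq \{j \in C^*_{<1} : F_j \cap S_0 = \emptyset\}$ is preserved since $\bar{C}$ only ever shrinks over the run of \outlierpost. The client side holds because $C_0$ is disjoint from each of $C_{part}, C_{full}$, and $C^*\setminus C_0$, so $C_0 \cap C' = \emptyset$. (2) and (4) are monotone under client deletion: $\sum_{j:\, i \in F_j} d(i,j)$ can only decrease and the count $|B_C(p, \delta t/(4+3\delta)) \cap \{j : R_j \ge t\}|$ can only decrease when clients are removed; facility deletion does not affect these quantities because they are indexed by clients. (3) is a per-client statement about $L(\ell_j)$ and $R_j$, and both quantities are untouched for surviving clients.

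The main obstacle is the bookkeeping in Extra Invariant (1) together with Basic Invariant (5): I need to confirm that the precise choices made by \outlierpost—namely removing $S_0$ from $S$, initializing $\bar{C}$ to avoid $S_0$, and keeping $C_0$ out of $C'$—are exactly what is needed to avoid destroying the guessed set $S_0$ and its dummy clients, and to leave the $C^*$ intersection graph trivial. Once these observations are made, the remaining invariants are either unchanged or monotone under deletion, completing the proof.
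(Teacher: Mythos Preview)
Your proposal is correct and follows essentially the same approach as the paper's proof: both observe that $\lpreroute^1$ is obtained from $\lpreroute$ solely by deleting facilities and clients without altering any radius level, and then verify each invariant is preserved under such deletions, with the key nontrivial point being that ${C^*}^1 = C_0$ (so Extra Invariant~(\ref{invar_outguessfacil}) and the Distinct Neighbors Property are easy to re-verify). If anything you are more thorough than the paper, which dispatches the Basic Invariants with ``it is easy to see'' and handles the survival of $S_0$ only implicitly; one small simplification is that for Distinct Neighbors you need not argue disjointness of the $C_0$-balls directly, since the property is monotone under passing to a subset of $C^*$ and deleting facilities.
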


Now it remains to show 	$Opt(\lpreroute^1) + \frac{1}{2 + \alpha} \sum\limits_{j \in C'} d(j, S \cup S_0) \leq Opt(\lpreroute) + O(\frac{\rho}{\delta}) U$. To do this, we partition $C$ into $C^1$ and $C' = C_{part}(\bar{S}) \cup C_{covered} \cup C_{full} \cup (C^* \setminus C_0)$. For each client in $C^1$, we show that its contribution to the objective of $\lpreroute^1$ is at most its contribution to $\lpreroute$. Then for each client $j \in C'$, either $d(j,S \cup S_0)$ is at most $2 + \alpha$ times $j$'s contribution to $Opt(\lpreroute)$ or we can charge $j$'s connection cost to an additive $O(\frac{\rho}{\delta}) U$ term.

First, we focus on $C^1$. For these clients, it suffices to show that $\bar{y}$ (restricted to $F^1$) is feasible for $\lpreroute^1$. This is because for all $j \in C^1$, either $j \in C_{part}^1 \subset C_{part}$ or $j \in C_0$. The clients in $C_0$ contribute zero to the cost of $\lpreroute^1$ and $\lpreroute$. This is because both $\lpreroute$ and $\lpreroute^1$ satisfy Extra Invariant \ref{invar_outextra}(\ref{invar_outguessfacil}), so every dummy client $j(i) \in C_0$ is co-located with one unit of open facility corresponding to $i \in S_0$.

Thus it remains to consider the clients $j \in C_{part}^1$. We recall that $C_{part}^1 \subset C_{part}$ and $F_j^1 \subset F_j$ for all $j \in C^1$, so each $j \in C_{part}^1$ costs less in $\lpreroute^1$ than in $\lpreroute$.

To complete the cost analysis of $C^1$, we go back to prove feasibility. The main difficulty is showing that the coverage constraint is still satisfied. Recall that we construct $\bar{S}$ by greedily opening the facility in each $F$-ball for clients in $\bar{C}$ that covers the most $C_{part} \setminus C_{covered}$-clients. \Cref{prop_partnice} ensures that this greedy choice is well-defined (because $\{F_j \mid j \in \bar{C}\}$ is disjoint), and that we do not double-cover any $C_{part} \setminus C_{covered}$-clients.

Then by definition of greedy, we show that our partial solution covers more clients than the fractional facilities we delete. This proposition is the key to showing that the coverage constraint is still satisfied.

\begin{prop}\label{prop_greedyoutlier}
    Upon termination of \outlierpost, we have $\lvert C_{part}(\bar{S}) \rvert \geq \sum\limits_{j \in \bar{C}} \sum\limits_{i \in F_j} w_i \bar{y}_i$.
\end{prop}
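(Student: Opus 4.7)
The plan is to bound both sides in terms of the quantity $\sum_{j \in \bar{C}} w_{i_j}$, where $i_j \in F_j$ denotes the facility greedily picked by \outlierpost~for $j \in \bar{C}$. The argument has two halves: an upper bound on the RHS via greedy choice, and a lower bound on the LHS via the disjointness guarantees from \Cref{prop_partnice}.

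First, I would upper-bound $\sum_{j \in \bar{C}} \sum_{i \in F_j} w_i \bar{y}_i$. For each $j \in \bar{C} \subseteq C^*$ we have $\bar{y}(F_j) = 1$ (since $C^*$-constraints are tight), and by the greedy rule $w_{i_j} \geq w_i$ for every $i \in F_j$. Hence $\sum_{i \in F_j} w_i \bar{y}_i \leq w_{i_j}\sum_{i \in F_j} \bar{y}_i = w_{i_j}$. Summing over $\bar{C}$ gives $\sum_{j \in \bar{C}} \sum_{i \in F_j} w_i \bar{y}_i \leq \sum_{j \in \bar{C}} w_{i_j}$.

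Next, I would lower-bound $|C_{part}(\bar{S})|$ by the same quantity. By \Cref{prop_partnice}, the $F$-balls $\{F_j : j \in \bar{C}\}$ are pairwise disjoint and each client $j' \in C_{part} \setminus C_{covered}$ has $F_{j'}$ intersecting at most one $F_j$ with $j \in \bar{C}$. Therefore the set $\{j' \in C_{part} \setminus C_{covered} : i_j \in F_{j'}\}$ (which has size $w_{i_j}$ by definition of $w$) is disjoint across different $j \in \bar{C}$: if $j' \in C_{part} \setminus C_{covered}$ had $i_j \in F_{j'}$ and $i_{j''} \in F_{j'}$ for two distinct $j,j'' \in \bar{C}$, then $F_{j'}$ would intersect both $F_j$ and $F_{j''}$, contradicting \Cref{prop_partnice}. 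Since all these clients lie in $C_{part}(\bar{S})$ (they are in $C_{part}$ and their $F$-balls meet $\bar{S}$), we conclude $|C_{part}(\bar{S})| \geq \sum_{j \in \bar{C}} w_{i_j}$.

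Chaining the two bounds gives the proposition. The only nontrivial point is ensuring no double-counting in the lower bound, which is exactly what \Cref{prop_partnice} was set up to provide; once that is invoked, the rest is a short greedy/feasibility computation.
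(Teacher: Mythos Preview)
Your proposal is correct and follows essentially the same approach as the paper: both arguments pivot on the intermediate quantity $\sum_{j\in\bar C} w_{i_j}$, bounding the right-hand side above via the greedy choice together with $\bar y(F_j)=1$, and bounding the left-hand side below via the disjointness guarantee of \Cref{prop_partnice}. The paper actually asserts the equality $|C_{part}(\bar S)|=\sum_{j\in\bar C} w_{i(j)}$, whereas you prove only $\geq$; your version is slightly more careful, since $w_i$ counts clients in $C_{part}\setminus C_{covered}$ while $C_{part}(\bar S)$ ranges over all of $C_{part}$, but either way the needed inequality follows.
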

\begin{proof}
    For each $j \in \bar{C}$, let $i(j) \in \bar{S}$ be the unique open facility in $F_j$. By definition of $C_{part}(\bar{S})$, for all $j \in C_{part}(\bar{S})$ we have $F_j \cap \bar{S} \neq \emptyset$. Further, by Proposition \ref{prop_partnice}, $F_j$ intersects exactly one $F$-ball among clients in $\bar{C}$, so each $i(j)$ for $j \in \bar{C}$ covers a unique set of clients. This implies the equality:
	\[\lvert C_{part}(\bar{S}) \rvert = \sum\limits_{j \in \bar{C}} w_{i(j)}.\]
	Combining this equality with the facts that for all $j \in \bar{C}$, we have $\bar{y}(F_j) = 1$ and $w_{i(j)} \geq w_i$ for all $i \in F_j$ gives the desired inequality:
	\[\lvert C_{part}(\bar{S}) \rvert = \sum\limits_{j \in \bar{C}} w_{i(j)} \geq \sum\limits_{j \in \bar{C}} w_{i(j)} \bar{y}(F_j) \geq \sum\limits_{j \in \bar{C}} \sum\limits_{i \in F_j} w_i \bar{y}_i.\]
\end{proof}

Finally, we can complete the analysis of $C^1$. It is easy to check that constraints except for the coverage constraint are satisfied. To handle the coverage constraint, we use \Cref{prop_greedyoutlier}.

\begin{lem}\label{lem_afterfeasible}
	$\bar{y}$ restricted to $F^1$ is feasible for $\lpreroute^1$.
\end{lem}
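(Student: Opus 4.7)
I would verify each family of constraints of $\lpreroute^1$ in turn for the restricted vector $\bar{y}|_{F^1}$. The box bounds, $C^1_{part}$-constraints, and (vacuous) $C^1_{full}$-constraints are immediate: since $F^1_j \subseteq F_j$, we have $\bar{y}(F^1_j) \leq \bar{y}(F_j) \leq 1$, and $C_{full} \subseteq C'$ gives $C^1_{full} = \emptyset$. For the $C^1_*$-constraints, note $C^1_* = C_0$; for each dummy $j(i) \in C_0$, Extra Invariant \ref{invar_outextra}(\ref{invar_outguessfacil}) places $F_{j(i)}$ in the collocated copies of $i \in S_0$. Since $S = (\bar{S} \cup F_{=1}) \setminus S_0$ is disjoint from $S_0$ and $\bar{C}$ was initialized to avoid clients whose $F$-balls meet $S_0$, both $S$ and $F(\bar{C})$ are disjoint from $F_{j(i)}$, so $F^1_{j(i)} = F_{j(i)}$ and $\bar{y}(F^1_{j(i)}) = \bar{y}(F_{j(i)}) = 1$.

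For the knapsack constraint $\bar{y}(F^1) \leq k^1 = k - |S|$, I would establish the identity $\bar{y}(S \cup F(\bar{C})) = |S|$. Using $\bar{S} \subseteq F(\bar{C}) \subseteq F_{<1}$, the union $S \cup F(\bar{C})$ decomposes as the disjoint union $(F_{=1} \setminus S_0) \sqcup F(\bar{C})$. Its $\bar{y}$-mass is $|F_{=1} \setminus S_0| + \sum_{\bar{j} \in \bar{C}} \bar{y}(F_{\bar{j}}) = |F_{=1} \setminus S_0| + |\bar{C}|$, using Proposition~\ref{prop_partnice} (disjointness of $\{F_{\bar{j}}\}_{\bar{j} \in \bar{C}}$) and the $C^*$-tightness $\bar{y}(F_{\bar{j}}) = 1$. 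Since $|\bar{S}| = |\bar{C}|$, this equals $|S|$. Subtracting from $\bar{y}(F) \leq k$ gives the claim.

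The main obstacle is the coverage constraint. I would first use the \mainalg termination guarantee that $\bar{y}(F_j) < 1$ strictly for $j \in C_{part}$, which forces $F_j \subseteq F_{<1}$ and, combined with $F_j \cap \bar{S} = \emptyset$ for $j \in C^1_{part}$, gives the clean identity $F^1_j = F_j \setminus F(\bar{C})$. Writing $A = C_{part}(\bar{S})$ and $B = C_{covered}$ and unwinding the definitions of $m^1$, $|C^1_{full} \cup C^1_*|$, and $C'$, the coverage inequality reduces to showing
\[
\sum_{j \in A \cup B} \bar{y}(F_j) \;+\; \sum_{j \in C^1_{part}} \bar{y}(F_j \cap F(\bar{C})) \;\leq\; |A \cup B|
\]
(absorbing any residual into the original coverage slack of $\bar{y}$). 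At this point I would invoke Proposition~\ref{prop_greedyoutlier}: swapping the order of summation in $\sum_{\bar{j} \in \bar{C}} \sum_{i \in F_{\bar{j}}} w_i \bar{y}_i$ and using disjointness of $\{F_{\bar{j}}\}$ rewrites the bound as $|A| \geq \sum_{j' \in C_{part} \setminus C_{covered}} \bar{y}(F_{j'} \cap F(\bar{C}))$, which directly controls the second sum (after splitting $C_{part} \setminus C_{covered}$ into $A \setminus B$ and $C^1_{part}$), while the first sum is bounded termwise by $|A \cup B|$ using $\bar{y}(F_j) \leq 1$. The delicate step, and the real technical core, is setting up the accounting so that the greedy slack $|A| - (X + Y)$ from Proposition~\ref{prop_greedyoutlier} (where $X$ is the $C^1_{part}$-loss and $Y$ is the $A \setminus B$-loss from $F(\bar{C})$) together with the per-client slack $1 - \bar{y}(F_j)$ on $A \cup B$ exactly compensates for the coverage lost by deleting $F(\bar{C})$; this is where the greedy choice of $\bar{S}$ and Proposition~\ref{prop_partnice} are essential.
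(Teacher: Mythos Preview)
Your proposal is correct and follows essentially the same route as the paper: verify each constraint family of $\lpreroute^1$ in turn, handle the knapsack constraint via disjointness of $\{F_{\bar j}\}_{\bar j\in\bar C}$ and $\bar y(F_{\bar j})=1$, and reduce the coverage constraint to an inequality controlled by Proposition~\ref{prop_greedyoutlier}. The only organizational difference is that the paper first strips off $C_{covered}$ via the termwise bound $\bar y(F_j)\le 1$ before invoking the greedy proposition---this reduces your target $\sum_{j\in A\cup B}\bar y(F_j)+X\le|A\cup B|$ to $\sum_{j\in A}\bar y(F_j)+X\le|A|$ and makes the final comparison with $\sum_{\bar j\in\bar C}\sum_{i\in F_{\bar j}}w_i\bar y_i$ direct, avoiding the separate ``greedy slack plus per-client slack'' accounting you sketch.
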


For the final property, we must upper bound the connection cost of $C' = C_{part}(\bar{S}) \cup C_{covered} \cup C_{full} \cup (C^* \setminus C_0)$ to $S \cup S_0$. We bound the connection cost in a few steps. First, we bound the re-routing cost of $C_{full} \cup C^*$. Second, we show that every $j \in C_{covered}$ has a ''back-up" facility within $O(1) L(\ell_j)$. This is used to bound the additive cost of connecting $C_{covered}$. Finally, for the clients in $C_{part}(\bar{S})$, we guarantee to open a facility in their $F$-balls, so we also bound their additive cost.

The next lemma and corollary allow us to bound the cost of $C_{full} \cup C^*$.

\begin{lem}\label{lem_fullextrastep}
	Upon termination of \outlierpost, for all $j \in C^*$, we have $d(j, S \cup S_0) \leq (3 + \frac{2}{\tau^c}) L(\ell_j)$.
\end{lem}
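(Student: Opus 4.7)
The plan is to proceed by a case analysis on the status of $j \in C^*$ with respect to the sets $F_{=1}$, $S_0$, and $\bar{C}$ at termination of \outlierpost, and to bound $d(j, S \cup S_0)$ in each case by either $L(\ell_j)$, $(1 + 2/\tau)L(\ell_j)$, or $(3 + 2/\tau^c)L(\ell_j)$, all of which are at most $(3 + 2/\tau^c) L(\ell_j)$.

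First, the easy cases. If $j$ is integral, i.e., $j \in C^*_{=1}$, then some facility of $F_j$ lies in $F_{=1} \subseteq S \cup S_0$, giving $d(j, S \cup S_0) \leq L(\ell_j)$ by Basic Invariant \ref{invar_basic}(\ref{invar_F}). Likewise, if $j \in C^*_{<1}$ with $F_j \cap S_0 \neq \emptyset$, then there is a facility of $S_0$ at distance at most $L(\ell_j)$ from $j$. In particular, every dummy client $j \in C_0$ falls into this case. So we may focus on $j \in C^*_{<1}$ with $F_j \cap S_0 = \emptyset$, which means $j$ was initially placed in $\bar{C}$ by the first line of \outlierpost. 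If $j$ survives in $\bar{C}$ to termination, then the greedy step guarantees that some facility of $F_j$ lies in $\bar{S} \subseteq S$, and again $d(j, S) \leq L(\ell_j)$.

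The interesting cases are when $j$ was removed from $\bar{C}$ at some point. By Lemma \ref{lem_greedystay}, any client $\bar{j}$ that is actually reached by the \textbf{For} loop stays in $\bar{C}$; therefore if $j$ was removed, it must have been removed \emph{while} some earlier $\bar{j}$ was being processed, where ``earlier'' means $\ell_{\bar{j}} \leq \ell_j$. The removal happened either in Line 3 (because $F_{\bar{j}} \cap F_j \neq \emptyset$) or in the while-loop (because there exists $j' \in C_{part} \setminus C_{covered}$ with $F_{j'}$ intersecting both $F_{\bar{j}}$ and $F_j$, and $\ell_{j'} \leq \ell_{\bar{j}} - c$). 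In the Line 3 case, $\bar{j}$ remains in $\bar{C}$ until termination, so some facility of $F_{\bar{j}}$ lies in $\bar{S} \subseteq S$, and applying Proposition \ref{prop_distintersect} together with Basic Invariant \ref{invar_basic}(\ref{invar_F}) yields
\[d(j, S) \leq d(j,\bar{j}) + L(\ell_{\bar{j}}) \leq L(\ell_j) + 2 L(\ell_{\bar{j}}) \leq 3 L(\ell_j).\]
In the while-loop case, $\bar{j}$ again stays in $\bar{C}$, so some facility of $F_{\bar{j}}$ lies in $\bar{S}$, and two applications of Proposition \ref{prop_distintersect} give
\[d(j, S) \leq d(j, j') + d(j', \bar{j}) + L(\ell_{\bar{j}}) \leq L(\ell_j) + 2 L(\ell_{j'}) + 2 L(\ell_{\bar{j}}).\]
Now $\ell_{\bar{j}} \leq \ell_j$ gives $L(\ell_{\bar{j}}) \leq L(\ell_j)$, and $\ell_{j'} \leq \ell_{\bar{j}} - c \leq \ell_j - c$ gives $L(\ell_{j'}) \leq L(\ell_j)/\tau^c$, producing the claimed bound $(3 + 2/\tau^c) L(\ell_j)$.

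The main obstacle is ensuring that every removal of $j$ from $\bar{C}$ is accounted for, and in particular that whenever such a removal occurs, the corresponding witness client $\bar{j}$ indeed survives in $\bar{C}$ to termination so that an open facility in $F_{\bar{j}}$ is available in $S$. This is exactly where Lemma \ref{lem_greedystay} is invoked, together with the fact that the \textbf{For} loop processes $\bar{C}$ in increasing order of $\ell$, so any witness for $j$'s removal was reached before $j$ itself could be reached. Once this structural point is in place, the rest is bookkeeping with Proposition \ref{prop_distintersect} and the discretization $L(\ell - c) = L(\ell)/\tau^c$.
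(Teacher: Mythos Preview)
Your proof is correct and follows essentially the same case analysis as the paper's own proof. The only minor difference is that in the Line~3 case the paper invokes the Distinct Neighbors Property to conclude $\ell_j \geq \ell_{\bar{j}} + 1$, yielding the slightly tighter intermediate bound $(1 + 2/\tau)L(\ell_j)$ rather than your $3L(\ell_j)$, but this has no effect on the final statement.
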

\begin{proof}
	There are a few cases to consider. First, if $j \in C^*_{=1}$, then the lemma is trivial because by definition there exists an integral facility in $F_j$. Otherwise, if $j \in C^*_{<1}$, but $F_j \cap S_0 \neq \emptyset$, then again the lemma is trivial.
	
	Thus it remains to consider clients $j \in \{j \in C^*_{<1} \mid F_j \cap S_0 = \emptyset\}$. We note that such a client $j$ is initially in $\bar{C}$. If $j$ remains in $\bar{C}$ until termination, then we are done, because we are guaranteed to open a facility in $F_j$ for all $j \in \bar{C}$ (this is exactly the set $\bar{S}$ of facilities.)
	
	For the final case, we suppose client $j$ is removed from $\bar{C}$ in the iteration where we consider $\bar{j} \in \bar{C}$. Then either $F_j \cap F_{\bar{j}} \neq \emptyset$ or there exists $j' \in C_{part} \setminus C_{covered}$ such that $F_{j'}$ intersects both $F_j$ and $F_{\bar{j}}$ and $\ell_{j'} \leq \ell_{\bar{j}} - c$. Note that because the \textsc{For} loop considers clients in increasing order of radius level, we have $\ell_j \geq \ell_{\bar{j}}$
	
	In the former case, by the Distinct Neighbors Property, we have $\ell_j \geq \ell_{\bar{j}}  + 1$. Further, by \Cref{lem_greedystay}, we know that $\bar{j}$ remains in $\bar{C}$ until termination, so $d(\bar{j}, S) \leq L(\ell_{\bar{j}})$. Then we can upper bound:
	\[d(j,S) \leq d(j, \bar{j}) + d(\bar{j}, S) \leq L(\ell_j) + L(\ell_{\bar{j}}) + L(\ell_{\bar{j}}) \leq (1 + \frac{2}{\tau}) L(\ell_j) < 3L(\ell_j)\]
	, where in the final inequality, we use the fact that $\tau > 1$.
	
	In the latter case, we again have $d(\bar{j}, S) \leq L(\ell_{\bar{j}})$.  Then we can bound the distance from $j$ to $S$ by first going from $j$ to $j'$, then from $j'$ to $\bar{j}$, and finally from $\bar{j}$ to $S$, where $\ell_{j'} \leq \ell_{\bar{j}} - c$ and $\ell_{\bar{j}} \leq \ell_j$:
	\[d(j, S) \leq d(j, j') + d(j', \bar{j}) + d(\bar{j}, S) \leq L(\ell_j) + L(\ell_{j'}) + L(\ell_{j'}) + L(\ell_{\bar{j}}) + L(\ell_{\bar{j}}) \leq (3 + \frac{2}{\tau^c})L(\ell_j).\]
\end{proof}
\begin{cor}\label{cor_outreroute}
    Upon  termination of \outlierpost, for all $j \in C_{full} \cup C^*$, we have $d(j, S \cup S_0) \leq (2 + \alpha) L(\ell_j)$, where $\alpha = \max(3 + 2\tau^{-c}, 1 + \frac{4 + 2\tau^{-c}}{\tau}, \frac{\tau^3 + 2\tau^2 + 1}{\tau^3 - 1})$
\end{cor}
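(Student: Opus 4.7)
The corollary is a direct composition of Lemma \ref{lem_fullextrastep} with the re-routing bound of Theorem \ref{thm_mainreroute}, applied to the facility set $S \cup S_0$. The plan is to treat Lemma \ref{lem_fullextrastep} as providing the input hypothesis of Theorem \ref{thm_mainreroute} (namely, a set of open facilities covering each $j \in C^*$ within $\beta L(\ell_j)$) with the constant $\beta = 3 + 2\tau^{-c}$, and then read off the conclusion for $C_{full} \cup C^*$.

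In more detail, I would first invoke Lemma \ref{lem_fullextrastep} to conclude that for every $j \in C^*$, the set $S \cup S_0$ contains a facility within distance $(3 + 2\tau^{-c})L(\ell_j)$ of $j$. Thus the hypothesis of Theorem \ref{thm_mainreroute} is met with $\beta = 3 + 2\tau^{-c}$ and the facility set taken to be $S \cup S_0$. Applying that theorem yields $d(j, S \cup S_0) \leq (2 + \alpha)L(\ell_j)$ for all $j \in C_{full} \cup C^*$, where
\[
\alpha = \max\Bigl(\beta,\; 1 + \tfrac{1 + \beta}{\tau},\; \tfrac{\tau^3 + 2\tau^2 + 1}{\tau^3 - 1}\Bigr) = \max\Bigl(3 + 2\tau^{-c},\; 1 + \tfrac{4 + 2\tau^{-c}}{\tau},\; \tfrac{\tau^3 + 2\tau^2 + 1}{\tau^3 - 1}\Bigr),
\]
exactly matching the $\alpha$ in the corollary statement. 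For $j \in C^*$ the bound $(3 + 2\tau^{-c})L(\ell_j) \leq (2+\alpha)L(\ell_j)$ is already trivially covered since $\alpha \geq 3 + 2\tau^{-c}$, so the only work happens for $j \in C_{full}$, and this is supplied by Theorem \ref{thm_mainreroute}.

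The only point requiring justification is the applicability of Theorem \ref{thm_mainreroute} at this stage of the algorithm: its proof relies on the re-routing history maintained by \mainalg, not on the state of $\lpreroute$ after \outlierpost. Since \outlierpost only deletes facilities and clients from $\lpreroute$ and updates $k, m$ without altering the history of how clients were added to or removed from $C^*$ during \mainalg, the ball-chasing argument used to prove Theorem \ref{thm_mainreroute} carries over verbatim once the hypothesis $d(j, S \cup S_0) \leq \beta L(\ell_j)$ for $j \in C^*$ is supplied. This is the main (and essentially only) conceptual check; after it the corollary follows in one line.
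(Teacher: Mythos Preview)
Your proposal is correct and matches the paper's proof exactly: the paper's proof is the single line ``Apply \cref{thm_mainreroute} with $\beta = 3 + 2\tau^{-c}$,'' which implicitly uses the immediately preceding Lemma~\ref{lem_fullextrastep} to supply the hypothesis, just as you describe. Your additional remark about why Theorem~\ref{thm_mainreroute} remains applicable after \outlierpost\ is a reasonable sanity check that the paper leaves implicit.
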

\begin{proof}
    Apply \cref{thm_mainreroute} with $\beta = 3 + 2\tau^{-c}$.
\end{proof}

Similarly, the next lemma ensures that $C_{covered}$ can also be re-routed.

\begin{lem}\label{lem_covered}
	Upon termination of \outlierpost, for all $j \in C_{covered}$, we have $d(j,\bar{S}) \leq (1 + 2\tau^c)L(\ell_j)$.
\end{lem}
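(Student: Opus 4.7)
The plan is to trace the moment $j \in C_{covered}$ is added in \outlierpost, and use the \textsc{Else} branch's inequality to control the radius-level gap between $j$ and the client $\bar{j} \in \bar{C}$ currently being processed by the \textsc{For} loop. The key observations are that (i) \outlierpost~ never removes clients from $C_{covered}$, (ii) no $F$-ball is modified inside \outlierpost, and (iii) by \Cref{lem_greedystay}, every $\bar{j}$ reached by the \textsc{For} loop survives in $\bar{C}$ until termination.

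Fix $j \in C_{covered}$ at termination. By observation~(i), there is some iteration in which $j$ was added while some $\bar{j} \in \bar{C}$ was being processed. The \textsc{Else} branch was entered, so $\ell_{j} > \ell_{\bar{j}} - c$, i.e., $\ell_{\bar{j}} \leq \ell_{j} + c - 1$, and the while-loop condition guarantees $F_{j} \cap F_{\bar{j}} \neq \emptyset$. By observation~(iii), $\bar{j}$ remains in $\bar{C}$ at termination, so the greedy construction of $\bar{S}$ places some facility $i^{\ast} \in \bar{S} \cap F_{\bar{j}}$; using Basic Invariant~\ref{invar_basic}(\ref{invar_F}) together with observation~(ii), we get $d(\bar{j}, i^{\ast}) \leq d'(\bar{j}, i^{\ast}) \leq L(\ell_{\bar{j}})$.

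Combining with \Cref{prop_distintersect} applied to $j$ and $\bar{j}$, the triangle inequality yields
\[
  d(j, \bar{S}) \;\leq\; d(j, \bar{j}) + d(\bar{j}, i^{\ast}) \;\leq\; \bigl(L(\ell_{j}) + L(\ell_{\bar{j}})\bigr) + L(\ell_{\bar{j}}) \;=\; L(\ell_{j}) + 2L(\ell_{\bar{j}}).
\]
Finally, since $L$ is geometric in $\tau$, $L(\ell_{\bar{j}})/L(\ell_{j}) = \tau^{\ell_{\bar{j}} - \ell_{j}} \leq \tau^{c-1} \leq \tau^{c}$, so $d(j, \bar{S}) \leq (1 + 2\tau^{c}) L(\ell_{j})$, as claimed. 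No step appears to be a serious obstacle; the only subtle point is invoking \Cref{lem_greedystay} to guarantee that the anchor $\bar{j}$ still contributes a facility to $\bar{S}$ at the end of \outlierpost.
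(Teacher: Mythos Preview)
Your proof is correct and follows essentially the same route as the paper's: identify the iteration in which $j$ was added to $C_{covered}$, use the \textsc{Else} branch to get $\ell_{\bar{j}} \leq \ell_j + c - 1$, invoke \Cref{lem_greedystay} so that $\bar{j}$ contributes a facility to $\bar{S}$, and chain the triangle inequality through $\bar{j}$. The paper in fact records the slightly sharper bound $(1+2\tau^{c-1})L(\ell_j)$ that you also obtain before relaxing $\tau^{c-1}\le\tau^c$ to match the lemma statement.
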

\begin{proof}
	Because $j \in C_{covered}$, it must be the case that we put $j$ in $C_{covered}$ in some iteration of the \textsc{For} loop where we consider client $\bar{j} \in \bar{C}$. Thus we have $\ell_j \geq \ell_{\bar{j}} - c  + 1$ and $F_j \cap F_{\bar{j}} \neq \emptyset$. Also, by Lemma \ref{lem_greedystay}, $\bar{j}$ remains in $\bar{C}$ until termination, so $d(\bar{j}, \bar{S}) \leq L(\ell_{\bar{j}})$. Using these facts, we can bound:
	\[d(j,S) \leq d(j, \bar{j}) + d(\bar{j}, S) \leq L(\ell_j) + L(\ell_{\bar{j}}) + L(\ell_{\bar{j}}) \leq (1 + 2 \tau^{c-1})L(\ell_j).\]
\end{proof}

Using the above lemmas, we are ready to bound the connection cost of our partial solution. Note that we bound the cost of serving $C'$ not with $S$, which is the partial solution we output, but rather with $S \cup S_0$. Thus, we are implicitly assuming that $S_0$ will be opened by the later recursive calls.

We recall that $C' = C_{part}(\bar{S}) \cup C_{covered} \cup C_{full} \cup (C^* \setminus C_0)$ and $S \cup S_0 = F_{=1} \cup \bar{S} \cup S_0$.
	
To begin, we bound the cost of connecting $C_{part}(\bar{S})$ to $\bar{S}$. By definition, every client $j \in C_{part}(\bar{S})$ has some facility from $\bar{S}$ in its $F$-ball. Further, $\bar{S} \subset F(\bar{C})$ by definition, and $F(\bar{C}) \cap S_0 = \emptyset$ using Proposition \ref{prop_partnice}. Thus we  can apply Extra Invariant \ref{invar_outextra}(\ref{invar_outsparsefacil}) to each facility in $\bar{S}$. Further, we know that $\lvert \bar{S} \rvert = O(1)$, because there are only $O(1)$ fractional facilities, and every facility in $\bar{S}$ is fractional by definition. Then we can bound:
\[\sum\limits_{j \in C_{part}(\bar{S})} d(j, \bar{S}) \leq \sum\limits_{i \in \bar{S}} \sum\limits_{j \in C_{part}(\bar{S}) \mid i \in F_j} d(j,i) \leq O(\rho) U\]
, where we apply Extra Invariant \ref{invar_outextra}(\ref{invar_outsparsefacil}) for each $i \in \bar{S}$. Thus, we have shown that the connection cost of $C_{part}(\bar{S})$ is at most an additive $O(\rho) U$.
	
Now we move on to the rest of $C'$, that is - the clients in $C_{covered}$, $C_{full}$ and $C^* \setminus C_0$. For the clients in $C_{covered}$, we know by Lemma \ref{lem_covered} that every client $j \in C_{covered}$ has an open facility in $\bar{S}$ at distance at most $(1 + 2\tau^c) L(\ell_j)$. Further, by definition of $C_{covered}$, each $j \in C_{covered}$ is supported on a fractional facility not in $S_0$. To see this, note that for all $j \in C_{covered}$, there exists $\bar{j} \in \bar{C}$ such that $F_j \cap F_{\bar{j}} \neq \emptyset$, and $F_{\bar{j}} \cap S_0 = \emptyset$.
	
Then we can use Lemma \ref{lem_outclosefacil} for each fractional facility $i \notin S_0$ to bound the cost of connecting all $C_{covered}$-clients supported on $i$ to $\bar{S}$. For each such fractional facility $i \notin S_0$, the connection cost of these clients is at most $O(\frac{\rho}{\delta}) U$. By summing over all $O(1)$-many fractional $i \notin S_0$, we connect all of $C_{covered}$ at additive cost at most $O(\frac{\rho}{\delta}) U$.
	
Finally, we handle the clients in $C_{full} \cup (C^* \setminus C_0)$. For convenience, we denote this set of clients by $\hat{C}$. Using Lemma \ref{lem_fullextrastep}, every $j \in \hat{C}$ has a facility in $S \cup S_0$ at distance at most $(2 +\alpha) L(\ell_j)$. There are a few cases to consider. For the first case, we consider the clients $\{j \in \hat{C} \mid F_j \setminus (S_0 \cup F_{=1}) \neq \emptyset\}$, that is the set of $\hat{C}$-clients whose $F$-balls contain some cheap, fractional facility. By an analogous argument as for $C_{covered}$, we can apply Lemma \ref{lem_outclosefacil} to each fractional $i \notin S_0$ to bound the cost of connecting $\{j \in \hat{C} \mid F_j \setminus (S_0 \cup F_{=1}) \neq \emptyset\}$ to $S \cup S_0$ by $O(\frac{\rho}{\delta}) U$.
	
Then it remains to consider the clients $j \in \hat{C}$ such that $F_j \subset F_{=1} \cup S_0$. For such a client $j$, if $F_j = \emptyset$, then $j$'s contribution to the objective of $\lpreroute$ is exactly $L(\ell_j)$, so connecting $j$ to $S \cup S_0$ costs at most $(2 +\alpha)$ times $j$'s contribution to the objective of $\lpreroute$.
	
Similarly, if $F_j$ happens to contain a facility in $F_{=1} \cup S_0$, then we can simply connect $j$ to the closest such facility in $F_j$. Note that $F_{=1} \cup S_0 \subset S \cup S_0$, so $j$'s connection cost in this case it as most its contribution to the objective in $\lpreroute$. In conclusion, the connection cost of $\{j \in \bar{C} \mid F_j \subset F_{=1} \cup S_0 \}$ is at most $(2 +\alpha) Opt(\lpreroute)$. Summing the costs of these different groups of clients completes the proof.

\section{Chain Decompositions of Extreme Points}\label{sec_pathdecomp}

In this section, we prove a more general version of Theorem \ref{thm_chaindecomp} that applies to set-cover-like polytopes with $r$ side constraints. In particular, we consider polytopes of the form:
	\[\mathcal{P} = \{y \in \mathbb{R}^F \mid y(F_j) =1\quad \forall j \in C^*,\, 0 \leq y \leq 1,\, Ay \leq b\}\]
, where $F$, $C^*$, and the $F$-balls are defined as in $\lpreroute$, and $Ay \leq b$ is an arbitrary system of $r$ linear inequalities. We note that other than the $C_{part}$-, and $C_{full}$-constraints, $\mathcal{P}$ generalizes the feasible region of $\lpreroute$ by taking the system $Ay \leq b$ to be the $r_1$ knapsack constraints and $r_2$ coverage constraints.

Although we phrase $\mathcal{P}$ in terms of facilities and clients, one can interpret $\mathcal{P}$ as a set cover polytope with side constraints as saying that we must choose elements in $F$ to cover each set in the family $\mathcal{F} = \{F_j \mid j \in C^*\}$ subject to the constraints $Ay \leq b$. The main result of this section is that if $\mathcal{F}$ has bipartite intersection graph, there exists a chain decomposition of the extreme points of $\mathcal{P}$.

Note that $\mathcal{P}$ can also also be interpreted as the intersection of two partition matroid polytopes or as a bipartite matching polytope both with $r$ side constraints. Our chain decomposition theorem shares some parallels with the work of Grandoni, Ravi, Singh, and Zenklusen, who studied the structure of bipartite matching polytopes with $r$ side constraints \cite{DBLP:journals/mp/GrandoniRSZ14}.

\begin{thm}[General Chain Decomposition]\label{thm_chaingeneral}
	Suppose we have a polytope:
	\[\mathcal{P} = \{y \in \mathbb{R}^F \mid y(F_j) =1\quad \forall j \in C^*,\, 0 \leq y \leq 1,\, Ay \leq b\},\]
	such that $F$ is a finite ground set of elements (facilities), $\{F_j \subset F \mid j \in C^*\}$ is a set family indexed by $C^*$ (clients), and $Ay \leq b$ is a system of $r$ linear inequalities. Further, let $\bar{y}$ be an extreme point for $\mathcal{P}$ such that no non-negativity constraint is tight. If $\mathcal{F}$ has bipartite intersection graph, then $C^*_{<1}$ admits a partition into $3r$ chains along with at most $2r$ violating clients (clients that are not in any chain.)
\end{thm}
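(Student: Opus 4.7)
\medskip

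\textbf{Proof plan.} The plan is to reduce the extreme-point analysis to a rank-counting argument on an auxiliary bipartite multigraph, and then extract the chain decomposition by looking at the induced subgraph on ``nice'' (low-degree) clients.

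\textbf{Setup.} First I would restrict attention to the fractional part. Since no non-negativity constraint is tight, for every integral client $j \in C^*_{=1}$ (i.e., $F_j$ contains some $i$ with $\bar y_i = 1$), the equality $\bar y(F_j) = 1$ together with $\bar y_{i'} > 0$ for all $i'$ forces $F_j = \{i\}$; in particular, no fractional facility lies in any integral client's $F$-ball. Second, since the intersection graph of $\mathcal F$ is bipartite (hence triangle-free), each facility appears in at most $2$ sets of $\mathcal F$. Combining these two observations, each $i \in F_{<1}$ lies in at most $2$ sets $F_j$ with $j \in C^*_{<1}$. I then build a bipartite multigraph $H$ on vertex set $C^*_{<1}$ (bipartitioned by the given bipartition): each fractional facility in exactly two fractional balls is an \emph{edge}; each fractional facility in exactly one fractional ball is a \emph{pendant} at that vertex; each fractional facility in no fractional ball is \emph{isolated}. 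Let $E,P,Z$ denote these three populations, and let $d_j := |F_j|$ so that $\sum_j d_j = 2|E|+|P|$ and $d_j \geq 2$ for all $j \in C^*_{<1}$.

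\textbf{Rank counting for the extreme point.} For each connected component of $H$, the rank of the incidence submatrix (rows = clients in the component, columns = edges of the component) is $|V|-1$ in a bipartite component with no pendants, and $|V|$ once any pendant is attached (pendants are linearly independent of bipartite edge-incidence vectors because they have a single $1$). Thus, writing $c_{\mathrm{pure}}$ for the number of components of $H$ containing no pendants, the rank of the full equality system is $|C^*_{<1}|-c_{\mathrm{pure}}$. Since isolated facilities have no equality constraints, they must each be pinned by a side-constraint row. The extreme-point condition
\begin{equation*}
|E|+|P|+|Z| \;=\; |F_{<1}| \;\leq\; \bigl(|C^*_{<1}|-c_{\mathrm{pure}}\bigr) + r
\end{equation*}
combined with $2|E|+|P| = \sum_j d_j$ rearranges to the master inequality
\begin{equation*}
\sum_{j \in C^*_{<1}}(d_j-2) \;+\; |P| \;+\; 2|Z| \;+\; 2\,c_{\mathrm{pure}} \;\leq\; 2r.
\end{equation*}

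\textbf{Extracting the decomposition.} Call a client \emph{violating} if $d_j \geq 3$. The master inequality immediately gives $|\{j : d_j \geq 3\}| \leq \sum (d_j-2) \leq 2r$, which accounts for the $2r$ violating clients. Next consider the subgraph $H'$ of $H$ induced by the non-violating clients: every vertex of $H'$ has degree at most $2$, so $H'$ is a disjoint union of paths and cycles. Each such path or cycle is a chain in the sense of \Cref{def_chain}, because every vertex has $|F_j|=2$ and consecutive vertices share a facility by construction of $H$ (and a cycle is broken at any vertex into a single chain). It remains to bound the number of components of $H'$. A cycle of $H'$ corresponds to a bipartite component of $H$ that is entirely non-violating and pendant-free (such a component is $2$-regular), so the number of cycles is at most $c_{\mathrm{pure}} \leq r$. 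For paths, each endpoint $v$ has $d_v=2$ but degree $1$ in $H'$, so the ``missing'' incidence at $v$ is either a pendant at $v$ or an edge from $v$ to a violating client. Thus $2(\#\text{paths}) \leq P_n + e_{vn}$, where $P_n$ counts pendants at non-violating clients and $e_{vn}$ counts edges with exactly one violating endpoint. Bounding $P_n \leq |P|$ and $e_{vn} \leq \sum_{j\text{ violating}} d_j$ via the master inequality (together with $V_v \leq 2r$) then yields $\#\text{paths} \leq 2r$, so the total number of chains is at most $3r$.

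\textbf{Main obstacle.} The crux is the linear-algebraic rank computation for the equality system in the presence of pendants and isolated facilities: one must argue carefully that pendant columns contribute fresh independent rows while isolated columns must be ``paid for'' by the side-constraint budget. Once the master inequality is in hand, the combinatorial decomposition into paths and cycles and the arithmetic bookkeeping for the $3r + 2r$ bound are straightforward; the only care needed is to verify that a cycle component of $H'$ really is a complete pendant-free component of $H$, which I establish by noting that $d_j=2$ together with degree $2$ in $H'$ leaves no room for pendants or edges leaving the component.
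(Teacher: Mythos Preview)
Your overall strategy---building the multigraph $H$, computing $\dim(C^*_{<1})$ exactly as $|C^*_{<1}| - c_{\mathrm{pure}}$, and deriving the master inequality---is correct and close in spirit to the paper's proof (which instead uses the cruder subadditive bound $\dim(C^*_{<1}) \le \sum_k \dim(V_k) + \dim(V)$). The identification of chains with path/cycle components of $H'$ and the bound on violating clients are also fine.

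There is, however, a genuine arithmetic gap at the very last step: the claim $\#\text{paths} \le 2r$ does not follow from your own inequalities. From $2(\#\text{paths}) \le P_n + e_{vn} \le |P| + \sum_{j\ \text{viol}} d_j = |P| + \sum_j(d_j-2) + 2V_v$, the master inequality only yields $2(\#\text{paths}) \le 2r + 2V_v$, hence $\#\text{paths} \le r + V_v \le 3r$; separately adding $\#\text{cycles} \le r$ then gives $4r$, not $3r$. This is not merely slack bookkeeping. With $r=2$ one can build a bipartite $H$ with three violating vertices $v_1,v_2,v_3$ on one side (each $d_{v_i}=3$, all nine incident edges going to distinct non-violating vertices), one pendant attached to one of those nine, and four extra degree-$2$ vertices linking the remaining eight in pairs. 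This configuration is connected, has $|P|=1$, $|Z|=0$, $c_{\mathrm{pure}}=0$, and satisfies the master inequality with equality ($3+1+0+0=4=2r$); deleting $v_1,v_2,v_3$ leaves exactly five path components, so $\#\text{paths}=5>2r=4$.

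The repair is immediate once you stop bounding paths and cycles separately: add $2(\#\text{cycles}) \le 2c_{\mathrm{pure}}$ to the path inequality and \emph{then} invoke the master inequality, giving
\[
2(\#\text{chains}) \;\le\; |P| + \sum_j(d_j-2) + 2c_{\mathrm{pure}} + 2V_v \;\le\; 2r + 2V_v,
\]
so $\#\text{chains} \le r + V_v \le 3r$. The paper reaches the same $3r$ by a slightly different route: it proves $|F(V_k)| > \dim(V_k)$ for every component $V_k$ of $H'$ and then chains $\sum_k |F(V_k)| \le |F_{<1}| \le \dim(C^*_{<1}) + r \le \sum_k \dim(V_k) + |V| + r \le \sum_k \dim(V_k) + 3r$ to conclude $\ell \le 3r$ in one shot, without ever needing the exact value of $\dim(C^*_{<1})$ or the path/cycle dichotomy in the counting.
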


We will use the following geometric fact about extreme points of polyhedra:

\begin{fact}\label{fact_extremepoints}
	Let $\mathcal{P}$ be a polyhedron in $\mathbb{R}^n$. Then $x \in \mathbb{R}^n$ is an extreme point of $\mathcal{P}$ if and only if there exist $n$ linearly independent constraints of $\mathcal{P}$ that are tight at $x$. We call such a set of constraints a \emph{basis} for $x$.
\end{fact}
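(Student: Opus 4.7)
The plan is to prove both directions of this standard polyhedral fact, writing $\mathcal{P}=\{x\in\mathbb{R}^n \mid Ax\le b\}$ for some matrix $A$ and vector $b$, and letting $A'x\le b'$ denote the subsystem of constraints that are tight at $x$ (so $A'x=b'$).

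For the ($\Leftarrow$) direction, I would suppose that among the tight constraints at $x$ one can select $n$ that are linearly independent; call this subsystem $A''x=b''$. Since $A''$ has $n$ linearly independent rows, it is invertible, so $x$ is the unique solution to $A''x=b''$. Now if $x=\tfrac{1}{2}(y+z)$ for some $y,z\in\mathcal{P}$, then from $A''y\le b''$ and $A''z\le b''$ combined with $A''\cdot\tfrac{1}{2}(y+z)=b''$, I would argue componentwise that each inequality must be tight at both $y$ and $z$; hence $A''y=b''=A''z$, and uniqueness forces $y=z=x$. This shows $x$ is not a strict convex combination of two distinct points of $\mathcal{P}$, i.e., $x$ is an extreme point.

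For the ($\Rightarrow$) direction, I would argue the contrapositive: if no $n$ linearly independent constraints are tight at $x$, then $x$ is not extreme. Let $A'x=b'$ be \emph{all} the tight constraints at $x$; by assumption $\mathrm{rank}(A')<n$, so $\ker(A')$ contains some nonzero vector $d$. I would then consider the two perturbations $x^{\pm}=x\pm\varepsilon d$. By construction $A'x^{\pm}=b'$, so the tight constraints remain satisfied. For each non-tight constraint $a_i^\top x<b_i$ of $\mathcal{P}$, strict slack plus finitely many such constraints means that for all sufficiently small $\varepsilon>0$ one still has $a_i^\top x^{\pm}\le b_i$. Choosing such an $\varepsilon$, both $x^+$ and $x^-$ lie in $\mathcal{P}$, and $x=\tfrac{1}{2}(x^++x^-)$ with $x^+\ne x^-$, so $x$ is not an extreme point.

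The two directions together give the equivalence. There is no real obstacle here; the only minor care needed is in the ($\Rightarrow$) direction to ensure the perturbation $\varepsilon$ can be chosen uniformly small enough to preserve all non-tight inequalities simultaneously, which is fine because $\mathcal{P}$ is described by finitely many constraints.
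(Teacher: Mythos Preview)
The paper states this as a \emph{Fact} without proof; it is cited as a standard geometric result and no argument is given. Your proposal is a correct and complete proof of the equivalence, following exactly the textbook route (uniqueness from an invertible subsystem for $(\Leftarrow)$, and perturbation along a nonzero vector in the kernel of the tight-constraint matrix for $(\Rightarrow)$). There is nothing to compare against in the paper, and nothing to fix in your argument.
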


Theorem \ref{thm_chaindecomp} follows almost immediately from Theorem \ref{thm_chaingeneral}.

\begin{proof}[Proof of Theorem \ref{thm_chaindecomp}]
	Let $\bar{y}$ be an extreme point of $\lpreroute$ such that no $C_{part}$-, $C_{full}$-, or non-negativity constraint is tight, and suppose $\lpreroute$ satisfies the Distinct Neighbors Now consider the polytope:
	\[\mathcal{P} = \{y \in \mathbb{R}^F \mid y(F_j) =1\quad \forall j \in C^*,\, 0 \leq y \leq 1,\, Ay \leq b\},\]
	, where $Ay \leq b$ consists of the $r_1$ knapsack constraints and $r_2$ coverage constraints of $\lpreroute$. We claim that $\bar{y}$ is an extreme point of $\mathcal{P}$. To see this, note that $\bar{y}$ is an extreme point of $\lpreroute$, so fix any basis for $\bar{y}$ using tight constraints of $\lpreroute$. By assumption, this basis uses no $C_{part}$-, $C_{full}$-, or non-negativity constraint. In particular, it only uses constraints of $\lpreroute$ that are also present in $\mathcal{P}$, so this basis certifies that $\bar{y}$ is an extreme point of $\mathcal{P}$.
	
	Further, by Proposition \ref{prop_bipartite}, the set family $\{F_j \mid j \in C^*\}$ has bipartite intersection graph. Then we can apply Theorem \ref{thm_chaingeneral} to $\bar{y}$ and polytope $\mathcal{P}$, which gives the desired result.
\end{proof}

\subsection{Proof of Theorem \ref{thm_chaingeneral}}

Now we go back to prove the more general chain decomposition theorem: Theorem \ref{thm_chaingeneral}. For all missing proofs, see \S \ref{appendix_pathdecomp}. Throughout this section, let 
\[\mathcal{P} = \{y \in \mathbb{R}^F \mid y(F_j) =1\quad \forall j \in C^*,\, 0 \leq y \leq 1,\, Ay \leq b\}\]
be a polytope satisfying the properties of Theorem \ref{thm_chaingeneral}. In particular, the intersection graph of $\mathcal{F} = \{F_j \mid j \in C^*\}$ is bipartite. Further, let $\bar{y}$ be an extreme point of $\mathcal{P}$ such that no non-negativity constraint is tight for $\bar{y}$.

The crux of our proof is the next lemma, which allows us to bound the complexity of the intersection graph with respect to the number of side constraints $r$. We prove the lemma by constructing an appropriate basis for $\bar{y}$. The next definition is useful for constructing a basis.

\begin{definition}
	For any subset $C' \subset C^*$, let $dim(C')$ denote the maximum number of linearly independent $C'$-constraints, so the constraint set $\{y(F_j) = 1 \mid j \in C'\}$.
\end{definition}

\begin{lem}\label{lem_fractionalextreme}
	Let $\bar{y}$ be an extreme point of $\mathcal{P}$ such that no non-negativity constraint is tight. Then the number of fractional facilities in $\bar{y}$ satisfies $\lvert F_{<1} \rvert \leq dim(C^*_{<1}) + r$ (recall that $r$ is the number of constraints of $Ay \leq b$.)
\end{lem}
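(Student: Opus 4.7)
I would prove the lemma by bounding the dimension of the linear span of \emph{all} tight constraints at $\bar{y}$, and then invoking \Cref{fact_extremepoints} to relate this dimension to $|F|$.

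At $\bar{y}$ the tight constraints of $\mathcal{P}$ come in three families: (a) the client constraints $y(F_j) = 1$ for $j \in C^*$; (b) the upper box constraints $y_i \le 1$ for $i \in F_{=1}$; and (c) at most $r$ rows of $Ay \le b$. The lower box constraints $y_i \ge 0$ are not tight by hypothesis, so they contribute nothing.

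The key observation is that every client constraint indexed by $j \in C^*_{=1}$ is already in the span of family (b). Indeed, if $j \in C^*_{=1}$, then $F_j$ contains some integral $i$ with $\bar{y}_i = 1$. Combined with $\bar{y}(F_j)=1$ and the fact that $\bar{y}_{i'} > 0$ for every $i' \in F$ (no non-negativity constraint is tight), this forces $\bar{y}_{i'}=0$ for all $i' \in F_j \setminus \{i\}$, which is only consistent with $F_j = \{i\}$. Hence $y(F_j)=1$ is literally the box constraint $y_i=1$ from (b). So, beyond (b), the client constraints contribute at most $\dim(C^*_{<1})$ additional dimensions to the span.

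By subadditivity of dimension, the span of all tight constraints at $\bar{y}$ has dimension at most $\dim(C^*_{<1}) + |F_{=1}| + r$. Since $\bar{y}$ is an extreme point, \Cref{fact_extremepoints} says this dimension must equal $|F|$, giving
\[|F| \;\le\; \dim(C^*_{<1}) + |F_{=1}| + r,\]
which rearranges to $|F_{<1}| = |F| - |F_{=1}| \le \dim(C^*_{<1}) + r$, as required.

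I do not expect any serious obstacle here: the argument is a short dimension count, and the only place one must be careful is in the reduction of $C^*_{=1}$ constraints to singleton box constraints, which crucially uses the hypothesis that no non-negativity constraint is tight. All other steps are standard linear-algebraic subadditivity and the characterization of extreme points via tight constraints.
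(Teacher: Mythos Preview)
Your proof is correct and follows essentially the same approach as the paper: both arguments count dimensions by starting from the tight upper-box constraints on $F_{=1}$, observing that every $C^*_{=1}$-constraint reduces to a singleton box constraint (using that no $y_i=0$), and concluding that the remaining $|F_{<1}|$ dimensions must be spanned by $C^*_{<1}$-constraints together with the $r$ side constraints. The only cosmetic difference is that the paper phrases this as explicitly building a basis, while you phrase it as bounding the span of all tight constraints; the content is identical.
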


Now, to find a chain decomposition of $C^*_{<1}$, first we find the violating clients. We note that every $F$-ball contains at least two facilities. The violating clients will be those clients whose $F$-balls contain strictly more than two facilities, so we let $V = \{j \in C^*_{<1} \mid \lvert F_j \rvert > 2 \}$ be the set of violating clients. It remains to bound the size of $V$, which follows from a standard counting argument.

\begin{prop}\label{prop_violating}
	$\lvert V \rvert \leq 2r$
\end{prop}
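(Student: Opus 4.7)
The plan is a short double-counting argument, combining the bipartite intersection-graph hypothesis with the dimension bound from Lemma \ref{lem_fractionalextreme}.

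First I would observe that when the intersection graph of $\mathcal{F}=\{F_j\mid j\in C^*\}$ is bipartite, every facility $i\in F$ lies in at most two sets $F_j$ with $j\in C^*$: three such sets containing $i$ would form a triangle on their indices, contradicting bipartiteness. Now count the bipartite incidence set $E=\{(i,j)\mid j\in C^*_{<1},\ i\in F_j\}$. Since $j\in C^*_{<1}$ means every facility in $F_j$ is fractional, each $(i,j)\in E$ has $i\in F_{<1}$. By the previous observation $i$ appears in at most two pairs, so $|E|\le 2|F_{<1}|$.

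Next, lower-bound $|E|$ using the definition of $V$. For any $j\in C^*_{<1}$, the tight equality $\bar y(F_j)=1$ together with $\bar y_i<1$ for all $i\in F_j$ forces $|F_j|\ge 2$; for $j\in V$ we even have $|F_j|\ge 3$ by definition. Summing,
\[
|E|\;=\;\sum_{j\in C^*_{<1}}|F_j|\;\ge\;2\bigl(|C^*_{<1}|-|V|\bigr)+3|V|\;=\;2|C^*_{<1}|+|V|.
\]
Combining the two bounds gives $|V|\le 2(|F_{<1}|-|C^*_{<1}|)$, and then Lemma \ref{lem_fractionalextreme} plus the trivial inequality $\dim(C^*_{<1})\le |C^*_{<1}|$ yields
\[
|V|\;\le\;2\bigl(\dim(C^*_{<1})+r-|C^*_{<1}|\bigr)\;\le\;2r,
\]
as desired.

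I do not expect any serious obstacle: the only subtlety is the invocation of bipartiteness to cap each facility's degree at two in the $\{F_j\mid j\in C^*\}$ family, which has to be argued directly here rather than cited from Proposition \ref{prop_degree} (which was stated under the Distinct Neighbors Property). Everything else is elementary double counting plus Lemma \ref{lem_fractionalextreme}.
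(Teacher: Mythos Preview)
Your proof is correct and follows essentially the same double-counting argument as the paper: both bound $\sum_{j\in C^*_{<1}}|F_j|\le 2|F_{<1}|$ via the degree-$2$ property of facilities, then combine with Lemma \ref{lem_fractionalextreme} and $\dim(C^*_{<1})\le|C^*_{<1}|$. Your direct argument that bipartiteness caps each facility's degree at two is in fact slightly more careful than the paper, which invokes Proposition \ref{prop_degree} (stated under the Distinct Neighbors Property) in the more general bipartite setting of Theorem \ref{thm_chaingeneral}.
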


Now that we have decided on the violating clients, it remains to partition $C^*_{<1} \setminus V$ into the desired chains. Importantly, for all $j \in C^*_{<1} \setminus V$, we have $\lvert F_j \rvert = 2$. To find our chains, we consider the intersection graph of $C^*_{<1}$, so the intersection graph of the set family $\{F_j \mid j \in C^*_{<1}\}$. We let $G$ denote this graph. Note that $G$ is a subgraph of the standard intersection graph, so it is also bipartite by assumption.

We consider deleting the vertices $V$ from $G$, which breaks $G$ into some connected components, say $H_1, \dots, H_\ell$. Let $V_k$ denote the vertex set of $H_k$, so we have that $V_k \cup \dots \cup V_k$ partitions $C^*_{<1} \setminus V$. Further, for all $k \in [\ell]$, every $F$-ball for clients in $V_k$ contains exactly two facilities, and every facility is in at most two $F$-balls. Translating these statements into properties of the intersection graph, we can see that every vertex of $H_k$ has degree at most two, and $H_k$ is connected, so we can conclude that each $H_k$ is a path or even cycle (we eliminate the odd cycle case because the intersection graph is bipartite.)

\begin{prop}\label{prop_compchain}
	Each $V_k$ is a chain.
\end{prop}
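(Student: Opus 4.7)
\medskip
\noindent\textbf{Proof Proposal for Proposition \ref{prop_compchain}.}

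The plan is to verify the two defining properties of a chain from Definition~\ref{def_chain} by exhibiting the right orderings of the vertices of each component $H_k$. The first property, $|F_j|=2$ for every $j\in V_k$, is immediate from the construction of $V$: we defined $V=\{j\in C^*_{<1}\mid |F_j|>2\}$ and $V_k\subseteq C^*_{<1}\setminus V$, and by definition of $\lpreroute$ every $F$-ball contains at least two facilities. So every $j\in V_k$ has $|F_j|=2$, and I get the first bullet of Definition~\ref{def_chain} for free.

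For the second bullet, I want to show that $V_k$ can be enumerated $(j_1,\dots,j_p)$ with $F_{j_q}\cap F_{j_{q+1}}\neq\emptyset$ for each $q\in[p-1]$. First I would observe that in the intersection graph $G$ restricted to $C^*_{<1}$, every vertex $j\in V_k$ has degree at most two: since $|F_j|=2$ and by Proposition~\ref{prop_degree} each facility lies in at most two $F$-balls for clients of $C^*$, there are at most two neighbors of $j$ in $G$ (and hence at most two in $H_k$). So $H_k$ is a connected graph with maximum degree two, which forces it to be either a simple path or a simple cycle. Moreover, by Proposition~\ref{prop_bipartite}, the intersection graph of $\{F_j\mid j\in C^*\}$ is bipartite, so $G$ is bipartite and $H_k$ cannot be an odd cycle; any cycle component must have even length.

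In either case I can enumerate the vertices of $V_k$ along the path, or around the cycle starting from any chosen vertex and edge, as $(j_1,\dots,j_p)$. Consecutive pairs $(j_q,j_{q+1})$ are adjacent in $H_k$, which by definition of the intersection graph means $F_{j_q}\cap F_{j_{q+1}}\neq\emptyset$. Combined with $|F_{j_q}|=2$ from the first step, this matches Definition~\ref{def_chain} and shows $V_k$ is a chain. The main subtlety I anticipate is simply making sure that the degree bound uses Proposition~\ref{prop_degree} correctly (facilities in two $F$-balls of clients in $C^*$, not just $C^*_{<1}$, which only makes the degree smaller in $H_k$) and that the cycle-versus-path distinction is handled uniformly by picking an arbitrary starting edge to linearize the sequence.
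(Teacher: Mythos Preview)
Your proposal is correct and follows essentially the same approach as the paper. The paper establishes the path/even-cycle structure of each $H_k$ in the discussion immediately preceding the proposition (using the same degree-two argument via $|F_j|=2$ and Proposition~\ref{prop_degree}), and then its proof of the proposition simply orders $V_k$ along the path or around the cycle and checks the two bullets of Definition~\ref{def_chain}; your write-up just folds that preceding discussion into the proof itself. One small point: the lower bound $|F_j|\geq 2$ is not ``by definition of $\lpreroute$'' but rather follows because $j\in C^*_{<1}$ forces every $i\in F_j$ to have $\bar{y}_i<1$ while $\bar{y}(F_j)=1$ and no non-negativity constraint is tight---this is exactly what the paper notes just before defining $V$.
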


To complete the proof, it remains to upper bound the number of chains. To do this, we first split the inequality given by Lemma \ref{lem_fractionalextreme} into the contribution by each $H_k$. Importantly, we observe that the $F(V_k)$'s are disjoint for all $k$ because the $V_k$'s correspond to distinct connected components. Then we can write:
\[\sum\limits_{k \in [\ell]} \lvert F(V_k) \rvert \leq \lvert F_{<1} \rvert \leq dim(C^*_{<1}) + r \leq \sum\limits_{k \in [\ell]} dim(V_k) + dim(V) + r \leq \sum\limits_{k \in [\ell]} dim(V_k) + 3r.\]
The way to interpret this inequality is that each chain, $V_k$, has a budget of $dim(V_k)$ fractional facilities to use in its chain, but we have an extra $3r$ facilities to pay for any facilities beyond each $V_k$'s allocated budget. We will show that each chain uses at least one extra facility from this $3r$ surplus, which allows us to upper bound $\ell$ by $3r$.

For the path components, we use the fact that every path has one more vertex than edge.

\begin{prop}\label{prop_path}
	If $H_k$ is a path, then $\lvert F(V_k) \rvert > dim(V_k)$.
\end{prop}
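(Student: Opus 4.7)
The plan is to bound $\lvert F(V_k) \rvert$ by a direct count on the bipartite client--facility incidence graph and compare the result to the trivial bound $dim(V_k) \leq p$, where $p := \lvert V_k \rvert$. First I will form the bipartite graph $B$ with parts $V_k$ and $F(V_k)$, putting an edge $(j,i)$ whenever $i \in F_j$. Since each $j \in V_k \subseteq C^*_{<1} \setminus V$ satisfies $\lvert F_j \rvert = 2$, every client has $B$-degree exactly $2$, and by Proposition \ref{prop_degree} every facility has $B$-degree at most $2$. Summing degrees on both sides yields $\lvert F(V_k) \rvert \geq p$.

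The next step is to upgrade this to $\lvert F(V_k) \rvert \geq p+1$ whenever $p \geq 3$. I argue by contradiction: if equality $\lvert F(V_k) \rvert = p$ held, every facility would have $B$-degree exactly $2$, so $B$ would be a $2$-regular bipartite graph. Since $H_k$ is connected on $V_k$, any two clients are linked by a sequence of shared facilities, hence $B$ is connected; a connected $2$-regular bipartite graph is a single even cycle, so $B$ would be a $2p$-cycle of the form $j_1 - i_1 - j_2 - i_2 - \cdots - j_p - i_p - j_1$. Reading off which clients share a facility, the intersection graph of $V_k$ would then contain the $p$-cycle $j_1 - j_2 - \cdots - j_p - j_1$, which for $p \geq 3$ contradicts $H_k$ being a simple path.

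Finally I will dispatch the tiny cases. If $p = 1$ then trivially $\lvert F(V_k) \rvert = 2 > 1 \geq dim(V_k)$. If $p = 2$, either the two $F$-balls share exactly one facility, giving $\lvert F(V_k) \rvert = 3 > 2 \geq dim(V_k)$, or they coincide, $F_{j_1} = F_{j_2}$, in which case $\lvert F(V_k) \rvert = 2$ but the two constraints $y(F_{j_1}) = 1$ and $y(F_{j_2}) = 1$ are identical, so $dim(V_k) = 1 < 2$. Combined with the $p \geq 3$ count and the trivial $dim(V_k) \leq p$, this gives $\lvert F(V_k) \rvert > dim(V_k)$ in every case.

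The main obstacle is the degenerate $p = 2$ situation with $F_{j_1} = F_{j_2}$, where the ``connected $2$-regular bipartite graph yields a cycle in the intersection graph'' heuristic breaks down (the bipartite $4$-cycle corresponds to a single edge in the intersection graph, still a path); I handle it by exploiting the collapse of the two identical client constraints to drop $dim(V_k)$ strictly below $\lvert V_k \rvert$.
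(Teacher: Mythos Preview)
Your proof is correct and follows essentially the same approach as the paper: both arguments rest on degree-counting in the client--facility incidence (each client touches exactly two facilities, each facility at most two clients) together with the observation that equality in the count forces a cycle rather than a path, and both handle the degenerate $p=2$ case with $F_{j_1}=F_{j_2}$ by noting the two constraints coincide so $dim(V_k)=1$. The only difference is organizational: the paper splits on whether some consecutive pair satisfies $|F_{j_q}\cap F_{j_{q+1}}|=2$ and then directly computes $|F(V_k)|=p+1$ in the non-degenerate case, whereas you split on $p$ and reach $|F(V_k)|\geq p+1$ for $p\geq 3$ by contradiction.
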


For the even cycle components, we show that the corresponding $C^*$-constraints are not linearly independent.

\begin{prop}\label{prop_cycle}
	If $H_k$ is an even cycle, then $\lvert F(V_k) \rvert > dim(V_k)$.
\end{prop}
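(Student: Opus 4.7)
\textbf{Proof plan for Proposition \ref{prop_cycle}.} The plan is to prove the inequality by first computing $|F(V_k)|$ exactly in terms of $|V_k|$, and then exhibiting an explicit linear dependence among the $V_k$-constraints that exploits the \emph{evenness} of the cycle. The punchline will be $|F(V_k)| = |V_k|$ but $\dim(V_k) \leq |V_k| - 1$.

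First I would count facilities. Since $V_k \subseteq C^*_{<1} \setminus V$, every client $j \in V_k$ has $|F_j| = 2$. Moreover, by Proposition \ref{prop_degree} (and the fact that we removed violating clients so facilities supported on a client in $V_k$ are not split with $V$), every facility in $F(V_k)$ lies in at most two $F$-balls for clients in $C^*$, and in particular at most two $F$-balls for clients in $V_k$. Write $V_k = \{j_1, j_2, \dots, j_p\}$ in cyclic order along $H_k$, where $p$ is even. Since $H_k$ is a cycle in the intersection graph, each consecutive pair $F_{j_q}, F_{j_{q+1}}$ (indices mod $p$) shares at least one facility, giving $p$ distinct edges in $H_k$. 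A double-counting argument on facility-incidences ($\sum_{j \in V_k} |F_j| = 2p$, with each facility contributing at most $2$) forces $|F(V_k)| = p = |V_k|$, and furthermore forces each shared facility $i_q \in F_{j_q} \cap F_{j_{q+1}}$ to be distinct, so that $F_{j_q} = \{i_{q-1}, i_q\}$ with indices taken mod $p$.

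Next I would show the constraints $\{y(F_{j_q}) = 1 : q \in [p]\}$ are linearly dependent. Consider the alternating combination
\[
\sum_{q=1}^{p} (-1)^q \bigl( y_{i_{q-1}} + y_{i_q} \bigr) = \sum_{q=1}^{p} (-1)^q.
\]
On the left side, each facility variable $y_{i_q}$ appears exactly twice, once with coefficient $(-1)^q$ (from the constraint for $j_q$) and once with coefficient $(-1)^{q+1}$ (from the constraint for $j_{q+1}$), so everything cancels and the left side is identically $0$. On the right side, since $p$ is even, the alternating sum of $p$ ones is also $0$. Thus the equation $0 = 0$ is a non-trivial linear combination of the $V_k$-constraints (all coefficients are $\pm 1$), witnessing linear dependence. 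Hence $\dim(V_k) \leq p - 1$.

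Combining the two steps gives $|F(V_k)| = p > p - 1 \geq \dim(V_k)$, as desired. The only subtlety is making sure the counting step really gives equality $|F(V_k)| = p$ and that no facility outside the cycle's edge set can appear in some $F_{j_q}$; this is where $j_q \notin V$ (so $|F_{j_q}| = 2$) and Proposition \ref{prop_degree} (each facility in at most two $F$-balls in $C^*$) together pin down the structure. The linear dependence step itself is straightforward once the parity of $p$ is in hand, which is guaranteed because the intersection graph is bipartite.
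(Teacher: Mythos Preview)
Your proposal is correct and follows essentially the same approach as the paper: first establish $|F(V_k)| = |V_k|$ by counting facility--client incidences along the cycle, then exhibit a nontrivial linear dependence among the constraints $\{y(F_j) = 1 : j \in V_k\}$ using alternating $\pm 1$ coefficients. The paper phrases the dependence via the bipartition $L \cup R$ of the even cycle (coefficients $+1$ on $L$, $-1$ on $R$), which is exactly your alternating combination in cyclic order.
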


Applying the above two propositions, we complete the proof by bounding $\ell$:
\[\sum\limits_{k \in [\ell]} dim(V_k) + 3r \geq \sum\limits_{k \in [\ell]} \lvert F(V_k) \rvert \geq \sum\limits_{k \in [\ell]} dim(V_k) + \ell \Rightarrow 3r \geq \ell\]

\bibliographystyle{alpha}
\bibliography{ref}
%\nocite{*}

\appendix

\section{Missing Proofs from  \S \ref{sec_iteroverview}: Construction of $\lpreroute$}\label{sec_appendix_LP}

\begin{proof}[Proof of Proposition \ref{prop_LPfacil}]
	Let $\mathcal{I}$ be the given instance of GKM and $(x^*, y^*)$ be an optimal solution to $\lpbasic$.
	
	Observe that if $x^*_{ij} \in \{0, y^*_i\}$ for all $i \in F, j \in C$, then we can define $F_j = \{i \in F \mid x^*_{ij} > 0\}$ for all $j \in C$. It is easy to verify in this case that $y^*$ is feasible for $\lpfacil$ and achieves the same objective value in $\lpfacil$ as $(x^*, y^*)$ achieves in $\lpbasic$, which completes the proof.
	
	Thus our goal is to duplicate facilities in $F$ and re-allocate the $x$- and $y$-values appropriately until $x^*_{ij} \in \{0, y^*_i\}$ for all $i \in F, j \in C$. To prevent confusion, let $F$ denote the original set of facilities, and let $F'$ denote the modified set of facilities, where make $n = \lvert C \rvert$ copies of each facility in $F$, so for each $i \in F$, we have copies $i_1, \dots, i_n \in F'$.
	
	Now we define $x' \in [0,1]^{F' \times C}$ and $y' \in [0,1]^{F'}$ with the desired properties. For each $i \in F$, we assume without loss of generality that $0 \leq x_{i1} \leq x_{i2} \leq \dots \leq x_{in} \leq y_i$. We define $x'_{i_1 1}, \dots, x'_{i_n n}$ and $y'_{i_1}, \dots, y'_{i_n}$ recursively:
	
	 Let $y'_{i_1} = x_{i1}$ and $x'_{i_1 j} = x_{ij}$ for all $j \in [n]$.
	 
	 Now for $k > 1$, let $y'_{i_k} = x_{ik} - x_{i(k-1)}$ and $x'_{i_k j} = \begin{cases} 0 &, j < k\\
	 y'_{i_k} &, j \geq k
	 \end{cases}$\quad for all $j \in [n]$.
	 
	 It is easy to verify that $(x',y')$ is feasible for $\lpbasic$ (after duplicating facilities) and $x'_{ij} \in \{0, y'_i\}$ for all $i \in F', j \in C$, as required. Further, it is clear that this algorithm is polynomial time.
\end{proof}

\begin{proof}[Proof of Proposition \ref{lem_disc}]
    If $d(p,q) = 0$, then the claim is trivial. Suppose $d(p,q) \geq 1$. We can rewrite $d(p,q) = \tau^{\ell + f}$ for some $\ell \in \mathbb{N}, f \in [0,1)$. Also, for convenience we define $\beta = \log_\tau \alpha$. Because $\log_e \alpha$ is uniformly distributed in $[0, \log_e \tau)$, it follows that $\beta$ is uniformly distributed in $[0,1)$.

	It follows, $d(p,q)$ is rounded to $\alpha \tau^\ell = \tau^{\ell + \beta}$ exactly when $\beta \geq f$, and otherwise $d(p,q)$ is rounded to $\tau^{\ell + \beta + 1}$ when $\beta < f$. Thus we compute:
	
	\begin{align*}
		\mathbb{E}[d'(p,q)] &= \int_{\beta = 0}^f \tau^{\ell + \beta + 1} ~d\beta + \int_{\beta = f}^1 \tau^{\ell + \beta} ~d\beta\\
		&= \frac{1}{\log_e \tau} (\tau^{\ell + \beta + 1}\rvert_{\beta = 0}^f + \tau^{\ell + \beta}\rvert_{\beta = f}^1)\\
		&= \frac{1}{\log_e \tau} (\tau^{\ell + f + 1} - \tau^{\ell + 1} + \tau^{\ell + 1} - \tau^{\ell + f})\\
		&= \frac{1}{\log_e \tau} (\tau^{\ell + f + 1}  - \tau^{\ell + f})\\
		&= \frac{\tau - 1}{\log_e \tau} d(p,q).
	\end{align*}	
\end{proof}

\begin{proof}[Proof of Proposition \ref{prop_bipartite}]
	Assume for contradiction that the intersection graph of $\mathcal{F}$ is not bipartite, so there exists an odd cycle, say $j_1 \rightarrow \dots \rightarrow j_\ell \rightarrow j_1$ such that each vertex $j_1, \dots, j_\ell \in C^*$. Further, along each edge $j_k \rightarrow j_{k+1}$, we have $F_{j_k} \cap F_{j_{k+1}} \neq \emptyset$, so $\ell_{j_k}$ and $\ell_{j_{k+1}}$ differ by exactly one. In particular, the radius level can either increase by one or decrease by one along each edge.
	
	Consider traversing the cycle starting from $j_1$ all the way to $j_\ell$ and then back to $j_1$, and count the number of increases and decreases along the way. The number of increases and decreases must be equal when we return to $j_1$, but this cycle has an odd number of edges, so the number of increases and decreases cannot be the same. This is a contradiction.
\end{proof}

\begin{proof}[Proof of Proposition \ref{prop_degree}]
	Assume for contradiction that there exists a facility $i$ such that $i \in F_{j_1} \cap F_{j_2} \cap F_{j_3}$ for distinct clients $j_1, j_2, j_3 \in C^*$. Then the intersection graph of $C^*$ contains an odd cycle $j_1 \rightarrow j_2 \rightarrow j_3 \rightarrow j_1$. This contradicts the fact that the intersection graph is bipartite.
\end{proof}

\begin{proof}[Proof of Lemma \ref{lem_makeLPbasic}]
	Our algorithm is to first run the algorithm guaranteed by Lemma \ref{prop_LPfacil} to obtain $\lpfacil$ and the $F$-balls such that $Opt(\lpfacil) \leq Opt(\mathcal{I})$. Then we follow the construction in \S \ref{sec_iteroverview}- that is, we randomly discretize the distances to  obtain $d'$, define the $F$- and $B$- balls and radius levels, and initialize $C_{part} = C$, $C_{full} = \emptyset$ and $C^* = \emptyset$. This completes the description of $\lpreroute$.
	
	By Proposition \ref{prop_LPreroute}, we have $\mathbb{E}[Opt(\lpreroute)] \leq \frac{\tau - 1}{\log_e \tau} Opt(\lpfacil) \leq \frac{\tau - 1}{\log_e \tau} Opt(\mathcal{I})$, as required. Finally, it is easy to check that $\lpreroute$ satisfies all Basic Invariants.
\end{proof}

\section{Missing Proofs from \S \ref{sec_iter} - \ref{sec_pseudoalg}: Analysis of \mainalg}\label{appendix_pseudoanalysis}

In this section, we present all missing proofs from the analyses of \mainalg~ and its sub-routines \iteralg~ and \altreroute.

\subsection{Analysis of \iteralg}\label{sec_iteranalysis}

The goal of this section is to prove Theorem \ref{thm_mainiter}. First we show that \iteralg~ maintains all Basic Invariants. It is easy to see that the first three Basic Invariants are maintained by \iteralg, so we only prove the last two.

\begin{lem}[Basic Invariant \ref{invar_basic}(\ref{invar_ell})]\label{lem_iterlevels}
	\iteralg~ maintains the invariant that $\ell_j \geq -1$ for all $j \in C$.
\end{lem}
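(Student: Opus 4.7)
The plan is to do induction on the iterations of \iteralg. The base case is immediate: the input $\lpreroute$ satisfies all Basic Invariants by hypothesis, in particular Basic Invariant \ref{invar_basic}(\ref{invar_ell}), so $\ell_j \geq -1$ for all $j \in C$ at the start.

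For the inductive step, I would scan the three branches of \iteralg~ to locate every place that $\ell_j$ can change. Deleting a zero facility does not touch radius levels; moving $j$ from $C_{part}$ to $C_{full}$ and invoking \itersub~ only changes the membership of clients in $C_{part}, C_{full}, C^*$ but not $\ell_j$. The only modification of a radius level occurs in the third branch, where for some $j \in C_{full}$ with $\bar{y}(B_j) \leq 1$ tight we set $F_j \gets B_j$ and decrement $\ell_j$.

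So it suffices to show that whenever this decrement fires, we have $\ell_j \geq 0$ beforehand. Here I use the definition of the inner ball together with the convention $L(-2) = -1$: if $\ell_j = -1$, then
\[B_j \;=\; \{i \in F_j \mid d'(j,i) \leq L(\ell_j - 1)\} \;=\; \{i \in F_j \mid d'(j,i) \leq -1\} \;=\; \emptyset,\]
since all discretized distances are non-negative. Consequently $\bar{y}(B_j) = 0$, so the tightness condition $\bar{y}(B_j) = 1$ cannot hold, and the third branch is not entered for such $j$. Hence whenever we decrement $\ell_j$ we had $\ell_j \geq 0$, and afterward $\ell_j \geq -1$, preserving the invariant.

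There is no real obstacle; the only subtlety is keeping track of the artificial value $L(-2) = -1$ introduced precisely so that the base case $\ell_j = -1$ makes $B_j$ empty, which rules out the tight-inner-ball branch from ever decreasing $\ell_j$ below $-1$.
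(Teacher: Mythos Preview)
Your proof is correct and follows essentially the same approach as the paper's: both argue that the only place $\ell_j$ is decremented is the third branch of \iteralg, and that if $\ell_j = -1$ then $B_j = \emptyset$ (since $L(-2) = -1$ and discretized distances are nonnegative), so the tightness condition $\bar{y}(B_j) = 1$ cannot hold and the decrement never fires. Your write-up is slightly more explicit in framing the argument as induction over iterations and in scanning all branches, but the substance is identical.
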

\begin{proof}
	Consider any $j \in C$. Suppose the invariant holds at the beginning of \iteralg, so initially we have $\ell_j \geq -1$. Note that a necessary condition for decreasing $\ell_j$ is that $\bar{y}(B_j) \leq 1$ is tight at some iteration of \iteralg, and in this case we decrease $\ell_j$ by one.
	
	Suppose $\ell_j = -1$. Then $L(\ell_j - 1) = -1$, so $B_j = \emptyset$. Thus it cannot be the case that  $\bar{y}(B_j) \leq 1$ is tight. We conclude that for all $j \in C$, we never decrease $\ell_j$ beyond negative one.
\end{proof}

\begin{lem}[Basic Invariant \ref{invar_basic}(\ref{invar_neighbor}): Distinct Neighbors Property]\label{lem_iterneighbors}
	\iteralg~ maintains the Distinct Neighbors Property.
\end{lem}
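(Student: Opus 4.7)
The plan is to induct on the iterations of \iteralg, assuming the Distinct Neighbors Property holds at the start of an iteration and verifying it holds after each possible update. The only changes made by \iteralg are: (i) deleting a facility $i$ with $\bar{y}_i = 0$; (ii) moving some $j \in C_{part}$ to $C_{full}$ when $\bar{y}(F_j) \leq 1$ is tight, followed by a call to \itersub$(j)$; and (iii) for $j \in C_{full}$ with tight $\bar{y}(B_j) \leq 1$, shrinking $F_j$ to $B_j$ and decrementing $\ell_j$, followed by a call to \itersub$(j)$. Since the property only constrains pairs in $C^*$, it suffices to examine modifications that alter $C^*$ itself or alter the $F$-balls/radius levels of clients currently in $C^*$.

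Update (i) does not change $C^*$ or any $\ell_{j}$, and it can only shrink intersections $F_{j_1} \cap F_{j_2}$; if an intersection becomes empty there is nothing to verify, and if it remains non-empty the inductive value $|\ell_{j_1} - \ell_{j_2}| = 1$ is preserved. Update (iii) performs its shrink on a client $j \in C_{full}$, which by Basic Invariant~\ref{invar_basic}(\ref{invar_partition}) is disjoint from $C^*$; hence no $F$-ball or $\ell$-value of a $C^*$-client is touched before \itersub$(j)$ is called. Thus in every case the only remaining verification is the behavior of \itersub.

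For \itersub$(j)$, if the guard at line~1 fails, $C^*$ is not modified and the property is immediate. Otherwise \itersub adds $j$ to $C^*$, and by the guard $\ell_j \leq \ell_{j'} - 1$ for every $j' \in C^*$ with $F_j \cap F_{j'} \neq \emptyset$, so $\ell_{j'} \geq \ell_j + 1$. The subsequent loop then ejects from $C^*$ every such $j'$ with $\ell_{j'} \geq \ell_j + 2$. Consequently, for every $j'$ that remains in $C^*$ with $F_j \cap F_{j'} \neq \emptyset$ we have exactly $\ell_{j'} = \ell_j + 1$, i.e., $|\ell_j - \ell_{j'}| = 1$. Pairs of $C^*$-clients not involving $j$ retain all of their data (since neither their $F$-balls nor their $\ell$-values are touched), so the inductive hypothesis carries them through unchanged.

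I do not anticipate a serious obstacle: the argument is essentially bookkeeping driven by the guard condition in \itersub. The one subtlety to watch for is in case (iii), where $\ell_j$ is decremented \emph{before} \itersub is invoked; one must check that at that moment $j \in C_{full}$ and not $C^*$, so the decrement itself cannot violate the property, and the subsequent \itersub call is then handled exactly as above.
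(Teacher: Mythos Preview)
Your proof is correct and follows essentially the same approach as the paper's: both argue that the only non-trivial step is the call to \itersub, and then verify that the guard condition together with the subsequent ejection of clients with $\ell_{j'} \geq \ell_j + 2$ forces every remaining $C^*$-neighbor of $j$ to sit at level exactly $\ell_j + 1$. Your version is in fact somewhat more careful than the paper's, since you explicitly dispose of facility deletion (which can only shrink intersections) and the shrink in case~(iii) (which acts on a $C_{full}$-client, not a $C^*$-client), whereas the paper simply asserts that ``the only time $C^*$ is modified is when \itersub\ is called'' and proceeds directly to the \itersub\ analysis.
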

\begin{proof}
	It suffices to show that \itersub~ maintains the Distinct Neighbors Property, because in \iteralg, the only time $C^*$ is modified is when \itersub~ is called. Thus we consider an arbitrary call to $\itersub(j)$. To prevent confusion, let $C^*$ denote the status of $C^*$ before the call to \itersub. If we do not move $j$ from $C_{full}$ to $C^*$, then the invariant is clearly maintained, so we may assume that we move $j$ from $C_{full}$ to $C^*$.
	
    Then for all $j' \in C^*$ such that $F_j \cap F_{j'} \neq \emptyset$, we have $\ell_{j'} \geq \ell_j + 1$. Finally, after adding $j$ to $C^*$, we remove all such $j'$ with $\ell_{j'} \geq \ell_j + 2$, so for all remaining clients in $C^*$ whose $F$-balls intersect $j$'s, their radius level is exactly one larger than $j$'s.
\end{proof}

To show that \iteralg~ weakly decreases $Opt(\lpreroute)$, it suffices to show that each iteration weakly decreases $Opt(\lpreroute)$. We show that in any iteration of \iteralg, the $\bar{y}$ computed at the beginning of the iteration is still feasible after we update $\lpreroute$ in that iteration. Then, we show that $\bar{y}$ achieves the same objective value before and after the updates.

\begin{lem}\label{lem_iterdecrease}
    Each iteration of \iteralg~ weakly decreases $Opt(\lpreroute)$.
\end{lem}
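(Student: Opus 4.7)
The strategy is to show that the extreme point $\bar{y}$ computed at the start of the iteration remains feasible for the updated $\lpreroute$ and achieves the same objective value in both the old and new formulations. Since the updated $\lpreroute$ has $\bar{y}$ as a feasible solution of cost at most $Opt(\lpreroute)$, its new optimum cannot exceed the old one. I will break into the three cases corresponding to the three possible updates in \iteralg, and then separately handle the subsequent call to \itersub.

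For \textbf{Case 1} (deleting a facility $i$ with $\bar{y}_i = 0$), feasibility and objective invariance are immediate because $i$'s value is zero in every term of every constraint and in the objective. For \textbf{Case 2} (moving $j$ from $C_{part}$ to $C_{full}$ when $\bar{y}(F_j) \leq 1$ is tight), the constraint $y(F_j) \leq 1$ is dropped and replaced by $y(B_j) \leq 1$, which $\bar{y}$ satisfies since $\bar{y}(B_j) \leq \bar{y}(F_j) = 1$; the coverage constraint is still satisfied since both its LHS and RHS decrease by exactly $a_j$. The key computation is that $j$'s contribution to the objective is unchanged: since every $i \in F_j \setminus B_j$ has $d'(j,i) = L(\ell_j)$ (by discretization, a facility in $F_j$ but not in $B_j$ lies at distance exactly $L(\ell_j)$), we get
\[
\sum_{i \in F_j} d'(i,j)\bar{y}_i = \sum_{i \in B_j} d'(i,j)\bar{y}_i + L(\ell_j)(\bar{y}(F_j) - \bar{y}(B_j)) = \sum_{i \in B_j} d'(i,j)\bar{y}_i + (1 - \bar{y}(B_j))L(\ell_j),
\]
which matches the new $C_{full}$-form exactly. \textbf{Case 3} (shrinking $F_j$ for $j \in C_{full}$ when $\bar{y}(B_j) \leq 1$ is tight) is analogous: using new $\ell_j = $ old $\ell_j - 1$, new $F_j =$ old $B_j$, and the fact that every facility in new $F_j$ $\setminus$ new $B_j$ has $d'(j,i) = L($new $\ell_j)$, the same decomposition shows that $j$'s contribution is unchanged, while all other constraints and objective terms are unaffected.

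For the subsequent call to \itersub$(j)$, I will note that the objective formula for a client in $C_{full}$ is identical to that for a client in $C^*$, so any movement of clients between $C_{full}$ and $C^*$ leaves the objective value at $\bar{y}$ unchanged. For feasibility, moving $j$ into $C^*$ requires $\bar{y}(F_j) = 1$, which holds because in Case 2 the $C_{part}$-constraint $\bar{y}(F_j) \leq 1$ was tight, and in Case 3 the constraint $\bar{y}(B_j) \leq 1$ was tight and new $F_j = $ old $B_j$. Conversely, moving some $j' \in C^*$ back to $C_{full}$ only replaces the equality $y(F_{j'}) = 1$ by the inequality $y(B_{j'}) \leq 1$, which is satisfied since $B_{j'} \subseteq F_{j'}$ and $\bar{y}(F_{j'}) = 1$.

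The main obstacle is the bookkeeping in Cases 2 and 3: one must use the discretization carefully to argue that facilities in $F_j \setminus B_j$ contribute exactly $L(\ell_j) \cdot \bar{y}_i$ to the objective, so that the ``partial'' contribution formula of a $C_{full}$/$C^*$-client agrees with the original $C_{part}$-contribution (or the contribution under the old radius level). Once this identity is in hand, feasibility of $\bar{y}$ for the updated LP and the objective equality follow directly, and the lemma is concluded by the standard argument that $Opt$ of the updated LP is at most the value of any feasible point.
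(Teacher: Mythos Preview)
Your proposal is correct and follows essentially the same approach as the paper's own proof: show that $\bar{y}$ remains feasible for the updated $\lpreroute$ with unchanged objective value, by case analysis on the three iterative operations. Your write-up is in fact slightly more thorough than the paper's, since you explicitly justify the key identity via the discretization (facilities in $F_j\setminus B_j$ have $d'(j,i)=L(\ell_j)$), you check the coverage constraint directly, and you separately verify that the subsequent \itersub\ call---including the possible movement of other clients $j'$ from $C^*$ back to $C_{full}$---preserves both feasibility and the objective value at $\bar{y}$.
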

\begin{proof}
	Consider any iteration of \iteralg. Let $\bar{y}$ be the optimal solution to $\lpreroute$ computed at the beginning of the iteration. There are three possible modifications we can make to $\lpreroute$ in this iteration: remove a facility from $F$, move a client from $C_{part}$ to $C_{full}$, or shrink a $F$-ball for a client in $C_{full}$. For each operation, we show that $\bar{y}$ is still feasible and achieves the same objective value afterwards.
	
	For the first operation, if we remove a facility $i$ from $F$, then it must be the case that $\bar{y}_i = 0$. Thus it is immediate that $\bar{y}$ (restricted to $F \setminus \{i\}$) is feasible after deleting $i$, and achieves the same objective value.
	
	Otherwise, suppose there exists $j \in C_{part}$ such that $\bar{y}(F_j) = 1$, and we move $j$ from $C_{part}$ to $C_{full}$ and then $\itersub(j)$. Thus $j$ ends up it either $C_{full}$ or $C^*$ at the end of this iteration. In either case, we have $\bar{y}(B_j) \leq \bar{y}(F_j) = 1$, so $\bar{y}$ satisfies the corresponding constraint after updating $\lpreroute$. Further, because $\bar{y}(F_j) = 1$, we have:
	\[\sum\limits_{i \in F_j} d'(j,i) \bar{y}_i = \sum\limits_{i \in B_j} d'(j,i) \bar{y}_i + (1 - \bar{y}(B_j)) L(\ell_j),\]
	so contribution of $j$ to the objective before is the same as its contribution after.
	
	In the final case, suppose there exists $j \in C_{full}$ such that $\bar{y}(B_j) = 1$, so we shrink $F_j$ and then $\itersub(j)$. Let $\lpreroute'$ index the data at the end of the iteration, so $F_j' = B_j$. Then we have $\bar{y}(B_j') \leq \bar{y}(F_j') = 1$, so $\bar{y}$ satisfies the corresponding constraint for $j$ whether $j$ is in $C_{full}$ or $C^*$. To compare the contribution of $j$ to the objective in $\lpreroute$ and $\lpreroute'$, we compute:
	
	\begin{align*}
		\sum\limits_{i \in B_j} d'(i,j) \bar{y}_i + (1 - \bar{y}(B_j)) L_{\ell_j} &= 	\sum\limits_{i \in B_j} d'(i,j) \bar{y}_i + 0\\
		&= \sum\limits_{i \in B_j'} d'(i,j) \bar{y}_i + \sum\limits_{i \in F_j' \setminus B_j'} d'(i,j) \bar{y}_i\\
		&= 	\sum\limits_{i \in B_j'} d'(i,j) \bar{y}_i + (1 - \bar{y}(B_j')) L(\ell_j').
	\end{align*}
\end{proof}

Finally, we note that if \iteralg~ terminates, then it is clear that no $C_{part}$-, $C_{full}$-, or non-negativity constraint is tight for $\bar{y}$ by definition of our iterative operations. Thus it suffices to show that \iteralg~ terminates in polynomial time.
	
\begin{lem}\label{lem_iterterminate}
	\iteralg~ terminates in polynomial time.
\end{lem}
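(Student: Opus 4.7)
The proof is a direct counting argument: each iteration of the main loop of \iteralg solves a polynomial-size LP and does constant additional bookkeeping, so it suffices to show the number of iterations is polynomial. Before termination, every iteration executes exactly one of three mutually exclusive branches: (a)~delete a facility $i$ with $\bar y_i = 0$, (b)~move some $j \in C_{part}$ to $C_{full}$ and invoke \itersub, or (c)~shrink some $F_j$ for $j \in C_{full}$, decrementing $\ell_j$, and invoke \itersub. The plan is to bound the number of iterations of each type separately.

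Type (a) occurs at most $|F|$ times because facilities are only ever removed from $F$, never added. Type (b) occurs at most $|C|$ times: such an iteration moves a specific client out of $C_{part}$, and inspecting the pseudocode of both \iteralg and \itersub shows no operation ever reinserts a client into $C_{part}$ (\itersub only moves clients between $C_{full}$ and $C^*$). For type (c), I would observe that $\ell_j$ is only ever modified inside a type (c) iteration for the specific client $j$ whose $B$-constraint is tight, and there it decreases by exactly one. By Basic Invariant~\ref{invar_basic}(\ref{invar_ell}), established in \Cref{lem_iterlevels}, we always have $\ell_j \geq -1$. Since the initial $\ell_j$ is at most $\lceil \log_\tau(D/\alpha)\rceil + O(1)$, where $D$ is the largest inter-point distance in the input (polynomial in the input bit-length), each client contributes only polynomially many type (c) iterations; summing over $j \in C$ still gives a polynomial bound.

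Adding the three bounds gives a polynomial bound on the total number of iterations, and since each iteration takes polynomial time, the overall running time is polynomial. The only point that needs care—essentially the whole content of the argument—is verifying that the sub-routine \itersub, called at the end of types (b) and (c), neither changes any $\ell_j$ nor returns any client to $C_{part}$; both facts are immediate from the pseudocode of Algorithm~\ref{alg_reroute}, which merely toggles membership between $C_{full}$ and $C^*$. No genuine obstacle arises beyond this bookkeeping check.
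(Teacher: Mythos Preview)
Your proof is correct and follows essentially the same approach as the paper's: bound the iterations of each of the three branches separately, using $|F|$, $|C|$, and the lower bound $\ell_j\geq -1$ from Basic Invariant~\ref{invar_basic}(\ref{invar_ell}), respectively. You add a bit more detail than the paper does (the explicit upper bound on the initial $\ell_j$ and the check that \itersub neither alters radius levels nor reinserts into $C_{part}$), but these just make explicit what the paper leaves implicit.
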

\begin{proof}
	It suffices to show that the number of iterations of \iteralg~ is polynomial. In each iteration, we make one of three actions. We either delete a facility from $F$, move a client from $C_{part}$ to $C_{full}$ or shrink a $F$-ball by one radius level for a client in $j \in C_{full}$.
	
	We can delete each facility from $F$ at most once, so we make at most $\lvert F \rvert$ deletions. Each client can move from $C_{part}$ to $C_{full}$ at most once, because we never move clients back from $C_{full}$ to $C_{part}$, so we do this operations at most $\lvert C \rvert$ times.
	
	Finally, observe that $\ell_j \geq -1$ for all $j \in C$ over all iterations by Basic Invariant \ref{invar_basic}(\ref{invar_ell}). We conclude that we can shrink each $F$-ball only polynomially many times.
\end{proof}

\subsection{Analysis of \altreroute}\label{sec_altanalysis}

To prove Lemma \ref{lem_configfacil}, which bounds the number of fractional facilities needed to have a candidate configuration, we first prove a bound on the number of factional clients needed. The bound on the number of facilities will follow.

\begin{lem}\label{lem_configclient}
	Suppose $\lpreroute$ satisfies all Basic Invariants, and let $\bar{y}$ be an optimal extreme point of $\lpreroute$ such that no $C_{part}$-, $C_{full}$-, or non-negativity constraint is tight. If $\lvert C^*_{<1} \rvert \geq 14r$, then there exist a candidate configuration in $C^*_{<1}$.
\end{lem}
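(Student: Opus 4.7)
The plan is to feed the hypothesis directly into the chain decomposition from Theorem~\ref{thm_chaindecomp} and then locate a candidate configuration in the interior of a sufficiently long chain. All four properties required of a candidate configuration turn out to be local to a chain, so once we find one long enough chain, the configuration pops out essentially for free.

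The first step is to invoke Theorem~\ref{thm_chaindecomp}. Since $\bar{y}$ is an extreme point of $\lpreroute$ with no $C_{part}$-, $C_{full}$-, or non-negativity constraint tight, the theorem gives a partition of $C^*_{<1}$ into at most $3r$ chains together with at most $2r$ violating clients. Under the hypothesis $|C^*_{<1}| \geq 14r$, removing the violating clients leaves at least $12r$ clients distributed among at most $3r$ chains, so pigeonhole yields a chain $(j_1, \dots, j_p)$ with $p \geq 4$.

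The second step is to pick the candidate pair inside this chain. Consider the interior consecutive pair $(j_2, j_3)$. By definition of a chain, $|F_{j_2}| = |F_{j_3}| = 2$ and $F_{j_2} \cap F_{j_3} \neq \emptyset$, which gives properties (4) and (1) of \Cref{def_config}. The two facilities in $F_{j_2}$ are forced to be the one shared with $F_{j_1}$ and the one shared with $F_{j_3}$, and similarly the two facilities in $F_{j_3}$ are the ones shared with $F_{j_2}$ and $F_{j_4}$ (this is exactly where we use $p \geq 4$, to guarantee $j_4$ exists). Hence every facility in $F_{j_2} \cup F_{j_3}$ lies in at least two $F$-balls of clients in $C^*$, and by Proposition~\ref{prop_degree} in at most two, so exactly two -- giving property (3). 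Finally, because $F_{j_2} \cap F_{j_3} \neq \emptyset$, the Distinct Neighbors Property (Basic Invariant~\ref{invar_basic}(\ref{invar_neighbor})) forces $|\ell_{j_2} - \ell_{j_3}| = 1$, so after possibly swapping the two clients we obtain property~(2), namely $\ell_{j'} \leq \ell_j - 1$.

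The only step with any substance is extracting a chain of length at least $4$, which is a clean pigeonhole ($14r - 2r \geq 4 \cdot 3r$) and motivates the constant $14$ in the hypothesis. Everything else is immediate from the defining properties of chains and from the Distinct Neighbors Property. There is no circularity: the chain decomposition is exactly the tool designed to extract such structured local configurations from an extreme point.
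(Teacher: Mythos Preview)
Your proposal is correct and follows essentially the same approach as the paper: invoke the chain decomposition of Theorem~\ref{thm_chaindecomp}, use pigeonhole on $14r-2r\ge 4\cdot 3r$ to find a chain of length at least four, and then check that the interior pair $(j_2,j_3)$ (after possibly swapping) satisfies all four properties of a candidate configuration using the Distinct Neighbors Property and Proposition~\ref{prop_degree}. The only cosmetic difference is the order of exposition (the paper first argues that any length-$4$ chain yields a candidate configuration and then applies pigeonhole, whereas you do the reverse).
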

\begin{proof}
	We claim that in order for $C^*_{<1}$ to have a candidate configuration, it suffices to have a chain of length at least four in $C^*_{<1}$. To see this, let $(j_1, j_2, j_3, j_4, \dots) \subset C^*_{<1}$ be a chain of length at least four. Then $F_{j_2} \cap F_{j_3} \neq \emptyset$, and by the Distinct Neighbors Property, either $\ell_{j_3} = \ell_{j_2} - 1$ or $\ell_{j_2} = \ell_{j_3} - 1$.
	
	We only consider the former case, because both cases are analogous. Thus, if $\ell_{j_3} = \ell_{j_2} - 1$, then we claim that $(j_2, j_3)$ forms a candidate configuration. We already have the first two properties of a candidate configuration. Now we verify the last two. Because $j_2$ and $j_3$ are part of a chain, we have $\lvert F_{j_2} \rvert = 2$ and $\lvert F_{j_3} \rvert = 2$. Further, $j_2$ has neighbors $j_1$ and $j_3$ along the chain. By Proposition \ref{prop_degree}, each facility in $F_{j_2}$ is in at most two $F$-balls for clients in $C^*$. In particular, one of the facilities in $F_{j_2}$ is shared by $F_{j_1}$ and $F_{j_2}$, and the other must be shared by $F_{j_2}$ and $F_{j_3}$. Thus, each facility in $F_{j_2}$ is in exactly two $F$-balls for clients in $C^*$. An analogous argument holds for $F_{j_3}$, so $(j_2, j_3)$ satisfies all properties of a candidate configuration, as required.
	
	Now suppose $\lvert C^*_{<1} \rvert \geq 14r$. By Theorem \ref{thm_chaindecomp}, $C^*_{<1}$ admits a chain decomposition into at most $3r$ chains and a set of at most $2r$ violating clients. Then at least $12r$ of the clients in $C^*_{<1}$ belong to the $3r$ chains. By averaging, there must exist a chain with size at least $\frac{12r}{3r} = 4$, as required.
\end{proof}

Lemma \ref{lem_configfacil} is a corollary of the above lemma.

\begin{proof}[Proof of Lemma \ref{lem_configfacil}]
	By the previous lemma, it suffices to show that $\lvert F_{<1} \rvert \geq 15r$ implies that $\lvert C^*_{<1} \rvert \geq 14r$. Applying Lemma \ref{lem_fractionalextreme}, we have:
	\[\lvert F_{<1} \rvert \leq dim(C^*_{<1}) + r \leq \lvert C^*_{<1} \rvert + r\]
	, which combined with $\lvert F_{<1} \rvert \geq 15r$ gives the desired result.
\end{proof}

\begin{proof}[Proof of Theorem \ref{thm_mainaltreroute}]
    It is clear that \altreroute~ can be implemented to run in polynomial time and maintains all Basic Invariants, because \altreroute~ only moves clients from $C^*$ to $C_{full}$. Thus it remains to show that \altreroute~ weakly decreases $Opt(\lpreroute)$.
    
    Using the same strategy as in \Cref{lem_iterdecrease}, we let $\lpreroute$ denote the LP at the beginning of \altreroute~ and $\bar{y}$ the optimal extreme point of $\lpreroute$. Then we show that $\bar{y}$ is feasible after the operation and achieves the same objective value.
    
    In this call to \altreroute, we move some client $j$ from $C^*$ to $C_{full}$. We have $\bar{y}(B_j) \leq \bar{y}(F_j) = 1$, so $\bar{y}$ is feasible after \altreroute. Finally, moving $j$ from $C^*$ to $C_{full}$ does not affect its contribution to the objective.
\end{proof}

\subsection{Analysis of \mainalg}\label{sec_pseudoanalysis}

\begin{proof}[Proof of Theorem \ref{thm_pseudo}]
	It is immediate that \mainalg~ maintains all Basic Invariants by Theorems \ref{thm_mainiter} and \ref{thm_mainaltreroute}. Further, both of these sub-routines are polynomial time, so to show that \mainalg~ runs in polynomial time, it suffices to show that the number of calls to  \iteralg~ and \altreroute~ is polynomial.
	
    In every iteration of \mainalg, either we terminate or we are guaranteed to move a client from $C^*$ to $C_{full}$ in \altreroute. Each client can be removed from $C^*$ only polynomially many times, because each time a client is removed, in order to be re-added to $C^*$, it must be the case that we shrunk the $F$-ball of that client. However, by Basic Invariant \ref{invar_basic}(\ref{invar_ell}), we can shrink each $F$-ball only polynomially many times.
    
    Finally, upon termination of \mainalg, there is no candidate configuration, so Lemma \ref{lem_configfacil} implies that $\bar{y}$ has at most $15r$ fractional variables.
\end{proof}

\section{Missing Proofs from \S \ref{sec_trueapprox}}\label{sec_trueproofs}

\begin{proof}[Proof of Lemma \ref{lem_kclosefacil}]
	We let $i^* \in S$ be the closest facility to $i$ in $S$. We show that the cost of connecting $C'$ to $i^*$ is at most $O(\frac{\rho}{\delta}) U$. To do so, we partition $C'$ into two sets of clients: those that are far from $i$ relative to $d(i,i^*)$, and those that are close to $i$. In particular, let $\gamma > 0$ be a constant that we choose later. Then we partition $C'$ into $C'_{far}$ and $C'_{close}$, where:
	\[C'_{far} = \{j \in C' \mid d(j,i) \geq \gamma d(i,i^*)\},\]
	and 
	\[C'_{close} = \{j \in C' \mid d(j,i) > \gamma d(i,i^*)\}.\]
	First we bound the connection cost of $C'_{far}$ to $i^*$ using the fact that $i \notin S_0$, so Extra Invariant \ref{invar_kextra}(\ref{invar_ksparsefacil}) says that $\sum\limits_{j \mid i \in F_j} d(i,j) \leq 2 \rho U$. Thus we compute:
	\[\sum\limits_{j \in C'_{far}} d(j,i^*) \leq (1 + \frac{1}{\gamma}) \sum\limits_{j \in C \mid i \in F_j} d(j,i) \leq (1 + \frac{1}{\gamma}) O(\rho) U\]
	Now suppose $C'_{close} \neq \emptyset$. Fix any $j^* \in C'_{close}$. Then for all $j \in C'_{close}$, we have $d(j,j^*) \leq d(j,i) + d(j^*,i) \leq 2 \gamma d(i,i^*)$. It follows that $C'_{close} \subset B_{C}(j^*, 2\gamma d(i,i^*))$. Our strategy is to use Extra Invariant \ref{invar_kextra}(\ref{invar_ksparserad}):
	\[\lvert B_{C}(j^*, \delta r) \rvert r \leq \rho U,\]
	for all $r \leq R_{j^*}$. Thus we want $2 \gamma d(i,i^*) \leq \delta R_{j^*}$. To lower bound $R_{j^*}$ with respect to $d(i,i^*)$, we use our assumption that there exists some $\bar{i} \in S$ such that $d(j^*,\bar{i}) \leq \alpha L(\ell_{j^*})$, where $L(\ell_{j^*}) \leq \tau R_{j^*}$ by Extra Invariant \ref{invar_kextra}(\ref{invar_ksparserad}). Thus we have:
	\[d(j^*,\bar{i}) \leq \alpha \tau R_{j^*}.\]
	Further, using the triangle inequality and the fact that $i^*$ is the closest facility to $i$ in $S$, we have:
	\[d(j^*, \bar{i}) \geq d(i, \bar{i}) - d(i,j^*) \geq d(i,i^*) - d(i,j^*) \geq (1 - \gamma) d(i,i^*).\]
	Combining these two inequalities gives the lower bound $R_{j^*} \geq \frac{1 - \gamma}{\alpha \tau} d(i,i^*)$.

    Now we are ready to choose $\gamma$. Recall that we want $2 \gamma d(i,i^*) \leq \delta R_{j^*}$, so it suffices to choose $\gamma$ such that:
    \[2 \gamma d(i,i^*) \leq \delta \frac{1 - \gamma}{\alpha \tau} d(i,i^*)\]
    Routine calculations show that we can take $\gamma = \Theta(\delta)$ to satisfy this inequality. Now with this choice of $\gamma$, we can bound:
    \begin{align*}
        \sum\limits_{j \in C'_{close}} d(j, i^*) &\leq (1 + \gamma)\lvert C'_{close} \rvert d(i,i^*)\\
        &\leq (1 + \gamma) \lvert B_{C}(j^*, \delta R_{j^*}) \rvert d(i,i^*)\\
        &\leq (1 + \gamma) \frac{\rho U}{R_{j^*}} d(i,i^*)\\
        &\leq \frac{\rho U}{R_{j^*}} O(1)R_{j^*} = O(\rho) U
    \end{align*}
    
    To conclude the proof, the connection cost of $C'_{far}$ is at most $(1 + \frac{1}{\gamma}) O(\rho) U = O(\frac{\rho}{\delta})U$ and the connection cost of $C'_{close}$ is at most $O(\rho)U$. Summing these costs gives the desired result.
\end{proof}

\begin{proof}[Proof of Lemma \ref{lem_outclosefacil}]
	We let $i^* \in S$ be the closest facility to $i$ in $S$. We show that the cost of connecting $C'$ to $i^*$ is at most $O(\frac{\rho}{\delta}) U$. To do so, we partition $C'$ into two sets of clients: those that are far from $i$ relative to $d(i,i^*)$, and those that are close to $i$. In particular, let $\gamma > 0$ be a constant that we choose later. Then we partition $C'$ into $C'_{far}$ and $C'_{close}$, where:
	\[C'_{far} = \{j \in C' \mid d(j,i) \geq \gamma d(i,i^*)\}\]
	, and 
	\[C'_{close} = \{j \in C' \mid d(j,i) > \gamma d(i,i^*)\}\]
	First we bound the connection cost of $C'_{far}$ to $i^*$ using the fact that $i \notin S_0$, so Extra Invariant \ref{invar_outextra}(\ref{invar_outsparsefacil}) says that $i$ is cheap. Thus we compute:
	\[\sum\limits_{j \in C'_{far}} d(j,i^*) \leq (1 + \frac{1}{\gamma}) \sum\limits_{j \in C \mid i \in F_j} d(j,i) \leq (1 + \frac{1}{\gamma}) O(\rho) U\]
	Now suppose $C'_{close} \neq \emptyset$. Importantly, all of these clients are within distance $\gamma d(i,i^*)$ of $i$, so we have $C'_{close} \subset B_{C}(i, \gamma d(i,i^*))$. Our strategy to bound the connection cost of $C'_{close}$ is to leverage Extra Invariant \ref{invar_outextra}(\ref{invar_outsparserad}), so in particular we want to use the fact:
	\[\lvert \{j \in B_{C}(i, \frac{\delta t}{4 + 3 \delta}) \mid R_j \geq t \} \rvert \leq \frac{\rho (1 + 3\delta / 4)}{1 - \delta/4} \frac{U}{t}\]
	for any $t > 0$. We want to choose $\gamma, t > 0$ such that $\gamma d(i,i^*) \leq \frac{\delta t}{4 + 3 +\delta}$ and $R_j \geq t$ for all $j \in C'_{close}$. To see why this is useful, for such $\gamma$ and $t$, we have $C'_{close} \subset \{j \in B_{C}(i, \frac{\delta t}{4 + 3 \delta}) \mid R_j \geq t \}$. Then we can bound:
	\begin{align*}
		\sum\limits_{j \in C'_{close}} d(j,i^*) &\leq \sum\limits_{j \in C'_{close}}(1 + \gamma)d(i,i^*)\\
		&= (1 + \gamma) \lvert C'_{close} \rvert d(i,i^*)\\
		&\leq (1 + \gamma) \lvert \{j \in B_{C}(i, \frac{\delta t}{4 + 3 \delta}) \mid R_j \geq t \} \rvert d(i,i^*)\\
		&\leq (1 + \gamma) (\frac{\rho (1 + 3\delta / 4)}{1 - \delta/4} \frac{U}{t}) d(i,i^*)
	\end{align*}
	Now we go back and specify our choice of $\gamma$ and $t$, which will allow us to complete the bound of the connection costs. First we lower bound $R_j$ in terms of $d(i,i^*)$ for any $j \in C'_{close}$. We recall that by assumption there exists some $\bar{i} \in S$ such that $d(j,\bar{i}) \leq \alpha L(\ell_j)$, where $L(\ell_j) \leq \tau R_j$ by Extra Invariant \ref{invar_outextra}(\ref{invar_outsparserad}). Thus we have:
	\[d(j,\bar{i}) \leq \alpha \tau R_j\]
	Further, using the triangle inequality and the fact that $i^*$ is the closest facility to $i$ in $S$, we have:
	\[d(j, \bar{i}) \geq d(i, \bar{i}) - d(i,j) \geq d(i,i^*) - d(i,j) \geq (1 - \gamma) d(i,i^*)\]
	Combining this inequality with the upper bound on $d(j, \bar{i})$ gives that $R_j \geq \frac{1 - \gamma}{\alpha \tau} d(i,i^*)$ for all $j \in C'_{close}$. Then we define $t = \frac{1 - \gamma}{\alpha \tau} d(i,i^*)$. This gives us $R_j \geq t$ for all $j \in C'_{close}$. Now we can choose $\gamma > 0$ satisfying:
	\[\gamma d(i,i^*) \leq \frac{\delta t}{4 + 3 \delta} \Rightarrow \gamma \leq \frac{\delta}{4 + 3 \delta} \frac{1 - \gamma}{\alpha \tau}\]
	Taking $\gamma = \frac{\delta}{12 \alpha \tau} = \Theta(\delta)$ suffices.
	
	Using these choices of $\gamma$ and $t$, we can bound:
	\[\sum\limits_{j \in C'_{far}} d(j,i^*) = O(\frac{\rho}{\delta})U\]
	, and
	\[\sum\limits_{j \in C'_{close}} d(j,i^*) \leq (1 + \gamma)(\frac{\rho (1 + 3\delta / 4)}{1 - \delta/4} U)(\frac{\alpha \tau}{1 - \gamma}) = O(\rho) U\]
	Summing these two costs gives the desired result.
\end{proof}

\begin{proof}[Proof of Theorem \ref{thm_mainoutliers}]
    Let $\epsilon' > 0$. We will later choose $\epsilon'$ with respect to the given $\epsilon$ to obtain the desired approximation ratio and runtime. First, we choose parameters $\rho, \delta \in (0,1/2)$ and $U \geq 0$ for our pre-processing algorithm guaranteed by Theorem \ref{thm_kpre}. We take $\rho = {\epsilon'}^2$ and $\delta = \epsilon'$. We require that $U$ is an upper bound on $Opt(\I)$. Using a standard binary search idea, we can guess $Opt(\I)$ up to a multiplicative $(1 + \epsilon')$-factor in time $n^{O(1 / \epsilon')}$, so we guess $U$ such that $Opt(\I) \leq U \leq (1 + \epsilon') Opt(\I)$.
    
    With these choices of parameters, we run the algorithm guaranteed by Theorem \ref{thm_outpre} to obtain $n^{O(1/ \epsilon')}$ many sub-instances such that one such sub-instance is of the form $\I' = (F, C' \subset C, d, k, m' = m - \lvert C^* \setminus C' \rvert)$, where $\lpreroute$ for $\I'$ satisfies all Basic- and Extra Invariants, and we have:
    
    \begin{equation}\label{eq_outcost}
       \frac{\log_e \tau}{(\tau - 1)(1 + \epsilon' / 2)}\mathbb{E}[Opt(\lpreroute)] + \frac{1 - \epsilon'}{1+ \epsilon'} \sum\limits_{j \in C^* \setminus C'} d(j, S_0) \leq U 
    \end{equation}
    
    Then for each sub-instance output by the pre-processing, we run the algorithm guaranteed by Theorem \ref{thm_outlierapprox} to obtain a solution to each sub-instance. Finally, out of these solutions, we output the one that is feasible for the whole instance with smallest cost. This completes the description of our approximation algorithm for $k$-median with outliers. The runtime is $n^{O(1/ \epsilon')}$, so it remains to bound the cost of the output solution and to choose the parameters $\epsilon'$ and $\tau$ and $c$.
    
    To bound the cost, it suffices to consider the solution output on the instance $\I'$ where $\lpreroute$ satisfies all Basic- and Extra Invariants and Equation \ref{eq_outcost}. By running the algorithm guaranteed by Theorem \ref{thm_outlierapprox} on this $\lpreroute$, we obtain a feasible solution $S \subset F$ to $\I'$ such that $S_0 \subset S$, and the cost of connecting $m'$ clients from $C'$ to $S$ is at most $(2 + \alpha)Opt(\lpreroute) + O(\epsilon') U$, where $\alpha = \max(3 + 2\tau^{-c}, 1 + \frac{4 + 2\tau^{-c}}{\tau}, \frac{\tau^3 + 2\tau^2 + 1}{\tau^3 - 1})$. To extend this solution on the sub-instance to a solution on the whole instance $\I$, we must connect $m - m'  = \lvert C^* \setminus C' \rvert$ clients from $C \setminus C'$ to $S$. Because $S_0 \subset S$, applying Equation \ref{eq_kcost} allows us to upper bound the expected cost of connecting $m$ clients to $S$ by:
    \[(2 + \alpha)\mathbb{E}[Opt(\lpreroute)] + O(\epsilon') U + \sum\limits_{j \in C^* \setminus C'} d(j, S_0) \leq (2 + \alpha) \frac{\tau - 1}{\log_e \tau} \frac{(1 + \epsilon')^2}{1 - \epsilon'}U + O(\epsilon') U\]
    Now choosing $\tau > 1$ to minimize $\alpha' = (2 + \max(3, 1 + \frac{4}{\tau}, \frac{\tau^3 + 2\tau^2 + 1}{\tau^3 - 1})) \frac{\tau - 1}{\log_e \tau}$ (note that we ignore the $2\tau^{-c}$ terms), we obtain $\tau = 1.2074$ and $\alpha' = 6.947$. We can choose $c \geq 1$ sufficiently large with respect to $\tau$ such that $2\tau^{-c}$ is sufficiently small to guarantee $(2 + \alpha)\frac{\tau - 1}{\log_e \tau} \leq 6.947 + \epsilon'$
    
     Thus the expected cost of this solution is at most $(6.947 + \epsilon')\frac{(1 + \epsilon')^2}{1 - \epsilon'}U + O(\epsilon') U$, where $U \leq (1 + \epsilon') Opt(\I)$. Finally, by routine calculations, we can choose $\epsilon' = \theta(\epsilon)$ so that expected cost is at most $(6.947 + \epsilon) Opt(\I)$, as required. Note that the runtime of our algorithm is $n^{O(1/ \epsilon')} = n^{O(1/ \epsilon)}$.
\end{proof}

\section{Missing Proofs from \S \ref{sec_postoutlier}: Analysis of \outlieralg}\label{appendix_postproofs}

In this section we present all missing proofs from the analysis of \outlieralg~ and its subroutine \outlierpost.

\subsection{Missing Proofs from Analysis of \outlieralg}

\begin{proof}[Proof of Lemma \ref{lem_extremedisjoint}]
	Without loss of generality, we may assume that no facilities in $S_0$ are co-located with each other, so $\{F_j \mid j \in C_0\}$ is a disjoint family. This implies that $\{F_j \mid j \in C^*_{<1}\}$ is also a disjoint family. Now we construct a basis for $\bar{y}$. For every integral facility $i \in F_{=1}$, we add the constraint $y_i \leq 1$ to our basis. To complete the basis, we need to add $\lvert F_{<1} \rvert$ further linearly independent tight constraints.
	
	We recall that upon termination of \mainalg, no $C_{part}$-, $C_{full}$-, or non-negativity constraint is tight for $\bar{y}$, so the only constraints we can choose are the $C^*$-constraints, the $k$-constraint, or the coverage constraint. We claim that we cannot add any $C^*_{=1}$-constraint to our basis, because such a constraint is supported only on integral facilities, whose constraints we already added to the basis. However, we can add every $C^*_{<1}$-constraint to our basis, because their supports are disjoint and they contain no integral facilities. Thus, our partial basis consists of all tight integrality constraints and all $C^*_{<1}$-constraints.
	
	Now we consider adding the $k$-constraint to our basis. Importantly, the $k$-constraint is linearly independent with the current partial basis only if there exists at least one fractional facility not supported on any $F$-ball for clients in $C^*_{<1}$. Further, we may assume the $k$-constraint is tight (otherwise we cannot add it anyways), so there must be at least two fractional facilities not supported on any $F$-ball for clients in $C^*_{<1}$
	
	However, we note that each $F$-ball for clients in $C^*_{<1}$ contains at least two fractional facilities. Because these $F$-balls are disjoint, we have $\lvert F_{<1} \rvert \geq 2 \lvert C^*_{<1} \rvert$. If we cannot add the $k$-constraint to our basis, then we are done. This is because the coverage constraint is the only further constraint we can add the the basis, so we can bound $\lvert F_{<1} \rvert \leq \lvert C^*_{<1} \rvert +1$. This implies implies $\lvert F_{<1} \rvert \leq \frac{1}{2} \lvert F_{<1} \rvert + 1 \Rightarrow \lvert F_{<1} \rvert \leq 2$ using the previous inequality.
	
	Otherwise, we add the $k$-constraint to our basis, which implies $\lvert F_{<1} \rvert \geq 2 \lvert C^*_{<1} \rvert + 2$ because of the two fractional facilities outside $F(C^*_{<1})$ and $\lvert F_{<1} \rvert \leq \lvert C^*_{<1} \rvert + 2$ because the $k$-constraint and coverage constraint contribute are the only further constraints we can add. Again combining these two inequalities gives $\lvert F_{<1} \rvert \leq 2$.
\end{proof}

\begin{proof}[Proof of Lemma \ref{lem_base1}]
	Let $S = F_{=1}$ be the set of open facilities. It is immediate that $\lvert S \rvert \leq k$. Further, $\lpreroute$ satisfies all Extra Invariants, so $C_0 \subset C^*$. Because $\bar{y}$ is integral, it is clear that we open $S_0$. Thus it remains to show that the connecting $m$ clients to $S$ has cost at most $(2 + \alpha) Opt(\lpreroute)$.
	
	It suffices to show that connecting $C_{full}$ and $C^*$ to $S$ is enough clients and achieves the desired cost. Because $\bar{y}$ is integral and by definition of \mainalg, we have that no $C_{part}$-, $C_{full}$-, or non-negativity constraint is tight for $\bar{y}$. It follows, $F_j = \emptyset$ for all $j \in C_{part}$ and $B_j = \emptyset$ for all $j \in C_{full}$.
	
	Then the coverage constraint of $\lpreroute$ implies:
	\[\sum\limits_{j \in C_{part}} \bar{y}(F_j) \geq m - \lvert C_{full} \cup C^* \rvert\ \Rightarrow \lvert C_{full} \cup C^* \rvert \geq m\]
	, so this solution connects enough clients.
	
	To bound the cost, we compare the connection cost of each client with its contribution to the objective of $\lpreroute$. For all $j \in C^*$ we have $\bar{y}(F_j) = 1$, so $d(j,S) \leq \sum\limits_{i \in F_j} d'(j,i) \bar{y}_i$, which is exactly $j$'s contribution to $\lpreroute$.
	
	For all $j \in C_{full}$, we note that $\bar{y}(B_j) = 0$, so $j$'s contribution to $\lpreroute$ is exactly $L(\ell_j)$. We can apply \Cref{thm_mainreroute} with $\beta = 1$ and set of facilities $S$ to show that $d(j,S) \leq (2 + \alpha) L(\ell_j)$ for all $j \in C_{full}$. To conclude, the connection cost of each client is at most $(2 + \alpha)$ times its contribution to $\lpreroute$, a required.
\end{proof}

\begin{proof}[Proof of Lemma \ref{lem_base2cost}]
    Let $S = F_{=1} \cup \{a\}$ be the output solution. First, note that $\lvert S \rvert \leq k$ because $\bar{y}_a + \bar{y}_b = 1$, and those are the only two fractional variables. Second, because $a \notin S_0$, it must be the case that $b \notin S_0$, because $a,b$ are the only fractional facilities, and by Extra Invariant \ref{invar_outextra} (\ref{invar_outguessfacil}), there is one unit of open facility co-located at each $i \in S_0$. Note that this implies that $S_0 \subset F_{=1} \subset S$.
    
    Now there are two cases, either $a,b \in F_j$ for some $j \in C^*$, or $a,b \notin F(C^*)$. Note that in either case, we close $b$ and open $a$, so we still maintain the property that $\bar{y}(F_j) = 1$ for all $j \in C^*$. Thus, can apply \Cref{thm_mainreroute} with $\beta = 1$ and set of facilities $S$ to show that $d(j,S) \leq (2 + \alpha) L(\ell_j)$ for all $j \in C_{full} \cup C^*$.
	
	We consider connecting the clients $C_{part}(a) \cup C_{full} \cup C^*$ to $S$. First, we show that this is at least $m$ clients. We observe that $a$ and $b$ are the only fractional facilities in $\bar{y}$, and no $C_{part}$-constraint is tight. It follows that for all $j \in C_{part}$, we have $F_j = \{a\}$, $\{b\}$, or $\emptyset$, so we can rewrite the coverage constraint as:
	\[\lvert C_{part}(a) \rvert\bar{y}_a + \lvert C_{part}(b) \rvert \bar{y}_b \geq m - \lvert C_{full} \cup C^* \rvert\]
	Then because $\bar{y}_a + \bar{y}_b = 1$ and $\lvert C_{part}(a) \rvert \geq \lvert C_{part}(b) \rvert$ by assumption, we conclude that $\lvert C_{part}(a) \rvert \geq  m - \lvert C_{full} \cup C^* \rvert$, as required.
	
	Now it remains to show that the cost of connecting $C_{part}(a)$ to $a$ plus the cost of connecting $C_{full} \cup C^*$ to $S$ is at most $\alpha Opt(\lpreroute) + O(\frac{\rho}{\delta}) U$. First we handle $C_{part}(a)$. By assumption, $a \notin S_0$, so by Extra Invariant \ref{invar_outextra}(\ref{invar_outsparsefacil}), we can bound:
	\[\sum\limits_{j \in C_{part}(a)} d(j,a) \leq \sum\limits_{j \in C \mid a \in F_j} d(j,a) = O(\rho) U\]
	
	For the clients in $C_{full} \cup C^*$ that are not supported on $b$, closing $b$ does not affect their connection cost; in particular, each such client either has an integral facility in its $F$-ball to connect to (because we open $a$ and all other facilities are integral), or its $F$-ball is empty, and there exists an integral facility within $(2 +\alpha) L(\ell_j)$ to connect to. In both cases, each client's connection cost is at most $(2 + \alpha)$ times its contribution to the objective of $\lpreroute$.
	
	The only remaining cost to bound is the clients in $C_{full} \cup C^*$ that are supported on $b$. Let $C' = \{j \in C_{full} \cup C^* \mid b \in F_j\}$ be these clients. We show that the cost of connecting all of $C'$ to $S$ is at most $O(\frac{\rho}{\delta}) U$ using Lemma \ref{lem_outclosefacil}. Because every client in $j \in C_{full} \cup C^*$ has an open facility in $S$ within distance $(2 +\alpha) L(\ell_j)$, Lemma \ref{lem_outclosefacil} is applicable to $C'$ with set of facilities $S$ and $i = b \notin S \cup S_0$.

	To summarize, the connection costs of $C_{part}(a)$ and $C'$ are at most $O(\frac{\rho}{\delta}) U$, and the connection cost of all remaining clients in $C_{full} \cup C^*$ that are not supported on $b$ is at most $(2 + \alpha) Opt(\lpreroute)$, so the total connection cost, which is the sum of these terms, it at most the desired bound.
\end{proof}

\subsection{Missing Proofs from Analysis of \outlierpost}

\begin{proof}[Proof of \Cref{lem_greedystay}]
	Assume for contradiction that there exists $\bar{j}$ that is reached by the \textsc{For} loop, but $\bar{j}$ does not remain in $\bar{C}$ until termination. Note that $\bar{j}$ cannot be removed from $\bar{C}$ in the iteration that it is considered in the \textsc{For} loop. Thus there must exist a later iteration for client, say $j$ in which $\bar{j}$ is removed from $\bar{C}$. In the iteration for client $j$, there are only two possible ways that $\bar{j}$ is removed from $\bar{C}$. Either $F_j \cap F_{\bar{j}} \neq \emptyset$ or there exists a client $j' \in C_{part} \setminus C_{covered}$ such that $F_{j'}$ intersects both $F_j$ and  $F_{\bar{j}}$ and $\ell_{j'} \leq \ell_j - c$.
	
	In the former case, because we consider $\bar{j}$ before $j$, it must be the case that we removed $j$ from $\bar{C}$ in $\bar{j}$'s iteration. This is a contradiction. Similarly, in the second case if such a $j'$ exists, then in $\bar{j}$'s iteration, we either remove $j$ from $\bar{C}$ or add $j'$ to $C_{covered}$. In either case, this is a contradiction.
\end{proof}

\begin{proof}[Proof of \Cref{lem_afterinvar}]
	By assumption, the input to \outlierpost, $\lpreroute$, satisfies all Basic and Extra Invariants. To obtain $\lpreroute^1$ from $\lpreroute$, we delete some clients and facilities. Thus the only change to the $F$- and $B$-balls for clients in $C^1$ is that we possibly remove some facilities from their $F$- and $B$-balls; importantly, the radius levels, $\ell_j$ for all clients $j$, remain the same. Thus, it is easy to see that $\lpreroute^1$ satisfies all Basic Invariants.
	
	Similarly, for all remaining clients $j$, we have not changed $\ell_j$ or $R_j$, so the only Extra Invariant that requires some care to verify is Extra Invariant \ref{invar_outextra}(\ref{invar_outguessfacil}). However, we recall that to obtain ${C^*}^1$, we delete all clients in $C^* \setminus C_0$ from the instance, so ${C^*}^1 = C_0$. This is because $C_0 \subset C^*$ by the assumption that $\lpreroute$ satisfies all Extra Invariants.
\end{proof}

\begin{proof}[Proof of \Cref{lem_afterfeasible}]
	We note that $C^1 = C \setminus (C_{part}(\bar{S}) \cup C_{covered} \cup C_{full} \cup (C^* \setminus C_0))$ and $F^1 = F \setminus (F_{=1} \cup F(\bar{C}) \setminus S_0)$. Then we have $C_{part}^1 = C_{part} \setminus (C_{part}(\bar{S}) \cup C_{covered})$, $C_{full}^1 = \emptyset$, and ${C^*}^1 = C_0$.
		
	It suffices to show that all $C_{part}^1$-constraints, ${C^*}^1$-constraint, the $k$-constraint, and the coverage constraint are satisfied by $\bar{y}$ restricted to $F^1$.
	
	Consider any $j \in C_{part}^1$. We observe that the $F_j^1 \subset F_j$ for all $j \in C_{part}^1$ and $C_{part}^1 \subset C_{part}$.Then $\bar{y}(F_j^1) \leq \bar{y}(F_j) \leq 1$ for all $j \in C_{part}^1$.
	
	Now for any $j \in {C^*}^1 = C_0$, it suffices to show that we do not delete any copies of $i \in S_0$ when going from from $F$ to $F^1$, but this is immediate because $S_0 \subset F$, and we do not delete any facility from $S_0$ to obtain $F^1$. Thus, every ${C^*}^1$-constraint is satisfied.

	For the $k$-constraint, we have $\bar{y}(F) \leq k$. We want to show $\bar{y}(F^1) \leq k -  \lvert S \rvert$. By definition $\lvert \bar{S} \rvert = \sum\limits_{j \in \bar{C}} \bar{y}(F_j) = \bar{y}(F(\bar{C}))$, where the final equality follows because $\{F_j \mid j \in \bar{C}\}$ is a disjoint collection by Proposition \ref{prop_partnice}. Then we compute:
	\[\bar{y}(F^1) = \bar{y}(F) - \bar{y}(F(\bar{C})) - \lvert F_{=1} \setminus S_0 \rvert \leq k - \lvert \bar{S} \rvert - \lvert F_{=1} \setminus S_0 \rvert = k - \lvert S\rvert\]
	, as required.

	Finally, for the coverage constraint, we want to show:
	\[\sum\limits_{j \in C_{part}^1} \bar{y}(F_j^1) \geq m^1 - \lvert C_{full}^1 \cup{C^*}^1 \rvert,\]
    where $C_{part}^1 = C_{part} \setminus (C_{part}(\bar{S}) \cup C_{covered})$, $m^1 = m - \lvert C_{part}(\bar{S}) \cup C_{covered} \cup C_{full} \cup (C^* \setminus C_0) \rvert$, and $C_{full}^1 = \emptyset$ and ${C^*}^1 = C_0$, so $\lvert C_{full}^1 \cup{C^*}^1 \rvert = \lvert C_0 \rvert$. Thus we can re-write the coverage constraint as:
    \[\sum\limits_{j \in C_{part}^1} \bar{y}(F_j^1) \geq m - \lvert C_{part}(\bar{S}) \cup C_{covered} \cup C_{full} \cup{C^*} \rvert.\]

	Recall that the coverage constraint of $\lpreroute$ implies:
	\[\sum\limits_{j \in C_{part}} \bar{y}(F_j) \geq m - \lvert C_{full} \cup C^*\rvert.\]
	By splitting this inequality into the contribution by $C_{part}^1$ and $C_{part} \setminus C_{part}^1 = C_{part}(\bar{S}) \cup C_{covered}$, we obtain:
	\begin{align*}
	    \sum\limits_{j \in C_{part}} \bar{y}(F_j) &\geq m - \lvert C_{full} \cup C^*\rvert\\
	    \sum\limits_{j \in C_{part}^1} \bar{y}(F_j) + \sum\limits_{j \in C_{part}(\bar{S}) \cup C_{covered}} \bar{y}(F_j) &\geq m - \lvert C_{full} \cup C^*\rvert\\
	    \sum\limits_{j \in C_{part}^1} \bar{y}(F_j) + \sum\limits_{j \in C_{part}(\bar{S})} \bar{y}(F_j) &\geq m - \lvert C_{covered} C_{full} \cup C^*\rvert
	\end{align*}
	, where in the final inequality we use the fact that $\bar{y}(F_j) \leq 1$ for all $j \in C_{covered} \subset C_{part}$. Now, we recall that $F_j^1 = F_j \setminus F(\bar{C})$ for all $j \in C_{part}^1$, because $C_{part}^1 \subset C_{part}$. We can re-write:
	\[\sum\limits_{j \in C_{part}^1} \bar{y}(F_j) = \sum\limits_{j \in C_{part}^1} (\bar{y}(F_j^1) + \bar{y}(F_j \cap F(\bar{C})).\]
	To show that the coverage constraint is satisfied, it suffices to show:
	\[\sum\limits_{j \in C_{part^1}} \bar{y}(F_j \cap F(\bar{C})) + \sum\limits_{j \in C_{part}(\bar{S})} \bar{y}(F_j) \leq \lvert C_{part}(\bar{S}) \rvert.\]
	To see this, observe that the first sum is over all clients in $C_{part} \setminus C_{covered}$ supported on some facility in $F(\bar{C}) \setminus \bar{S}$ but none in $\bar{S}$ (otherwise these clients would be in $C_{part}(\bar{S})$.) The second sum is over all clients in $C_{part} \setminus C_{covered}$ supported on some facility in $\bar{S}$. Thus, recalling that $w_i = \lvert \{j \in C_{part} \setminus C_{covered} \mid i \in F_j\} \rvert$, we have:
	\[\sum\limits_{j \in C_{part^1}} \bar{y}(F_j \cap F(\bar{C})) + \sum\limits_{j \in C_{part}(\bar{S})} \bar{y}(F_j)  = \sum\limits_{j \in \bar{C}} \sum\limits_{i \in F_j} w_i \bar{y}_i \leq \lvert C_{part}(\bar{S}) \rvert,\]
	where in the final inequality we apply \Cref{prop_greedyoutlier}.
\end{proof}

\section{Missing Proofs from  \S \ref{sec_pathdecomp}: Chain Decomposition}\label{appendix_pathdecomp}

\begin{proof}[Proof of \Cref{lem_fractionalextreme}]
	We construct a basis $\bar{y}$. First, for each integral facility $i \in F_{=1}$, we add the integrality constraint $\bar{y}_i \leq 1$ to our basis. Thus we currently have $\lvert F_{=1} \rvert$ constraints in our basis.
	
	It remains to choose $\lvert F_{<1} \rvert$ further linearly independent constraints to add to our basis. Note that we have already added all tight integrality constraints to our basis, and no non-negativity constraint is tight. Then the only remaining constraints we can add are the $C^*$-constraints and the $r$ constraints of $Ay \leq b$.
	
	We claim that we cannot add any $C^*_{=1}$-constraints, because every $C^*_{=1}$-constraint is of the form $y(F_j) = y_{i_j} = 1$ for the unique integral facility $i_j \in F_1$. Note that here we used the fact that there is no facility that is set to zero. Thus every $C^*_{=1}$-constraint is linearly dependent with the tight integrality constraints, which we already chose.
	
	It follows, the only possible constraints we can choose are the $C^*_{<1}$-constraints and the $r$ constraints of $Ay \leq b$ so:
	\[\lvert F_{< 1} \rvert \leq dim(C^*_{< 1}) + r.\]
\end{proof}

\begin{proof}[Proof of \Cref{prop_violating}]
	It suffices to upper bound the quantity $\sum\limits_{j \in C^*_{<1}} (\lvert F_j \rvert - 2)$ by $2r$, because each client in $V$ contributes at least one to this sum, and every term of this sum is non-negative (because every $F$-ball contains at least two facilities.)
	
	Using Proposition \ref{prop_degree}, we can bound $\sum\limits_{j \in C^*_{<1}} \lvert F_j \rvert \leq 2 \lvert F(C^*_{<1}) \rvert \leq 2 \lvert F_{<1} \rvert$, where in the final inequality, we use the fact that every $F$-ball for clients in $C^*_{<1}$ is supported on \emph{only} fractional facilities. To upper bound our desired quantity, we use this inequality combined with Lemma \ref{lem_fractionalextreme} to obtain:
	\[\sum\limits_{j \in C^*_{<1}} (\lvert F_j \rvert - 2) \leq 2 \lvert F_{<1} \rvert - 2 \lvert C^*_{<1} \rvert \leq 2(dim(C^*_{<1}) + r) - 2 \lvert C^*_{<1} \rvert \leq 2r.\]
\end{proof}

\begin{proof}[Proof of \Cref{prop_compchain}]
	Consider any $V_k$, which is the vertex set of $H_k$. We already established that $H_k$ is either a path or even cycle. In both cases, we can order $V_k = \{j_1, \dots, j_p\}$ such that $j_1 \rightarrow j_2 \rightarrow \dots \rightarrow j_p$ is a path in the intersection graph. We verify that $V_k$ satisfies both properties of a chain.
	
	Because $V_k \subset C^*_{<1} \setminus V$, we have $\lvert F_{j_q} \rvert = 2$ for all $q \in [p]$, and because $j_q \rightarrow j_{q+1}$ is an edge in the intersection graph for all $q \in [p-1]$, we have $F_{j_q} \cap F_{j_{q+1}} \neq \emptyset$ for all $q \in [p-1]$.
\end{proof}

\begin{proof}[Proof of \Cref{prop_path}]
	Because $H_k$ is a path, suppose $V_k = \{j_1, \dots, j_p\}$ such that $j_1 \rightarrow \dots \rightarrow j_p$. There are two cases to consider.
	
	We first handle the degenerate case where there exists $q \in [p-1]$ such that $\lvert F_{j_q} \cap F_{j_{q + 1}} \rvert =  2$. Note that each $F$-ball for clients in $V_k$ has size exactly two, so we have $F_{j_q} = F_{j_{q+1}}$. Then both facilities in $F_{j_q}$ are already in exactly two $F$-balls, so $j_q$ and $j_{q+1}$ can have no other neighbors in the intersection graph. This implies that $H_k$ is path of length two, so $j_1 \rightarrow j_2$, such that $F_{j_1} = F_{j_2}$. To finish this case, we note that $\lvert F(V_k) \rvert = 2$, but $dim(V_k) = 1$, because both constraints $y(F_{j_1}) = 1$ and $y(F_{j_2})= 1$ are the same.
	
	In the second case, for all $q \in [p-1]$, we have $\lvert F_{j_q} \cap F_{j_{q+1}} \rvert = 1$. Using the fact that each facility is in at most two $F$-balls (Proposition \ref{prop_degree}), we have that each non-leaf client on the path has two facilities in its $F$-ball - one that it shares with the previous client on the path and one that it shares with the next. For the leaf clients, they share one facility with their single neighbor in the path, and they have one facility that is not shared with any other client in $V_k$. With these observations, we can bound:
	\[dim(V_k) \leq \lvert V_k \rvert = \frac{1}{2} \sum\limits_{j \in V_k} \lvert F_j \rvert = \frac{1}{2} (2 \lvert F(V_k) \rvert - 2) = \lvert F(V_k) \rvert - 1.\]
\end{proof}

\begin{proof}[Proof of \Cref{prop_cycle}]
	Each $F$-ball for clients along this cycle contains exactly two facilities. Using Proposition \ref{prop_degree}, each client along this cycle shares one of its facilities with the previous client in the cycle and one with the next client. This implies that each facility in $F(V_k)$ is in exactly two $F$-balls. Combining these two observations gives:
	\[\lvert V_k \rvert = \frac{1}{2} \sum\limits_{j \in V_k} \lvert F_j \rvert = \frac{1}{2}(2 \lvert F(V_k) \rvert) = \lvert F(V_k) \rvert.\]
	Thus, to prove $\lvert F(V_k) \rvert > dim(V_k)$, it suffices to show that $dim(V_k) < \lvert V_k \rvert$. We do this by showing that the constraints $\{y(F_j) = 1 \mid j \in V_k\}$ are not linearly independent. By assumption, $H_k$ is bipartite with bipartition, say $L \cup R = V_k$. Consider the linear combination of the constraints $\{y(F_j) = 1 \mid j \in V_k\}$, where every constraint indexed by $L$ has coefficient $1$ and every constraint indexed by $R$ has coefficient $-1$. Then for every facility in $F(V_k)$, it is in exactly two $F$-balls, and these two $F$-balls must be on opposite sides of the bipartition, so each facility in $F(V_k)$ has coefficient $0$ in this linear combination. In conclusion, we have constructed a non-trivial linear combination of the constraints $\{y(F_j) \mid j \in V_k\}$ whose left hand side is the zero vector, so $dim(V_k) \leq \lvert V_k \rvert$.
\end{proof}

\section{Proof of Theorem \ref{thm_outpre}: $k$-Median with Outliers Pre-Processing}\label{appendix_preoutlier}

The goal of this section is to prove Theorem \ref{thm_outpre} using the relevant theorems from \cite{DBLP:conf/stoc/KrishnaswamyLS18}. Note that we follow exactly the same pre-processing steps; the only difference is that we summarize the results of their pre-processing in a single theorem.

The proof of the knapsack pre-processing, \Cref{thm_kpre}, follows analogously from the pre-processing steps in \cite{DBLP:conf/stoc/KrishnaswamyLS18} as well.

\subsection{Preliminaries}

We define the notions of \emph{extended instances} and \emph{sparse extended instances} for $k$-median with outliers. These definitions are useful to capture the properties of our pre-processing. 

Extended instances are used to handle the fact that in our pre-processing, we will guess some facilities to pre-open. Then $S_0$ is the set of guessed facilities.

\begin{definition}[Extended Instance for $k$-Median with Outliers]\label{def_extendedoutliers}
    An extended instance for $k$-median with outliers is of the form $\I = (F,C,d,k,m,S_0)$, where $F$, $C$, $d$, $k$, and $m$ are defined as in a standard $k$-median with outliers instance (see Definition \ref{def_instanceoutlier}), and $S_0 \subset F$.
    
    As in $k$-median with outliers, the goal is to choose a set of at most $k$ open facilities $S \subset F$ and at least $m$ clients $C' \subset C$ to serve to minimize the connection costs of the served clients to the open facilities, so $\sum\limits_{j \in C'} d(j,S)$. However, we add the additional constraint that the set of open facilities must include $S_0$.
\end{definition}

Further, sparse extended instances give our properties for what it means for the facilities and clients to be ``cheap" (see the second and third properties in the next definition, respectively.)

\begin{definition}[Sparse Extended Instance for $k$-Median with Outliers]\label{def_sparseinstance}
    Let $\I' = (F ,C', d, k, m', S_0)$ be an extended $k$-median with outliers instance and $\rho, \delta \in (0, \frac{1}{2})$, $U \geq 0$ be parameters. We say that $\I'$ is $(\rho,\delta, U)$-sparse with respect to solution $(S^*, {C^*}')$ if the following three properties hold:
    \begin{enumerate}
        \item the cost of the solution $(S^*, {C^*}')$ to $\I'$ is at most $U$
        \item for all $i \in S^* \setminus S_0$, we have $\sum\limits_{j \in {C^*}' \mid d(j,S^*) = d(j,i)} d(j, i) \leq \rho U$
        \item for all $p \in F \cup C'$, we have $\lvert B_{C'}(p, \delta d(p, S^*))\rvert d(p, S^*) \leq \rho U$
    \end{enumerate}
\end{definition}

\subsection{Sparsification}

In this section, we pass from the input $k$-median with outliers instance to a sparse extended sub-instance by guessing the expensive parts of the input instance. Then on this sparse extended sub-instance, we can strengthen $\lpbasic$. The following theorems are directly from \cite{DBLP:conf/stoc/KrishnaswamyLS18}, so we omit the proofs in this paper. The first theorem states that we can efficiently compute a sparse extended sub-instance at the cost of a small increase in approximation ratio.

\begin{thm}\label{thm_preguess}
	Let $\I = (F,C,d,m,k)$ be an instance of $k$-median with outliers with optimal solution $(S^*, C^*)$ and $\rho, \delta \in (0,1/2)$ be parameters. Then there exists a $n^{O(1/\rho)}$-time algorithm that given $\I$, $\rho$, $\delta$, and an upper bound $U$ on the cost of the optimal solution $(S^*, C^*)$\footnote{Note that we are given $U$, but not the solution $(S^*, C^*)$}, outputs $n^{O(1/\rho)}$-many extended $k$-median with outliers instances of the form $\I' = (F,C', d, m',k, S_0)$ such that $C' \subset C$, $m' = \lvert C^* \cap C' \rvert$, and $S_0 \subset S$. Further, one such instance $\I'$ is $(\rho, \delta, U)$-sparse with respect to the solution $(S^*, C^* \cap C')$ and satisfies:
    \begin{equation}\label{eq_partialcost}
        \frac{1 - \delta}{1 + \delta} \sum\limits_{j \in C^* \setminus C'} d(j, S_0) + \sum\limits_{j \in C^* \cap C'} d(j, S^*) \leq U
    \end{equation}
\end{thm}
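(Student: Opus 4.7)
The plan is to follow the standard iterative sparsification strategy of Krishnaswamy--Li--Sandeep. Starting with the trivial extended instance $(\I, S_0 = \emptyset)$, we repeatedly detect a witness to non-sparsity of the current $\I'$ with respect to the (unknown) hypothetical solution $(S^*, C^* \cap C')$, branch over a small set of candidate witnesses, and resolve each by augmenting $S_0$ and shrinking $C'$. The output is the collection of all sparse leaves of this branching tree. The two quantities to control are the branching depth, which will be $O(1/\rho)$, and the branching factor per step, which will be polynomial in $n$; this gives $n^{O(1/\rho)}$ sub-instances in total.

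First I would formalize the two possible witnesses of non-sparsity for the current $\I' = (F, C', d, k, m', S_0)$: either (a) a facility $i \in S^* \setminus S_0$ whose cluster of clients from $C^* \cap C'$ has total connection cost exceeding $\rho U$, or (b) a point $p \in F \cup C'$ with $|B_{C'}(p, \delta d(p, S^*))| \cdot d(p, S^*) > \rho U$. Since $(S^*, C^*)$ is unknown, I would enumerate surrogates: in case (a) try all $i \in F$, and in case (b) try all pairs $(p, \widehat r)$ where $p \in F \cup C'$ and $\widehat r$ is a power-of-$(1+\delta)$ guess of $d(p, S^*)$. For each guess in case (b), I would additionally branch over the identity of the closest facility to $p$ in $S^*$, call it $i_p$. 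To resolve a chosen witness, add $i$ (resp.\ $i_p$) to $S_0$; in case (b) also remove $B_{C'}(p, \delta \widehat r)$ from $C'$ and decrement $m'$ by $|\{j \in C^* \cap C' : j \in B_{C'}(p, \delta \widehat r)\}|$ (again guessed).

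The key budget argument: along the correct branch, each resolution is triggered by a witness whose associated contribution to the optimal cost is at least $\rho U$. For case (a), the cluster of $i$ contributes $> \rho U$ to the objective of $(S^*, C^* \cap C')$, and removing $i$ from future consideration eliminates this contribution. For case (b), the ball $B_{C'}(p, \delta \widehat r)$ contains $> \rho U / \widehat r$ clients, each at distance at least $(1 - \delta)\widehat r$ from $S^* \setminus \{i_p\}$, so together they account for at least $\Omega(\rho U)$ of the optimal cost restricted to $C' \setminus B_{C'}(p, \delta \widehat r)$ had we not pre-opened $i_p$. In either case, after each resolution the residual optimum cost drops by $\Omega(\rho U)$, so the branching depth is $O(1/\rho)$; since each level branches $\mathrm{poly}(n)$ ways, the total tree has $n^{O(1/\rho)}$ leaves. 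Finally, along the correct branch, the resulting $\I'$ by construction has no remaining witness and is thus $(\rho,\delta,U)$-sparse with respect to $(S^*, C^* \cap C')$.

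For the cost inequality, I would account separately for the clients in $C^* \cap C'$ (which still pay $d(j, S^*)$, summing to at most the cost of $(S^*, C^* \cap C')$) and those in $C^* \setminus C'$, each removed in some step of type (b) within a ball $B_{C'}(p, \delta \widehat r)$. For such a $j$, the triangle inequality gives $d(j, S_0) \le d(j, p) + d(p, i_p) \le \delta \widehat r + \widehat r$, while $d(j, S^*) \ge d(p, S^*) - d(p, j) \ge (1 - \delta)\widehat r$, so $\frac{1-\delta}{1+\delta} d(j, S_0) \le d(j, S^*)$. Summing and combining yields the displayed inequality. The main obstacle will be the bookkeeping in case (b) when $p$ is itself a facility rather than a client, and ensuring that the clusters removed across different resolutions stay disjoint so that the $\rho U$ credits do not double-count; both are handled by always measuring witnesses against the current $C'$ and by discretizing $\widehat r$ so that only $O(\log_{1+\delta} n)$ scales need enumeration per point.
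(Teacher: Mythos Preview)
The paper does not prove this statement; it is quoted from Krishnaswamy--Li--Sandeep and the paper explicitly says the proofs are omitted, so there is no in-paper argument to compare against. Your outline does follow the KLS sparsification strategy, and your derivation of the displayed cost inequality is correct: for each removed client $j$ the triangle inequality gives $\frac{1-\delta}{1+\delta}\,d(j,S_0)\le d(j,S^*)$, and summing over $C^*$ recovers~\eqref{eq_partialcost}.

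There is, however, a real gap in your depth bound for case~(b). You claim the clients in the dense ball ``account for at least $\Omega(\rho U)$ of the optimal cost,'' but the witness condition $|B_{C'}(p,\delta d(p,S^*))|\cdot d(p,S^*)>\rho U$ counts \emph{all} $C'$-clients in the ball, not just those in $C^*$. In the outliers setting the ball may consist entirely of outliers: a tight cluster of $N$ co-located clients at distance $D$ from $S^*$, all discarded by the optimum, is a type-(b) witness whenever $ND>\rho U$, yet resolving it removes zero optimal cost. Your potential (``residual optimum cost'') therefore need not decrease in case~(b), and many such disjoint outlier clusters would drive the branching depth far beyond $O(1/\rho)$, violating the $n^{O(1/\rho)}$ instance bound. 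The disjointness remark at the end of your proposal handles double-counting of credits that \emph{do} exist but does not manufacture credits for outlier-heavy balls. To repair this you need either a different potential that also charges outliers, or a direct argument bounding the number of distinct facilities $i_p$ that can enter $S_0$ from type-(b) steps; as written, the case-(b) budget step does not go through.
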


Once we have our sparse extended sub-instance, say $\I'$, we use these sparsity properties to compute the $R$-vector, which is needed for our Extra Invariants.

\begin{thm}\label{thm_prerad}
	Let $\I' = (F, C', d, m', k, S_0)$ be an extended $k$-median with outliers instance and $\rho, \delta \in (0, 1/2)$ and $U \geq 0$. Suppose $\I'$ is $(\rho, \delta, U)$-sparse instance with respect to solution $(S^*, {C^*}')$ to $\I'$ such that $(S^*, {C^*}')$ has cost $U'$ on $\I'$. Then there exists a polynomial time algorithm that takes as input $\I'$, $\rho$, $\delta$, and $U$ and outputs $R \in \mathbb{R}_+^{C'}$ satisfying:
	\begin{enumerate}
		\item For every $t > 0$ and $p \in F \cup C'$, we have:
		\[\lvert \{j \in B_{C'}(p, \frac{\delta t}{4 + 3 \delta}) \mid R_j \geq t \} \rvert \leq \frac{\rho (1 + 3\delta / 4)}{1 - \delta/4} \frac{U}{t}\]
		\item There exists a solution to $\I'$ of cost at most $(1 + \delta / 2) U'$ such that if client $j$ is connected to facility $i$, then $d(j,i) \leq R_j$ and for any facility $i \notin S_0$, the total cost of clients connected to $i$ in this solution is at most $\rho (1 + \delta / 2) U$\label{prop_sparsesolution}
	\end{enumerate}
\end{thm}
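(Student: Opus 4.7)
Since the algorithm receives only $\I'$, $\rho$, $\delta$, $U$ and not the optimal $(S^*, {C^*}')$ or its cost $U'$, my plan is to define $R$ through a purely combinatorial greedy procedure on the metric, mirroring the sparsification analysis of \cite{DBLP:conf/stoc/KrishnaswamyLS18}. I would consider candidate threshold values $t$ taken from the $O(n^2)$ inter-point distances in $\I'$, processed in decreasing order. For each $t$, I scan points $p \in F \cup C'$ and, whenever a small ball $B_{C'}(p, \frac{\delta t}{4+3\delta})$ contains more than $\frac{\rho(1+3\delta/4)}{1-\delta/4}\frac{U}{t}$ clients that have not yet been assigned a larger $R$-value, assign $R_j = t$ to enough of these clients so that the count bound in property (1) holds by construction at threshold $t$. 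Clients that are never touched receive $R_j = 0$. The resulting vector $R \in \mathbb{R}_+^{C'}$ satisfies property (1) by design, since at every subsequent threshold $t' < t$ the ball counts cannot grow.

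For property (2), I would produce the required solution by modifying $(S^*, {C^*}')$ via a rerouting argument. For each $j \in {C^*}'$ with $d(j, S^*) > R_j$, the construction of $R$ forces $j$ to lie in some dense ball $B_{C'}(p, \frac{\delta t}{4+3\delta})$ with $t = d(j, S^*)$ that contains another client $j'$ with $R_{j'} \geq t$; I reroute $j$ to $j'$'s assigned facility $\pi^*(j') \in S^*$. By the triangle inequality the new distance is
\[
d(j, \pi^*(j')) \leq d(j,j') + d(j', \pi^*(j')) \leq \tfrac{2\delta t}{4+3\delta} + t,
\]
with the second summand absorbed into the original cost $t = d(j, S^*)$, so the overhead is only the diameter of the small ball. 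To bound the total overhead I apply sparsity condition (3) of $\I'$ at each violating witness $p$: the triangle inequality forces $d(p, S^*) \geq t \cdot \tfrac{4+2\delta}{4+3\delta}$, and sparsity at $p$ then bounds the number of rerouted clients by $O(\rho U/t)$, each paying additional $O(\delta t)$. Summing over the geometric levels of $t$ yields total overhead at most $(\delta/2) U'$, giving the $(1+\delta/2) U'$ cost bound. The per-facility bound $\rho(1+\delta/2) U$ then follows by combining sparsity condition (2) at each $i \notin S_0$ with the bounded extra mass rerouted onto $i$.

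The main obstacle is the tight coupling of three constants: the ball radius $\frac{\delta}{4+3\delta}$, the counting bound $\frac{\rho(1+3\delta/4)}{1-\delta/4}$, and the cost blowup $(1+\delta/2)$ must all be consistent under repeated triangle inequality along chains of rerouted clients. Each constant needs to be tracked through the greedy construction and the rerouting analysis, and the specific form of the fraction in (1) comes precisely from the ratio $\frac{4+3\delta}{4+2\delta}$ introduced by the triangle inequality at the witness $p$, further inflated by the $(1+\delta/4)$-type slack absorbed into the $U'\to U$ conversion. A secondary subtlety is that when a rerouted client's destination $j'$ is itself rerouted; this is handled by observing that the greedy processes thresholds in decreasing order, each client is touched at exactly one level, so any rerouting chain descends strictly in $R$-value and terminates in a single hop. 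Polynomial runtime is immediate since there are $O(n^2)$ candidate thresholds and the per-threshold scan is $O(n^2)$.
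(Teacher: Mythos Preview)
The paper does not prove this theorem; it states explicitly that Theorems~\ref{thm_preguess} and~\ref{thm_prerad} are ``directly from \cite{DBLP:conf/stoc/KrishnaswamyLS18}, so we omit the proofs in this paper.'' There is therefore no paper proof to compare against; I can only assess whether your sketch could be completed. It has two genuine gaps.

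\emph{The greedy construction is backwards.} You process thresholds $t$ in decreasing order and, when a ball $B_{C'}(p,\frac{\delta t}{4+3\delta})$ contains too many clients ``not yet assigned a larger $R$-value,'' you set $R_j = t$ for some of them; untouched clients receive $R_j = 0$. But setting $R_j = t$ \emph{adds} $j$ to the set $\{j : R_j \geq t\}$, so the left-hand side of property~(1) increases rather than decreases; the bound cannot hold ``by construction.'' For a capping procedure to enforce~(1), unassigned clients would need default value $+\infty$ (not $0$), and the assignment at level $t$ would have to push $R_j$ strictly below $t$. As written the procedure does not yield~(1).

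\emph{The rerouting does not achieve $d(j,i) \leq R_j$.} For $j \in {C^*}'$ with $d(j,S^*) > R_j$, you set $t = d(j,S^*)$, find a nearby $j'$ with $R_{j'} \geq t$, and reroute $j$ to $\pi^*(j')$. Your bound is $d(j,\pi^*(j')) \leq \frac{2\delta t}{4+3\delta} + d(j',\pi^*(j'))$. You have not argued $d(j',S^*) \leq t$ (having $R_{j'} \geq t$ says nothing about $d(j',S^*)$), and even granting it, the resulting bound is roughly $(1+\tfrac{\delta}{2})t$, which exceeds $t > R_j$. So the rerouted connection still violates $d(j,i)\leq R_j$, precisely the constraint property~(2) demands. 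Your single-hop remark also inverts direction: you select $j'$ with $R_{j'} \geq t > R_j$, so any chain would ascend in $R$-value, not descend. What is actually needed is a definition of $R_j$ (computable from the instance alone, e.g.\ as a maximal radius at which a local density bound holds) for which sparsity condition~(3) of Definition~\ref{def_sparseinstance}, applied at $p = j$, directly certifies $R_j \geq d(j,S^*)$ for every served client --- so that no distance-violating clients exist and the $(1+\delta/2)$ slack and per-facility bound are obtained separately.
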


\subsection{Putting it all Together: Proving Theorem \ref{thm_outpre}}

Combining the algorithms guaranteed by these above two theorems, we show how to construct $\lpreroute$ with the desired properties.

    Suppose we are given a $k$-median with outliers instance $\I = (F,C,d,m,k)$, parameters $\rho, \delta \in (0, \frac{1}{2})$, and an upper bound $U$ of $Opt(\I)$. First we run the algorithm guaranteed by Theorem \ref{thm_preguess} to obtain $n^{O(1/\rho)}$-many extended $k$-median with outliers instances. Then for each instance, we run the algorithm guaranteed by Theorem \ref{thm_prerad} to obtain a vector $R$ for each such instance.
    
    By Theorem \ref{thm_preguess}, let $\I' = (F,C' \subset C,d,m' = m - \lvert C^* \setminus C' \rvert,k, S_0)$ be the instance output by the first algorithm such that $\I'$ is $(\rho, \delta, U)$-sparse with respect to the solution $(S^*, C^* \cap C')$ and satisfies Equation \ref{eq_partialcost}. This sub-instance will be the one that is guaranteed by Theorem \ref{thm_outpre}, so from here we need to compute the $R$-vector, and construct $\lpreroute$ with the desired properties.
    
    Note that the cost of solution $(S^*, C^* \cap C')$ to $\I'$ is exactly $U' = \sum\limits_{j \in C^* \cap C'} d(j, S^*)$. It follows, on this instance $\I'$, the algorithm guaranteed by Theorem \ref{thm_prerad} outputs a vector $R \in \mathbb{R}_+^{C'}$ such that for every $t > 0$ and $p \in F \cup C'$, we have:
    \[\lvert \{j \in B_{C'}(p, \frac{\delta t}{4 + 3 \delta}) \mid R_j \geq t \} \rvert \leq \frac{\rho (1 + 3\delta / 4)}{1 - \delta/4} \frac{U}{t}\]
    , and there exists a solution, say $(\bar{S}, \bar{C})$ to $\I'$ of cost at most $(1 + \delta /2) U'$ such that if $j$ is connected to facility $i$, then $d(j,i) \leq R_j$ and for any facility $i \in \bar{S} \setminus S_0$, the total cost of clients connected to $i$ in this solution is at most $\rho(1 + \delta/2) U$.
    
    It remains to construct $\lpreroute$. To do so, first we construct a strengthened LP for the instance $\I'$ such that $(\bar{S}, \bar{C})$ is feasible for the strengthened LP, which we call $\lpextra$:
    
\begin{align*}\tag{$\lpextra$}
	\min\limits_{x,y}~~ &\sum\limits_{i \in F, j \in C'} d(i,j) x_{ij}\\
	\text{s.t.}~~ &x_{ij} \leq y_i \quad \forall i \in F, j \in C' &  &y_i = 1 \quad \forall i \in S_0\\
	&\sum\limits_{i \in F} x_{ij} \leq 1 \quad \forall j \in C' & &x_{ij} = 0 \quad \forall i \in F, j \in C' \text{ s.t. } d(i,j) > R_j\\
	&\sum\limits_{i \in F} y_i \leq k & & \sum\limits_{j \in C'} d(i,j) x_{ij} \leq \rho (1 + \delta / 2) U y_i \quad \forall i \notin S_0\\
	&\sum\limits_{j \in C', i \in F} x_{ij} \geq m & & x_{ij} = 0 \quad \forall i \notin S_0, j \in C' \text{ s.t. } d(i,j) > \rho (1 + \delta/2) U\\
	&0 \leq x,y \leq 1 \\
\end{align*}

The left column of constraints are the same as $\lpbasic$ and the right column of constraints are extra constraints that are valid for the solution $(\bar{S}, \bar{C})$ to our sub-instance $\I'$. Because these constraints are valid for the solution $(\bar{S}, \bar{C})$, the following proposition is immediate.

\begin{prop}\label{prop_lpextracost}
	$Opt(\lpextra) \leq (1 + \delta / 2)U'$.
\end{prop}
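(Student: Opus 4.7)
\smallskip
\textbf{Proof proposal for Proposition \ref{prop_lpextracost}.}

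The plan is to exhibit an explicit feasible solution $(\bar{x}, \bar{y})$ to $\lpextra$ whose objective value equals the cost of the solution $(\bar{S}, \bar{C})$ guaranteed by Theorem \ref{thm_prerad}(\ref{prop_sparsesolution}), which is at most $(1+\delta/2)U'$. The natural integral solution to use is: set $\bar{y}_i = 1$ iff $i \in \bar{S}$, and set $\bar{x}_{ij} = 1$ iff $j \in \bar{C}$ and $j$ is connected to $i$ in the solution $(\bar{S}, \bar{C})$ (with all other entries set to zero). With this choice the objective of $\lpextra$ is exactly $\sum_{j \in \bar{C}} d(j, i(j))$ where $i(j)$ is the facility serving $j$, which by Theorem \ref{thm_prerad}(\ref{prop_sparsesolution}) is at most $(1+\delta/2) U'$.

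The bulk of the work is verifying feasibility, which amounts to checking each constraint of $\lpextra$ against the known properties of $(\bar{S}, \bar{C})$. The constraints on the left column (the standard $\lpbasic$-type constraints) are immediate from the fact that $(\bar{S}, \bar{C})$ is a feasible solution to $\I'$: we have $|\bar{S}| \leq k$, $|\bar{C}| \geq m'$ (which matches the coverage constraint since $\lpextra$ is written on the sub-instance $\I'$), each client is assigned to at most one facility, and $\bar{x}_{ij}=1$ forces $i \in \bar{S}$ hence $\bar{y}_i = 1$.

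For the right column, the constraint $y_i = 1$ for $i \in S_0$ follows from the definition of an extended instance, which requires $S_0 \subseteq \bar{S}$. The constraint $\bar{x}_{ij} = 0$ when $d(i,j) > R_j$ is exactly the distance bound promised by Theorem \ref{thm_prerad}(\ref{prop_sparsesolution}). For the per-facility cost cap $\sum_j d(i,j) \bar{x}_{ij} \leq \rho(1+\delta/2)U \cdot \bar{y}_i$ for $i \notin S_0$: if $\bar{y}_i = 0$ then both sides vanish, and if $\bar{y}_i = 1$ then $i \in \bar{S} \setminus S_0$ and the bound on the total cost of clients connected to $i$ in Theorem \ref{thm_prerad}(\ref{prop_sparsesolution}) gives the inequality. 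Finally, the constraint $\bar{x}_{ij} = 0$ for $i \notin S_0$ with $d(i,j) > \rho(1+\delta/2)U$ follows from the same per-facility cost cap, since if $\bar{x}_{ij} = 1$ and $i \notin S_0$ then $d(i,j)$ alone is at most the total cost of clients connected to $i$, hence at most $\rho(1+\delta/2)U$.

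There is no substantive obstacle here: the proposition is essentially a bookkeeping step confirming that the extra constraints added to $\lpbasic$ to form $\lpextra$ were designed precisely so as to be valid for the sparsified solution produced by Theorem \ref{thm_prerad}. The only mild subtlety is matching the coverage constraint of $\lpextra$ (written with $m$) against the sub-instance's requirement $m' = |C^* \cap C'|$; this is handled because $|\bar{C}| \geq m'$ by feasibility of $(\bar{S}, \bar{C})$ for $\I'$, and the LP as written in the excerpt is understood to be on $\I'$ with $m$ replaced by $m'$ throughout.
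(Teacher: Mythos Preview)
Your proposal is correct and follows exactly the approach the paper takes: the paper simply states that the right-column constraints of $\lpextra$ were designed to be valid for the solution $(\bar{S},\bar{C})$ from Theorem~\ref{thm_prerad}, so the integral assignment induced by $(\bar{S},\bar{C})$ is feasible with objective at most $(1+\delta/2)U'$. You have merely spelled out the constraint-by-constraint verification that the paper declares ``immediate,'' including correctly flagging the $m$ versus $m'$ notational slip in the coverage constraint.
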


From here, we want to carry out a similar construction as in \S \ref{sec_iteroverview}, where we construct $\lpreroute$ satisfying all Basic Invariants from $\lpbasic$. We note that the main difference in our procedure here when compared to \S \ref{sec_iteroverview} is how we eliminate the $x$-variables. To compute the $F$-balls for $\lpextra$, we must carefully duplicate facilities to capture the constraints: $\sum\limits_{j \in C'} d(i,j) x_{ij} \leq \rho (1 + \delta / 2) U y_i \quad \forall i \notin S_0$. We pass from $\lpextra$ to $\lpfacil$ with the next lemma.

\begin{lem}\label{lem_extrasplit}
    There exists a polynomial time algorithm that takes as input $\lpextra$, duplicates facilities in $F$, and outputs a vector $\bar{y} \in [0,1]^F$ and sets $F_j \subset B_F(j,R_j)$ for all $j \in C'$ such that:
    \begin{enumerate}
        \item $\bar{y}(F_j) \leq 1$ for all $j \in C'$
        \item $\bar{y}(F) \leq k$
        \item $\sum\limits_{j \in C'} \bar{y}(F_j) \geq m$
        \item $\sum\limits_{j \in C} \sum\limits_{i \in F_j} d(i,j) \bar{y}_i \leq Opt(\lpextra)$
        \item For all $i \in S_0$, there is one unit of open facility co-located with $i$ in $\bar{y}$
        \item For every facility $i$ not co-located with a facility in $S_0$, we have $\sum\limits_{j \in C' \mid i \in F_j} d(i,j) \leq 2 \rho ( 1 + \delta / 2) U$
    \end{enumerate}
\end{lem}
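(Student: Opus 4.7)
The plan is to take an optimal LP solution $(x^*, y^*)$ to $\lpextra$ and refine the facility-duplication of \Cref{prop_LPfacil}, additionally enforcing the per-copy total-distance bound (property 6). Facilities $i \in S_0$ are handled directly: the constraint $y_i = 1$ in $\lpextra$ gives $\bar{y}_i = 1$, and the $R_j$-cap clause forces $F_j \subseteq B_F(j, R_j)$ whenever we put $i \in F_j$, so property 5 follows by co-locating copies at each $i \in S_0$.

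For $i \notin S_0$ I start with the standard scheme of \Cref{prop_LPfacil}: sort the clients served by $i$ so that $x^*_{i j_1} \leq \dots \leq x^*_{i j_n}$ and create copies $i_1, \dots, i_n$ with $y'_{i_k} = x^*_{i j_k} - x^*_{i j_{k-1}}$, placing $i_k$ in $F_{j_\ell}$ for all $\ell \geq k$. This inherits properties 1, 3, and 4 exactly as in \Cref{prop_LPfacil}: each client's total contribution equals $\sum_i x^*_{ij} \leq 1$, total coverage is $\sum_{ij} x^*_{ij} \geq m$, and the objective equals $Opt(\lpextra)$. The distance sum at copy $i_k$ is $c_k := \sum_{\ell \geq k} d(i, j_\ell)$, and from the side-constraint $\sum_j d(i,j) x^*_{ij} \leq M y^*_i$ (where $M = \rho(1 + \delta/2)U$) together with the inequality $\mathbf{1}(\ell \geq k) \leq x^*_{i j_\ell}/x^*_{i j_k}$, one derives $c_k \leq M y^*_i / x^*_{i j_k}$. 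Hence property 6 is automatic wherever $x^*_{i j_k} \geq y^*_i/2$, and the only ``bad'' levels are those with $x^*_{i j_k} < y^*_i/2$.

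For each bad level $k$, I would replace copy $i_k$ by sub-copies obtained from a greedy partition of $\{j_k, \dots, j_n\}$ into subgroups of distance-sum at most $2M$ (which is feasible since the side-constraint $x_{ij} = 0$ whenever $d(i,j) > M$ caps each individual distance by $M$). Each sub-copy gets mass $y'_{i_k}$ and serves only its subgroup; since each client in $\{j_k, \dots, j_n\}$ lies in a unique subgroup, the per-client mass at level $k$ is unchanged, so properties 1, 3, and 4 are preserved and property 6 now holds by construction. The whole construction is polynomial-time because everything is local to each facility.

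The main obstacle is property 2, bounding $\bar{y}(F)$ after the replacements. The direct greedy estimate shows that each bad copy creates at most $\lceil c_k/(2M) \rceil$ sub-copies, so the total extra mass at $i$ is at most $\frac{1}{M}\sum_k c_k y'_{i_k} = \frac{1}{M}\sum_j d(i,j) x^*_{ij} \leq y^*_i$; together with the original mass $x^*_{i j_n} \leq y^*_i$ this only yields $\bar{y}(F) \leq 2k$. The gap is closed by the key observation that whenever any bad level exists, $x^*_{i j_{k^*-1}} < y^*_i/2$ (where $k^*$ is the smallest good index), so the bad-copy mass in the original scheme is strictly less than $y^*_i/2$; I can then ``share'' the $y^*_i$-budget between the good copies and the bad-copy replacements rather than paying for both independently, the accounting trick being that sub-copies are indexed so that their masses overwrite, rather than stack on top of, the bad copies they replace. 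The bookkeeping is delicate but routine once one couples it with the side-constraint $\sum_j d(i,j) x^*_{ij} \leq M y^*_i$ to absorb the extra mass into the unused portion of the budget.
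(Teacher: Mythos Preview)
The paper does not actually prove this lemma: it is stated in \S\ref{appendix_preoutlier} with no accompanying proof (the surrounding text attributes the pre-processing machinery to \cite{DBLP:conf/stoc/KrishnaswamyLS18}). I can therefore only assess your argument on its own merits.

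Your treatment of properties 1, 3, 4, 5 via the splitting of \Cref{prop_LPfacil} is fine, and the bound $c_k \le M y^*_i / x^*_{i j_k}$ is correct, giving property 6 immediately on the ``good'' copies. The gap is property 2. Your sub-copy construction for a bad level $k$ replaces one copy of mass $y'_{i_k}$ by $p_k$ sub-copies each of mass $y'_{i_k}$; since the sub-copies serve disjoint client groups, each must keep the full mass $y'_{i_k}$ so that every client in $\{j_k,\dots,j_n\}$ still receives its $y'_{i_k}$ share from level $k$. Hence the mass at level $k$ becomes $p_k\, y'_{i_k}$, and there is no sense in which these masses can ``overwrite rather than stack.'' Your own computation already shows the resulting bound is only $\bar y(F)\le 2k$.

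Here is an instance on which your construction exceeds $y^*_i$. Take a single facility $i\notin S_0$ with $y^*_i=1$, and $n$ clients (say $n$ a large multiple of $4$): clients $j_1,\dots,j_{n/2}$ have $x^*_{ij_\ell}=2/n$ and $d(i,j_\ell)=M$, while $j_{n/2+1},\dots,j_n$ have $x^*_{ij_\ell}=1$ and $d(i,j_\ell)=0$. Then $\sum_\ell d(i,j_\ell)x^*_{ij_\ell}=M=M y^*_i$, so all side constraints of $\lpextra$ are met. After the standard split, the only nonzero bad copy is $i_1$ with mass $2/n$ and cost $c_1=(n/2)M$; the only nonzero good copy is $i_{n/2+1}$ with mass $1-2/n$ and cost $0$. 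Your greedy partition of $i_1$'s client set yields $n/4$ groups (two cost-$M$ clients per group, the zero-cost clients tacked onto the last group), hence $n/4$ sub-copies of mass $2/n$ each, contributing $1/2$; together with the good copy the total mass is $3/2-2/n>1$. If $k=1$ this directly violates property 2.

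What is missing is that the good copies must be rebalanced along with the bad ones. In the example, if every sub-copy of $i_1$ also carries all the zero-cost clients (this does not raise its cost), then those clients receive $1/2$ mass from the sub-copies and only $1/2$ more is needed from $i_{n/2+1}$; shrinking that copy to mass $1/2$ recovers total mass $1$. One way to formalize this globally is to assign each client $j$ a (possibly disconnected) subset $I_j\subseteq[0,y^*_i]$ of measure $x^*_{ij}$ so that at every point $\theta$ the total weight $\sum_{j:\theta\in I_j} d(i,j)$ is at most $2M$, and then take the copies to be the atoms of the partition generated by the $I_j$'s. The hypotheses $d(i,j)\le M$ and $\sum_j d(i,j)x^*_{ij}\le M y^*_i$ are exactly what make such a placement exist, but establishing this requires an argument beyond the per-level splitting you propose.
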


Applying the algorithm guaranteed by the above lemma to $\lpextra$, we can obtain the $F_j$-sets. Using these $F$-balls, we proceed similarly as in \S \ref{sec_iteroverview}. Thus, next we randomly discretize the distances to powers of $\tau > 1$ (up to a random offset) to obtain $d'(p,q)$ for all $p,q \in F \cup C$. Again, the possible discretized distances are $L(-2) = -1, L(-1) = 0, \dots, L(\ell) = \alpha \tau^\ell$ for all $\ell \in \mathbb{N}$, and $d'$ satisfies Lemma \ref{lem_disc}.

Then we define the radius levels and inner balls in the exact same way, so:
\[\ell_j = \min\limits_{\ell \geq -1} \{\ell \mid d'(j,i) \leq L(\ell) \quad \forall i \in F_j\}\]
\[B_j = \{i \in F_j \mid d'(j,i) \leq L(\ell_j - 1)\}\]

To complete the data of $\lpreroute$ for $\I'$, we need to define the sets $C_{part}$, $C_{full}$, and $C^*$. Here we must slightly modify the construction of \S \ref{sec_iteroverview} to accommodate the set of pre-opened facilities, $S_0$. To satisfy Extra Invariant \ref{invar_outextra}(\ref{invar_outguessfacil}), we create a set $C_0$ of dummy clients such that for each $i \in S_0$, there exists a dummy client $j(i) \in C_0$ that is co-located with $i$ such that $F_{j(i)}$ has radius level $-1$ and consists of all co-located copies of $i$. Thus, we define $C_{part} = C'$, $C_{full} = \emptyset$, and $C^* = C^0$.

This completes the description of $\lpreroute$ for sub-instance $\I'$. To complete our algorithm, we output each $\I'$ along with $\lpreroute$, $S_0$, and $R$.

To summarize, our algorithm is to first run the algorithm guaranteed by Theorem \ref{thm_preguess} to obtain $n^{O(1/\rho)}$-many sub-instances. For each sub-instance, we compute $R$ using Theorem \ref{thm_prerad}, construct $\lpextra$, construct the $F$-balls using Lemma \ref{lem_extrasplit}, and define the rest of the data of $\lpreroute$ as in \S \ref{sec_iteroverview}. The runtime of our algorithm is immediate, so it suffices to show that one of the outputs has the desired properties.

In particular, we consider the sub-instance $\I' = (F,C' \subset C,d,m' = m - \lvert C^* \setminus C' \rvert,k, S_0)$ output by the algorithm guaranteed by Theorem \ref{thm_preguess} such that $\I'$ is $(\rho, \delta, U)$-sparse with respect to the solution $(S^*, C^* \cap C')$ and satisfies Equation \ref{eq_partialcost}. For the remainder of this section, we consider the $\lpreroute$ constructed for this specific sub-instance. To complete the proof, we verify that $\lpreroute$ satisfies the two desired properties.

\begin{prop}\label{prop_extralp}
    $\lpreroute$ satisfies all Basic and Extra Invariants.
\end{prop}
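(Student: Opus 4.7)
The plan is to verify each Basic and Extra Invariant in turn for the constructed $\lpreroute$, by unwinding how its data was assembled from Theorem~\ref{thm_preguess} (guessing and sparsification), Theorem~\ref{thm_prerad} (construction of $R$), Lemma~\ref{lem_extrasplit} (the $F$-ball construction), and the random discretization.

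For the Basic Invariants, parts (1)--(4) are essentially bookkeeping: the partition $C = C_{part} \cup C_{full} \cup C^* = C' \cup \emptyset \cup C_0$ is immediate by the choice of $C_{part}, C_{full}, C^*$ and the fact that dummy clients in $C_0$ are fresh elements added to $C$; the quantities $\ell_j$ and $B_j$ are defined exactly to make (2) and (3) hold by fiat; and for (4) the non-dummy $\ell_j \geq -1$ holds by the definition of $\ell_j$ as a minimum over $\ell \geq -1$, while the dummy $\ell_{j(i)}$ is set to $-1$ explicitly. The only nontrivial piece is the Distinct Neighbors Property (5): since $C^* = C_0$ and every dummy client sits at radius level $-1$, an intersection between two $C^*$-balls would give two balls at the same radius level, violating (5). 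We therefore need the family $\{F_{j(i)} \mid i \in S_0\}$ to be pairwise disjoint; this we get from Lemma~\ref{lem_extrasplit}, which builds each $F_{j(i)}$ as the set of duplicated copies of the single facility $i \in S_0$ to which $j(i)$ is attached, so the families indexed by distinct $i$ are disjoint by construction.

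For the Extra Invariants, (1) holds by the very definition of the dummy clients $C_0$; (2) is exactly the final property guaranteed by Lemma~\ref{lem_extrasplit} (modulo a harmless rescaling of the internal $\rho$-parameter by a constant so that the stated bound $\rho(1+\delta)U$ matches the $2\rho(1+\delta/2)U$ produced by the lemma); (3) follows because Lemma~\ref{lem_extrasplit} guarantees $F_j \subseteq B_F(j, R_j)$, so $d(j,i) \leq R_j$ for all $i \in F_j$, and discretization inflates distances by at most a factor $\tau$, giving $L(\ell_j) \leq \tau R_j$ (for dummy clients we set $R_{j(i)} = 0$ and note $L(-1) = 0$); and (4) is inherited directly from Theorem~\ref{thm_prerad}(1). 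The only care needed for (4) is that the invariant is stated over $F \cup C$ rather than $F \cup C'$: dummy clients $p \in C_0$ are co-located with the corresponding $i \in S_0 \subseteq F$, so taking $p$ to be a dummy client reduces to the case $p \in F$; moreover dummy clients themselves have $R = 0 < t$, so they are never counted on the left-hand side of the inequality, and the bound follows from the $C'$-version supplied by Theorem~\ref{thm_prerad}.

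The main obstacle, if there is one, is the verification of the Distinct Neighbors Property in Basic Invariant (5), which hinges on using the facility duplication in Lemma~\ref{lem_extrasplit} so that the $F$-balls of distinct dummy clients remain disjoint even when several facilities in $S_0$ would otherwise be co-located. Everything else is a direct unfolding of previously proved guarantees, with no further estimates required.
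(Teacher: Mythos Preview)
Your proposal is correct and follows essentially the same approach as the paper, which dispatches the Basic Invariants as ``by construction'' and then cites Lemma~\ref{lem_extrasplit} and Theorem~\ref{thm_prerad} for the Extra Invariants one by one. You are in fact more careful than the paper: you explicitly handle the dummy clients for Extra Invariants~(3) and~(4), and you flag the constant discrepancy between the $2\rho(1+\delta/2)U$ from Lemma~\ref{lem_extrasplit} and the $\rho(1+\delta)U$ stated in Extra Invariant~(2), correctly identifying it as harmless for the downstream $O(\rho/\delta)$ bounds.

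One small misattribution: the disjointness of $\{F_{j(i)} : i \in S_0\}$ that you need for the Distinct Neighbors Property does not come from the duplication in Lemma~\ref{lem_extrasplit} (duplicated copies stay co-located, so duplication alone cannot separate overlapping dummy $F$-balls). Rather, each $F_{j(i)}$ consists of all copies co-located with $i$, so disjointness follows once one assumes without loss of generality that no two facilities in $S_0$ are co-located---an assumption the paper invokes explicitly elsewhere (see the proof of Lemma~\ref{lem_extremedisjoint}). With that correction your argument for Basic Invariant~(5) goes through as written.
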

\begin{proof}
    It is easy to verify that $\lpreroute$ satisfies all Basic Invariants by construction. For the Extra Invariants, we handle them one-by-one.
    
     Extra Invariant \ref{invar_outextra}(\ref{invar_outguessfacil}) holds by construction of $\lpreroute$. To show Extra Invariant \ref{invar_outextra}(\ref{invar_outsparsefacil}), we again apply Lemma \ref{lem_extrasplit}, which states that the $F$-balls have the desired property.
    
    To show Extra Invariant \ref{invar_outextra}(\ref{invar_outupperbound}), we note that in Lemma \ref{lem_extrasplit}, the $F$-balls are constructed such that $F_j \subset B_F(j, R_j)$, for all $j \in C'$. Thus, for all $j \in C'$ and $i \in F_j$, we have $d(i,j) \leq R_j$. Then by definition of the radius levels, we have $L(\ell_j) \leq \tau R_j$, as required. Finally, Extra Invariant \ref{invar_outextra}(\ref{invar_outsparserad}) follows from the guarantee of Theorem \ref{thm_prerad}.
\end{proof}

\begin{prop}\label{prop_extralpopt}
    $\frac{\log_e \tau}{(\tau - 1)(1 + \delta/2)}\mathbb{E}[Opt(\lpreroute)] + \frac{1 - \delta}{1+ \delta} \sum\limits_{j \in C^* \setminus C'} d(j, S_0) \leq U$
\end{prop}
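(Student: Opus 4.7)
The plan is to chain together the three pre-processing pieces already established in this section. I fix attention on the specific sub-instance $\I' = (F, C' \subset C, d, k, m')$ produced by Theorem \ref{thm_preguess} that is $(\rho, \delta, U)$-sparse with respect to the feasible solution $(S^*, C^* \cap C')$ of $\I'$, and denote its cost by $U' = \sum_{j \in C^* \cap C'} d(j, S^*)$. Inequality \eqref{eq_partialcost} then reads
\[\tfrac{1-\delta}{1+\delta}\, \sum_{j \in C^* \setminus C'} d(j, S_0) \;+\; U' \;\leq\; U,\]
so it suffices to show $\frac{\log_e \tau}{(\tau-1)(1+\delta/2)}\,\mathbb{E}[Opt(\lpreroute)] \leq U'$.

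For this, I first invoke Proposition \ref{prop_lpextracost} to get $Opt(\lpextra) \leq (1+\delta/2)\,U'$. Next I apply Lemma \ref{lem_extrasplit} to an optimal solution of $\lpextra$; this is exactly the step that produces the $F$-balls used to build $\lpreroute$ and outputs a vector $\bar{y}$ with $\sum_{j \in C'} \sum_{i \in F_j} d(i,j)\bar{y}_i \leq Opt(\lpextra)$. I then check that $\bar{y}$ is feasible for $\lpreroute$ on this sub-instance: the constraints $\bar{y}(F_j) \le 1$ for $j \in C_{part} = C'$ and $\bar{y}(F) \le k$ come directly from the lemma; the equality $\bar{y}(F_{j(i)}) = 1$ for each dummy $j(i) \in C^* = C_0$ holds because Lemma \ref{lem_extrasplit} places one unit of open facility colocated at every $i \in S_0$; and the coverage constraint $\sum_{j \in C_{part}} \bar{y}(F_j) \geq m' - |C_0|$ is implied by $\sum_{j \in C'} \bar{y}(F_j) \geq m'$. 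Crucially, the $F$-balls are produced from the (deterministic) $R$-vector, so $\bar{y}$ is feasible for $\lpreroute$ on every sample of the random offset.

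Finally, I translate the $d$-cost bound on $\bar{y}$ into an expected $d'$-cost bound. Since $C_{full} = \emptyset$ and each dummy client $j(i) \in C^*$ has $\ell_{j(i)} = -1$ (so $B_{j(i)} = \emptyset$ and $L(\ell_{j(i)}) = 0$), the contribution of $C^*$ to the $\lpreroute$ objective vanishes and the objective reduces to $\sum_{j \in C'} \sum_{i \in F_j} d'(i,j)\,\bar{y}_i$. Applying Proposition \ref{lem_disc} termwise and taking expectations gives
\[\mathbb{E}\bigl[\text{cost of }\bar{y}\text{ in }\lpreroute\bigr] \;=\; \tfrac{\tau - 1}{\log_e \tau}\, \sum_{j \in C'} \sum_{i \in F_j} d(i,j)\, \bar{y}_i \;\leq\; \tfrac{\tau - 1}{\log_e \tau}\,(1 + \delta/2)\, U'.\]
Feasibility of $\bar{y}$ yields $\mathbb{E}[Opt(\lpreroute)] \leq \frac{\tau - 1}{\log_e \tau}(1 + \delta/2) U'$, which rearranges to the desired bound and, combined with \eqref{eq_partialcost}, proves the proposition.

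There is no deep obstacle here: the proof is essentially a composition of Theorem \ref{thm_preguess}, Proposition \ref{prop_lpextracost}, Lemma \ref{lem_extrasplit}, and Proposition \ref{lem_disc}. The only subtle point is verifying that the deterministic $\bar{y}$ produced before discretization remains feasible for the randomly-constructed $\lpreroute$, which works because the $F$-balls and the dummy clients' radius levels are set without looking at the random offset.
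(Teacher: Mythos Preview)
Your proposal is correct and follows essentially the same approach as the paper: both argue that the $\bar{y}$ produced by Lemma~\ref{lem_extrasplit} is feasible for $\lpreroute$ (with the dummy clients in $C^*=C_0$ contributing zero), use Proposition~\ref{lem_disc} to pay the $\frac{\tau-1}{\log_e\tau}$ discretization factor, then chain Proposition~\ref{prop_lpextracost} and Equation~\eqref{eq_partialcost} to finish. Your write-up is in fact more explicit than the paper's about verifying feasibility of $\bar{y}$ across all random offsets and about why the dummy clients have zero cost.
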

\begin{proof}
    We first show that $\mathbb{E}[Opt(\lpreroute)] \leq \frac{\tau - 1}{\log_e \tau} Opt(\lpextra)$. We have $C_{part} = C'$, $C_{full} = \emptyset$, and $C^* = C_0$. Note that the dummy clients in $C^*$ contribute zero to the objective of $\lpreroute$, because they are co-located with one unit of open facility in their $F$-balls. Thus Lemma \ref{lem_extrasplit} implies that there exists a feasible solution $\bar{y}$ to $\lpreroute$ of cost at most $Opt(\lpextra)$ up to the discretization of the distances. The cost of discretization is bounded by Lemma \ref{lem_disc}, and immediately gives the extra $\frac{\tau - 1}{\log_e \tau}$-factor. The factor of $\frac{\tau - 1}{\log_e \tau}$ is due to the cost of discretization, which is bounded by Lemma \ref{lem_disc}.
    
    Now we relate $Opt(\lpextra)$ to $U' = \sum\limits_{j \in C^* \cap C'} d(j,S^*)$, which satisfies Equation \ref{eq_partialcost} by the guarantees of Theorem \ref{thm_preguess}. Combining Equation \ref{eq_partialcost} with Proposition \ref{prop_lpextracost}, we can obtain our final bound:
    \begin{align*}
        \frac{\log_e \tau}{(\tau - 1)(1 + \delta/2)}\mathbb{E}[Opt(\lpreroute)] + \frac{1 - \delta}{1+ \delta} \sum\limits_{j \in C^* \setminus C'} d(j, S_0) &\leq \frac{1}{1 + \delta/2} Opt(\lpextra) + \frac{1 - \delta}{1+ \delta} \sum\limits_{j \in C^* \setminus C'} d(j, S_0)\\
        &\leq \sum\limits_{j \in C^* \cap C'} d(j,S^*) + \frac{1 - \delta}{1+ \delta} \sum\limits_{j \in C^* \setminus C'} d(j, S_0)\\
        &\leq U
    \end{align*}
\end{proof}

\end{document}